\documentclass[a4paper,twoside,10pt,fleqn]{article}
\usepackage{helvet}

\usepackage[latin1]{inputenc}
\usepackage{amsfonts}
\usepackage{amsmath}
\usepackage{amssymb,marvosym,units}
\usepackage{amsthm}
\usepackage[cmtip,arrow, matrix, curve]{xy}
\usepackage{pb-diagram,pb-xy}
\usepackage{graphicx}
\usepackage{tensor}

\usepackage{hyperref}
\usepackage{zref, xkeyval, ifpdf, ifthen, calc, marginnote, pdfcomment}

\usepackage{lscape}
\usepackage{longtable}
\usepackage{bbm}

\hypersetup{
    pagebackref=true,
    bookmarks=true,        
    unicode=false,         
    pdftoolbar=true,        
    pdfmenubar=true,        
    pdffitwindow=true,     
    pdfstartview={FitH},    
    pdftitle={The localised holonomy-flux cross-product $^*$-algebra for LQG},    
    pdfauthor={Diana Kaminski},     
    pdfsubject={},   
    pdfkeywords={Loop Quantum Gravity, KMS - Theory, localised holonomy-flux cross-product $^*$-algebra}, 
    pdfnewwindow=true,     
    colorlinks=false,       
    linkcolor=black,        
    pdfborder= 0 0 0,
}

\usepackage{lscape}

\title{Algebras of Quantum Variables for Loop Quantum Gravity\\[5pt]
\textbf{V. The localised holonomy-flux cross-product $^*$-algebra}}
\author{Diana Kaminski\\[3pt]
kaminski@math.uni-paderborn.de\\ 
\small{Europe - Germany}}
\date{August 19, 2011}


\newcommand{\Ab}{\begin{large}\bar{\mathcal{A}}\end{large}}

\newcommand{\ha}{\mathfrak{a}}
\newcommand{\Alg}{\begin{large}\mathfrak{A}\end{large}}

\newcommand{\Aut}{\begin{large}\mathfrak{Aut}\end{large}}

\newcommand{\Con}{ \mathfrak{C}}
\newcommand{\CB}{\mathbb{C}}
\newcommand{\CD}{\mathcal{C}}
\newcommand{\Cinf}{\mathbf{C}}

\newcommand{\DD}{\mathcal{D}}
\newcommand{\Df}{\mathfrak{D}}
\newcommand{\E}{\mathcal{E}}
\newcommand{\Ep}{\mathbbm{E}}

\newcommand{\Goid}{\mathcal{G}}

\newcommand{\GG}{G}

\newcommand{\HS}{\mathcal{H}}

\newcommand{\la}{\langle}
\newcommand{\LD}{\mathcal{L}}
\newcommand{\LAb}{\bar{a}}

\newcommand{\MM}{\textbf{M}}

\newcommand{\N}{\mathbb{N}}

\newcommand{\op}{\mathfrak{o}}

\newcommand{\PD}{\mathcal{P}}

\newcommand{\Ss}{\mathcal{S}}

\newcommand{\ra}{\rangle}

\newcommand{\OD}{\mathcal{O}}

\newcommand{\QD}{\mathcal{Q}}
\newcommand{\R}{\mathbb{R}}

\newcommand{\SimGroup}{\mathfrak{G}}

\newcommand{\ZD}{\mathcal{Z}}
\newcommand{\ZzD}{z}

\newcommand{\ho}{\mathfrak{h}}

\newcommand{\go}{\mathfrak{g}}
\newcommand{\gop}{\mathbbm{g}}

\DeclareMathOperator{\dif}{d}
\DeclareMathOperator{\diff}{surf}

\DeclareMathOperator{\disc}{d}
\DeclareMathOperator{\Diff}{Diff}

\DeclareMathOperator{\Exp}{Exp}

\DeclareMathOperator{\Hol}{Hol}
\DeclareMathOperator{\Hom}{Hom}

\DeclareMathOperator{\id}{id}

\DeclareMathOperator{\loc}{loc}

\DeclareMathOperator{\Map}{Map}

\DeclareMathOperator{\Rep}{Rep}

\DeclareMathOperator{\tr}{tr}

\newcommand{\limG}{\lim_{\Gamma_i\rightarrow \Gamma_\infty}}

\newcommand{\gp}{{\gamma^\prime}}
\newcommand{\gpi}{{\gamma^\prime_i}}
\newcommand{\gpj}{{\gamma^\prime_j}}
\newcommand{\gpk}{{\gamma^\prime_K}}

\newcommand{\gppi}{{\gamma^{\prime\prime}_i}}
\newcommand{\gppj}{{\gamma^{\prime\prime}_j}}
\newcommand{\gppje}{{\gamma^{\prime\prime}_{j+1}}}
\newcommand{\gppl}{{\gamma^{\prime\prime}_l}}
\newcommand{\gpe}{{\gamma^\prime_1}}
\newcommand{\gpz}{{\gamma^\prime_2}}
\newcommand{\gpm}{{\gamma^\prime_M}}
\newcommand{\gppe}{{\gamma^{\prime\prime}_1}}

\newcommand{\gppm}{{\gamma^{\prime\prime}_M}}

\newcommand{\gpppn}{{\gamma^{\prime\prime\prime}_N}}
\newcommand{\tg}{{\tilde\gamma}}

\newcommand{\Gp}{{\Gamma^\prime}}
\newcommand{\Gpj}{{\Gamma^\prime_j}}
\newcommand{\Gpi}{{\Gamma^\prime_i}}
\newcommand{\Gpp}{\Gamma^{\prime\prime}}
\newcommand{\Gppp}{\Gamma^{\prime\prime\prime}}

\newcommand{\idf}{\mathbbm{1}}

\newcommand{\bra}{[}
\newcommand{\ket}{]}

\newcommand{\beq}{\begin{equation}\begin{aligned}}
\newcommand{\beqs}{\begin{equation*}\begin{aligned}}
\newcommand{\be}{\begin{flalign}}
\newcommand{\bes}{\begin{equation*}}
\newcommand{\eq}{\end{aligned}\end{equation}}
\newcommand{\eqs}{\end{aligned}\end{equation*}}
\newcommand{\ee}{\end{flalign}}
\newcommand{\ees}{\end{equation}}

\newcommand{\limN}{\lim_{N\rightarrow \infty}}

\newcommand{\limi}{\underset{i\rightarrow\infty}{\underrightarrow{\lim}}}

\newtheorem{theo}{Theorem }[section]
\newtheorem{lem}[theo]{Lemma}
\newtheorem{rem}[theo]{Remark}
\newtheorem{prop}[theo]{Proposition}
\newtheorem{cor}[theo]{Corollary}

\newtheorem{defi}[theo]{Definition}

\newenvironment{proofs}[1][Proof ]{\noindent\textbf{#1}: }{\ \begin{flushright}
                                                                         \rule{0.5em}{0.5em}
                                                                        \end{flushright}}

\newcounter{exa}[section]
 \newenvironment{exa}{\refstepcounter{exa}
  \textbf{Example} \thesection.\arabic{exa}: }{ {\begin{flushright}
                                                                         \rule{0.2em}{0.2em}
                                                                        \end{flushright}}}

\newcounter{problem}[subsection]
 \newenvironment{problem}{\refstepcounter{problem}
  \textbf{Problem} \thesection.\arabic{problem}: }{{\begin{flushright}
                                                                         \rule{0.2em}{0.2em}
                                                                        \end{flushright}}}
\newcommand{\GGi}{\xymatrix{
  \Goid_1  \ar@<-2pt>[r] \ar@<2pt>[r] &  \Goid^0_1    \\
}}
\newcommand{\GGii}{\xymatrix{
  \Goid_2  \ar@<-2pt>[r] \ar@<2pt>[r] &  \Goid^0_2    \\
}}
\newcommand{\GGm}{\xymatrix{
  \Goid  \ar@<-1pt>[r]^{s} \ar@<1pt>[r]_{t} &  \Goid^0    \\
}}
\newcommand{\GGim}{\xymatrix{
  \Goid_1  \ar@<-1pt>[r]^{s_1} \ar@<1pt>[r]_{t_1} &  \Goid^0_1    \\
}}
\newcommand{\GGiim}{\xymatrix{
  \Goid_2  \ar@<-1pt>[r]^{s_2} \ar@<1pt>[r]_{t_2} &  \Goid^0_2    \\
}}

\newcommand{\PGm}{\xymatrix{
  \PD  \ar@<-1pt>[r]^{s} \ar@<1pt>[r]_{t} &  \Sigma    \\
}}

\newcommand{\PGs}{\PD\Sigma\rightrightarrows\Sigma}
\newcommand{\PGoS}{\PD\rightrightarrows\Sigma}
\newcommand{\fPGm}{\xymatrix{
  \PD_\Gamma  \ar@<-1pt>[r]^{s} \ar@<1pt>[r]_{t} &  V_\Gamma    \\
}}
\newcommand{\PGsm}{\xymatrix{
  \PD\Sigma \ar@<-1pt>[r]^{s_{\PD\Sigma}} \ar@<1pt>[r]_{t_{\PD\Sigma}} &  \Sigma   \\
}}
\newcommand{\fPGms}{\xymatrix{
  \PD^s_\Gamma  \ar@<-1pt>[r]^{s} \ar@<1pt>[r]_{t} &  V_\Gamma    \\
}}

\newcommand{\fPG}{\PD_\Gamma\Sigma \rightrightarrows V_\Gamma
}

\newcommand{\fPSGm}{\xymatrix{
  \PD_\Gamma\Sigma  \ar@<-1pt>[r]^{s} \ar@<1pt>[r]_{t} &  V_\Gamma    \\
}}
\newcommand{\fPSG}{\xymatrix{
  \PD_\Gamma\Sigma  \ar@<-2pt>[r] \ar@<2pt>[r] &  V_\Gamma    \\
}}
\newcommand{\fgHGm}{\xymatrix{
  H(\Gamma)  \ar@<-1pt>[r]^/0.3em/{\hat s_H} \ar@<1pt>[r]_/0.3em/{\hat t_H} &  V_\Gamma    \\
}}

\newcommand{\fHGm}{\xymatrix{
  H_\Gamma  \ar@<-1pt>[r]^/0.3em/{\hat s_H} \ar@<1pt>[r]_/0.3em/{\hat t_H} &  V_\Gamma    \\
}}

\newcommand{\fGGm}{\xymatrix{
  \G^G_\Gamma  \ar@<-1pt>[r]^/0.3em/{s_P} \ar@<1pt>[r]_/0.3em/{t_P} &  V_\Gamma    \\
}}
\newcommand{\fGHm}{\xymatrix{
  \G^H_\Gamma  \ar@<-1pt>[r]^/0.3em/{s_P} \ar@<1pt>[r]_/0.3em/{t_P} &  V_\Gamma    \\
}}

\newcounter{count}
\setcounter{count}{1}


\setlength{\evensidemargin}{-0.6cm} 
\setlength{\oddsidemargin}{0cm} 
\setlength{\topmargin}{-1.8cm} 
\setlength{\headheight}{8pt} 
\setlength{\headsep}{25pt} 
\setlength{\textheight}{25cm} 
\setlength{\textwidth}{16.5cm} 
\setlength{\marginparsep}{0cm} 
\setlength{\marginparwidth}{0cm} 
\setlength{\footskip}{1.0cm} 
\setlength{\parindent}{0cm}    
\setlength{\parskip}{1.5ex plus 0.5ex minus 0.5ex}

\begin{document}
\maketitle
\begin{abstract}\noindent In the project \textit{AQV} \cite{Kaminski1} the issue of quantum constraints, KMS-states and algebras of quantum configuration and momentum variables in Loop Quantum Gravity has been argued. There a physical algebra has been required to contain complete observables and the quantum constraints, or at least the quantum constraints are affilliated with this algebra. In this context a first conjecture for a physical algebra is presented in this article. A new $^*$-algebra for LGQ, which is called the localised holonomy-flux cross-product $^*$-algebra, is studied. A suggestion for a physical $^*$-algebra, which contains the localised holonomy-flux cross-product $^*$-algebra, a modified quantum Hamilton constraint, a localised quantum diffeomorphism constraint and even a modified quantum Master constraint, is given. 
\end{abstract}

\thispagestyle{plain}
\pdfbookmark[0]{\contentsname}{toc}
\tableofcontents

\section{Introduction}

In \cite{Kaminski0} an extended overview about the project \textit{Algebras of Quantum Variables in LQG} has been presented. This article will now focus on the issue of quantum constraints and KMS-theory in LQG. KMS-theory has been studied for von Neumann algebras in \cite[Section 6.5]{KaminskiPHD}. There the \textit{holonomy-flux von Neumann algebra} has been invented. The construction is based on the commutative Weyl $C^*$-algebra for surfaces, which has been introduced in \cite[Sect.: 3.3]{Kaminski1} or \cite[Sect.: 6.3]{KaminskiPHD}, and a certain representation of this $C^*$-algebra on the limit Hilbert space $\HS_\infty$. The \textit{holonomy von Neumann algebra} is given by the anaytic holonomy $C^*$-algebra and a representation on the Hilbert space $\HS_\infty$. But the concept of KMS-states is not naturally available for these von Neumann algebras associated to the Weyl $C^*$-algebra for surfaces or the  analytic holonomy $C^*$-algebra. One can show that even in both cases the von Neumann algebras are not in standard form.
KMS-theory is also available in the context of $C^*$-algebras.
It has been shown in \cite[Sect.: 6.5]{KaminskiPHD} that there are no KMS-states of the Weyl $C^*$-algebra for surfaces or the analytic holonomy $C^*$-algebra with respect to an automorphism group defined by a Lie algebra-valued quantum flux operator. 
Consequently for the algebras derived in \cite[Chapt.: 6,7, Sect.: 8.1-8.3]{KaminskiPHD}, \cite{Kaminski1,Kaminski2,Kaminski3} a KMS-theory is not easy to achieve. Hence new ideas are interesting to study.

Indeed the analysis of algebras in \cite{Kaminski1,Kaminski2,Kaminski3,KaminskiPHD} has shown that new algebras can be defined. In this article such a new $^*$-algebra is developed in section \ref{subsec restrdiffeo}. This algebra is called the \textit{localised holonomy-flux cross-product $^*$-algebra}. This algebra is a certain cross-product algebra and is derived similarly to the holonomy-flux cross-product $^*$-algebra in \cite[Sect.: 8.2]{KaminskiPHD}, \cite{Kaminski3}. 

The new idea is to locate the algebraic objects in particular the algebras. The new construction of the new quantum algebras is also available for non-localised algebraic objects and can be easily extracted from the development. Remark that if matter fields come into the play, the author suggests to locate the algebras of quantum variables and the matter algebras simultanously. The geometric objects like quantum flux operators, which are defined in section \ref{sec fluxdef}, are always localised on surfaces. Moreover, the holonomies are localised along paths or graphs. The bisections are maps from a certain set of vertices in a fixed manifold $\Sigma$ to paths that start at a given vertex in the set of vertices. Consequently, also bisections are somehow localised objects. The definition of these new objects is presented in section \ref{subsubsec bisections}. Furthermore, the important fact is that paths and surfaces intersect each other in vertices. Hence a discretised surface set $\breve S_{\disc}$, which contains only fixed sets of vertices, is proposed in section \ref{subsec fingraphpathgroup}. The localisation of quantum algebras means that algebra elements have distinguishing properties if they depend or do not depend on this discretised surface set. For the implementation of this idea the configuration space is divided in section \ref{subsec graphhol} into two main parts. One of them is constructed from holonomies along paths that start or end at some given surface and is called the \textit{localised part of the configuration space}. The other part is build from holonomies along paths that do not intersect any surface in this surface set. Hence, the first configuration space is localised on surfaces, while the second is not. Furthermore, there are two different $^*$-algebras of quantum holonomy variables presented in section \ref{loc}. One $^*$-algebra is constructed on the localised part of the configuration space and a convolution product between functions depending on this space. In particular, the $^*$-subalgebra of central functions on the localised part of the configuration space is used. The other $^*$-algebra is given by the original analytic holonomy $^*$-algebra, but is restricted to non-localised paths. These $^*$-algebras are completed to different $C^*$-algebras and the $C^*$-tensor product of these two $C^*$-algebras defines the new \textit{localised analytic holonomy $C^*$-algebra}. The $C^*$-algebra of central functions on the localised part of the configuration space is called the \textit{localised part of the localised analytic holonomy $C^*$-algebra}. Similarly to the $^*$-derivations of the analytic holonomy $C^*$-algebra \cite[Sect.: 4]{Kaminski3}, \cite[Sect.: 8.4.2]{KaminskiPHD} in section \ref{der} $^*$-derivations of this new $C^*$-algebra are presented.

There are some new flux operators, which are defined as difference operators between Lie algebra-valued quantum flux operators on different graphs in \ref{sec fluxdef}, and which are called the \textit{localised and discretised flux operators associated to surfaces}. The main difference between the original Lie algebra-valued quantum flux operator, and the localised and discretised Lie algebra-valued flux operator both restricted to a fixed graph is that, the second operator is only non-trivial on paths, which are not contained in a certain subgraph. The \textit{localised enveloping flux algebra associated to a surface set} is derived from the localised and discretised flux operators. Furthermore, there exists an action of this new localised and discretised flux operator on the $C^*$-algebra of central functions on the localised part of the configuration space.

In \cite{Kaminski3,KaminskiPHD} the theory of an abstract cross-product $^*$-algebra has been used to define a new holonomy-flux cross-product $^*$-algebra. This construction is also used for the definition of two new localised algebras. One algebra is based on the $^*$-algebra of central functions on the localised part of the configuration space and the other is derived from the localised analytic holonomy $^*$-algebra. The abstract cross-product $^*$-algebra, which is obtained from the localised enveloping flux algebra associated to a surface set and the $^*$-algebra of central functions on the localised part of the configuration space, is called the \textit{localised part of the localised holonomy-flux cross-product $^*$-algebra}. The \textit{localised holonomy-flux cross-product $^*$-algebra} is given by the abstract cross-product $^*$-algebra, which is obtained by the the localised analytic holonomy $^*$-algebra and the localised enveloping flux algebra associated to a surface set. There are several localised holonomy-flux cross-product $^*$-algebras for different surface sets. A $^*$-representation of this new $^*$-algebra is given similarly to the representation of the holonomy-flux cross-product $^*$-algebra in section\ref{rep}. $C^*$-dynamical systems are studied in section \ref{dyn}. Since such dynamical systems are related to KMS-states, these particular states are studied. There are certain $C^*$-algebras, which admit a KMS-state. Furthermore, there is a KMS-theory available for $O^*$-algebra presented by Inoue \cite{Inoue}, which can be used in a further work to analyse the full or parts of the localised holonomy-flux cross-product $^*$-algebra. Indeed one can show that the localised holonomy-flux cross-product $^*$-algebra is an $O^*$-algebra. 
Furthermore, there is also a construction of a $C^*$-algebra, which will be called the \textit{localised holonomy-flux cross-product $C^*$-algebra} in future work, in analogy to the holonomy-flux cross-product $C^*$-algebra derived in \cite{Kaminski2},\cite[Chap.: 7]{KaminskiPHD}. Then a KMS-theory for this new $C^*$-algebra will be studied.

In this project, a first attempt for a study of the Hamilton constraint as a generator of an automorphism group is studied. This constraint plays a fundamental role in the definition of dynamics in LQG. The object is for example derived from a volume operator in the work of Thiemann \cite{Thiembook07}. In this article a modified volume operator, which is called the \textit{discretised quantum volume operator} $\QD(V)_{\disc}$, is constructed as a sum over Lie algebra-valued quantum flux operators indexed by a triple of paths in a graph that start at a common vertex, which is an intersection of three surfaces. 

In the construction of the Hamilton constraint as a generator the following facts are used, too. Consider the \textit{Lie holonomy algebra}, which is constructed from the localised configuration space restricted to a graph $\Gamma$ and which is identified with the product group $G^{\vert\Gamma\vert}$ of a compact connected Lie group $G$. 
This Lie algebra acts on the $C^*$-algebra of central functions on the localised configuration space restricted to a graph, too. Then the $C^*$-algebra of central functions admits KMS-states with respect to this action. The \textit{modified quantum Hamilton constraint restricted to a graph} is given by
\beqs \exp(H_{\Gamma_i}^+H_{\Gamma_i}):= \left(\ho_A(\alpha)-\ho_A(\alpha)^{-1}\right)\ho_A(\gamma)[\ho_A(\gamma)^{-1},\QD(V)_{\disc}]
\eqs The \textit{modified quantum Hamilton constraint} is defined in the project \textit{AQV} as the limit\\ $H:=\lim_{i\rightarrow\infty}\sum_{\Gamma_i} \exp(H_{\Gamma_i}^+H_{\Gamma_i})$ of a sum over subgraphs of a graph of the modified quantum Hamilton constraint restricted to a graph. Note that, the limit graph is assumed to contain an infinite countable set of subgraphs. 

The next step is to show that, this modified quantum Hamilton constraint is well-defined and is given as the generator of a strongly continuous one-parameter group of automorphisms on the localised part of the localised analytic holonomy $C^*$-algebra. The analysis of parts of the modified quantum Hamilton constraint shows that, the convergence of the limit in the norm-topology is not obvious and is related to the structure of the discretised quantum volume operator $\QD(V)_{\disc}$. Summarising, the norm-convergence of the limit of $H$ is not easy to derive. The author conjectures that this limit converges and does not depend on a particular Hilbert space representation of the modified quantum Hamilton constraint. 

Thiemann has proposed in \cite{ThiemannPhoenix06,Thiembook07} a Master constraint instead of using a set of constraints. Hence in the project \textit{AQV} a modified quantum Master constraint is suggested. The ideas are the following. In \cite{Kaminski0,Kaminski1,Kaminski2} translations defined by bisections of finite path groupoids or finite graph systems play a fundamental role. The most general operators, which depend on bisections of finite graph systems that preserve a discretised surface set $\breve S_{\disc}$ associated to a surface set $\breve S$, are denoted by $D_{\breve S_{\disc},\Gamma}^\sigma$ and are called the \textit{localised quantum diffeomorphism constraints restricted to a graph}. The adjoint operator is denoted by $D_{\breve S_{\disc},\Gamma}^{\sigma,*}$. For example such operators can be defined similarly to elements of the holonomy-flux-graph-diffeomorphism cross-product $C^*$-algebra. The idea for these quantum constraints is to implement the complicated relations between the classical spatial diffeomorphism constraints and the classical Hamilton constraints on the quantum level.

Then the \textit{modified quantum Master constraint} is defined to be sum of the \textit{localised quantum diffeomorphism constraint}, which is given by
\beqs D_{\breve S_{\disc}}:=\limN\sum_{i=1}^{N}\sum_{\sigma_l}D^{\sigma_l,*}_{\breve S_{\disc},\Gamma_i}D^{\sigma_l}_{\breve S_{\disc},\Gamma_i}
\eqs and the modified quantum Hamilton constraint $H$. 

In the article \cite{Kaminski3} the holonomy-flux cross-product $^*$-algebra has been presented, this algebra is comparable with the holonomy-flux $^*$-algebra, which has been developed in \cite{LOST06}. A comparison of the localised holonomy-flux cross-product $^*$-algebra and the holonomy-flux cross-product $^*$-algebra is presented in a table in section \ref{sec comparison}. 
\section{The basic quantum operators}\label{quantum variables}
\subsection{Finite path groupoids and graph systems}\label{subsec fingraphpathgroup}
Let $c:[0,1]\rightarrow\Sigma$ be continuous curve in the domain $\bra 0,1\ket$, which is (piecewise) $C^k$-differentiable ($1\leq k\leq \infty$), analytic ($k=\omega$) or semi-analytic ($k=s\omega$) in $\bra 0,1\ket$ and oriented such that the source vertex is $c(0)=s(c)$ and the target vertex is $c(1)=t(c)$. Moreover assume that, the range of each subinterval of the curve $c$ is a submanifold, which can be embedded in $\Sigma$. An \textbf{edge} is given by a \hyperlink{rep-equiv}{reparametrisation invariant} curve of class (piecewise) $C^k$. The maps $s_{\Sigma},t_{\Sigma}:P\Sigma\rightarrow\Sigma$ where $P\Sigma$ is the path space are surjective maps and are called the source or target map.    

A set of edges $\{e_i\}_{i=1,...,N}$ is called \textbf{independent}, if the only intersections points of the edges are source $s_{\Sigma}(e_i)$ or $t_{\Sigma}(e_i)$ target points. Composed edges are called \textbf{paths}. An \textbf{initial segment} of a path $\gamma$ is a path $\gamma_1$ such that there exists another path $\gamma_2$ and $\gamma=\gamma_1\circ\gamma_2$. The second element $\gamma_2$ is also called a \textbf{final segment} of the path $\gamma$.

\begin{defi}
A \textbf{graph} $\Gamma$ is a union of finitely many independent edges $\{e_i\}_{i=1,...,N}$ for $N\in\N$. The set $\{e_1,...,e_N\}$ is called the \textbf{generating set for $\Gamma$}. The number of edges of a graph is denoted by $\vert \Gamma\vert$. The elements of the set $V_\Gamma:=\{s_{\Sigma}(e_k),t_{\Sigma}(e_k)\}_{k=1,...,N}$ of source and target points are called \textbf{vertices}.
\end{defi}

A graph generates a finite path groupoid in the sense that, the set $\PD_\Gamma\Sigma$ contains all independent edges, their inverses and all possible compositions of edges. All the elements of $\PD_\Gamma\Sigma$ are called paths associated to a graph. Furthermore the surjective source and target maps $s_{\Sigma}$ and $t_{\Sigma}$ are restricted to the maps $s,t:\PD_\Gamma\Sigma\rightarrow V_\Gamma$, which are required to be surjective.

\begin{defi}\label{path groupoid} Let $\Gamma$ be a graph. Then a \textbf{finite path groupoid} $\PD_\Gamma\Sigma$ over $V_\Gamma$ is a pair $(\PD_\Gamma\Sigma, V_\Gamma)$ of finite sets equipped with the following structures: 
\begin{enumerate}
 \item two surjective maps \(s,t:\PD_\Gamma\Sigma\rightarrow V_\Gamma\), which are called the source and target map,
\item the set \(\PD_\Gamma\Sigma^2:=\{ (\gamma_i,\gamma_j)\in\PD_\Gamma\Sigma\times\PD_\Gamma\Sigma: t(\gamma_i)=s(\gamma_j)\}\) of finitely many composable pairs of paths,
\item the  composition \(\circ :\PD_\Gamma^2\Sigma\rightarrow \PD_\Gamma\Sigma,\text{ where }(\gamma_i,\gamma_j)\mapsto \gamma_i\circ \gamma_j\), 
\item the inversion map \(\gamma_i\mapsto \gamma_i^{-1}\) of a path,
\item the object inclusion map \(\iota:V_\Gamma\rightarrow\PD_\Gamma\Sigma\) and
\item $\PD_\Gamma\Sigma$ is defined by the set $\PD_\Gamma\Sigma$ modulo the algebraic equivalence relations generated by
\beq\label{groupoid0} \gamma_i^{-1}\circ \gamma_i\simeq \idf_{s(\gamma_i)}\text{ and }\gamma_i\circ \gamma_i^{-1}\simeq \idf_{t(\gamma_i)}
\eq 
\end{enumerate}
Shortly write $\fPSGm$. 
\end{defi} 
Clearly, a graph $\Gamma$ generates freely the paths in $\PD_\Gamma\Sigma$. Moreover the map $s \times t: \PD_\Gamma\Sigma\rightarrow V_\Gamma\times V_\Gamma$ defined by $(s\times t)(\gamma)=(s(\gamma),t(\gamma))$ for all $\gamma\in\PD_\Gamma\Sigma$ is assumed to be surjective ($\PD_\Gamma\Sigma$ over $V_\Gamma$ is a transitive groupoid), too. 

A general groupoid $\GG$ over $\GG^{0}$ defines a small category where the set of morphisms is denoted in general by $\GG$ and the set of objects is denoted by $\GG^{0}$. Hence in particular the path groupoid can be viewed as a category, since,
\begin{itemize}
\item the set of morphisms is identified with $\PD_\Gamma\Sigma$,
\item the set of objects is given by $V_\Gamma$ (the units) 
\end{itemize}

From the condition (\ref{groupoid0}) it follows that, the path groupoid satisfies additionally 
\begin{enumerate}
 \item $ s(\gamma_i\circ \gamma_j)=s(\gamma_i)\text{ and } t(\gamma_i\circ \gamma_j)=t(\gamma_j)\text{ for every } (\gamma_i,\gamma_j)\in\PD_\Gamma^2\Sigma$
\item $s(v)= v= t(v)\text{ for every } v\in V_\Gamma$
\item\label{groupoid1} $ \gamma \circ\idf_{s(\gamma)} = \gamma = \idf_{t(\gamma)}\circ \gamma\text{ for every } \gamma\in \PD_\Gamma\Sigma\text{ and }$
\item $\gamma \circ (\gamma_i\circ \gamma_j)=(\gamma \circ \gamma_i) \circ \gamma_j$
\item $\gamma \circ (\gamma^{-1}\circ \gamma_1)=\gamma_1= (\gamma_1 \circ \gamma) \circ \gamma^{-1}$
\end{enumerate}

The condition \ref{groupoid1} implies that the vertices are units of the groupoid. 

\begin{defi}
Denote the set of all finitely generated paths by
\beqs \PD_\Gamma\Sigma^{(n)}:=\{(\gamma_1,...,\gamma_n)\in \PD_\Gamma\times ...\PD_\Gamma: (\gamma_i,\gamma_{i+1})\in\PD^{(2)}, 1\leq i\leq n-1 \}\eqs
The set of paths with source point $v\in V_\Gamma$ is given by
\beqs \PD_\Gamma\Sigma^{v}:=s^{-1}(\{v\})\eqs
The set of paths with target  point $v\in V_\Gamma$ is defined by
\beqs \PD_\Gamma\Sigma_{v}:=t^{-1}(\{v\})\eqs
The set of paths with source point $v\in V_\Gamma$ and target point $u\in V_\Gamma$ is 
\beqs \PD_\Gamma\Sigma^{v}_u:=\PD_\Gamma\Sigma^{v}\cap \PD_\Gamma\Sigma_{u}\eqs
\end{defi}

A graph $\Gamma$ is said to be \hypertarget{disconnected}{\textbf{disconnected}} if it contains only mutually pairs $(\gamma_i,\gamma_j)$ of non-composable independent paths $\gamma_i$ and $\gamma_j$ for $i\neq j$ and $i,j=1,...,N$. In other words for all $1\leq i,l\leq N$ it is true that $s(\gamma_i)\neq t(\gamma_l)$ and $t(\gamma_i)\neq s(\gamma_l)$ where $i\neq l$ and $\gamma_i,\gamma_l\in\Gamma$.

\begin{defi}
Let $\Gamma$ be a graph. A \textbf{subgraph $\Gp$ of $\Gamma$} is a given by a finite set of independent paths in $\PD_\Gamma\Sigma$. 
\end{defi}
For example let $\Gamma:=\{\gamma_1,...,\gamma_N\}$ then $\Gp:=\{\gamma_1\circ\gamma_2,\gamma_3^{-1},\gamma_4\}$ where $\gamma_1\circ\gamma_2,\gamma_3^{-1},\gamma_4\in\PD_\Gamma\Sigma$ is a subgraph of $\Gamma$, whereas the set $\{\gamma_1,\gamma_1\circ\gamma_2\}$ is not a subgraph of $\Gamma$. Notice if additionally $(\gamma_2,\gamma_4)\in\PD_\Gamma^{(2)}$ holds, then $\{\gamma_1,\gamma_3^{-1},\gamma_2\circ\gamma_4\}$ is a subgraph of $\Gamma$, too. Moreover for $\Gamma:=\{\gamma\}$ the graph $\Gamma^{-1}:=\{\gamma^{-1}\}$ is a subgraph of $\Gamma$. As well the graph $\Gamma$ is a subgraph of $\Gamma^{-1}$. A subgraph of $\Gamma$ that is generated by compositions of some paths, which are not reversed in their orientation, of the set $\{\gamma_1,...,\gamma_N\}$ is called an \textbf{orientation preserved subgraph of a graph}. For example for $\Gamma:=\{\gamma_1,...,\gamma_N\}$ orientation preserved subgraphs are given by $\{\gamma_1\circ\gamma_2\}$, $\{\gamma_1,\gamma_2,\gamma_N\}$ or $\{\gamma_{N-2}\circ\gamma_{N-1}\}$ if $(\gamma_1,\gamma_2)\in\PD_\Gamma\Sigma^{(2)}$ and $(\gamma_{N-2},\gamma_{N-1})\in\PD_\Gamma\Sigma^{(2)}$.   

\begin{defi}
A \textbf{finite graph system $\PD_\Gamma$ for $\Gamma$} is a finite set of subgraphs of a graph $\Gamma$. A finite graph system $\PD_{\Gp}$ for $\Gp$ is a \hypertarget{finite graph subsystem}{\textbf{finite graph subsystem}} of $\PD_\Gamma$ for $\Gamma$ if the set $\PD_{\Gp}$ is a subset of $\PD_{\Gamma}$ and $\Gp$ is a subgraph of $\Gamma$. Shortly write $\PD_{\Gp}\leq\PD_{\Gamma}$.

A \hypertarget{finite orientation preserved graph system}{\textbf{finite orientation preserved graph system}} $\PD^{\op}_\Gamma$ for $\Gamma$ is a finite set of orientation preserved subgraphs of a graph $\Gamma$. 
\end{defi}

Recall that, a finite path groupoid is constructed from a graph $\Gamma$, but a set of elements of the path groupoid need not be a graph again. For example let $\Gamma:=\{\gamma_1\circ\gamma_2\}$ and $\Gp=\{\gamma_1\circ\gamma_3\}$, then $\Gpp=\Gamma\cup\Gp$ is not a graph, since this set is not independent. Hence only appropriate unions of paths, which are elements of a fixed finite path groupoid, define graphs. The idea is to define a suitable action on elements of the path groupoid, which corresponds to an action of diffeomorphisms on the manifold $\Sigma$. The action has to be transfered to graph systems. But the action of bisection, which is defined by the use of the groupoid multiplication, cannot easily generalised for graph systems. 

\begin{problem}\label{problem group structure on graphs systems}
Let $\breve\Gamma:=\{\Gamma_i\}_{i=1,..,N}$ be a finite set such that each $\Gamma_i$ is a set of not necessarily independent paths such that 
\begin{enumerate}
\item the set contains no loops and
\item each pair of paths satisfies one of the following conditions
\begin{itemize}
\item the paths intersect each other only in one vertex,
\item the paths do not intersect each other or
\item one path of the pair is a segment of the other path.
\end{itemize}
\end{enumerate}

Then there is a map $\circ:\breve\Gamma\times \breve\Gamma\rightarrow\breve\Gamma$ of two elements $\Gamma_1$ and $\Gamma_2$ defined by
\beqs \{\gamma_1,...,\gamma_M\}\circ\{\tg_1,...,\tg_M\}:= &\Big\{ \gamma_i\circ\tg_j:t(\gamma_i)=s(\tg_j)\Big\}_{1\leq i,j\leq M}\\
\eqs for $\Gamma_1:=\{\gamma_1,...,\gamma_M\},\Gamma_2:=\{\tg_1,...,\tg_M\}$. 
Moreover define a map $^{-1}:\breve\Gamma\rightarrow\breve\Gamma$ by
\beqs  \{\gamma_1,...,\gamma_M\}^{-1}:= \{\gamma^{-1}_1,...,\gamma^{-1}_M\}\eqs 

Then the following is derived
\beqs \{\gamma_1,...,\gamma_M\}\circ\{\gamma^{-1}_1,...,\gamma^{-1}_M\}&=\Big\{ \gamma_i\circ\gamma^{-1}_j: t(\gamma_i)=t(\gamma_j)\Big\}_{1\leq i,j\leq M}\\
&=\Big\{ \gamma_i\circ\gamma^{-1}_j:t(\gamma_i)=t(\gamma_j)\text{ and }i\neq j\Big\}_{1\leq i,j\leq M}\\
&\quad\cup\{\idf_{s_{\gamma_j}}\}_{1\leq j\leq M}\\
\neq &\quad\cup\{\idf_{s_{\gamma_j}}\}_{1\leq j\leq M}
\eqs The equality is true, if the set $\breve\Gamma$ contains only graphs such that all paths are mutually non-composable. Consequently this does not define a well-defined multiplication map. Notice that, the same is discovered if a similar map and inversion operation are defined for a finite graph system $\PD_\Gamma$. 
\end{problem}

Consequently the property of paths being independent need not be dropped for the definition of a suitable multiplication and inversion operation. In fact the independence property is a necessary condition for the construction of the holonomy algebra for analytic paths. Only under this circumstance each analytic path is decomposed into a finite product of independent piecewise analytic paths again. 

\begin{defi}
A finite path groupoid $\PD_{\Gp}\Sigma$ over $V_{\Gp}$ is a \textbf{finite path subgroupoid} of $\PD_{\Gamma}\Sigma$ over $V_\Gamma$ if the set $V_{\Gp}$ is contained in $V_\Gamma$ and the set $\PD_{\Gp}\Sigma$ is a subset of $\PD_{\Gamma}\Sigma$. Shortly write $\PD_{\Gp}\Sigma\leq\PD_{\Gamma}\Sigma$.
\end{defi}

Clearly for a subgraph $\Gamma_1$ of a graph $\Gamma_2$, the associated path groupoid $\PD_{\Gamma_1}\Sigma$ over $V_{\Gamma_1}$ is a subgroupoid of $\PD_{\Gamma_2}\Sigma$ over $V_{\Gamma_2}$.  This is a consequence of the fact that, each path in $\PD_{\Gamma_1}\Sigma$ is a composition of paths or their inverses in $\PD_{\Gamma_2}\Sigma$. 

\begin{defi}
A \textbf{family of finite path groupoids} $\{\PD_{\Gamma_i}\Sigma\}_{i=1,...,\infty}$, which is a set of finite path groupoids $\PD_{\Gamma_i}\Sigma$ over $V_{\Gamma_i}$, is said to be \textbf{inductive} if for any $\PD_{\Gamma_1}\Sigma,\PD_{\Gamma_2}\Sigma$ exists a $\PD_{\Gamma_3}\Sigma$ such that $\PD_{\Gamma_1}\Sigma,\PD_{\Gamma_2}\Sigma\leq\PD_{\Gamma_3}\Sigma$.

A \textbf{family of graph systems} $\{\PD_{\Gamma_i}\}_{i=1,...,\infty}$, which is a set of finite path systems $\PD_{\Gamma_i}$ for $\Gamma_i$, is said to be \textbf{inductive} if for any $\PD_{\Gamma_1},\PD_{\Gamma_2}$ exists a $\PD_{\Gamma_3}$ such that $\PD_{\Gamma_1},\PD_{\Gamma_2}\leq \PD_{\Gamma_3}$.
\end{defi}

\begin{defi}
Let $\{\PD_{\Gamma_i}\Sigma\}_{i=1,...,\infty}$ be an inductive family of path groupoids and $\{\PD_{\Gamma_i}\}_{i=1,...,\infty}$ be an inductive family of graph systems.

The \textbf{inductive limit path groupoid $\PD$ over $\Sigma$} of an inductive family of finite path groupoids such that $\PD:=\limi\PD_{\Gamma_i}\Sigma$ is called the \textbf{(algebraic) path groupoid} $\PGoS$.

Moreover there exists an \textbf{inductive limit graph $\Gamma_\infty$} of an inductive family of graphs such that $\Gamma_\infty:=\limi \Gamma_i$.

The \textbf{inductive limit graph system} $\PD_{\Gamma_\infty}$ of an inductive family of graph systems such that $\PD_{\Gamma_\infty}:=\limi \PD_{\Gamma_i}$
\end{defi}

Assume that, the inductive limit $\Gamma_\infty$ of a inductive family of graphs is a graph, which consists of an infinite countable number of independent paths. The inductive limit $\PD_{\Gamma_\infty}$ of a inductive family $\{\PD_{\Gamma_i}\}$ of finite graph systems contains an infinite countable number of subgraphs of $\Gamma_\infty$ and each subgraph is a finite set of arbitrary independent paths in $\Sigma$. 

\subsection{Holonomy maps for finite path groupoids, graph systems and transformations}\label{subsec holmapsfinpath}
In section \ref{subsec fingraphpathgroup} the concept of finite path groupoids for analytic paths has been given. Now the holonomy maps are introduced for finite path groupoids and finite graph systems. The ideas are familar with those presented by Thiemann \cite{Thiembook07}. But for example the finite graph systems have not been studied before. Ashtekar and Lewandowski \cite{AshLew93} have defined the analytic holonomy $C^*$-algebra, which they have based on a finite set of independent hoops. The hoops are generalised for path groupoids and the independence requirement is implemented by the concept of finite graph systems. 

\subsubsection{Holonomy maps for finite path groupoids}\label{subsubsec holmap}

Let $\GGim, \GGiim$ be two arbitrary groupoids.

\begin{defi}
A \hypertarget{groupoid-morphism}{\textbf{groupoid morphism}} between two groupoids $\GG_1$ and $\GG_2$ consists of two maps  $\ho:\GG_1\rightarrow\GG_2$  and $h:\GG_1^0\rightarrow\GG_2^0$ such that
\beqs (\hypertarget{G1}{G1})\qquad \ho(\gamma\circ\gp)&= \ho(\gamma)\ho(\gp)\text{ for all }(\gamma,\gp)\in \GG_1^{(2)}\eqs
\beqs (\hypertarget{G2}{G2})\qquad s_{2}(\ho(\gamma))&=h(s_{1}(\gamma)),\quad t_2(\ho(\gamma))=h(t_{1}(\gamma))\eqs 
 
A \textbf{strong groupoid morphism} between two groupoids $\GG_1$ and $\GG_2$ additionally satisfies
\beqs (\hypertarget{SG2}{SG})\qquad \text{ for every pair }(\ho(\gamma),\ho(\gp))\in\GG_2^{(2)}\text{ it follows that }(\gamma,\gp)\in \GG_1^{(2)}\eqs
\end{defi}

Let $G$ be a Lie group. Then $G$ over $e_G$ is a groupoid, where the group multiplication $\cdot: G^2\rightarrow G$ is defined for all elements  $g_1,g_2,g\in G$ such that $g_1\cdot g_2 = g$. A groupoid morphism between a finite path groupoid $\PD_\Gamma\Sigma$ to $G$ is given by the maps
\[\ho_\Gamma: \PD_\Gamma\Sigma\rightarrow G,\quad h_\Gamma:V_\Gamma\rightarrow e_G \] Clearly
\beq \ho_\Gamma(\gamma\circ\gp)&= \ho_\Gamma(\gamma)\ho_\Gamma(\gp)\text{ for all }(\gamma,\gp)\in \PD_\Gamma\Sigma^{(2)}\\
s_G(\ho_\Gamma(\gamma))&=h_\Gamma(s_{\PD_\Gamma\Sigma}(\gamma)),\quad t_G(\ho_\Gamma(\gamma))=h_\Gamma(t_{\PD_\Gamma\Sigma}(\gamma))
\eq But for an arbitrary pair $(\ho_\Gamma(\gamma_1),\ho_\Gamma(\gamma_2))=:(g_1,g_2)\in G^{(2)}$ it does not follows that, $(\gamma_1,\gamma_2)\in \PD_\Gamma\Sigma^{(2)}$ is true. Hence $\ho_\Gamma$ is not a strong groupoid morphism.

\begin{defi}\label{def sameholanal}Let $\fPG$ be a finite path groupoid.

Two paths $\gamma$ and $\gp$ in $\PD_\Gamma\Sigma$ have the \textbf{same-holonomy for all connections} iff 
\beqs \ho_\Gamma(\gamma)=\ho_\Gamma(\gp)\text{ for all }&(\ho_\Gamma,h_\Gamma)\text{ groupoid morphisms }\\ & \ho_\Gamma:\PD_\Gamma\Sigma\rightarrow G, h:V_\Gamma\rightarrow\{e_G\}
\eqs Denote the relation by $\sim_{\text{s.hol.}}$.
\end{defi}
\begin{lem}
The same-holonomy for all connections relation is an equivalence relation. 
\end{lem}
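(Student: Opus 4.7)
The statement is that $\sim_{\text{s.hol.}}$ is an equivalence relation on $\PD_\Gamma\Sigma$, and the plan is simply to verify the three defining properties (reflexivity, symmetry, transitivity) and observe that each one is inherited pointwise from the equality relation on the target group $G$, quantified over the class of admissible groupoid morphisms $(\ho_\Gamma, h_\Gamma)$ with $h_\Gamma: V_\Gamma \to \{e_G\}$.

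For reflexivity, given any $\gamma \in \PD_\Gamma\Sigma$ and any such groupoid morphism $\ho_\Gamma$, we have $\ho_\Gamma(\gamma) = \ho_\Gamma(\gamma)$ trivially in $G$, so $\gamma \sim_{\text{s.hol.}} \gamma$. For symmetry, suppose $\gamma \sim_{\text{s.hol.}} \gp$. Then for every admissible $\ho_\Gamma$ we have $\ho_\Gamma(\gamma) = \ho_\Gamma(\gp)$, and since equality in $G$ is symmetric this yields $\ho_\Gamma(\gp) = \ho_\Gamma(\gamma)$ for every such $\ho_\Gamma$, which is exactly $\gp \sim_{\text{s.hol.}} \gamma$. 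For transitivity, assume $\gamma \sim_{\text{s.hol.}} \gp$ and $\gp \sim_{\text{s.hol.}} \gpp$. Fix any groupoid morphism $(\ho_\Gamma, h_\Gamma)$ of the stated form; then $\ho_\Gamma(\gamma) = \ho_\Gamma(\gp)$ and $\ho_\Gamma(\gp) = \ho_\Gamma(\gpp)$, so by transitivity of equality in $G$ we obtain $\ho_\Gamma(\gamma) = \ho_\Gamma(\gpp)$. Since the morphism was arbitrary, $\gamma \sim_{\text{s.hol.}} \gpp$.

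There is essentially no obstacle: the proof is a formal consequence of the fact that, for each fixed admissible $\ho_\Gamma$, the relation $\gamma \sim_{\ho_\Gamma} \gp :\Leftrightarrow \ho_\Gamma(\gamma) = \ho_\Gamma(\gp)$ is an equivalence relation on $\PD_\Gamma\Sigma$ (as the pullback of equality along a map), and the intersection of a family of equivalence relations is again an equivalence relation. The only point worth mentioning is that the class of admissible $\ho_\Gamma$ is non-empty (so reflexivity is not vacuous in a trivial sense and the relation is genuinely defined): the constant assignment $\ho_\Gamma(\gamma) = e_G$ for all $\gamma$, together with $h_\Gamma \equiv e_G$, is a groupoid morphism in the sense of (\hyperlink{G1}{G1}) and (\hyperlink{G2}{G2}), so the quantification over morphisms is not over an empty set.
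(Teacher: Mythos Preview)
Your argument is correct: each of reflexivity, symmetry, and transitivity follows pointwise from the corresponding property of equality in $G$, and the relation $\sim_{\text{s.hol.}}$ is precisely the intersection over all admissible $(\ho_\Gamma,h_\Gamma)$ of the kernel equivalence relations of the maps $\ho_\Gamma$. Your remark that the class of admissible morphisms is non-empty (via the constant morphism $\ho_\Gamma\equiv e_G$) is a sensible sanity check, though strictly speaking reflexivity would hold even over an empty index set since the universal quantifier is then vacuously satisfied.

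The paper itself provides no proof of this lemma; it is simply stated and then used. So your write-up is in fact more detailed than what appears in the source, and there is nothing further to compare.
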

Notice that, the quotient of the finite path groupoid and the same-holonomy relation for all connections replace the hoop group, which has been used in \cite{AshLew93}.
\begin{defi}\label{genrestgroupoidforgraph}
Let $\fPG$ be a finite path groupoid modulo same-holonomy for all connections equivalence.

A \hypertarget{holonomy map for a finite path groupoid}{\textbf{holonomy map for a finite path groupoid}} $\PD_\Gamma\Sigma$ over $V_\Gamma$ is a groupoid morphism consisting of the maps $(\ho_\Gamma,h_\Gamma)$, where
\(\ho_\Gamma:\PD_\Gamma\Sigma\rightarrow G,h_\Gamma:V_\Gamma\rightarrow \{e_G\}\). 
The set of all holonomy maps is abbreviated by $\Hom(\PD_\Gamma\Sigma,G)$.
\end{defi}

For a short notation observe the following.
In further sections it is always assumed that, the finite path groupoid $\fPG$ is considered modulo same-holonomy for all connections equivalence although it is not stated explicitly.

\subsubsection{Holonomy maps for finite graph systems}\label{subsec graphhol}

Ashtekar and Lewandowski \cite{AshLew93} have presented the loop decomposition into a finite set of independent hoops (in the analytic category). This structure is replaced by a graph, since a graph is a set of independent edges. Notice that, the set of hoops that is generated by a finite set of independent hoops, is generalised to the set of finite graph systems. A finite path groupoid is generated by the set of edges, which defines a graph $\Gamma$, but a set of elements of the path groupoid need not be a graph again. The appropriate notion for graphs constructed from sets of paths is the finite graph system, which is defined in section \ref{subsec fingraphpathgroup}. Now the concept of holonomy maps is generalised for finite graph systems. Since the set, which is generated by a finite number of independent edges, contains paths that are composable, there are two possibilities to identify the image of the holonomy map for a finite graph system on a fixed graph with a subgroup of $G^{\vert\Gamma\vert}$. One way is to use the generating set of independend edges of a graph, which has been also used in \cite{AshLew93}. On the other hand, it is also possible to identify each graph with a disconnected subgraph of a fixed graph, which is generated by a set of independent edges. Notice that, the author implements two situations. One case is given by a set of paths that can be composed further and the other case is related to paths that are not composable. This is necessary for the definition of an action of the flux operators. Precisely the identification of the image of the holonomy maps along these paths is necessary to define a well-defined action of a flux element on the configuration space \cite{Kaminski1}. 

First of all consider a graph $\Gamma$ that is generated by the set $\{\gamma_1,...,\gamma_N\}$ of edges. Then each subgraph of a graph $\Gamma$ contain paths that are composition of edges in $\{\gamma_1,...,\gamma_N\}$ or inverse edges. For example the following set $\Gp:=\{\gamma_1\circ\gamma_2\circ\gamma_3,\gamma_4\}$ defines a subgraph of $\Gamma:=\{\gamma_1,\gamma_2,\gamma_3,\gamma_4\}$. Hence there is a natural identification available.

\begin{defi}
A subgraph $\Gp$ of a graph $\Gamma$ is always generated by a subset $\{\gamma_1,...,\gamma_M\}$ of the generating set $\{\gamma_1,...,\gamma_N\}$ of independent edges that generates the graph $\Gamma$. Hence each subgraph is identified with a subset of $\{\gamma_1^{\pm 1},...,\gamma_N^{\pm 1}\}$. This is called the \hypertarget{natural identification}{\textbf{natural identification of subgraphs}}.
\end{defi}

\begin{exa}\label{exa natidentif}
For example consider a subgraph $\Gp:=\{\gamma_1\circ\gamma_2,\gamma_3\circ\gamma_4,...,\gamma_{M-1}\circ\gamma_M\}$, which is identified naturally with a set $\{\gamma_1,...,\gamma_M\}$. The set $\{\gamma_1,...,\gamma_M\}$ is a subset of $\{\gamma_1,...,\gamma_N\}$ where $N=\vert \Gamma\vert$ and $M\leq N$. 

Another example is given by the graph $\Gpp:=\{\gamma_1,\gamma_2\}$ such that $\gamma_2=\gpe\circ\gpz$, then $\Gpp$ is identified naturally with $\{\gamma_1,\gpe,\gpz\}$. This set is a subset of $\{\gamma_1,\gpe,\gpz,\gamma_3,...,\gamma_{N-1}\}$. 
\end{exa}

\begin{defi}
Let $\Gamma$ be a graph, $\PD_\Gamma$ be the finite graph system. Let $\Gp:=\{\gamma_1,...,\gamma_M\}$be a subgraph of $\Gamma$.

A \hypertarget{holonomy map for a finite graph system}{\textbf{holonomy map for a finite graph system}} $\PD_\Gamma$ is a given by a pair of maps $(\ho_\Gamma,h_\Gamma)$ such that there exists a holonomy map\footnote{In the work the holonomy map for a finite graph system and the holonomy map for a finite path groupoid is denoted by the same pair $(\ho_\Gamma,h_\Gamma)$.} $(\ho_\Gamma,h_\Gamma)$ for the finite path groupoid $\fPG$ and
\beqs &\ho_\Gamma:\PD_\Gamma\rightarrow G^{\vert \Gamma\vert},\quad \ho_\Gamma(\{\gamma_1,...,\gamma_M\})=(\ho_\Gamma(\gamma_1),...,\ho_\Gamma(\gamma_M), e_G,...,e_G)\\
&h_\Gamma:V_\Gamma\rightarrow \{e_G\}
\eqs 
The set of all holonomy maps for the finite graph system is denoted by $\Hom(\PD_\Gamma,G^{\vert \Gamma\vert})$.

The image of a map $\ho_\Gamma$ on each subgraph $\Gp$ of the graph $\Gamma$ is given by
\beqs (\ho_\Gamma(\gamma_1),...,\ho_\Gamma(\gamma_M),e_G,...,e_G)
\eqs is an element of $G^{\vert \Gamma\vert}$. The set of all images of maps on subgraphs of $\Gamma$ is denoted by $\Ab_\Gamma$.
\end{defi}
The idea is now to study two different restrictions of the set $\PD_\Gamma$ of subgraphs. For a short notation of a ''set of  holonomy maps for a certain restricted set of subgraphs of a graph'' in this article the following notions are introduced.
\begin{defi}
If the subset of all disconnected subgraphs of the finite graph system $\PD_\Gamma$ is considered, then the restriction of $\Ab_\Gamma$, which is identified with $G^{\vert \Gamma\vert}$ appropriately, is called the \hypertarget{non-standard identification}{\textbf{non-standard identification of the configuration space}}. If the subset of all natural identified subgraphs of the finite graph system $\PD_\Gamma$ is considered, then the restriction of $\Ab_\Gamma$, which is identified with $G^{\vert \Gamma\vert}$ appropriately, is called the \hypertarget{natural identification}{\textbf{natural identification of the configuration space}}.
\end{defi}

A comment on the non-standard identification of $\Ab_\Gamma$ is the following. If $\Gp:=\{\gamma_1\circ\gamma_2\}$ and $\Gpp:=\{\gamma_2\}$ are two subgraphs of $\Gamma:=\{\gamma_1,\gamma_2,\gamma_3\}$. The graph $\Gp$ is a subgraph of $\Gamma$. Then evaluation of a map $\ho_\Gamma$ on a subgraph $\Gp$ is given by
\beqs \ho_\Gamma(\Gp)=(\ho_\Gamma(\gamma_1\circ\gamma_2),\ho_\Gamma(s(\gamma_2)),\ho_\Gamma(s(\gamma_3)))=(\ho_\Gamma(\gamma_1)\ho_\Gamma(\gamma_2),e_G,e_G)\in G^3
\eqs and the holonomy map of the subgraph $\Gpp$ of $\Gp$ is evaluated by
\beqs \ho_\Gamma(\Gpp)=(\ho_\Gamma(s(\gamma_1)),\ho_\Gamma(s(\gamma_2))\ho_\Gamma(\gamma_2),\ho_\Gamma(s(\gamma_3)))=(\ho_\Gamma(\gamma_2),e_G,e_G)\in G^3
\eqs

\begin{exa}
Recall example \thesubsection.\ref{exa natidentif}.
For example for a subgraph $\Gp:=\{\gamma_1\circ\gamma_2,\gamma_3\circ\gamma_4,...,\gamma_{M-1}\circ\gamma_M\}$, which is naturally identified with a set $\{\gamma_1,...,\gamma_M\}$. Then the holonomy map is evaluated at $\Gp$ such that \[\ho_\Gamma(\Gp)=(\ho_\Gamma(\gamma_1),\ho_\Gamma(\gamma_2),....,\ho_\Gamma(\gamma_M),e_G,...,e_G)\in G^N\] where $N=\vert \Gamma\vert$. For example, let $\Gp:=\{\gamma_1,\gamma_2\}$ such that $\gamma_2=\gpe\circ\gpz$ and which is naturally identified with $\{\gamma_1,\gpe,\gpz\}$. Hence \[\ho_\Gamma(\Gp)=(\ho_\Gamma(\gamma_1),\ho_\Gamma(\gpe),\ho_\Gamma(\gpz),e_G,...,e_G)\in G^N\] is true.

Another example is given by the disconnected graph $\Gp:=\{\gamma_1\circ\gamma_2\circ\gamma_3,\gamma_4\}$, which is a subgraph of $\Gamma:=\{\gamma_1,\gamma_2,\gamma_3,\gamma_4\}$. Then the non-standard identification is given by
\[\ho_\Gamma(\Gp)=(\ho_\Gamma(\gamma_1\circ\gamma_2\circ\gamma_3),\ho_\Gamma(\gamma_4),e_G,e_G)\in G^4\]

If the natural identification is used, then $\ho_\Gamma(\Gp)$ is idenified with 
\[(\ho_\Gamma(\gamma_1),\ho_\Gamma(\gamma_2),\ho_\Gamma(\gamma_3),\ho_\Gamma(\gamma_4))\in G^4\]

Consider the following example. Let $\Gppp:=\{\gamma_1,\alpha,\gamma_2,\gamma_3\}$ be a graph such that 
 \begin{center}
\includegraphics[width=0.2\textwidth]{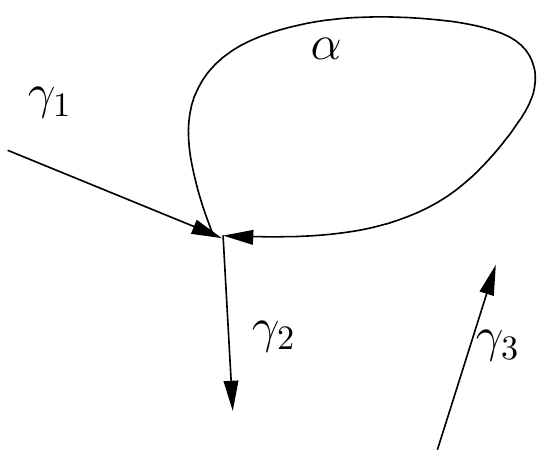}
\end{center}
Then notice the sets $\Gamma_1:=\{\gamma_1\circ\alpha,\gamma_3\}$ and $\Gamma_2:=\{\gamma_1\circ\alpha^{-1},\gamma_3\}$. In the non-standard identification of the configuration space $\Ab_{\Gppp}$ it is true that,
\beqs \ho_{\Gppp}(\Gamma_1)=(\ho_{\Gppp}(\gamma_1\circ\alpha),\ho_{\Gppp}(\gamma_3),e_G,e_G)\in G^4,\\
\ho_{\Gppp}(\Gamma_2)=(\ho_{\Gppp}(\gamma_1\circ\alpha^{-1}),\ho_{\Gppp}(\gamma_3),e_G,e_G)\in G^4
\eqs holds. Whereas in the natural identification of $\Ab_{\Gppp}$
 \beqs \ho_{\Gppp}(\Gamma_1)=(\ho_{\Gppp}(\gamma_1),\ho_{\Gppp}(\alpha),\ho_{\Gppp}(\gamma_3),e_G)\in G^4,\\
\ho_{\Gppp}(\Gamma_2)=(\ho_{\Gppp}(\gamma_1),\ho_{\Gppp}(\alpha^{-1}),\ho_{\Gppp}(\gamma_3),e_G)\in G^4
\eqs yields.
\end{exa}

The equivalence class of similar or equivalent groupoid morphisms defined in definition \ref{def similargroupoidhom} allows to define the following object.
The set of images of all holonomy maps of a finite graph system modulo the similar or equivalent groupoid morphisms equivalence relation is denoted by $\Ab_\Gamma/\bar\SimGroup_\Gamma$. 

\subsubsection{Transformations in finite path groupoids and finite graph systems}\label{subsubsec bisections}

The aim of this section is to clearify the graph changing operators in LQG framework and the role of finite diffeomorphisms in $\Sigma$. 
Therefore operations, which add, delete or transform paths, are introduced.  In particular translations in a finite path graph groupoid and in the groupoid $G$ over $\{e_G\}$ are studied. 

\paragraph*{Transformations in finite path groupoid\\[5pt]}

\begin{defi}
Let $\varphi$ be a $C^k$-diffeomorphism on $\Sigma$, which maps surfaces into surfaces. 

Then let $(\Phi_\Gamma,\varphi_\Gamma)$ be a pair of bijective maps, where $\varphi\vert_{V_\Gamma}=\varphi_\Gamma$ and 
\beq\Phi_\Gamma:\PD_\Gamma\Sigma\rightarrow\PD_\Gamma\Sigma\text{ and }\varphi_\Gamma:V_\Gamma\rightarrow V_\Gamma\eq 
such that 
\beq (s\circ\Phi_\Gamma)(\gamma)=(\varphi_\Gamma\circ s)(\gamma),\quad (t\circ \Phi_\Gamma)(\gamma)=(\varphi_\Gamma\circ t)(\gamma)\text{ for all }\gamma\in\PD_\Gamma\Sigma\eq holds such that $(\Phi_\Gamma,\varphi_\Gamma)$ defines a groupoid morphism.

Call the pair $(\Phi_\Gamma,\varphi_\Gamma)$ a \textbf{path-diffeomorphism of a finite path groupoid} $\PD_\Gamma\Sigma$ over $V_\Gamma$. Denote the set of finite path-diffeomorphisms by $\Diff(\PD_\Gamma\Sigma)$.
\end{defi}

Notice that, for $(\gamma,\gp)\in\PD_\Gamma\Sigma^{(2)}$ it is true that
\beq\label{eq requcombi0} \Phi_\Gamma(\gamma\circ\gp)=\Phi_\Gamma(\gamma)\circ\Phi_\Gamma(\gp)
\eq requires that
\beq\label{eq requcombi} (t\circ\Phi_\Gamma)(\gamma)=(s\circ\Phi_\Gamma)(\gp)
\eq Hence from \eqref{eq requcombi0} and \eqref{eq requcombi} it follows that, $\Phi_\Gamma(\idf_v)=\idf_{\varphi_\Gamma(v)}$ is true.

A path-diffeomorphism $(\Phi_\Gamma,\varphi_\Gamma)$ is lifted to $\Hom(\PD_\Gamma\Sigma,G)$. \\
The pair $(\ho_\Gamma\circ\Phi_\Gamma,h_\Gamma\circ\varphi_\Gamma)$ defined by
\beqs \ho_\Gamma\circ\Phi_\Gamma&: \PD_\Gamma\Sigma\rightarrow G,\quad \gamma\mapsto (\ho_\Gamma\circ\Phi_\Gamma)(\gamma)\\
h_\Gamma\circ\varphi_\Gamma&: V_\Gamma\rightarrow \{e_G\},\quad (h_\Gamma\circ\varphi_\Gamma)(v)=e_G
\eqs such that
\beqs &s_{\Hol}((\ho_\Gamma\circ\Phi_\Gamma)(\gamma))=(h_\Gamma\circ\varphi_\Gamma)(s(\gamma))=e_G,\\
&t_{\Hol}(\ho_\Gamma\circ\Phi_\Gamma(\gamma))=(h_\Gamma\circ\varphi_\Gamma)(t(\gamma))=e_G\text{ for all }\gamma\in\PD_\Gamma\Sigma
\eqs whenever $(\ho_\Gamma,h_\Gamma)\in\Hom(\PD_\Gamma\Sigma,G)$ and $(\Phi_\Gamma,\varphi_\Gamma)$ is a path-diffeomorphism, is a \hyperlink{holonomy map for a finite path groupoid}{holonomy map for a finite path groupoid} $\PD_\Gamma\Sigma$ over $V_\Gamma$.

\begin{defi}
A \textbf{left-translation in the finite path groupoid} $\PD_\Gamma\Sigma$ over $V_\Gamma$ at a vertex $v$ is a map defined by
\beqs L_\theta:\PD_\Gamma\Sigma^v\rightarrow \PD_\Gamma\Sigma^{w},\quad \gamma\mapsto L_{\theta}(\gamma):=\theta\circ\gamma
\eqs
for some $\theta\in\PD_\Gamma\Sigma_{v}^{w}$ and all $\gamma\in\PD_\Gamma\Sigma^v$.
\end{defi} 
In analogy a right-translation $R_\theta$ and an inner-translation $I_{\theta,\theta^\prime}$ in the finite path groupoid $\PD_\Gamma\Sigma$ over $V_\Gamma$ at a vertex $v$ can be defined.
\begin{rem}
Let $(\Phi_\Gamma,\varphi_\Gamma)$ be a path-diffeomorphism on a finite path groupoid $\PD_\Gamma\Sigma$ over $V_\Gamma$. Then a left-translation in the finite path groupoid $\PD_\Gamma\Sigma$ over $V_\Gamma$ at a vertex $v$ is defined by a path-diffeomorphism $(\Phi_\Gamma,\varphi_\Gamma)$ and the following object
\beq L_{\Phi_\Gamma}:\PD_\Gamma\Sigma^v\rightarrow \PD_\Gamma\Sigma^{\varphi_\Gamma(v)},\quad \gamma\mapsto L_{\Phi_\Gamma}(\gamma):=\Phi_\Gamma(\gamma)\text{ for }\gamma\in\PD_\Gamma\Sigma^v
\eq
Furthermore a right-translation in the finite path groupoid $\PD_\Gamma\Sigma$ over $V_\Gamma$ at a vertex $v$ is defined by a path-diffeomorphism $(\Phi_\Gamma,\varphi_\Gamma)$ and the following object
\beq R_{\Phi_\Gamma}:\PD_\Gamma\Sigma_v\rightarrow \PD_\Gamma\Sigma_{\varphi_\Gamma(v)},\quad \gamma\mapsto R_{\Phi_\Gamma}(\gamma):=\Phi_\Gamma(\gamma)\text{ for }\gamma\in\PD_\Gamma\Sigma_v
\eq

Finally an inner-translation in the finite path groupoid $\PD_\Gamma\Sigma$ over $V_\Gamma$ at the vertices $v$ and $w$ is defined by
\beqs  I_{\Phi_\Gamma}:\PD_\Gamma\Sigma^v_w\rightarrow \PD_\Gamma\Sigma^{\varphi_\Gamma(v)}_{\varphi_\Gamma(w)},\quad \gamma\mapsto I_{\Phi_\Gamma}(\gamma)=\Phi_\Gamma(\gamma)\text{ for }\gamma\in\PD_\Gamma\Sigma^v_w
\eqs where $(s\circ\Phi_\Gamma)(\gamma)=\varphi_\Gamma(v)$ and $(t\circ\Phi_\Gamma)(\gamma)=\varphi_\Gamma(w)$.
\end{rem}

In the following considerations the right-translation in a finite path groupoid is focused, but there is a generalisation to left-translations and inner-translations.
\begin{defi}
A \hypertarget{bisection of a finite path groupoid}{\textbf{bisection of a finite path groupoid}} $\PD_\Gamma\Sigma$ over $V_\Gamma$ is a map $\sigma:V_\Gamma\rightarrow\PD_\Gamma\Sigma$, which is right-inverse to the map $s:\PD_\Gamma\Sigma\rightarrow V_\Gamma$ (i.o.w. $s\circ\sigma=\id_{V_\Gamma}$) and such that $t\circ\sigma:V_\Gamma\rightarrow V_{\Gamma}$ is a bijective map\footnote{Note that in the infinite case of path groupoids an additional condition for the map $t\circ\sigma:\Sigma\rightarrow\Sigma$ has to be required. The map has to be a diffeomorphism. Observe that, the map $t\circ\sigma$ defines the finite diffeomorphism $\varphi_\Gamma:V_\Gamma\rightarrow V_\Gamma$.}. The set of bisections on $\PD_\Gamma\Sigma$ over $V_\Gamma$ is denoted $\mathfrak{B}(\PD_\Gamma\Sigma)$.
\end{defi}
\begin{rem}\label{rem defiofright}
Discover that, a bisection $\sigma\in\mathfrak{B}(\PD_\Gamma\Sigma)$ defines a path-diffeomorphism $(\varphi_\Gamma,\Phi_\Gamma)\in \Diff(\PD_\Gamma\Sigma)$, where $\varphi_\Gamma=t\circ\sigma$ and $\Phi_\Gamma$ is given by the right-translation $R_{\sigma(v)}:\PD_\Gamma\Sigma_v\rightarrow\PD_\Gamma\Sigma_{\varphi_\Gamma(v)}$ in $\fPG$, where $R_{\sigma(v)}(\gamma)=\Phi_\Gamma(\gamma)$ for all $\gamma\in\PD_\Gamma\Sigma_v$ and for a fixed $v\in V_\Gamma$. The right-translation is defined by 
\beq\label{eq Rendv}R_{\sigma(v)}(\gamma):=
\left\{\begin{array}{ll}
 \gamma\circ\sigma(v) & v=t(\gamma)\\
\gamma\circ\idf_{t(\gamma)} & v\neq t(\gamma)\\
\end{array}\right.\\
\eq whenever $t(\gamma)$ is the target vertex of a non-trivial path $\gamma$ in $\Gamma$. For a trivial path $\idf_v$ the right-translation is defined by $R_{\sigma(v)}(\idf_v)=\idf_{(t\circ\sigma)(v)}$ and $R_{\sigma(v)}(\idf_w)=\idf_{w}$ whenever $v\neq w$. The right-translation $R_{\sigma(v)}$ is required to be bijective. Before this result is proven in lemma \ref{lem path-diffeom} notice the following considerations. 
\end{rem}

Note that, $(R_{\sigma(v)},t\circ \sigma)$ transfers to the holonomy map such that
\beq\label{eq righttransl} (\ho_\Gamma\circ R_{\sigma(t(\gp))}(\gamma\circ\gp)&=\ho_\Gamma(\gamma\circ\gp\circ\sigma(t(\gp)))\\
&=\ho_\Gamma(\gamma)\ho_\Gamma(\gp\circ\sigma(t(\gp)))
\eq is true.
There is a bijective map between a right-translation $R_{\sigma(v)}:\PD_{\Gamma}\Sigma_v\rightarrow\PD_\Gamma\Sigma_{(t\circ\sigma)(v)}$ and a path-diffeomorphism $(\varphi_\Gamma,\Phi_\Gamma)$. In particular observe that, $\sigma\in\mathfrak{B}(\PD_\Gamma\Sigma_v)$ and $(\varphi_\Gamma,\Phi_\Gamma)\in\Diff(\PD_\Gamma\Sigma_v)$. Simply speaking the path-diffeomorphism does not change the source and target vertex at the same time. The path-diffomorphism changes the target vertex by a (finite) diffeomorphism and, therefore, the path is transformed. 

Bisections $\sigma$ in a finite path groupoid can be transfered, likewise path-diffeomorphisms, to holonomy maps. The pair $(\ho_\Gamma\circ \Phi_\Gamma, h_\Gamma\circ\varphi_\Gamma)$ of the maps
defines a pair of maps $(\ho_\Gamma\circ \Phi_\Gamma,h_\Gamma\circ\varphi_\Gamma)$ by 
\beq \ho_\Gamma\circ \Phi_\Gamma:\PD_\Gamma\Sigma_v\rightarrow G\text{ and } h_\Gamma\circ\varphi_\Gamma: V_\Gamma\rightarrow\{e_G\}
\eq which is a \hyperlink{holonomy map for a finite path groupoid}{holonomy map for a finite path groupoid} $\PD_\Gamma\Sigma$ over $V_\Gamma$.

\begin{lem}\label{lem groupbisection}The set $\mathfrak{B}(\PD_\Gamma\Sigma)$ of bisections on the finite path groupoid $\PD_\Gamma\Sigma$ over $V_\Gamma$ forms a group.
\end{lem}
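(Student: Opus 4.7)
The plan is to define explicit group operations on $\mathfrak{B}(\PD_\Gamma\Sigma)$ and check the axioms directly from the groupoid structure. A bisection $\sigma$ comes with an induced bijection $\varphi_\sigma:=t\circ\sigma$ of $V_\Gamma$; the key observation is that these induced bijections compose, and this dictates how two bisections must be multiplied.

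Concretely, for $\sigma_1,\sigma_2\in\mathfrak{B}(\PD_\Gamma\Sigma)$ I would define
$$(\sigma_1\cdot\sigma_2)(v):=\sigma_1(v)\circ\sigma_2(\varphi_{\sigma_1}(v)),\qquad v\in V_\Gamma.$$
The pair is composable because $t(\sigma_1(v))=\varphi_{\sigma_1}(v)=s(\sigma_2(\varphi_{\sigma_1}(v)))$, so this lies in $\PD_\Gamma\Sigma$. Moreover, $s((\sigma_1\cdot\sigma_2)(v))=s(\sigma_1(v))=v$ and $t((\sigma_1\cdot\sigma_2)(v))=(\varphi_{\sigma_2}\circ\varphi_{\sigma_1})(v)$, so $s\circ(\sigma_1\cdot\sigma_2)=\id_{V_\Gamma}$ and $\varphi_{\sigma_1\cdot\sigma_2}=\varphi_{\sigma_2}\circ\varphi_{\sigma_1}$ is bijective as a composition of bijections; hence $\sigma_1\cdot\sigma_2\in\mathfrak{B}(\PD_\Gamma\Sigma)$. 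Associativity reduces to associativity of the groupoid composition: unwinding both bracketings, one sees that $((\sigma_1\cdot\sigma_2)\cdot\sigma_3)(v)$ and $(\sigma_1\cdot(\sigma_2\cdot\sigma_3))(v)$ both equal
$$\sigma_1(v)\circ\sigma_2(\varphi_{\sigma_1}(v))\circ\sigma_3\bigl((\varphi_{\sigma_2}\circ\varphi_{\sigma_1})(v)\bigr).$$

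The neutral element is the assignment $e:v\mapsto\idf_v$; it is a bisection with $\varphi_e=\id_{V_\Gamma}$, and the unit relations listed after (\ref{groupoid0}) give $\sigma\cdot e=e\cdot\sigma=\sigma$ vertex by vertex. For inverses, bijectivity of $\varphi_\sigma$ lets me set
$$\sigma^{-1}(v):=\sigma\bigl(\varphi_\sigma^{-1}(v)\bigr)^{-1}.$$
Then $s(\sigma^{-1}(v))=t(\sigma(\varphi_\sigma^{-1}(v)))=\varphi_\sigma(\varphi_\sigma^{-1}(v))=v$ and $t(\sigma^{-1}(v))=\varphi_\sigma^{-1}(v)$, so $s\circ\sigma^{-1}=\id_{V_\Gamma}$ and $\varphi_{\sigma^{-1}}=\varphi_\sigma^{-1}$ is bijective, establishing $\sigma^{-1}\in\mathfrak{B}(\PD_\Gamma\Sigma)$. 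The identities $\sigma\cdot\sigma^{-1}=e=\sigma^{-1}\cdot\sigma$ then follow by evaluating at $v$ and at $\varphi_\sigma^{-1}(v)$, applying the relations (\ref{groupoid0}).

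I do not anticipate any real obstacle; the content is bookkeeping with the groupoid axioms. The one conceptual point worth emphasising is that the bijectivity requirement on $t\circ\sigma$ built into the definition of bisections is precisely what allows the inverse to be defined, via the preimage under $\varphi_\sigma$; without it, one would only obtain a monoid structure on $\mathfrak{B}(\PD_\Gamma\Sigma)$.
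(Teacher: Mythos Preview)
Your argument is correct and follows essentially the same route as the paper: define the product by composing one bisection with the other shifted along the induced vertex bijection, take the unit as the object inclusion $v\mapsto\idf_v$, and build the inverse from $\varphi_\sigma^{-1}$. The only discrepancy is a convention: the paper writes the product as $(\sigma\ast\sigma')(v)=\sigma'(v)\circ\sigma(t(\sigma'(v)))$, i.e.\ with the arguments in the opposite order from your $\sigma_1\cdot\sigma_2$, so that $\sigma\mapsto t\circ\sigma$ becomes a homomorphism rather than an anti-homomorphism to $\Diff(V_\Gamma)$; this does not affect the validity of the group claim.
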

\begin{proofs}The group multiplication is given by
\beqs (\sigma\ast\sigma^\prime)(v) =\sigma^\prime(v)\circ\sigma(t(\sigma^\prime(v)))\text{ for }v\in V_\Gamma
\eqs whenever $\sigma^\prime(v)\in\PD_\Gamma\Sigma^v_{\varphi_\Gamma^\prime(v)}$ and $\sigma(t(\sigma^\prime(v)))\in\PD_\Gamma\Sigma^{(t\circ\sigma^\prime)(v)}_{\varphi_\Gamma(v)}$.

Clearly the group multiplication is associative.
The unit $\id$ is equivalent to the object inclusion $v\mapsto\idf_v$ of the groupoid $\fPG$, where $\idf_v$ is the constant loop at $v$, and the inversion is given by
\beqs \sigma^{-1}(v)=\sigma((t\circ\sigma)^{-1}(v))^{-1}\text{ for }v\in V_\Gamma
\eqs 
\end{proofs}

The group property of bisections $\mathfrak{B}(\PD_\Gamma\Sigma)$ carries over to holonomy maps. Using the group multiplication $\cdot $ of $G$ conclude that
\beqs (\ho_\Gamma\circ R_{(\sigma\ast\sigma^\prime)(v)})(\idf_{v})=\ho_\Gamma\circ (R_{\sigma^\prime(v)}\circ R_{\sigma(t(\sigma^\prime(v)))})(\idf_{v})
= \ho_\Gamma(\sigma^\prime(v))\cdot\ho_\Gamma(\sigma(t(\sigma^\prime(v))))\text{ for }v\in V_\Gamma
\eqs is true. 
\begin{rem}
Moreover right-translations define path-diffeomorphisms, i.e. $R_{(\sigma)(v)}=\Phi_\Gamma$ and $\varphi_\Gamma=t\circ\sigma$ whenever $v\in V_\Gamma$.
But for two bisections $\sigma_\Gp,\breve\sigma_\Gp\in\mathfrak{B}(\PD_\Gamma\Sigma)$ the object $\sigma_\Gp(v)\circ\breve\sigma_\Gp(v)$ is not comparable with $(\sigma_\Gp\ast\breve\sigma_\Gp)(v)$. Then for the composition $\Phi_1(\gamma)\circ\Phi_2(\gamma)$, there exists no path-diffeomorphism $\Phi$ such that $\Phi_1(\gamma)\circ\Phi_2(\gamma)=\Phi(\gamma)$ yields in general. Moreover generally the object $\Phi_1(\gamma)\circ\Phi_2(\gp)=\Phi(\gamma\circ\gp)$ is not well-defined.

But the following is defined
\beq\label{def compositionofdiffeo} R_{(\sigma\ast\sigma^\prime)(v)}(\gamma)=\Phi_\Gamma^\prime(\gamma)\circ\Phi_\Gamma(\idf_{\varphi_\Gamma^\prime(v)})=:(\Phi_\Gamma^\prime\ast\Phi_\Gamma)(\gamma)
\eq whenever $\gamma\in\PD_\Gamma\Sigma_v$, $(\varphi_\Gamma,\Phi_\Gamma)\in\Diff(\PD_\Gamma\Sigma_v)$ and $(\varphi_\Gamma^\prime,\Phi_\Gamma^\prime)\in\Diff(\PD_\Gamma\Sigma_{\varphi_\Gamma^\prime(v)})$ are path-diffeomorphisms  such that $\varphi_\Gamma=t\circ\sigma$, $\Phi_\Gamma=R_{\sigma(\varphi_\Gamma^\prime(v))}$ and  $\varphi_\Gamma^\prime=t\circ\sigma^\prime$, $\Phi_\Gamma^\prime=R_{\sigma^\prime(v)}$.

Moreover for $(\gamma,\gp)\in\PD_\Gamma\Sigma^{(2)}$ and $\gp\in\PD_\Gamma\Sigma_v$ it is true that
\beqs (\Phi_\Gamma^\prime\ast\Phi_\Gamma)(\gamma\circ\gp)=\Phi_\Gamma^\prime(\gamma\circ\gp)\circ\Phi_\Gamma(\idf_{\varphi_\Gamma^\prime(v)}) = \Phi_\Gamma^\prime(\gamma)\circ\Phi_\Gamma^\prime(\gp)\circ\Phi_\Gamma(\idf_{\varphi_\Gamma^\prime(v)}) = \Phi_\Gamma^\prime(\gamma)\circ(\Phi_\Gamma^\prime\ast\Phi_\Gamma)(\gp)
\eqs holds.
\end{rem}

Then the following lemma easily follows.

\begin{lem}\label{lem path-diffeom}Let $\sigma$ be a bisection contained in $\mathfrak{B}(\PD_\Gamma\Sigma)$ and $v\in V_\Gamma$.

The pair $(R_{\sigma(v)},t\circ\sigma)$ of maps such that
\beqs &R_{\sigma(v)}:\PD_\Gamma\Sigma_v\rightarrow\PD_\Gamma\Sigma_{(t\circ \sigma)(v)},\quad &
s\circ R_{\sigma(v)} = (t\circ\sigma)\circ s\\
&t\circ\sigma:V_\Gamma\rightarrow V_\Gamma,\quad &t\circ R_{\sigma(v)}= (t\circ\sigma)\circ t
\eqs defined in remark \ref{rem defiofright} is a path-diffeomorphism in $\fPG$.
\end{lem}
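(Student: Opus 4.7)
The plan is to verify each clause of the definition of a path-diffeomorphism for the pair $(R_{\sigma(v)}, t\circ\sigma)$ using the explicit formula (\ref{eq Rendv}), the groupoid axioms (\ref{groupoid0}), and the group structure on bisections established by Lemma \ref{lem groupbisection}. Bijectivity of $t\circ\sigma$ on $V_\Gamma$ is already part of the definition of a bisection, so the substantive content sits on the side of $R_{\sigma(v)}$.

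First I would handle the source and target intertwining relations. For $\gamma\in\PD_\Gamma\Sigma_v$ the first branch of (\ref{eq Rendv}) yields $R_{\sigma(v)}(\gamma)=\gamma\circ\sigma(v)$ with $\sigma(v)\in\PD_\Gamma\Sigma^v_{(t\circ\sigma)(v)}$ by $s\circ\sigma=\id_{V_\Gamma}$. The groupoid axioms then give $t(R_{\sigma(v)}(\gamma))=t(\sigma(v))=(t\circ\sigma)(v)=(t\circ\sigma)(t(\gamma))$ and $s(R_{\sigma(v)}(\gamma))=s(\gamma)$, matching the two stated compatibility conditions; in particular the codomain of $R_{\sigma(v)}$ is correctly $\PD_\Gamma\Sigma_{(t\circ\sigma)(v)}$.

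Next I would exhibit an explicit two-sided inverse to establish bijectivity of $R_{\sigma(v)}$. The inversion formula $\sigma^{-1}(w)=\sigma((t\circ\sigma)^{-1}(w))^{-1}$ from the proof of Lemma \ref{lem groupbisection}, evaluated at $w=(t\circ\sigma)(v)$, yields $\sigma^{-1}((t\circ\sigma)(v))=\sigma(v)^{-1}\in\PD_\Gamma\Sigma^{(t\circ\sigma)(v)}_v$. The candidate inverse is the analogous right-translation $R_{\sigma^{-1}((t\circ\sigma)(v))}:\PD_\Gamma\Sigma_{(t\circ\sigma)(v)}\to\PD_\Gamma\Sigma_v$, and the axioms (\ref{groupoid0}) collapse $\gamma\circ\sigma(v)\circ\sigma(v)^{-1}\simeq\gamma\circ\idf_v\simeq\gamma$ to confirm both composition identities.

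The main technical step is the groupoid morphism property (\hyperlink{G1}{G1}). For a composable pair $(\gamma,\gp)$ with $\gp\in\PD_\Gamma\Sigma_v$ one has $t(\gp)=v$ while generically $t(\gamma)=s(\gp)\neq v$, so the two branches of (\ref{eq Rendv}) apply asymmetrically: $R_{\sigma(v)}(\gamma)=\gamma$ by the second branch and $R_{\sigma(v)}(\gp)=\gp\circ\sigma(v)$ by the first. Associativity of composition then gives $R_{\sigma(v)}(\gamma)\circ R_{\sigma(v)}(\gp)=\gamma\circ(\gp\circ\sigma(v))=(\gamma\circ\gp)\circ\sigma(v)=R_{\sigma(v)}(\gamma\circ\gp)$, and the loop case $t(\gamma)=v$ is treated identically since $\sigma(v)$ still appears exactly once at the rightmost end in either grouping. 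This interplay between the two branches of (\ref{eq Rendv}) is the one step requiring care; once it is settled, combining it with the source/target compatibility and the bijectivity from the previous paragraph identifies $(R_{\sigma(v)}, t\circ\sigma)$ as a path-diffeomorphism in $\fPG$.
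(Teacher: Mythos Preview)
Your proof is correct and follows essentially the same approach as the paper's: both verify the path-diffeomorphism definition clause by clause, checking the multiplicativity property (G1) via the two-branch formula~(\ref{eq Rendv}), establishing bijectivity by exhibiting $R_{\sigma^{-1}}$ as the inverse, and reading off the source/target compatibility. Your treatment of the (G1) case split (whether $t(\gamma)=v$ or not) is slightly more explicit than the paper's, but the argument is the same.
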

\begin{proofs}This follows easily from the derivation
\beqs R_{\sigma(t(\gp))}(\gamma\circ\gp)&=\gamma\circ\gp\circ\sigma(t(\gp))
= R_{\sigma(t(\gp))}(\gamma)\circ R_{\sigma(t(\gp))}(\gp)
\eqs 
\beqs R_{\sigma(t(\gamma))}(\idf_{s(\gamma)}\circ\gamma)&=R_{\sigma(t(\gamma))}(\idf_{s(\gamma)})\circ R_{\sigma(t(\gamma))}(\gamma)=\idf_{s(\gamma)}\circ\gamma\circ\sigma(t(\gamma))\\
R_{\sigma(t(\gamma))}(\gamma\circ\idf_{t(\gamma)})&=R_{\sigma(t(\gamma))}(\gamma)\circ R_{\sigma(t(\gamma))}(\idf_{t(\gamma)})=\gamma\circ\sigma(t(\gamma))\circ\idf_{(t\circ\sigma)(t(\gamma))}
\eqs The inverse map satisfies 
\beqs R^{-1}_{\sigma(v)}(\gamma\circ\sigma(v))=R_{\sigma^{-1}(v)}(\gamma\circ\sigma(v))
=\gamma\circ\sigma(v)\circ\sigma^{-1}(v)=\gamma
\eqs whenever $v=t(\gamma)$, 
\beqs R^{-1}_{\sigma(v)}(\gamma)=\gamma
\eqs whenever $v\neq t(\gamma)$ and
\beqs R^{-1}_{\sigma(v)}(\idf_{(t\circ\sigma)(v)})=\idf_{v}
\eqs

Moreover derive
\beqs ( s\circ R_{\sigma(v)})(\gp)= ((t\circ\sigma)\circ s)(\gp)
\eqs for all $\gp\in\PD_\Gamma\Sigma_v$ and a fixed bisection $\sigma\in \mathfrak{B}(\PD_\Gamma\Sigma)$.
 \end{proofs}

Notice that, $L_{\sigma(v)}$ and $I_{\sigma(v)}$ similarly to the pair $(R_{\sigma(v)},t\circ\sigma)$ can be defined. Summarising the pairs $(R_{\sigma(v)},t\circ\sigma)$, $(L_{\sigma(v)},t\circ\sigma)$ and $(I_{\sigma(v)},t\circ\sigma)$ for a bisection $\sigma \in\mathfrak{B}(\PD_\Gamma\Sigma)$ are path-diffeomorphisms of a finite path groupoid $\fPG$.

In general a right-translation $(R_\sigma,t\circ\sigma)$ in the finite path groupoid $\PD_\Gamma\Sigma$ over $\Sigma$ for a bisection $\sigma\in\mathfrak{B}(\PD_\Gamma\Sigma)$ is defined by the bijective maps $R_\sigma$ and $t\circ\sigma$, which are given by 
\beqs 
&R_{\sigma}:\PD_\Gamma\Sigma\rightarrow\PD_\Gamma\Sigma,\quad &
s\circ R_{\sigma} =  s\qquad\quad\text{ }\\
&t\circ\sigma:V_\Gamma\rightarrow V_\Gamma,\quad &t\circ R_{\sigma}= (t\circ\sigma)\circ t\\
&R_\sigma(\gamma):=\gamma\circ\sigma(t(\gamma))\quad\forall \gamma\in \PD_\Gamma\Sigma;\quad R^{-1}_\sigma:=R_{\sigma^{-1}}
\eqs For example for a fixed suitable bisection $\sigma$ the right-translation is $R_\sigma(\idf_v)=\gamma$, then $R^{-1}_\sigma(\gamma)=\gamma\circ\gamma^{-1}=\idf_v$ for $v=s(\gamma)$. Clearly the right-translation $(R_\sigma,t\circ\sigma)$ is not a groupoid morphism in general. 

\begin{defi}
Define for a given bisection $\sigma$ in $\mathfrak{B}(\PD_\Gamma\Sigma)$, the \textbf{right-translation in the groupoid $G$ over $\{e_G\}$} through
\beqs &\ho_\Gamma\circ R_{\sigma}:\PD_\Gamma\Sigma\rightarrow G, \quad \gamma\mapsto (\ho_\Gamma\circ R_\sigma)(\gamma):=\ho_\Gamma(\gamma\circ\sigma(t(\gamma)))= \ho_\Gamma(\gamma)\cdot\ho_\Gamma(\sigma(t(\gamma))) \\
&h_\Gamma\circ t\circ\sigma:V_\Gamma\rightarrow e_G 
\eqs 

Furthermore for a fixed $\sigma\in\mathfrak{B}(\PD_\Gamma\Sigma)$ define
the \textbf{left-translation in the groupoid $G$ over $\{e_G\}$} by
\beqs &\ho_\Gamma\circ L_\sigma:\PD_\Gamma\Sigma\rightarrow G,\quad \gamma\mapsto \ho_\Gamma(\sigma((t\circ\sigma)^{-1}(s(\gamma)))\circ\gamma)=\ho_\Gamma(\sigma((t\circ\sigma)^{-1}(s(\gamma))))\cdot\ho_\Gamma(\gamma)\\
&h_\Gamma\circ t\circ\sigma:V_\Gamma\rightarrow e_G 
\eqs
and the \textbf{inner-translation in the groupoid $G$ over $\{e_G\}$}
\beqs &\ho_\Gamma\circ I_\sigma:\PD_\Gamma\Sigma\rightarrow G,\quad \gamma\mapsto \ho_\Gamma(\sigma((t\circ\sigma)^{-1}(s(\gamma)))\circ\gamma\circ\sigma(t(\gamma)))=\ho_\Gamma(\sigma((t\circ\sigma)^{-1}(s(\gamma))))\cdot\ho_\Gamma(\gamma)\cdot\ho_\Gamma(\sigma(t(\gamma)))\\
&h_\Gamma\circ t\circ\sigma:V_\Gamma\rightarrow e_G 
\eqs such that $I_\sigma=L_{\sigma^{-1}}\circ R_\sigma$.
\end{defi}

The pairs $(R_\sigma,t\circ\sigma)$ and $(L_\sigma,t\circ\sigma)$ are not groupoid morphisms. Whereas the pair $(I_\sigma,t\circ\sigma)$ is a groupoid morphism, since for all pairs $(\gamma,\gp)\in\PD_\Gamma\Sigma^{(2)}$ such that $t(\gamma)=s(\gp)$ it is true that $\sigma(t(\gamma)) \circ\sigma((t\circ\sigma)^{-1}(t(\gamma)))^{-1}=\idf_{t(\gamma)}$ holds. Notice that, in this situation $\sigma(t(\gamma))=\sigma(t(\gamma\circ\gp))$ is satisfied.

\begin{prop}
The map $\sigma\mapsto R_\sigma$ is a group isomorphism, i.e. $R_{\sigma\ast\sigma^\prime}=R_\sigma\circ R_{\sigma^\prime}$ and where $\sigma\mapsto t\circ\sigma$ is a group isomorphism from $\mathfrak{B}(\PD_\Gamma\Sigma)$ to the group of finite diffeomorphisms $\Diff(V_\Gamma)$ in a finite subset $V_\Gamma$ of $\Sigma$. 

The maps $\sigma\mapsto L_\sigma$ and $\sigma\mapsto I_\sigma$ are group isomorphisms.
\end{prop}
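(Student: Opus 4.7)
\begin{proofs}[Proof plan]
The plan is to verify each isomorphism claim in turn, using only the group law on bisections established in Lemma \ref{lem groupbisection} and the explicit formulas from the preceding remark. Throughout, the equivalence of paths modulo same-holonomy for all connections is used implicitly, so that two bisections agreeing on objects give the same translation.

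First I would establish that $\sigma\mapsto R_\sigma$ is a homomorphism. Starting from the definition $R_\sigma(\gamma)=\gamma\circ\sigma(t(\gamma))$ and the group law $(\sigma\ast\sigma^\prime)(v)=\sigma^\prime(v)\circ\sigma(t(\sigma^\prime(v)))$, a direct substitution gives
\beqs R_{\sigma\ast\sigma^\prime}(\gamma)=\gamma\circ\sigma^\prime(t(\gamma))\circ\sigma(t(\sigma^\prime(t(\gamma)))),
\eqs while
\beqs (R_\sigma\circ R_{\sigma^\prime})(\gamma)=R_\sigma(\gamma\circ\sigma^\prime(t(\gamma)))=\gamma\circ\sigma^\prime(t(\gamma))\circ\sigma\bigl(t(\gamma\circ\sigma^\prime(t(\gamma)))\bigr),
\eqs and $t(\gamma\circ\sigma^\prime(t(\gamma)))=t(\sigma^\prime(t(\gamma)))$ yields equality. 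Injectivity follows by evaluating both $R_\sigma$ and $R_{\sigma^\prime}$ at the trivial loops $\idf_v$ to recover $\sigma(v)=\sigma^\prime(v)$ for every $v\in V_\Gamma$. Surjectivity onto the group generated by the right-translations is tautological. The identity bisection $v\mapsto\idf_v$ maps to the identity translation and lemma \ref{lem groupbisection} supplies the inverse via $\sigma^{-1}$, which by Lemma \ref{lem path-diffeom} also furnishes $R_\sigma^{-1}=R_{\sigma^{-1}}$.

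Next I would treat $\sigma\mapsto t\circ\sigma$. The homomorphism property is the computation
\beqs t\bigl((\sigma\ast\sigma^\prime)(v)\bigr)=t\bigl(\sigma^\prime(v)\circ\sigma(t(\sigma^\prime(v)))\bigr)=t\bigl(\sigma(t(\sigma^\prime(v)))\bigr)=(t\circ\sigma)\circ(t\circ\sigma^\prime)(v).
\eqs For surjectivity onto $\Diff(V_\Gamma)$, given any bijection $\varphi:V_\Gamma\to V_\Gamma$ one constructs a bisection $\sigma_\varphi$ by choosing, for each $v\in V_\Gamma$, a path in $\PD_\Gamma\Sigma^v_{\varphi(v)}$; transitivity of the groupoid, guaranteed after Definition~\ref{path groupoid}, ensures this is possible. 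Injectivity is the place where the same-holonomy identification enters: if $t\circ\sigma=t\circ\sigma^\prime$, then $\sigma(v)$ and $\sigma^\prime(v)$ are parallel paths with identical source and target, and the quotient by same-holonomy for all connections identifies them, giving $\sigma=\sigma^\prime$ in $\mathfrak{B}(\PD_\Gamma\Sigma)$.

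Finally the statements for $L_\sigma$ and $I_\sigma$ follow by analogous direct verifications. For the left-translation the key identity is $(t\circ\sigma)^{-1}(s(\gamma))$ appearing in its definition, which transforms correctly under $\ast$ because $t\circ\sigma$ is already known to be a group homomorphism, so $((t\circ\sigma)\circ(t\circ\sigma^\prime))^{-1}=(t\circ\sigma^\prime)^{-1}\circ(t\circ\sigma)^{-1}$ combines with the $\ast$-formula to give $L_{\sigma\ast\sigma^\prime}=L_\sigma\circ L_{\sigma^\prime}$. The case of $I_\sigma$ reduces, via the factorisation $I_\sigma=L_{\sigma^{-1}}\circ R_\sigma$ recorded in the preceding definition, to combining the two already-established homomorphism properties. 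I expect the main technical nuisance to be the left-translation bookkeeping: correctly tracking that the argument of $\sigma$ in $L_\sigma$ depends on $s(\gamma)$ through $(t\circ\sigma)^{-1}$, so that the homomorphism check requires the inverse in $\Diff(V_\Gamma)$ rather than in $\mathfrak{B}(\PD_\Gamma\Sigma)$ directly; but once the group isomorphism for $\sigma\mapsto t\circ\sigma$ is in hand, this is a purely mechanical verification.
\end{proofs}
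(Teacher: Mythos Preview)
The paper states this proposition without proof, so there is no argument to compare against; your plan for the homomorphism identities $R_{\sigma\ast\sigma'}=R_\sigma\circ R_{\sigma'}$ and $t\circ(\sigma\ast\sigma')=(t\circ\sigma)\circ(t\circ\sigma')$ is the standard verification and is correct as written.

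There is, however, a genuine gap in your injectivity argument for $\sigma\mapsto t\circ\sigma$. You claim that if $t\circ\sigma=t\circ\sigma'$, then for each $v$ the paths $\sigma(v)$ and $\sigma'(v)$ share source and target and are therefore identified by the same-holonomy-for-all-connections relation. That inference is false: two distinct paths with common endpoints generically carry different holonomies, so the quotient of Definition~\ref{def sameholanal} does not collapse them. Concretely, whenever the groupoid has a nontrivial isotropy group $\PD_\Gamma\Sigma^v_v$ (any nontrivial loop at some vertex), you can modify $\sigma$ at $v$ by postcomposing with that loop and obtain a different bisection with the same $t\circ\sigma$. So $\sigma\mapsto t\circ\sigma$ is in general only a surjective group homomorphism, not an isomorphism, and the proposition's wording should be read accordingly (the paper itself later says only ``the map $\tilde\sigma\mapsto t\circ\tilde\sigma$ is a group isomorphism between the group of bisections \ldots\ and the group $\Diff(V_\Gamma)$'' without justification, which appears to be an imprecision). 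Your argument should therefore either restrict to a setting with trivial isotropy or downgrade the claim to a surjective homomorphism; do not invoke the same-holonomy relation here, as it does not do what you need.
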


There is a generalisation of path-diffeomorphisms in the finite path groupoid, which coincide with the graphomorphism presented by Fleischhack in \cite{Fleischhack06}. In this approach the diffeomorphism $\varphi:\Sigma\rightarrow\Sigma$ changes the source and target vertex of a path $\gamma$. Consequently the path-diffeomorphism $(\Phi,\varphi)$, which implements the inner-translation $I_{\Phi}$ in the path groupoid $\PGs$, is a graphomorphism in the context of Fleischhack. Some element of the set of graphomorphisms is directly related to a right-translation $R_\sigma$ in the path groupoid.  Precisely for every $v\in\Sigma$ and $\sigma\in\mathfrak{B}(\PD\Sigma)$ the pairs $(R_{\sigma(v)},t\circ\sigma)$, $(L_{\sigma(v)},t\circ\sigma)$ and $(I_{\sigma(v)},t\circ\sigma)$ define graphomorphism. Furthermore the right-translation $R_{\sigma(v)}$,  the left-translation $L_{\sigma(v)}$ and the inner-translation $I_{\sigma(v)}$ are required to be bijective maps, and hence the maps cannot map non-trivial paths to trivial paths. This property restricts the set of all graphomorphism, which is generated by these translations. In particular in this article graph changing operations, which change the number of edges of a graph, are studied. Hence the left- or right-translation in a finite path groupoid is used in the further development. Notice that in general, these objects do not define graphomorphism.
Finally notice that, in particular for the graphomorphism $(R_{\sigma(v)},t\circ\sigma)$ and a holonomy map for the path groupoid $\PGs$ a similar relation \eqref{eq righttransl} holds. The last equation is fundamental for the construction of $C^*$-dynamical systems, which contain the analytic holonomy $C^*$-algebra restricted to a finite path groupoid $\fPG$ and a point norm continuous action of the finite path-diffeomorphism group $\Diff(V_\Gamma)$ on this algebra. Clearly the right-, left- and inner-translations $R_\sigma$, $L_\sigma$ and $I_\sigma$ are constructed such that \eqref{eq righttransl} generalises. But note that, in the infinite case considered by Fleischhack the action of the bisections $\mathfrak{B}(\PD\Sigma)$ are not point-norm continuous implemented. The advantage of the usage of bisections is that, the map $\sigma\mapsto t\circ\sigma$ is a group morphism between the group $\mathfrak{B}(\PD\Sigma)$ of bisections in $\PGs$ and the group $\Diff(\Sigma)$ of diffeomorphisms in $\Sigma$. Consequently there is an action of the group of diffeomorphisms in $\Sigma$ on the finite path groupoid, which is used to define an action of the group of diffemorphisms in $\Sigma$ on the analytic holonomy $C^*$-algebra. 

\paragraph*{Transformations in finite graph systems\\[5pt]}
To proceed it is necessary to transfer the notion of bisections and right-translations to finite graph systems. A right-translation $R_{\sigma_\Gamma}$  is a mapping that maps graphs to graphs. Each graph is a finite union of independent edges. This causes problems. Since the definition of right-translation in a finite graph system $\PD_\Gamma$ is often not well-defined for all bisections in the finite graph system and all graphs. For example if the graph $\Gamma:=\{\gamma_1,\gamma_2\}$ is disconnected and the bisection $\tilde\sigma$ in the finite path groupoid $\PD_{\Gamma}\Sigma$ over $V_\Gamma$ is defined by $\tilde\sigma(s(\gamma_1))=\gamma_1$, $\tilde\sigma(s(\gamma_2))=\gamma_2$, $\tilde\sigma(t(\gamma_1))=\gamma_1^{-1}$ and $\tilde\sigma(t(\gamma_2))=\gamma_2^{-1}$ where $V_\Gamma:=\{s(\gamma_1),t(\gamma_1),s(\gamma_2),t(\gamma_2)\}$. Let $\idf_\Gamma$ be the set given by the elements $\idf_{s(\gamma_1)}$,$\idf_{s(\gamma_2)}$,$\idf_{t(\gamma_1)}$ and $\idf_{t(\gamma_2)}$. Then notice that, a bisection $\sigma_\Gamma$, which maps a set of vertices in $V_\Gamma$ to a set of paths in $\PD_\Gamma\Sigma$, is given for example by $\sigma_\Gamma(V_\Gamma)= \{\gamma_1,\gamma_2,\gamma_1^{-1},\gamma_2^{-1}\}$. In this case the right-translation $R_{\sigma_\Gamma(V_\Gamma)}(\idf_\Gamma)$ is equivalent to $\{\gamma_1,\gamma_2,\gamma_1^{-1},\gamma_2^{-1}\}$, which is not a set of independent edges and hence not a graph. Loosely speaking the graph-diffeomorphism acts on all vertices in the set $V_\Gamma$ and hence implements four new edges. But a bisection $\sigma_\Gamma$, which maps a subset $V:= \{s(\gamma_1),s(\gamma_2)\}$ of $V_\Gamma$ to a set of paths,  leads to a translation $R_{\sigma_\Gamma(V)}(\{\idf_{s(\gamma_1)},\idf_{s(\gamma_2)}\})=\{\gamma_1,\gamma_2\}$, which is indeed a graph. Set $\Gp:=\{\gamma_1\}$ and $V^\prime=\{s(\gamma_1)\}$. Then observe that, for a restricted bisection, which maps a set $V^\prime$ of vertices in $V_\Gamma$ to a set of paths in $\PD_\Gp\Sigma$, the right-translation become $R_{\sigma_\Gp(V^\prime)}(\{\idf_{s(\gamma_1)}\})=\{\gamma_1\}$, which defines a graph, too. Notice that $\idf_{s(\gamma_1)}$ is a subgraph of $\Gp$. Hence in the simpliest case new edges are emerging. The next definition of the right-tranlation shows that composed paths arise, too.

\begin{defi}\label{defi bisecongraphgroupioid}
Let $\Gamma$ be a graph, $\fPG$ be a finite path groupoid and let $\PD_\Gamma$ be a finite graph system. Moreover the set $V_\Gamma$ is given by $\{v_1,...,v_{2N}\}$.

A \hypertarget{bisection of a finite graph system}{\textbf{bisection of a finite graph system}} $\PD_\Gamma$ is a map $\sigma_\Gamma:V_\Gamma\rightarrow\PD_\Gamma$ such that there exists a bisection $\tilde \sigma\in\mathfrak{B}(\PD_\Gamma\Sigma)$ such that $\sigma_\Gamma(V)=\{\tilde\sigma(v_i):v_i\in V\}$ whenever $V$ is a subset of $V_\Gamma$.

Define a restriction $\sigma_\Gp:V_\Gp\rightarrow\PD_\Gp$ of a bisection $\sigma_\Gamma$ in $\PD_\Gamma$ by
\beqs \sigma_\Gp(V):=\{ \tilde\sigma(w_k) : & w_k\in V\}
\eqs for each subgraph $\Gp$ of $\Gamma$ and $V\subseteq V_\Gp$. 

A \textbf{right-translation in the finite graph system} $\PD_\Gamma$ is a map $R_{\sigma_\Gp}: \PD_\Gp\rightarrow \PD_\Gp$, which is given by a  bisection $\sigma_\Gp:V_\Gp\rightarrow \PD_\Gp$ such that
\beqs
&R_{\sigma_\Gp}(\Gpp)= R_{\sigma_\Gp}(\{\gppe,...,\gppm,\idf_{w_i}:w_i\in\{s(\gpe),...,s(\gpk)\in V^s_{\Gp}:s(\gpi)\neq s(\gpj) \forall i\neq j\}\setminus V_{\Gpp}\})\\[4pt]
&:=\left\{
\begin{array}{ll}
\gppe,...,\gppj,\gppje\circ\tilde\sigma(t(\gppje)),...,\gppm\circ\tilde\sigma(t(\gppm)),\idf_{w_i}\circ\tilde\sigma(w_i) : & \\[4pt]
w_i\in \{s(\gpe),...,s(\gpk)\in V^s_{\Gp}:s(\gpi)\neq s(\gpj) \forall i\neq j\}\setminus V_{\Gpp}, \quad t(\gppi)\neq t(\gppl)\quad\forall i\neq l; i,l\in\bra j+1,M\ket &\\
\end{array}\right\}\\
&=\Gamma^{\prime\prime}_\sigma\\
\eqs where $\tilde\sigma\in\mathfrak{B}(\PD_{\Gamma}\Sigma)$, $K:=\vert\Gp\vert$ and $M:=\vert\Gpp\vert$, $V^s_{\Gp}$ is the set of all source vertices of $\Gp$ and such that $\Gpp:=\{\gppe,...,\gppm\}$ is a subgraph of $\Gp:=\{\gpe,...,\gpk\}$ and $\Gamma^{\prime\prime}_\sigma$ is a subgraph of $\Gp$.
\end{defi} 

Derive that, for $\tilde\sigma(t(\gamma_i))=\gamma_i^{-1}$ it is true that
$(t\circ\tilde\sigma)(s(\gamma_i^{-1}))=s(\gamma_i)=(t\circ\tilde\sigma)(t(\gamma_i))$ holds.

\begin{exa}
Let $\Gamma$ be a disconnected graph.
Then for a bisection $\tilde\sigma\in\mathfrak{B}(\PD_\Gamma\Sigma)$ such that $\sigma(t(\gamma_i))=\gamma_i^{-1}$ for all $1\leq i\leq \vert\Gamma\vert$ it is true that 
\beqs R_{\sigma_\Gamma}(\Gamma)&=\Big\{\gamma_1\circ\tilde\sigma(t(\gamma_1)),...,\gamma_N\circ\tilde\sigma(t(\gamma_N)),\idf_{s(\gamma_1)}\circ\tilde\sigma(s(\gamma_1)),...,\idf_{s(\gamma_N)}\circ\tilde\sigma(s(\gamma_N))\Big\}\\
&=\{\idf_{s(\gamma_1)},...,\idf_{s(\gamma_N)}\}
\eqs yields. Set $\Gp:=\{\gpe,...,\gpm\}$, then derive
\beqs R_{\sigma_\Gamma}(\Gp)&=\Big\{\gpe\circ\tilde\sigma(t(\gpe)),...,\gpm\circ\tilde\sigma(t(\gpm)),\idf_{s(\gamma_1)}\circ\tilde\sigma(s(\gamma_1)),...,\idf_{s(\gamma_{N-M})}\circ\tilde\sigma(s(\gamma_{N-M}))\Big\}\\
&=\{\idf_{s(\gpe)},...,\idf_{s(\gpm)},\gamma_{1},...,\gamma_{N-M}\}
\eqs if $\Gamma=\Gp\cup\{\gamma_1,...,\gamma_{N-M}\}$.
\end{exa}
To understand the definition of the right-translation notice the following problem.

\begin{problem}\label{prob withoutcond} Consider a subgraph $\Gamma$ of $\tilde\Gamma:=\{\gamma_1,\gamma_2,\gamma_3,\gamma_4\}$, a map $\tilde\sigma:V_{\tilde\Gamma}\rightarrow \PD_{\tilde\Gamma}\Sigma$. Then the map 
\beqs R_{\sigma_{\tilde\Gamma}}(\Gamma)=\{\gamma_1\circ\gamma_1^{-1},\gamma_2\circ\idf_{t(\gamma_2)},\gamma_3\circ\idf_{t(\gamma_3)},\idf_{s(\gamma_1)}\circ\gamma_4 \}=:\Gamma_\sigma
\eqs 
 \begin{center}
\includegraphics[width=0.5\textwidth]{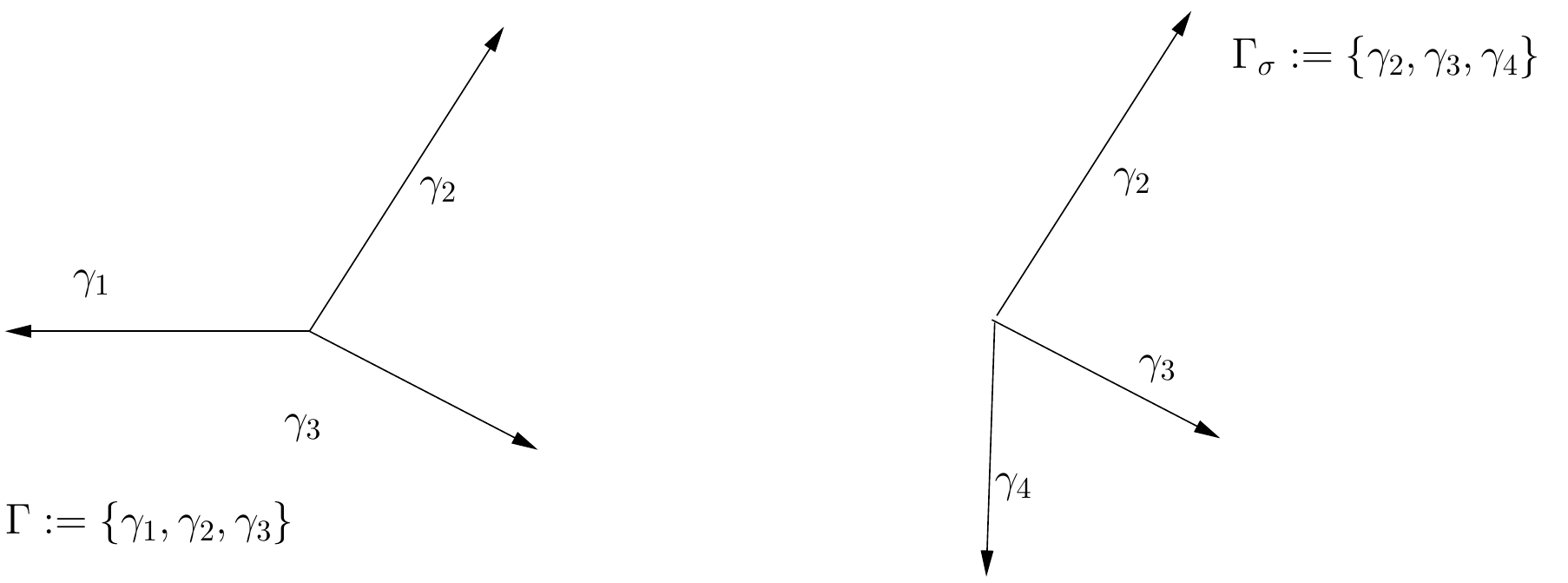}
\end{center} is not a right-translation. This follows from the following fact. Notice that,  the map $\sigma$ maps $t(\gamma_1)\mapsto s(\gamma_1)$, $t(\gamma_2)\mapsto t(\gamma_2)$, $t(\gamma_3)\mapsto t(\gamma_3)$ and $s(\gamma_1)\mapsto t(\gamma_4)$. Then the map $\tilde\sigma$ is not a bisection in the finite path groupoid $\PD_{\tilde\Gamma}\Sigma$ over $V_{\tilde\Gamma}$ and does not define a right-translation $R_{\sigma_{\tilde\Gamma}}$ in the finite graph system $\PD_{\tilde\Gamma}$.

This is a general problem. For every bisection $\tilde\sigma$ in a finite path groupoid such that a graph $\Gamma:=\{\gamma\}$ is translated to $\{\gamma\circ\tilde\sigma(t(\gamma),\tilde\sigma(s(\gamma))\}$. Hence either such translations in the graph system are excluded or the definition of the bisections has to be restricted to maps such that the map $t\circ\tilde\sigma$ is not bijective.  Clearly, the restriction of the right-translation such that $\Gamma$ is mapped to $\{\gamma\circ\tilde\sigma(t(\gamma),\idf_{s(\gamma)}\}$ implies that a simple path orientation transformation is not implemented by a right-translation.

Furthermore there is an ambiguity for graph containing to paths $\gamma_1$ and $\gamma_2$ such that $t(\gamma_1)=t(\gamma_2)$. Since in this case a bisection $\sigma$, which maps $t(\gamma_1)$ to $t(\gamma_3)$, the right-translation is $\{\gamma_1\circ\gamma_3,\gamma_2\circ\gamma_3\}$, is not a graph anymore. 
\end{problem}

\begin{exa}
Otherwise there is for example a subgraph $\Gp$ of $\tilde\Gamma:=\{\gamma_1,\gamma_2,\gamma_3,\gamma_4\}$ and a bisection $\tilde\sigma_{\tilde\Gamma}$ such that 
\beqs \Gamma^\prime_\sigma:=\{\gamma_1\circ\gamma_1^{-1},\gamma_2\circ\idf_{s(\gamma_2)},\gamma_3\circ\idf_{s(\gamma_3)}
\}
\eqs
 \begin{center}
\includegraphics[width=0.5\textwidth]{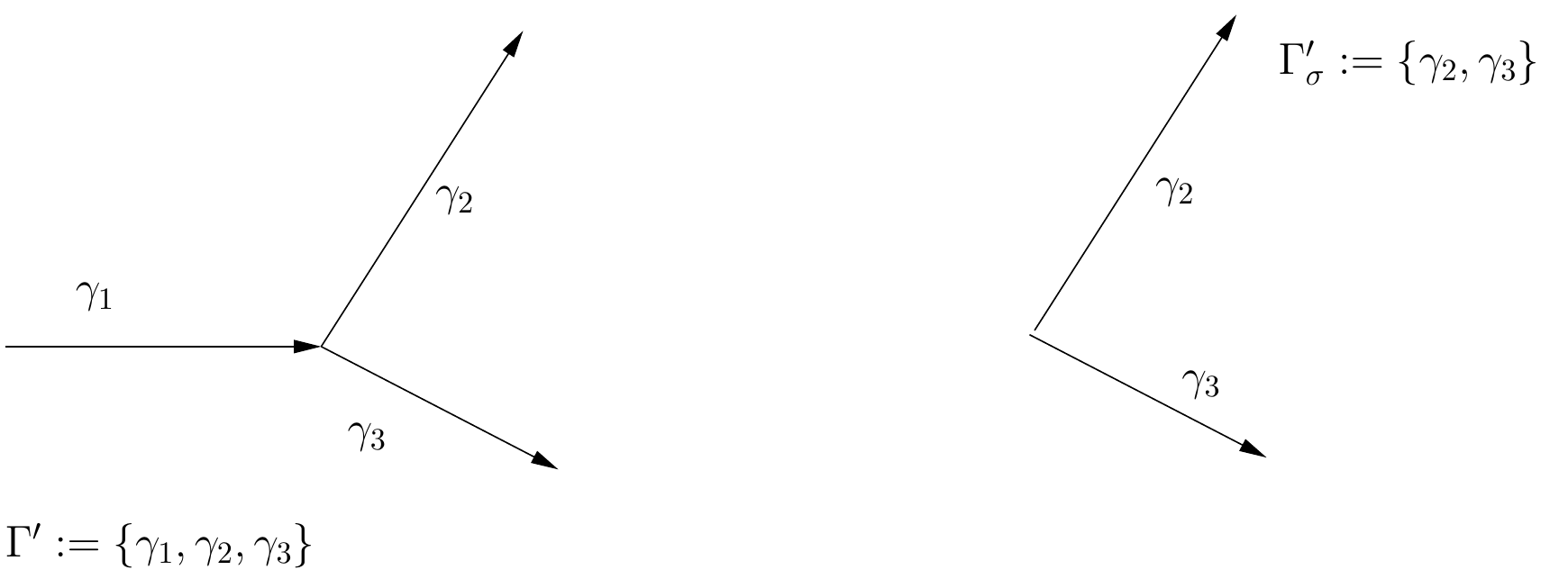}
\end{center} Notice that, $t(\gamma_1)\mapsto s(\gamma_1)$, $t(\gamma_2)\mapsto t(\gamma_2)$, $t(\gamma_3)\mapsto t(\gamma_3)$ and $t(\gamma_4)\mapsto t(\gamma_4)$. Hence the the map $\tilde\sigma_{\tilde\Gamma}:V_{\tilde\Gamma}\rightarrow\PD_{\tilde\Gamma}\Sigma$ is bijective map and consequently a bisection. The bisection $\sigma_{\tilde\Gamma}$ in the graph system $\PD_{\tilde\Gamma}$ defines a right-translation $R_{\sigma_{\tilde\Gamma}}$ in $\PD_{\tilde\Gamma}$.

Moreover for a subgraph $\Gpp:=\{\gamma_2,\gamma_3\}$ of the graph $\breve\Gamma:=\{\gamma_1,\gamma_2,\gamma_3\}$ there exists a map $\sigma_{\breve \Gamma}:V_{\breve\Gamma}\rightarrow\PD_{\breve\Gamma}$ such that
\beqs R_{\sigma_{\breve\Gamma}}(\Gpp)=\{\gamma_2,\gamma_3,\tilde\sigma(s(\gamma_1))\}=\{\gamma_2,\gamma_3,\gamma_1\}
\eqs
\begin{center}
\includegraphics[width=0.5\textwidth]{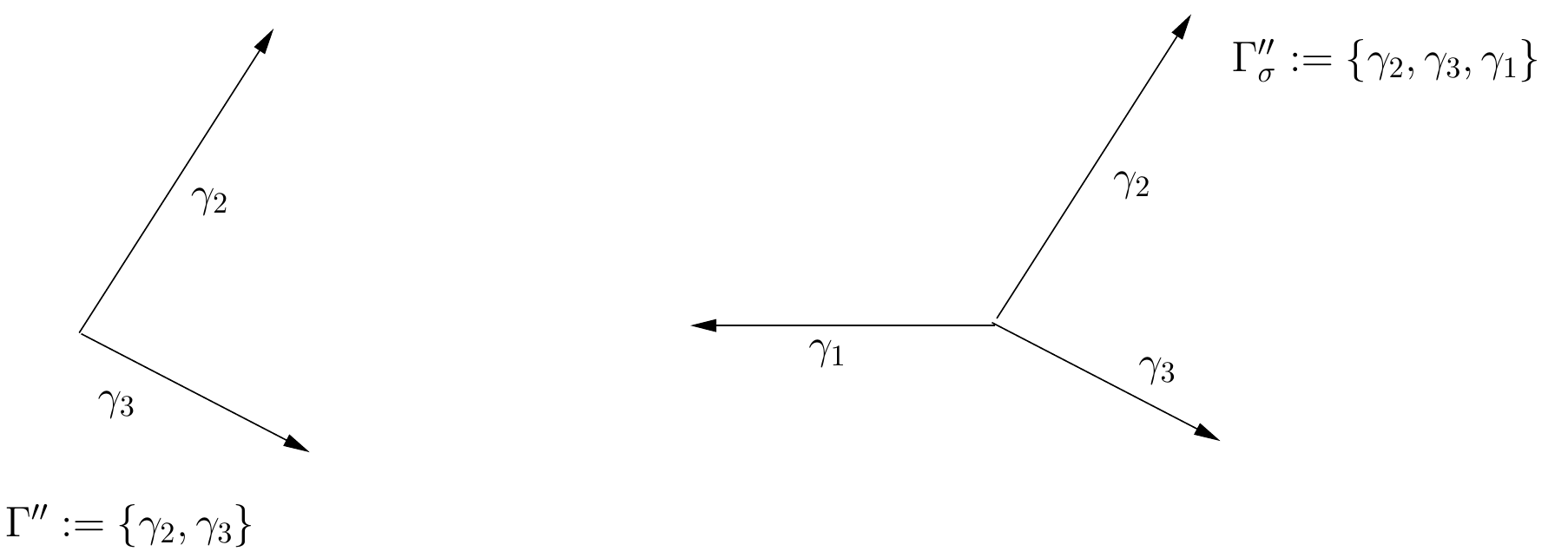}
\end{center}
where $t(\gamma_2)\mapsto t(\gamma_2)$, $t(\gamma_3)\mapsto t(\gamma_3)$ and $s(\gamma_1)\mapsto t(\gamma_1)$. Consequently in this example the map $\tilde\sigma_{\breve \Gamma}$ is a bisection, which defines a right-translation in $\PD_{\breve\Gamma}$.

Note that, for a graph $\Gamma$ such that $\tilde\Gamma$ and $\breve\Gamma$ are subgraphs the bisection $\sigma_{\tilde\Gamma}$ extends to a bisection $\sigma$ in $\PD_\Gamma$ and $\sigma_{\breve\Gamma}$ extends to a bisection $\breve\sigma$ in $\PD_\Gamma$.
\end{exa}

Moreover the bisections of a finite graph system are transfered, analogously, to bisections of a finite path groupoid $\fPG$ to the group $G^{\vert\Gamma\vert}$. Let $\sigma\in\mathfrak{B}(\PD_\Gamma)$ and $(\ho_\Gamma,h_\Gamma)\in\Hom(\PD_\Gamma,G^ {\vert\Gamma\vert})$. Thus there are two maps
\beq \ho_\Gamma\circ R_\sigma:\PD_\Gamma\rightarrow G^{\vert\Gamma\vert}\text{ and }h_\Gamma\circ(t\circ\sigma):V_\Gamma\rightarrow \{e_G\}
\eq which defines a \hyperlink{holonomy map for a finite graph system}{holonomy map for a finite graph system} if $\sigma$ is suitable.

Now, a similar right-translation in a finite graph system in comparison to the right-translation $R_{\sigma(v)}$ in a finite path groupoid is studied.
Let $\sigma_\Gp:V_\Gp\rightarrow\PD_\Gp$ be a restriction of $\sigma_\Gamma\in\mathfrak{B}(\PD_\Gamma)$.  Moreover let $V$ be a subset of $V_ \Gp$, let $\Gpp$ be a subgraph of $\Gp$ and $\Gppp$ be a subgraph of $\Gpp$. Then a right-translation is given by
\beqs &R_{\sigma_\Gp(V)}(\Gpp)\\
&:= 
\left\{\begin{array}{ll}
 R_{\sigma_\Gp}(\{\gppe,...,\gppm,\idf_{w_i}:w_i\in\{s(\gpe),...,s(\gpk)\in V_{\Gp}:s(\gpi)\neq s(\gpj) \forall i\neq j\}\setminus V_{\Gpp}\}& : V_{\Gpp}\subset V\\
R_{\sigma_\Gp}(\{\gppe,...,\gpppn,\idf_{w_i}:w_i\in\{s(\gpe),...,s(\gpk)\in V_{\Gpp}:s(\gpi)\neq s(\gpj) \forall i\neq j\}\setminus V_{\Gppp}\})& \\
\quad \cup\{\idf_{x_i}:x_i\in V\setminus  V_{\Gpp}\}\cup\{\Gpp\setminus\Gppp\}
&: V_{\Gpp}\not\subset V,V_{\Gppp}\subset V\\
       \end{array}\right.
\eqs Loosely speaking, the action of a path-diffeomorphism is somehow localised on a fixed vertex set $V$. 

For example note that for a subgraph $\Gp:=\{\gamma\circ\gp\}$ of $\Gamma:=\{\gamma,\gp\}$ and a subset $V:=\{t(\gp)\}$ of $V_\Gamma$, it is true that
\beqs (\ho_\Gamma\circ R_{\sigma_{\Gamma}(V)})(\gamma\circ\gp)= (\ho_\Gamma\circ R_{\sigma_{\Gamma}(V)})(\gamma)\cdot (\ho_\Gamma\circ R_{\sigma_{\Gamma}(V)})(\gp)= \ho_\Gamma(\gamma)\cdot (\ho_\Gamma\circ R_{\sigma_{\Gamma}(V)})(\gp)
=\ho_\Gamma(\gamma\circ\gp\circ\sigma(t( \gp)))
\eqs yields whenever $\sigma_{\Gamma}\in\mathfrak{B}(\PD_\Gamma\Sigma)$. For a special bisection $\breve\sigma_\Gamma$ it is true that,
\beqs (\ho_\Gamma\circ R_{\breve\sigma_{\Gamma}})(\gamma)= \ho_\Gamma(\gamma\circ\gp)= (\ho_\Gamma\circ R_{\breve\sigma_{\Gamma}})(\gamma)\cdot (\ho_\Gamma\circ R_{\breve\sigma_{\Gamma}})(\gp)
\eqs holds whenever $\breve\sigma_\Gamma\in\mathfrak{B}(\PD_\Gamma\Sigma)$, $\breve\sigma_\Gamma(t(\gp))=\idf_{t(\gp)}$ and $\breve\sigma_\Gamma(t(\gamma))=\gp$. Let $\tilde\sigma$ be the bisection in the finite path groupoid $\PD_\Gamma\Sigma$ that defines the bisection $\breve\sigma$ in $\PD_\Gamma$. Then the last statement is true, since $R_{\breve\sigma_{\Gamma}}(\gp)=\gp\circ\gp^{-1}$ requires $\tilde\sigma_\Gamma:t(\gp)\mapsto s(\gp)$ and $R_{\breve\sigma_{\Gamma}}(\gamma)=\gamma\circ\gp$ needs $\tilde\sigma_\Gamma: t(\gamma)\mapsto t(\gp)$, where $s(\gp)=t(\gamma)$. Then
$R_{\breve\sigma_\Gamma}(\gamma)$ and $R_{\sigma_\Gamma(t(\gp))}(\gamma)$ coincide if $\breve\sigma_\Gamma(t(\gamma))=\sigma_\Gamma(t(\gamma))$ and $\breve\sigma_\Gamma(t(\gp))=\idf_{t(\gp)}$ holds. 
 
\begin{problem}\label{prob righttranslgrouoid}Let $\Gp$ be a subgraph of the graph $\Gamma$, $\sigma_\Gamma$ be a bisection in $\PD_\Gamma$, $\sigma_\Gp:V_\Gp\rightarrow\PD_\Gp$ be a restriction of $\sigma_\Gamma\in\mathfrak{B}(\PD_\Gamma)$.  Moreover let $V$ be a subset of $V_ \Gp$, let $\Gpp:=\{\gamma\circ\gp\}$ be a subgraph of $\Gp$. Let $(\gamma,\gp)\in\PD_\Gp\Sigma^{(2)}$.

Then even for a suitable bisection $\sigma_\Gp$ in $\PD_\Gamma$ it follows that,
\beq\label{eq ineqsolv} R_{\sigma_{\Gp}(V)}(\gamma\circ\gp)\neq R_{\sigma_{\Gp}(V)}(\gamma)\circ R_{\sigma_{\Gp}(V)}(\gp) 
\eq yields. This is a general problem. In comparison with problem \ref{subsec fingraphpathgroup}.\ref{problem group structure on graphs systems} the multiplication map $\circ$ is not well-defined and hence
 \beqs R_{\sigma_{\Gp}(V)}(\gamma)\circ R_{\sigma_{\Gp}(V)}(\gp)
\eqs  is not well-defined. Recognize that, $R_{\sigma_{\Gp}(V)}:\PD_{\Gamma}\rightarrow\PD_\Gamma$.

Consequently in general it is not true that, 
 \beq\label{eq ineqsolv2}(\ho_\Gamma\circ R_{\sigma_{\Gp}(V)})(\gamma\circ\gp)=\ho_\Gamma(R_{\sigma_{\Gp}(V)}(\gamma)\circ R_{\sigma_{\Gp}(V)}(\gp))=(\ho_\Gamma\circ R_{\sigma_{\Gp}(V)})(\gamma)\cdot (\ho_\Gamma\circ R_{\sigma_{\Gp}(V)})(\gp)
\eq yields.
\end{problem}

With no doubt the left-tranlation $L_{\sigma_\Gp}$ and the inner autmorphisms $I_{\sigma_\Gp}$ in a finite graph system $\PD_\Gamma$ for every $\Gp\in\PD_\Gamma$ are defined similarly.

\begin{defi}Let $\sigma_\Gamma\in \mathfrak{B}(\PD_\Gamma)$ be a bisection in the finite graph system $\PD_\Gamma$. Let $R_{\sigma_\Gamma(V)}$ be a right-tranlation, where $V$ is a subset of $V_\Gamma$.

Then the pair $(\Phi_\Gamma,\varphi_\Gamma)$ defined by $\Phi_\Gamma=R_{\sigma_\Gamma(V)}$ (or, respectively, $\Phi_\Gamma=L_{\sigma_\Gamma(V)}$, or $\Phi_\Gamma=I_{\sigma_\Gamma(V)}$) for a subset $V\subseteq V_\Gamma$ and $\varphi_\Gamma=t\circ \sigma_\Gamma$ is called a \textbf{graph-diffeomorphism of a finite graph system}. 
Denote the set of finite graph-diffeomorphisms by $\Diff(\PD_\Gamma)$.
\end{defi}

Let $\Gp$ be a subgraph of $\Gamma$ and $\sigma_\Gp$ be a restriction of  bisection $\sigma_\Gamma$ in $\PD_\Gamma$. Then for example another graph-diffeomorphism $(\Phi_\Gp,\varphi_\Gp)$ in $\Diff(\PD_\Gamma)$ is defined by $\Phi_\Gp=R_{{\sigma_\Gp}(V)}$ for a subset $V\subseteq V_\Gp$ and $\varphi_\Gp=t\circ \sigma_\Gp$.

Remembering that the set of bisections of a finite path groupoid forms a group (refer \ref{lem groupbisection}) one may ask if the bisections of a finite graph system form a group, too. 

\begin{prop}\label{lemma bisecform}The set of bisections $\mathfrak{B}(\PD_\Gamma)$ in a finite graph system $\PD_\Gamma$ forms a group.
\end{prop}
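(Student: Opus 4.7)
\begin{proofs}[Proof plan]
The strategy is to lift the group structure from $\mathfrak{B}(\PD_\Gamma\Sigma)$, established in Lemma \ref{lem groupbisection}, to $\mathfrak{B}(\PD_\Gamma)$ via the defining correspondence $\tilde\sigma\mapsto \sigma_\Gamma$ of Definition \ref{defi bisecongraphgroupioid}. First I would observe that every bisection $\sigma_\Gamma\in\mathfrak{B}(\PD_\Gamma)$ is, by construction, uniquely determined by an underlying bisection $\tilde\sigma\in\mathfrak{B}(\PD_\Gamma\Sigma)$ through $\sigma_\Gamma(V)=\{\tilde\sigma(v_i):v_i\in V\}$ for arbitrary $V\subseteq V_\Gamma$. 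Conversely, any $\tilde\sigma\in\mathfrak{B}(\PD_\Gamma\Sigma)$ gives rise to a well-defined set-valued map on $V_\Gamma$ of this form. Hence the assignment $\tilde\sigma\mapsto \sigma_\Gamma$ is a bijection between $\mathfrak{B}(\PD_\Gamma\Sigma)$ and $\mathfrak{B}(\PD_\Gamma)$.

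Next, I would transport the group operations pointwise. For $\sigma_\Gamma,\sigma_\Gamma^\prime\in\mathfrak{B}(\PD_\Gamma)$ coming from $\tilde\sigma,\tilde\sigma^\prime\in\mathfrak{B}(\PD_\Gamma\Sigma)$, define
\beqs
(\sigma_\Gamma\ast\sigma_\Gamma^\prime)(V):=\{(\tilde\sigma\ast\tilde\sigma^\prime)(v_i):v_i\in V\},\qquad
\sigma_\Gamma^{-1}(V):=\{\tilde\sigma^{-1}(v_i):v_i\in V\}
\eqs
for each $V\subseteq V_\Gamma$, and take the identity to be the map $V\mapsto\{\idf_{v_i}:v_i\in V\}$ induced by the object inclusion $\iota$ (which is the unit of $\mathfrak{B}(\PD_\Gamma\Sigma)$). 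By Lemma \ref{lem groupbisection}, $\tilde\sigma\ast\tilde\sigma^\prime$ and $\tilde\sigma^{-1}$ are again bisections of the finite path groupoid, so the corresponding set-valued maps are elements of $\mathfrak{B}(\PD_\Gamma)$ by Definition \ref{defi bisecongraphgroupioid}. In particular the operations are well-defined.

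Then I would verify the group axioms, all of which reduce to the analogous identities for $(\mathfrak{B}(\PD_\Gamma\Sigma),\ast)$. Associativity follows because
\beqs
((\sigma_\Gamma\ast\sigma_\Gamma^\prime)\ast\sigma_\Gamma^{\prime\prime})(V)
=\{((\tilde\sigma\ast\tilde\sigma^\prime)\ast\tilde\sigma^{\prime\prime})(v_i):v_i\in V\}
=\{(\tilde\sigma\ast(\tilde\sigma^\prime\ast\tilde\sigma^{\prime\prime}))(v_i):v_i\in V\}
=(\sigma_\Gamma\ast(\sigma_\Gamma^\prime\ast\sigma_\Gamma^{\prime\prime}))(V),
\eqs
and similarly the unit and inverse laws are inherited evaluationwise. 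In short, the bijection $\tilde\sigma\mapsto\sigma_\Gamma$ is a group isomorphism from $\mathfrak{B}(\PD_\Gamma\Sigma)$ onto $\mathfrak{B}(\PD_\Gamma)$ equipped with the operations above, and the proposition follows.

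The step I expect to require the most care is the well-definedness statement: one must check that, although the right-translation $R_{\sigma_\Gamma}$ on the graph system can fail to respect the groupoid multiplication in the sense of Problem \ref{prob righttranslgrouoid}, this does \emph{not} affect the group law on $\mathfrak{B}(\PD_\Gamma)$ itself, because the latter is defined at the level of the maps $V\mapsto\sigma_\Gamma(V)$ via the underlying bisection of $\PD_\Gamma\Sigma$ and never appeals to composition inside the graph system. Once this point is emphasised, the remaining verifications are essentially a translation from Lemma \ref{lem groupbisection}.
\end{proofs}
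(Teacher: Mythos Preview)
Your argument transports the group law from $\mathfrak{B}(\PD_\Gamma\Sigma)$ along the assignment $\tilde\sigma\mapsto\sigma_\Gamma$ via the pointwise formula
\[
(\sigma_\Gamma\ast\sigma_\Gamma^\prime)(V)=\{(\tilde\sigma\ast\tilde\sigma^\prime)(v_i):v_i\in V\}.
\]
This is exactly the operation the paper denotes $\ast_1$, and the paper states explicitly that it is \emph{not} well-defined in general. The difficulty is not the one you flag in your last paragraph (Problem~\ref{prob righttranslgrouoid}); it is that the codomain of a bisection of $\PD_\Gamma$ is $\PD_\Gamma$ itself, so $\{(\tilde\sigma\ast\tilde\sigma^\prime)(v_i):v_i\in V\}$ must be a set of \emph{independent} paths for every $V\subseteq V_\Gamma$. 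Even when $\tilde\sigma,\tilde\sigma^\prime$ each satisfy this, their groupoid product $\tilde\sigma\ast\tilde\sigma^\prime$ need not: the paper's example with $t(\gamma_1)=t(\gamma_3)$ produces $(\sigma\ast_1\sigma^\prime)(V)=\{\gamma_3\circ\gamma_2,\gamma_1\circ\gamma_2\}\notin\PD_\Gamma$, because the two composed paths share the segment $\gamma_2$. So the step ``the corresponding set-valued maps are elements of $\mathfrak{B}(\PD_\Gamma)$ by Definition~\ref{defi bisecongraphgroupioid}'' is precisely the step that fails, and with it the claim that $\tilde\sigma\mapsto\sigma_\Gamma$ is a bijection onto $\mathfrak{B}(\PD_\Gamma)$.

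The paper's proof addresses this by introducing a second multiplication $\ast_2$: one partitions the vertex set into a subset $V_{\sigma,\sigma^\prime}$ on which the naive composition is admissible, a set $W_{\sigma,\sigma^\prime}$ on which the two bisections are recorded separately rather than composed, and a remainder sent to identities. This guarantees that the resulting set of paths is independent, hence lies in $\PD_\Gamma$, and one then verifies the inverse and unit with respect to $\ast_2$. Your transport-of-structure idea would be the right one if $\ast_1$ worked, but the content of the proposition is exactly the repair that $\ast_2$ provides; the remark following the proof even notes that the companion operation $\circ_2$ is what resolves Problem~\ref{prob righttranslgrouoid}.
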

\begin{proofs}Let $\Gamma$ be a graph and let $V_\Gamma$ be equivalent to the set $\{v_1,...,v_{2N}\}$.
  
First two different multiplication operations are studied. The studies are comparable with the results of the definition \ref{defi bisecongraphgroupioid} of a right-translation in a finite graph system. The easiest multiplication operation is given by $\ast_1$, which is defined by 
\beqs (\sigma\ast_1\sigma^\prime)(V_\Gamma)&:=\{
(\tilde\sigma\ast \tilde\sigma^\prime)(v_1),...,(\tilde\sigma\ast \tilde\sigma^\prime)(v_{2N}):v_i\in V_\Gamma\}
\eqs where $\ast$ denotes the multiplication of bisections on the finite path groupoid $\fPG$. Notice that, this operation is not well-defined in general. In comparison with the definition of the right-translation in a finite graph system one has to take care. First the set of vertices doesn't contain any vertices twice, the map $\sigma$ in the finite path system is bijective, the mapping $\sigma$ maps each set to a set of vertices containing no  
vertices twice and the situation in problem \thesection.\ref{prob withoutcond} has to be avoided.

Fix a bisection $\tilde\sigma$ in a finite path groupoid $\fPG$.  Let $V_{\sigma^\prime}$ be a subset of $V_\Gamma$ where $\Gamma:=\{\gamma_1,...,\gamma_N\}$ and for each $v_i$ in $V_{\sigma^\prime}$ it is true that $v_i\neq v_j$ and $v_i\neq (t\circ\tilde\sigma^\prime)(v_j)$ for all $i\neq j$. Define the set $V_{\sigma,\sigma^\prime}$ to be equal to a subset of the set of all vertices $\{v_k\in V_{\sigma^\prime}: 1\leq k\leq 2N\}$ such that each pair $(v_i,v_j)$ of vertices in $V_{\sigma,\sigma^\prime}$ satisfies $(t\circ(\tilde\sigma\ast\tilde\sigma^\prime))(v_i)\neq (t\circ\tilde\sigma^\prime)(v_j)$ and $(t\circ\tilde\sigma^\prime)(v_i)\neq (t\circ\tilde\sigma^\prime)(v_j)$ for all $i\neq j$.
Define
\beqs W_{\sigma,\sigma^\prime}:=\Big\{
w_i\in \{V_{\sigma}\cap V_{\sigma^\prime}\}\setminus V_{\sigma,\sigma^\prime}:
& (t\circ\tilde\sigma)(w_j)\neq (t\circ\tilde\sigma^\prime)(w_i)\quad\forall i\neq j,\quad 1\leq i,j\leq l\Big\}
\eqs
The set $V_{\sigma,\sigma^\prime,\breve \sigma}$ is a subset of all vertices $\{v_k\in V_{\sigma,\sigma^\prime}: 1\leq k\leq 2N\}$ such that each pair $(v_i,v_j)$ of vertices in $V_{\sigma,\sigma^\prime,\breve\sigma}$ satisfies $(t\circ(\breve\sigma\ast\tilde\sigma\ast\tilde\sigma^\prime))(v_i)\neq (t\circ(\tilde\sigma\ast\tilde\sigma^\prime))(v_j)$ and $(t\circ(\tilde\sigma\ast\tilde\sigma^\prime))(v_i)\neq (t\circ(\tilde\sigma\ast\tilde\sigma^\prime))(v_j)$ for all $i\neq j$.

Consequently define a second multiplication on $\mathfrak{B}(\PD_\Gamma)$ similarly to the operation $\ast_1$. This is done by the following definition. Set
\beqs (\sigma\ast_2\sigma^\prime)(V_{\sigma^\prime})
:=&\left\{
(\tilde\sigma\ast \sigma^\prime)(v_1),...,(\tilde\sigma\ast\sigma^\prime)(v_{k}):
v_1,...,v_k\in V_{\sigma,\sigma^\prime}, 1\leq k\leq 2N\right\}\\
&\cup\left\{ \tilde\sigma(w_1),\sigma^\prime(w_1)...,\tilde\sigma(w_{l}),\sigma^\prime(w_{l}): w_1,...,w_l\in W_{\sigma,\sigma^\prime}, 1\leq l\leq 2N
\right\}\\
&\cup\left\{\idf_{p_1},...,\idf_{p_n}:p_1,...,p_n\in V_{\sigma^\prime}\setminus \{V_{\sigma,\sigma^\prime}\cup W_{\sigma,\sigma^\prime}\}, 1\leq n\leq 2N\right\}
\eqs 

Hence the inverse is supposed to be $\sigma^{-1}(V_\Gamma)=\sigma((t\circ\sigma)^{-1}(V_\Gamma))^{-1}$ such that
\beqs (\sigma\ast_2\sigma^{-1})(V_{\sigma^{-1}})=&\{ (\tilde\sigma\ast \tilde\sigma^{-1})(v_1),...,(\tilde\sigma\ast \tilde\sigma^{-1})(v_{2N}):v_i\in V_{\sigma,\sigma^{-1}}\}\\
&\cup\left\{ \tilde\sigma(w_1),\sigma^\prime(w_1)^{-1}...,\tilde\sigma(w_{l}),\sigma^\prime(w_{l})^{-1}: w_1,...,w_l\in W_{\sigma,\sigma^{-1}}, 1\leq l\leq 2N
\right\}\\
&\cup\left\{\idf_{p_1},...,\idf_{p_n}:p_1,...,p_n\in V_{\sigma^\prime}\setminus \{V_{\sigma,\sigma^{-1}}\cup W_{\sigma,\sigma^{-1}}\}, 1\leq n\leq 2N\right\}
\eqs
\end{proofs} 
Notice that, the problem \thesection.\ref{prob righttranslgrouoid} is solved by a multiplication operation $\circ_2$, which is defined similarly to $\ast_2$. Hence the equality of \eqref{eq ineqsolv} is available and consequently \eqref{eq ineqsolv2} is true. Furthermore a similar remark to \ref{def compositionofdiffeo} can be done.

\begin{exa}
Now consider the following example. Set $\Gp:=\{\gamma_1,\gamma_3\}$, let $\Gamma:=\{\gamma_1,\gamma_2,\gamma_3\}$ and\\ $V_\Gamma:=\{s(\gamma_1),t(\gamma_1),s(\gamma_2),t(\gamma_2),s(\gamma_3),t(\gamma_3): s(\gamma_i)\neq s(\gamma_j),t(\gamma_i)\neq t(\gamma_j)\text{ }\forall i\neq j\}$. \\
Set $V$ be equal to $\{s(\gamma_1),s(\gamma_2),s(\gamma_3)\}$. Take two maps $\sigma$ and $\sigma^\prime$ such that $\sigma^\prime(V)=\{\gamma_1,\gamma_3\}$, $\sigma(V)=\{ \gamma_2\}$, where $(t\circ\tilde\sigma)(s(\gamma_3))=t(\gamma_3)$, $\tilde\sigma^\prime(s(\gamma_3))=\gamma_3$, $\tilde\sigma^\prime(s(\gamma_1))=\gamma_1$ and  $\tilde\sigma(t(\gamma_3))=\gamma_2$. Then $s(\gamma_3)\in V_{\sigma_{\Gp},\sigma^\prime_\Gp}$ and $s(\gamma_1)\in W_{\sigma_{\Gp},\sigma^\prime_\Gp}$.
Derive 
\beqs (\sigma\ast_1\sigma^\prime)(V)= \{\gamma_3\circ\gamma_2,\gamma_1\}
\eqs 
Then conclude that,
\beqs (\sigma\ast_2\sigma^\prime)(V_\Gamma)= \{\gamma_3\circ\gamma_2,\gamma_1\}
\eqs holds. Notice that
\beqs (\sigma\ast_2\sigma^\prime)(V)\neq (\sigma^\prime\ast_2\sigma)(V)= \{\gamma_2,\gamma_1,\gamma_3\}
\eqs is true.
Finally obtain
\beqs (\sigma\ast_2\sigma^{-1})(V_\Gamma)= \{\gamma_3\circ\gamma_3^{-1},\gamma_1\circ\gamma_1^{-1}\}=\{\idf_{s(\gamma_3)},\idf_{s(\gamma_1)}\}
\eqs 
Let $\sigma^\prime(V_\Gamma)=\{\gamma_1,\gamma_3\}$ and $\breve \sigma(V_\Gamma)=\{\gamma_2,\gamma_4\}$. Then notice that,
\beqs (\breve \sigma\ast_1\sigma^\prime)(V_\Gamma)=\{\gamma_3\circ\gamma_2,\gamma_1\}
\eqs and 
\beqs (\breve \sigma\ast_2\sigma^\prime)(V_\Gamma)=\{\gamma_3\circ\gamma_2,\gamma_1,\gamma_4\}
\eqs yields.

Furthermore assume supplementary that $t(\gamma_3)=t(\gamma_1)$ holds. Then calculate the product of the maps $\sigma$ and $\sigma^\prime$:
\beqs (\sigma\ast_1\sigma^\prime)(V)= \{\gamma_3\circ\gamma_2,\gamma_1\circ\gamma_2\}\notin\PD_\Gamma
\eqs  and
\beqs (\sigma\ast_2\sigma^\prime)(V_\Gamma)= \{\idf_{t(\gamma_1)}, \idf_{t(\gamma_3)}\}\in\PD_\Gamma
\eqs
\end{exa}

The group structure of $\mathfrak{B}(\PD_\Gamma)$ transferes to $G$. Let $\tilde\sigma$ be a bisection in the finite path groupoid $\fPSGm$, which defines a bisection $\sigma$ in $\PD_\Gamma$ and let $\tilde\sigma^\prime$ be a bisection in $\fPSGm$, which defines another bisection $\sigma^\prime$ in $\PD_\Gamma$. Let $V_{\sigma,\sigma^\prime}$ be equal to $V_\Gamma$, then derive
\beq &\ho_\Gamma\left((\sigma\ast_2\sigma^\prime)(V_\Gamma)\right) 
= \{\ho_\Gamma((\tilde\sigma\ast\sigma^\prime)(v_1)),...,\ho_\Gamma((\tilde\sigma\ast\sigma^\prime)(v_{2N})) \}\\ 
&=\ho_\Gamma(\sigma^\prime(V_\Gamma)\circ\sigma(t(\sigma^\prime(V_\Gamma))))
=\{\ho_\Gamma(\sigma^\prime(v)\circ\tilde\sigma(t(\sigma^\prime(v_1)))),...,\ho_\Gamma(\sigma^\prime(v_{N})\circ\tilde\sigma(t(\sigma^\prime(v_{N}))))\}\\
&=\{\ho_\Gamma(\sigma^\prime(v))\ho_\Gamma(\tilde\sigma(t(\sigma^\prime(v_1)))),...,\ho_\Gamma(\sigma^\prime(v_{N}))\ho_\Gamma(\tilde\sigma(t(\sigma^\prime(v_{N}))))\}\\
&= \ho_\Gamma(\sigma^\prime(V_\Gamma))\ho_\Gamma(\sigma(V_\Gamma))
\eq
Consequently the right-translation in the finite product $G^{\vert\Gamma\vert}$ is definable.

\begin{defi}Let $\sigma_\Gp$ be in $\mathfrak{B}(\PD_\Gamma)$, $\Gp$ a subgraph of $\Gamma$, $\Gpp$ a subgraph of $\Gp$ and $R_{\sigma_\Gp}$ a right-translation, $L_{\sigma_\Gp}$ a left-translation and $I_{\sigma_\Gp}$ an inner-translation in $\PD_\Gamma$.

Then the \textbf{right-translation in the finite product $G^{\vert\Gamma\vert}$} is given by
\beqs \ho_\Gamma\circ R_{\sigma_\Gp}:\PD_\Gamma\rightarrow G^{\vert\Gamma\vert}, \quad \Gpp\mapsto (\ho_\Gamma\circ R_{\sigma_\Gp})(\Gpp)
\eqs
Furthermore define the \textbf{left-translation in the finite product $G^{\vert\Gamma\vert}$} by
\beqs \ho_\Gamma\circ L_{\sigma_\Gp}:\PD_\Gamma\rightarrow G^{\vert\Gamma\vert},\quad \Gpp\mapsto (\ho_\Gamma\circ L_{\sigma_\Gp})(\Gpp)
\eqs
and the \textbf{inner-translation in the finite product $G^{\vert\Gamma\vert}$}
\beqs \ho_\Gamma\circ I_{\sigma_\Gp}:\PD_\Gamma\rightarrow G^{\vert\Gamma\vert},\quad \Gpp\mapsto (\ho_\Gamma\circ I_{\sigma_\Gp})(\Gpp)
\eqs such that $I_{\sigma_\Gp}=L_{\sigma_\Gp^{-1}}\circ R_{\sigma_\Gp}$.
\end{defi}
\begin{lem}It is true that $R_{\sigma_\Gp\ast_2\sigma_\Gp^\prime}=R_{\sigma_\Gp}\circ R_{\sigma_\Gp^\prime}$, $L_{\sigma_\Gp\ast_2\sigma_\Gp^\prime}=L_{\sigma_\Gp}\circ L_{\sigma_\Gp^\prime}$ and $I_{\sigma_\Gp\ast_2\sigma_\Gp^\prime}=I_{\sigma_\Gp}\circ I_{\sigma_\Gp^\prime}$ for all bisections $\sigma_\Gp$ and $\sigma^\prime_\Gp$ in $\mathfrak{B}(\PD_\Gamma)$.
\end{lem}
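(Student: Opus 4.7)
\begin{proofi}
The plan is to reduce the statement to the group structure on $\mathfrak{B}(\PD_\Gamma\Sigma)$ established in Lemma \ref{lem groupbisection} and exploited implicitly in the construction of the multiplication $\ast_2$ in Proposition \ref{lemma bisecform}. First, I would fix a subgraph $\Gpp$ of $\Gp$ and a subset $V\subseteq V_\Gp$, and compute $(R_{\sigma_\Gp}\circ R_{\sigma'_\Gp})(\Gpp)$ by applying the definition of the right-translation in the finite graph system twice, tracking each composed path and each newly inserted trivial path $\idf_{w_i}$ separately. Since $\sigma_\Gp$ and $\sigma'_\Gp$ are restrictions of underlying bisections $\tilde\sigma,\tilde\sigma'$ of the finite path groupoid, each entry of $R_{\sigma_\Gp}(R_{\sigma'_\Gp}(\Gpp))$ reduces to an expression of the form $\gpp_i\circ\tilde\sigma'(t(\gpp_i))\circ\tilde\sigma((t\circ\tilde\sigma')(t(\gpp_i)))$, which by the proof of Lemma \ref{lem groupbisection} equals the value at $\gpp_i$ of the right-translation $R_{\tilde\sigma\ast\tilde\sigma'}$ in the path groupoid.

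Next, I would compare this with $R_{\sigma_\Gp\ast_2\sigma'_\Gp}(\Gpp)$ using the three-part definition of $\ast_2$. Vertices in $V_{\sigma,\sigma'}$ produce the composed bisection value $(\tilde\sigma\ast\tilde\sigma')(v)$, vertices in $W_{\sigma,\sigma'}$ contribute the individual images $\tilde\sigma(w)$ and $\tilde\sigma'(w)$ separately (which is consistent because the target-disjointness condition in the definition of $W_{\sigma,\sigma'}$ guarantees that these two new edges do not collide in the resulting graph), and the remaining vertices receive trivial paths. These are precisely the three situations that arise when one traces $R_{\sigma_\Gp}\circ R_{\sigma'_\Gp}$ applied piecewise to the paths of $\Gpp$ and to the identities $\idf_{w_i}$ that appear in the first translation step. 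Thus the two sides agree entry-by-entry, yielding $R_{\sigma_\Gp\ast_2\sigma'_\Gp}=R_{\sigma_\Gp}\circ R_{\sigma'_\Gp}$.

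For the left-translation identity I would argue symmetrically, using the analogous expression $\tilde\sigma((t\circ\tilde\sigma)^{-1}(s(\gamma)))\circ\gamma$ defining $L_\sigma$ in the path groupoid and the fact that the source-vertex analogue of Lemma \ref{lem groupbisection} gives the same multiplicative structure. The inner-translation identity then follows from the factorisation $I_{\sigma_\Gp}=L_{\sigma_\Gp^{-1}}\circ R_{\sigma_\Gp}$ (which holds in the graph system for the same reason it holds in the path groupoid), combined with the $L$ and $R$ identities just proved and the inversion rule $(\sigma_\Gp\ast_2\sigma'_\Gp)^{-1}=\sigma_\Gp'^{-1}\ast_2\sigma_\Gp^{-1}$ that is part of the group structure of Proposition \ref{lemma bisecform}.

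The main obstacle is purely bookkeeping. The definition of $\ast_2$ partitions the relevant vertex set into $V_{\sigma,\sigma'}$, $W_{\sigma,\sigma'}$, and the complement, each corresponding to a distinct structural configuration in the iterated right-translation, and the pathological cases flagged in Problem \ref{prob withoutcond} and Problem \ref{prob righttranslgrouoid} must be identified with exactly the configurations excluded by the disjointness conditions built into those two sets. Verifying that on this well-defined common domain the two constructions produce the same subgraph, with no residual ambiguity from composable pairs whose images under $\tilde\sigma$ and $\tilde\sigma'$ might coincide in target vertices, is where the argument must be carried out with care.
\end{proofi}
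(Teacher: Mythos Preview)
The paper does not give a proof of this lemma; it is stated immediately after the definition of the translations in $G^{\vert\Gamma\vert}$ and is evidently meant to follow from the preceding material, namely the calculation just before the definition (which shows that for $V_{\sigma,\sigma^\prime}=V_\Gamma$ the product $\ast_2$ reduces entrywise to the path-groupoid product $\ast$) together with the earlier proposition asserting $R_{\sigma\ast\sigma^\prime}=R_\sigma\circ R_{\sigma^\prime}$ for bisections of the finite path groupoid.

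Your proposal follows exactly this line of reasoning, but in more detail than the paper itself supplies. In particular, you correctly identify that the substantive content is the reduction to Lemma \ref{lem groupbisection} and Proposition \ref{lemma bisecform}, and you go further than the paper by explicitly tracking the three cases $V_{\sigma,\sigma^\prime}$, $W_{\sigma,\sigma^\prime}$, and their complement in the definition of $\ast_2$, rather than only the generic case $V_{\sigma,\sigma^\prime}=V_\Gamma$ that the paper handles in its displayed computation. Your treatment of $L$ by symmetry and of $I$ via the factorisation $I_{\sigma_\Gp}=L_{\sigma_\Gp^{-1}}\circ R_{\sigma_\Gp}$ is the natural route and is consistent with how the paper organises these objects. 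Overall your sketch is correct and is, if anything, more careful than what the paper provides.
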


There is an action of $\mathfrak {B}(\PD_\Gamma)$ on $G^{\vert \Gamma\vert}$ by
\beqs (\zeta_{\sigma_\Gp}\circ\ho_\Gamma)(\Gpp):= (\ho_\Gamma\circ R_{\sigma_\Gp})(\Gpp)
\eqs whenever $\sigma_\Gp\in \mathfrak {B}(\PD_\Gamma)$, $\Gpp\in\PD_\Gp$ and $\Gp\in \PD_\Gamma$.
Then for another $\breve\sigma\in \mathfrak {B}(\PD_\Gamma)$ it is true that,
\beqs ((\zeta_{\breve\sigma_\Gp}\circ\zeta_{\sigma_\Gp})\circ\ho_\Gamma)(\Gpp)= (\ho_\Gamma\circ R_{\breve\sigma\ast_2\sigma_\Gp})(\Gpp)=(\zeta_{\breve\sigma_\Gp\ast_2\sigma_\Gp}\circ\ho_\Gamma)(\Gpp)
\eqs yields.

Recall that, the map $\tilde\sigma\mapsto t\circ\tilde\sigma$ is a group isomorphism between the group of bisections $\mathfrak {B}(\PD_\Gamma\Sigma)$ and the group $\Diff(V_\Gamma)$ of finite diffeomorphisms in $V_\Gamma$. Therefore if the graphs $\Gp=\Gpp$ contain only the path $ \gamma$, then the action $\zeta_{\sigma_\Gp}$ is equivalent to an action of the finite diffeomorphism group $\Diff(V_\Gamma)$. Loosely speaking, the graph-diffeomorphisms $(R_{\sigma_\Gp(V)},t\circ\sigma_\Gp)$ on a subgraph $\Gpp$ of $\Gp$ transform graphs and respect the graph structure of $\Gp$. The diffeomorphism $t\circ\tilde\sigma$ in the finite path groupoid only implements the finite diffeomorphism in $\Sigma$, but it doesn't adopt any path groupoid or graph preserving structure. Summarising the bisections of a finite graph system respect the graph structure and implement the finite diffeomorphisms in $\Sigma$. There is another reason why the group of bisections is more fundamental than the path- or graph-diffeomorphism group. In \cite{Kaminski1,KaminskiPHD} the concept of  $C^*$-dynamical systems has been studied. It turns out that, there are three different $C^*$-dynamical systems, each is build from the analytic holonomy $C^*$-algebra and a point-norm continuous action of the group of bisections of a finite graph system. The actions are implemented by one of the three translations, i.e. the left-, right- or inner-translation in the finite product $G^{\vert\Gamma\vert}$.

\paragraph*{Transformations and discretised surface sets\\[5pt]}

Now restricted sets of bisections are concerned. Consider a finite set of paths starting at a discretised surface. The idea is to define a bisection $\sigma$ such that the map $t\circ\sigma$ preserves the set $\breve S_{\disc}$ of discretised surfaces and each path of the certain set of paths composed with the bisection $\sigma$ at the target vertex of this path is again a path that start at a discretised surface. The definition follows.

Define the set $V^{\breve S_{\disc}}$, which contains all target vertices of paths in $\PD^{\breve S_{\disc}}_\Gamma\Sigma$. Note that, the base point $v$ with respect to $\PD_\Gamma\Sigma^v$ is the source vertex of all paths in $\PD_\Gamma\Sigma^v$ and all paths in $\PD_\Gamma\Sigma^v$ are contained in $\PD^{\breve S_{\disc}}_\Gamma\Sigma$ for all $v\in\breve S_{\disc}$. 
Denote the set of bisections, which are bijective maps from the set $V^{\breve S_{\disc}}$ to paths in $\PD^{\breve S_{\disc}}_\Gamma\Sigma$, by $\mathfrak{B}(\PD^{\breve S_{\disc}}_\Gamma\Sigma)$. On the level of graphs the restricted set of bisections in a graph system $\PD_\Gamma$ is denoted by  $\mathfrak{B}(\PD^{\breve S_{\disc}}_\Gamma)$. Denote the set of graph-diffeomorphisms, which are defined by a bisection in $\mathfrak{B}(\PD^{\breve S_{\disc}}_\Gamma)$ and the right-translation $R_\sigma$,  by the term $\Diff(\PD^{\breve S_{\disc}}_\Gamma)$.

\subsection{The quantum flux operators associated to surfaces and graphs}\label{sec fluxdef}
\paragraph*{The quantum flux operators presented by Lie algebra elements associated to surfaces and graphs\\[5pt]}

The quantum analog of a classical connection $A_a(v)$ is given by the holonomy along a path $\gamma$ and is denoted by $\ho(\gamma)$. The quantum flux operator $E_S(\gamma)$, which replaces the classical flux variable $E(S,f^S)$, is given by a map $E_S$ from a graph to the Lie algebra $\go$. Let $\Exp$ be the exponential map from the Lie algebra $\go$ to $G$ and set $U_t(E_S(\gamma)):=\Exp(tE_S(\gamma))$. Then the quantum flux operator $E_S(\gamma)$ and the quantum holonomies $\ho(\gamma)$ satisfy the following canonical commutator relation 
\[  E_S(\gamma)\ho(\gamma) = i\frac{\dif}{\dif t}\Big\vert_{t=0}U_t(E_S(\gamma))\ho(\gamma)\] where $\gamma$ is a path that intersects the surface $S$ in the target vertex of the path and lies below with respect to the surface orientation of $S$.

In this section different definitions of the quantum flux operator, which is associated to a fixed surface $S$, are presented. For example the quantum flux operator $E_S$ is defined to be a map from a graph $\Gamma$ to a direct sum $\go\oplus\go$ of the Lie algebra $\go$ associated to the Lie group $G$. This is related to the fact that, one distinguishes between paths that are ingoing and paths that are outgoing with resepect to the surface orientation of $S$. If there are no intersection points of the surface $S$ and the source or target vertex of a path $\gamma_i$ of a graph $\Gamma$, then the map maps the path $\gamma_i$ to zero in both entries. For different surfaces  or for a fixed surface different maps will refer to different quantum flux operators.  Furthermore, the quantum flux operators is also defined as maps form the graph $\Gamma$ to direct sum $\E\oplus\E$ of the universal enveloping algebra $\E$ of $\go$.
 
\begin{defi}\label{defi intersefunc}Let $\breve S$ be a finite set $\{S_i\}$ of surfaces in $\Sigma$, which is closed under a flip of orientation of the surfaces. Let $\Gamma$ be a graph such that each path in $\Gamma$ satisfies one of the following conditions 
\begin{itemize}
 \item the path intersects each surface in $\breve S$ in the source vertex of the path and there are no other intersection points of the path and any surface contained in $\breve S$,
 \item the path intersects each surface in $\breve S$ in the target vertex of the path and there are no other intersection points of the path and any surface contained in $\breve S$,
 \item the path intersects each surface in $\breve S$ in the source and target vertex of the path and there are no other intersection points of the path and any surface contained in $\breve S$,
 \item the path does not intersect any surface $S$ contained in $\breve S$.
\end{itemize}

Then define the intersection functions $\iota_L:\breve S\times \Gamma\rightarrow \{\pm 1,0\}$ such that
\beqs \iota_L(S,\gamma):=
\left\{\begin{array}{ll}
1 &\text{ for a path }\gamma\text{ lying above and outgoing w.r.t. }S\\
-1 &\text{ for a path }\gamma\text{ lying below and outgoing w.r.t. }S\\
0 &\text{ the path }\gamma\text{ is not outgoing w.r.t. }S
\end{array}\right.
\eqs
and the intersection functions $\iota_R:\breve S\times \Gamma\rightarrow\{\pm 1,0\}$ such that
\beqs \iota_L(S,\gamma):= \left\{\begin{array}{ll}
-1 &\text{ for a path }\gp\text{ lying above and ingoing w.r.t. }S\\
1 &\text{ for a path }\gp\text{ lying below and ingoing w.r.t. }S\\
0 &\text{ the path }\gp\text{ is not ingoing w.r.t. }S
\end{array}\right.
\eqs whenever $S\in\breve S$ and $\gamma\in\Gamma$.

Define a map $\sigma_L:\breve S\rightarrow \go$ such that
\beqs \sigma_L(S)&=\sigma_L(S^ {-1})
\eqs whenever $S\in\breve S$ and $S^ {-1}$ is the surface $S$ with reversed orientation. Denote the set of such maps by $\breve\sigma_L$. Respectively, the map $\sigma_R:\breve S\rightarrow \go$ such that
\beqs \sigma_R(S)&=\sigma_R(S^ {-1})
\eqs whenever $S\in\breve S$. Denote the set of such maps by $\breve\sigma_R$.
Moreover, there is a map $\sigma_L\times \sigma_R:\breve S\rightarrow \go\oplus\go$ such that
\beqs (\sigma_L,\sigma_R)(S)&=(\sigma_L,\sigma_R)(S^ {-1})
\eqs whenever $S\in\breve S$. Denote the set of such maps by $\breve\sigma$.

Finally, define the \textbf{Lie algebra-valued quantum flux set for paths}
\beqs \gop_{\breve S,\Gamma}
:=\bigcup_{\sigma_L\times\sigma_R\in\breve\sigma}\bigcup_{S\in\breve S}\Big\{& (E^L,E^R)\in\Map(\Gamma,\go\oplus\go): 
&(E^L, E^R)(\gamma):=(\iota_L(S,\gamma)\sigma_L(S),\iota_R(S,\gamma)\sigma_R(S))\Big\}
\eqs
where $\Map(\Gamma,\go\oplus\go)$ is the set of all maps from the graph $\Gamma$ to the direct sum  $\go\oplus\go$ of Lie algebras.   
\end{defi}

Observe that, $(\iota_L\times \iota_R)(S^{-1},\gamma)=(-\iota_L\times -\iota_R)(S,\gamma)$ holds for every $\gamma\in\Gamma$. 

Remark that, the condition $E^L(\gamma)=E^R(\gamma^{-1})$ is not required. 

\begin{exa}\label{exa Exa1}
Analyse the following example. Consider a graph $\Gamma$ and two disjoint surface sets $\breve S$ and $\breve T$.
\begin{center}
\includegraphics[width=0.45\textwidth]{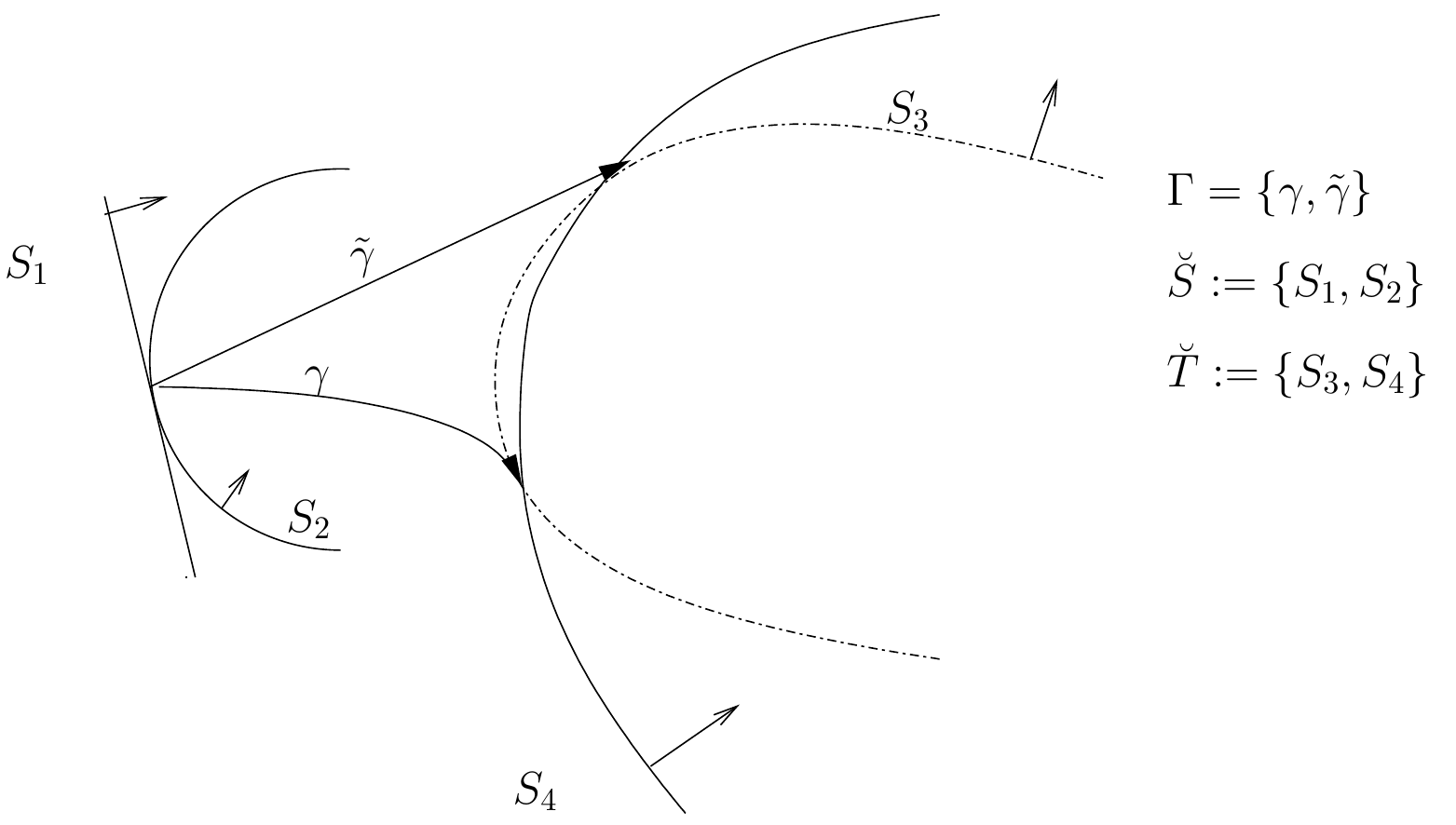}
\end{center}
Then the elements of $\gop_{\breve S,\Gamma}$ are for example given by the maps $E^L_{i}\times E^R_{i}$ for $i=1,2$ such that 
\beqs 
E_1(\gamma)&:= (E^L_{1}, E^R_{1})(\gamma)=(\iota_L(S_1,\gamma)\sigma_L(S_1),\iota_R(S_1,\gamma)\sigma_R(S_1))=(X_{1},0)\\
E_1(\tg)&:= (E^L_{1}, E^R_{1})(\tg)=(\iota_L(S_1,\tg)\sigma_L(S_1),\iota_R(S_1,\tg)\sigma_R(S_1))= (X_1,0)\\
E_2(\gamma)&:= (E^L_{2}, E^R_{2})(\gamma)=(\iota_L(S_2,\gamma)\sigma_L(S_2),\iota_R(S_2,\gamma)\sigma_R(S_2))
=(X_{2},0)\\
E_2(\tg)&:= (E^L_{2}, E^R_{2})(\tg)=(\iota_L(S_2,\tg)\sigma_L(S_2),\iota_R(S_2,\tg)\sigma_R(S_2))
=(X_{2},0)\\
E_3(\gamma)&:= (E^L_{3}, E^R_{3})(\gamma)=(\iota_L(S_3,\gamma)\sigma_L(S_3),\iota_R(S_3,\gamma)\sigma_R(S_3))
=(0,-Y_{3})\\
E_3(\tg)&:= (E^L_{3}, E^R_{3})(\tg)=(\iota_L(S_3,\tg)\sigma_L(S_3),\iota_R(S_3,\tg)\sigma_R(S_3))= (0,-Y_3)\\
E_4(\gamma)&:= (E^L_{4}, E^R_{4})(\gamma)=(\iota_L(S_4,\gamma)\sigma_L(S_4),\iota_R(S_4,\gamma)\sigma_R(S_4))
=(0,Y_{4})\\
E_4(\tg)&:= (E^L_{4}, E^R_{4})(\tg)=(\iota_L(S_4,\tg)\sigma_L(S_4),\iota_R(S_4,\tg)\sigma_R(S_4))= (0,Y_4)
\eqs 

This example shows that, the surfaces $\{S_1,S_2\}$ are similar, whereas the surfaces $\{T_1,T_2\}$ produce different signatures for different paths. Moreover, the set of surfaces are chosen such that one component of the direct sum is always zero. 
\end{exa}

For a particular surface set $\breve S$ the set 
\beqs\bigcup_{\sigma_L\times\sigma_R\in\breve\sigma}\bigcup_{S\in\breve S}
\Big\{ (E^L,E^R)\in\Map(\Gamma,\go\oplus\go): \quad(E^L, E^R)(\gamma):=(\iota_L(S,\gamma)\sigma_L(S),0)\Big\}\eqs is identified with 
\beqs\bigcup_{\sigma_L\in\breve\sigma_L}\bigcup_{S\in\breve S}\Big\{E\in\Map(\Gamma,\go): \quad
E(\gamma):=\iota_L(S,\gamma)\sigma_L(S)\Big\}
\eqs 
The same is observed for another surface set $\breve T$ and the set $\gop_{\breve T,\Gamma}$ is identifiable with 
\beqs\bigcup_{\sigma_R\in\breve\sigma_R}\bigcup_{T\in\breve T}
\Big\{E\in\Map(\Gamma,\go): \quad
E(\gamma):=\iota_R(T,\gamma)\sigma_R(T)\Big\}
\eqs

The intersection behavoir of paths and surfaces plays a fundamental role in the definition of the quantum flux operator. There are exceptional configurations of surfaces and paths in a graph. One of them is the following.
\begin{defi}
A set $\breve S$ of $N$ surfaces has the \hypertarget{simple surface intersection property for a graph}{\textbf{simple surface intersection property for a graph $\Gamma$}} with $N$ independent edges iff it contains only surfaces, for which each path $\gamma_i$ of a graph $\Gamma$ intersects only one surface $S_i$ only once in the target vertex of the path $\gamma_i$, the path $\gamma_i$ lies above and there are no other intersection points of each path $\gamma_i$ and each surface in $\breve S$. 
\end{defi}
\begin{exa}Consider the following example.
\begin{center}
\includegraphics[width=0.45\textwidth]{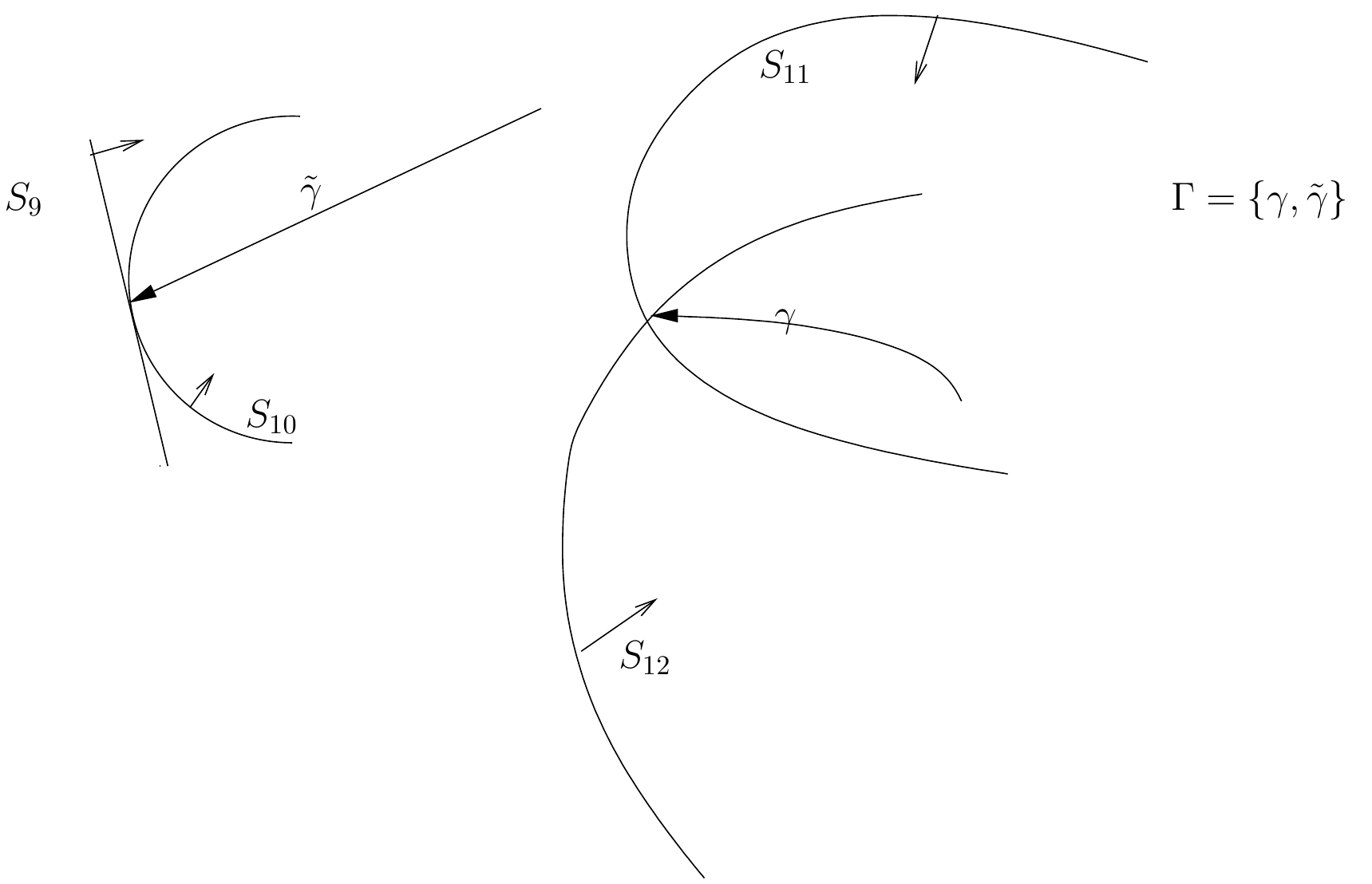}
\end{center} The sets $\{S_{9},S_{11}\}$ or $\{S_{10},S_{12}\}$ have the simple surface intersection property for the graph $\Gamma$.
Calculate
\beqs E_{9}(\tg)=(0,-Y_{9}),\quad E_{11}(\gamma)=(0,-Y_{11})
\eqs
\end{exa}
In this case the set $\gop_{\breve S,\Gamma}$ reduces to
\beqs\bigcup_{\sigma_R\in\breve\sigma_R}\bigcup_{S\in\breve S}\Big\{E\in\Map(\Gamma,\go): \quad
E(\gamma):=-\sigma_R(S)\text{ for }\gamma\cap S= t(\gamma)\Big\}
\eqs Notice that, the set $\Gamma\cap \breve S=\{t(\gamma_i)\}$ for a surface $S_i\in\breve S$ and $\gamma_i\cap S_j\cap S_i =\{\varnothing\}$ for a path $\gamma_i$ in $\Gamma$ and $i\neq j$.

On the other hand, the set of surfaces can be such that each path of a graph intersect all surfaces of the set in the same vertex. This contradicts the assumption that each path of a graph intersect only one surface once. 
\begin{defi}Let $\Gamma$ be a graph that contains no loops.

A set $\breve S$ of surfaces has the \hypertarget{same intersection property}{\textbf{same surface intersection property for a graph}} $\Gamma$ iff each path $\gamma_i$ in $\Gamma$ intersects with all surfaces of $\breve S$ in the same source vertex $v_i\in V_\Gamma$ ($i=1,..,N$), all paths are outgoing and lie below each surface $S\in\breve S$ and there are no other intersection points of each path $\gamma_i$ and each surface in $\breve S$. 
\end{defi}

Recall the example \thesection.\ref{exa Exa1}. Then the set $\{S_{1},S_{2}\}$ has the same surface intersection property for the graph $\Gamma$.

Then the set $\gop_{\breve S,\Gamma}$ reduces to
\beqs\bigcup_{\sigma_L\in\breve\sigma_L}\bigcup_{S\in\breve S}\Big\{E\in\Map(\Gamma,\go): \quad
E(\gamma):= -\sigma_L(S)\text{ for }\gamma\cap S= s(\gamma)\Big\}
\eqs Notice that, $\gamma\cap S_1\cap ...\cap S_N=s(\gamma)$ for a path $\gamma$ in $\Gamma$, whereas $\Gamma\cap\breve S=\{s(\gamma_i)\}_{1\leq i\leq N}$. Clearly, $\Gamma\cap S_i=s(\gamma_i)$ holds for a surface $S_i$ in $\breve S$. 

Simply speaking the physical intution is that, fluxes associated to different surfaces should act on the same path.

Notice that, both properties can be restated for other surface and path configurations. Hence, a surface set can have the simple or same surface intersection property for paths that are outgoing and lie above (or ingoing and below, or outgoing and below). The important fact is related to the question if the intersection vertices are the same for all surfaces or not.

In section \ref{subsec fingraphpathgroup} the concept of finite graph systems has been introduced. The following remark shows that, the properties simply generalises to this new structure.
\begin{rem}
A set $\breve S$ has the same surface intersection property for a finite orientation preserved graph system $\PD^{\op}_\Gamma$ associated to a graph $\Gamma$ (with no loops) iff the set $\breve S$ has the same surface intersection property the graph $\Gamma$.

A set $\breve S$ has the simple surface intersection property for a finite orientation preserved\footnote{Let $\breve S$ be equal to $S$. Then  notice that the property of all graphs being orientation preserved subgraphs is necessary, since, for a subgraph $\Gp:=\{\gp\}$ of $\Gamma$ the graph $\{\gp^{-1}\}$ is a subgraph of $\Gamma$, too. Consequently, if there is a surface $S$ intersecting a path $\gp$ such that $\gp$ is ingoing and lies above, then $S$ intersects the path $\gp^{-1}$ such that $\gp^{-1}$ is outgoing and lies above. This implies that, the surface $S$ cannot have the same surface intersection property for each subgraph of $\Gamma$.} graph system $\PD^{\op}_\Gamma$ associated to a graph $\Gamma$ if the set $\breve S$ has the simple surface intersection property for the graph $\Gamma$.
\end{rem}

\begin{defi}Let $\breve S$ be a surface set and $\Gamma$ be a graph such that the only intersections of the graph and each surface in $\breve S$ are contained in the vertex set $V_\Gamma$.

Then the set of images $\{E(\gamma): E\in\gop_{\breve S,\Gamma}\}$ of flux maps for a fixed path $\gamma$ in $\Gamma$ is denoted by $\bar\gop_{\breve S,\gamma}$. 
\end{defi}

\begin{prop}\label{prop Liealgebrastructfluxes}Let $\breve S$ be a set of surfaces and $\Gamma$ be a fixed graph (with no loops) such that the set $\breve S$ has the same surface intersection property for a graph $\Gamma$. Moreover, let $\breve T$ be a set of surfaces and $\Gamma$ be a fixed graph such that the set $\breve T$ has the simple surface intersection property for a graph $\Gamma$.

Then the set $\bar\gop_{\breve S,\gamma}$ is equipped with a structure, which is induced from the Lie algebra structure of $\go$, such that it forms a Lie algebra. 
The the set $\bar\gop_{\breve T,\gamma}$ is equipped with a structure to form a Lie algebra, too.
\end{prop}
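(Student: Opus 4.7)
The plan is to show that under each of the two intersection hypotheses the set $\bar\gop_{\breve S,\gamma}$ collapses to a canonical copy of $\go$ embedded in $\go\oplus\go$, and then to transport the Lie bracket of $\go$ through this identification. No deep computation is involved; the content is to unravel the definitions of $\iota_L,\iota_R$ under the two special intersection configurations.

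First I would unfold the intersection functions under the assumption that $\breve S$ has the same surface intersection property for $\Gamma$. By definition every path $\gamma\in\Gamma$ is outgoing and lies below every $S\in\breve S$, meeting $S$ only at $s(\gamma)$. Reading off the case distinctions for $\iota_L$ and $\iota_R$ in Definition \ref{defi intersefunc} one finds $\iota_L(S,\gamma)=-1$ and $\iota_R(S,\gamma)=0$ for every $S\in\breve S$. Hence every $E\in\gop_{\breve S,\Gamma}$ satisfies $E(\gamma)=(-\sigma_L(S),0)$ for some $S\in\breve S$ and some $\sigma_L\in\breve\sigma_L$. Since $\sigma_L$ is an arbitrary map from $\breve S$ to $\go$ subject only to the orientation-flip identification $\sigma_L(S)=\sigma_L(S^{-1})$ (which merely identifies the two labels of a single surface in $\breve S$), the value $\sigma_L(S)$ can be chosen freely in $\go$. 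Consequently
\[\bar\gop_{\breve S,\gamma}=\go\oplus\{0\}\subset\go\oplus\go.\]
I would then equip $\bar\gop_{\breve S,\gamma}$ with the bracket $[(X_1,0),(X_2,0)]:=([X_1,X_2]_{\go},0)$, where $[\cdot,\cdot]_{\go}$ is the Lie bracket of $\go$. Bilinearity, antisymmetry and the Jacobi identity are inherited from $\go$, and the bracket visibly returns a value in $\bar\gop_{\breve S,\gamma}$.

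Second, for $\breve T$ having the simple surface intersection property I would proceed analogously. For each path $\gamma_i\in\Gamma$ there is exactly one surface $T_i\in\breve T$ which is intersected, the intersection point being $t(\gamma_i)$, with $\gamma_i$ lying above and ingoing. Therefore $\iota_L(T,\gamma_i)=0$ for all $T\in\breve T$, while $\iota_R(T_i,\gamma_i)=-1$ and $\iota_R(T,\gamma_i)=0$ for $T\neq T_i$. Thus every image is of the form $(0,-\sigma_R(T_i))$ or $(0,0)$, and as $\sigma_R$ varies the second component sweeps out all of $\go$. This yields $\bar\gop_{\breve T,\gamma}=\{0\}\oplus\go$, and the bracket $[(0,Y_1),(0,Y_2)]:=(0,[Y_1,Y_2]_{\go})$ turns this set into a Lie algebra isomorphic to $\go$ in exactly the same way.

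The main step is interpretive rather than computational: one has to see that the two intersection properties constrain the two-component flux value at any fixed $\gamma$ to live in a single summand of $\go\oplus\go$, after which the Lie algebra structure is simply pulled back from $\go$. The only subtlety I would double-check is that the orientation-reversal constraint on $\sigma_L,\sigma_R$ (together with the closure of $\breve S$ under orientation flip) does not restrict the range of $\sigma_L(S)$, which it does not since it merely identifies $S$ and $S^{-1}$ as labels carrying the same element of $\go$. Everything else reduces to the well-known Lie algebra axioms for $\go$.
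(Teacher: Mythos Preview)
Your argument is correct and follows essentially the same route as the paper: reduce the image set $\bar\gop_{\breve S,\gamma}$ (resp.\ $\bar\gop_{\breve T,\gamma}$) to a single copy of $\go$ via the intersection hypothesis, then pull back the Lie structure from $\go$. The only presentational difference is that the paper verifies the vector-space axioms (addition, inverse, zero, scalar multiplication) and then the bracket explicitly on elements written as $E_S(\gamma)=-X$, whereas you pass directly to the identification $\bar\gop_{\breve S,\gamma}=\go\oplus\{0\}$ (resp.\ $\{0\}\oplus\go$) and read off the Lie structure from there; the paper also inserts the cautionary remark that addition of flux values at \emph{different} paths $\gamma,\gp$ is not well-defined, which you omit but do not need.
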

\begin{proofs}
\textbf{Step 1: linear space over $\CB$}\\Consider a path $\gamma$ in $\Gamma$ that lies above and ingoing w.r.t. the surface orientation of each surface $S$ in $\breve S$ and ingoing and above with respect to $T$. Then there is a map $E_S$ such that
\beqs E_S(\gamma)=- X
\eqs There exists an operation $+$ given by the map $s: \bar\gop_{\breve S,\gamma}\times\bar\gop_{\breve S,\gamma} \rightarrow \bar\gop_{\breve S,\gamma}$ such that 
\beqs (E^L_{1}(\gamma), E^L_{2}(\gamma))\mapsto s(E^L_{1}(\gamma), E^L_{2}(\gamma)):=E^L_{1}(\gamma)+ E^L_{2}(\gamma)=-\sigma^1_L(S_1)-\sigma^2_L(S_2)
=-\sigma^3_L([S])
\eqs since $\sigma_L^i\in\breve\sigma_L$ and where $[S]$ denotes an arbitrary representative of the set $\breve S$.
Respectively it is defined 
\beqs (E^L_{1}(\gamma), E^L_{2}(\gamma))\mapsto s(E^L_{1}(\gamma), E^L_{2}(\gamma)):=E^L_{1}(\gamma)+ E^L_{2}(\gamma)=-\sigma_R^1(T)-\sigma^2_R(T)=-\sigma_R^3(T)
\eqs whenever $\sigma^i_R\in\breve\sigma_R$ and $T\in\breve T$.
There is an inverse
\beqs E(\gamma) - E(\gamma)=X - X=0\eqs
and a null element
\beqs E(\gamma) +E_0(\gamma)=X\eqs
whenever $E_0(\gamma)=-\sigma_L(S) =0$.
Notice the following map 
\beq\label{eq Liealghom1}\bar\gop_{\breve S,\gamma}\times\bar\gop_{\breve S,\gp}\ni( E_{1}(\gamma), E_{2}(\gp))\mapsto E_{1}(\gamma)\dot{+} E_{2}(\gp)\in\go
\eq is not considered, since, this map is not well-defined. 
One can show easily that $(\bar\gop_{\breve S,\gamma},+)$ is an additive group. The scalar multiplication is defined by
\beqs \lambda\cdot E(\gamma)=\lambda X
\eqs for all $\lambda\in\CB$ and $X\in\go$.
Finally, prove that $(\bar\gop_{\breve S,\gamma},+)$ is a linear space over $\CB$.

\textbf{Step 2: Lie bracket} is defined by the Lie bracket of the Lie algebra $\go$ and
\beqs \bra E_{1}(\gamma),E_{2}(\gamma)\ket: =\bra X_{1},X_{2}\ket\eqs
for $E_{1}(\gamma),E_{2}(\gamma)\in\bar\gop_{\breve S,\gamma}$ and $\gamma\in\Gamma$. 
\end{proofs}

If a surface set $\breve S$ does not have the same or simple surface intersection property for the graph $\Gamma$, then the surface set can be decomposed into several sets and the graph $\Gamma$ can be decomposed into a set of subgraphs. Then for each modified surface set there is a subgraph such that required condition is fulfilled. 

\begin{defi}Let $\breve S$ a set of surfaces and $\Gamma$ be a fixed graph (with no loops) such that the set $\breve S$ has the same (or simple) surface intersection property for a graph $\Gamma$. 

The universal enveloping Lie algebra of the Lie algebra $\bar\gop_{\breve S,\gamma}$ of fluxes for paths of a path $\gamma$ in $\Gamma$ and all surfaces in $\breve S$ is called the \textbf{universal enveloping flux algebra $\bar\Ep_{\breve S,\gamma}$ associated to a path and a finite set of surfaces}.
\end{defi}

Now, the definitions are rewritten for finite orientation preserved graph systems. 

\begin{defi}Let $\breve S$ be a surface set and $\Gamma$ be a graph such that the only intersections of the graph and each surface in $\breve S$ are contained in the vertex set $V_\Gamma$. $\PD_\Gamma$ denotes the finite graph system associated to $\Gamma$. Let $\E$ be the universal Lie enveloping algebra of $\go$.

Define the set of \textbf{Lie algebra-valued quantum fluxes for graphs}
\beqs \go_{\breve S,\Gamma}:= \bigcup_{\sigma_L\times\sigma_R\in\breve\sigma}\bigcup_{S\in\breve S}\Big\{ E_{S,\Gamma}\in\Map(\PD_\Gamma,\bigoplus_{\vert E_\Gamma\vert}\go\oplus \bigoplus_{\vert E_\Gamma\vert}\go):\quad 
&E_{S,\Gamma}:=E_S\times...\times E_S\\&\text{ where }E_S(\gamma):=(\iota_L(\gamma,S)\sigma_L(S),\iota_R(\gamma,S)\sigma_R(S)),\\
&E_S\in\gop_{\breve S,\Gamma},S\in\breve S,\gamma\in\Gamma\Big\}\eqs 

Moreover, define
\beqs \E_{\breve S,\Gamma}:= 
\bigcup_{\sigma_L\times\sigma_R\in\breve\sigma}\bigcup_{S\in\breve S}\Big\{ E_{S,\Gamma}\in\Map(\PD_\Gamma,\bigoplus_{\vert E_\Gamma\vert}\E\oplus \bigoplus_{\vert E_\Gamma\vert}\E):\quad 
&E_{S,\Gamma}:=E_S\times...\times E_S\\&\text{ where }E_S(\gamma):=(\iota_L(\gamma,S)\sigma_L(S),\iota_R(\gamma,S)\sigma_R(S)),\\
&E_S\in\E_{\breve S,\Gamma},S\in\breve S,\gamma\in\Gamma\Big\}
\eqs 

The set of all images of the linear hull of all maps in $\go_{\breve S,\Gamma}$ for a fixed surface set $\breve S$ and a fixed graph $\Gamma$ is denoted by $\bar\go_{\breve S,\Gamma}$.
The set of all images of the linear hull of all maps in $\go_{\breve S,\Gamma}$ for a fixed surface set $\breve S$ and a fixed subgraph $\Gp$ of $\Gamma$ is denoted by $\bar\go_{\breve S,\Gp\leq \Gamma}$. 
\end{defi}

Note that, the set of Lie algebra-valued quantum fluxes for graphs is generalised for the inductive limit graph system $\PD_{\Gamma_\infty}$. This follows from the fact that, each element of the inductive limit graph system $\PD_{\Gamma_\infty}$ is a graph. 

\begin{prop}
Let $\breve S$ be a set of surfaces and $\PD^{\op}_\Gamma$ be a finite orientation preserved graph system such that the set $\breve S$ has the same surface intersection property for a graph $\Gamma$ (with no loops).  

The set $\bar\go_{\breve S,\Gamma}$ forms a Lie algebra and is called the \textbf{Lie flux algebra associated a graph and a finite surface set}.The \textbf{universal enveloping flux algebra $\bar\E_{\breve S,\Gamma}$ associated a graph and a finite surface set} is the enveloping Lie algebra of $\bar\go_{\breve S,\Gamma}$.
\end{prop}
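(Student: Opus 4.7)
The plan is to promote the Lie algebra structure obtained in Proposition \ref{prop Liealgebrastructfluxes} for a single path to the graph-level object by working componentwise in the direct sum $\bigoplus_{\vert E_\Gamma\vert}\go\oplus\bigoplus_{\vert E_\Gamma\vert}\go$. Concretely, each generating element $E_{S,\Gamma}\in\go_{\breve S,\Gamma}$ is by definition a product $E_S\times\dots\times E_S$ indexed by the edges of $\Gamma$. Evaluated at a subgraph $\Gp=\{\gamma_1,\dots,\gamma_M\}\in\PD_\Gamma^{\op}$ it produces a tuple whose $i$-th slot is $E_S(\gamma_i)\in\bar\gop_{\breve S,\gamma_i}$ and whose remaining slots are zero. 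Hence $\bar\go_{\breve S,\Gamma}$ sits, after this identification, inside the direct sum $\bigoplus_{i=1}^{\vert E_\Gamma\vert}(\bar\gop_{\breve S,\gamma_i}\oplus\bar\gop_{\breve S,\gamma_i})$, and the natural candidates for sum and bracket are the componentwise ones.

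First I would verify that the linear hull in the definition of $\bar\go_{\breve S,\Gamma}$ really is closed under componentwise addition and scalar multiplication. Using the same surface intersection property for $\Gamma$, every path $\gamma_i$ of a subgraph meets every $S\in\breve S$ at its common source vertex with identical orientation data, so $E_S(\gamma_i)=(-\sigma_L(S),0)$ for all surfaces simultaneously. Consequently the sum of two generators $E_{S_1,\Gamma}+E_{S_2,\Gamma}$ gives in each nonzero slot a term $-\sigma_L^1(S_1)-\sigma_L^2(S_2)$, which by Step~1 of the proof of Proposition \ref{prop Liealgebrastructfluxes} lies again in $\bar\gop_{\breve S,\gamma_i}$. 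This reduces the linear space axioms to slotwise application of that proposition, and the inclusion $\bar\go_{\breve S,\Gamma}\subseteq\bigoplus\bar\gop_{\breve S,\gamma_i}\oplus\bigoplus\bar\gop_{\breve S,\gamma_i}$ becomes an equality after taking linear hulls.

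Next I would define the Lie bracket by
\begin{equation*}
\bra E_{S_1,\Gamma},E_{S_2,\Gamma}\ket(\Gp):=\bigl(\bra E_{S_1}(\gamma_i),E_{S_2}(\gamma_i)\ket_{\go\oplus\go}\bigr)_{i=1,\dots,\vert E_\Gamma\vert},
\end{equation*}
where $\bra\cdot,\cdot\ket_{\go\oplus\go}$ is the componentwise extension of the Lie bracket of $\go$ to $\go\oplus\go$. Bilinearity, antisymmetry and the Jacobi identity then reduce to the corresponding identities on each path slot, which hold by Step~2 of Proposition \ref{prop Liealgebrastructfluxes}. Closure of the bracket inside $\bar\go_{\breve S,\Gamma}$ is where the Lie algebra structure of $\bar\gop_{\breve S,\gamma_i}$ — rather than merely its linear structure — is used: the bracket of two generators yields, slot by slot, an element of $\bar\gop_{\breve S,\gamma_i}$, and extending by bilinearity from generators to the linear hull keeps us inside $\bar\go_{\breve S,\Gamma}$.

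The main obstacle I anticipate is the bookkeeping required by the finite graph system rather than a single graph: the same map $E_{S,\Gamma}$ must produce compatible values on every subgraph $\Gp\in\PD_\Gamma^{\op}$, and only the orientation-preserving condition on $\PD_\Gamma^{\op}$ guarantees that the intersection signs $\iota_L(S,\gamma)$ read off consistently across subgraphs (see the footnote preceding the definition of $\bar\go_{\breve S,\Gp\leq\Gamma}$). Once this consistency is observed, the Lie algebra axioms pass through unchanged, and $\bar\E_{\breve S,\Gamma}$ is then defined in the standard way as the universal enveloping algebra of the finite-dimensional Lie algebra $\bar\go_{\breve S,\Gamma}$, so no additional argument is needed for the second assertion.
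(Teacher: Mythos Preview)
Your proposal is correct and follows essentially the same route as the paper: both exploit the same surface intersection property to reduce each slot to $E_S(\gamma_i)=-\sigma_L(S)$ and then verify closure under componentwise addition (and bracket), with your version spelling out the reduction to Proposition~\ref{prop Liealgebrastructfluxes} more explicitly than the paper's terse argument. One small overclaim: the inclusion into $\bigoplus_i\bar\gop_{\breve S,\gamma_i}$ does not become an equality---the image is only the diagonal copy of $\go$, since the same $-\sigma_L(S)$ appears in every slot---but this does not affect the Lie algebra verification.
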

\begin{proof}
This follows from the observation that, $\go_{\breve S,\Gamma}$ is identified with
\beqs  \bigcup_{\sigma_L\in\breve\sigma_L}\bigcup_{S\in\breve S}\Big\{ E_{S,\Gamma}\in\Map(\PD^{\op}_\Gamma,\bigoplus_{\vert E_\Gamma\vert}\go):\quad 
&E_{S,\Gamma}:=E_S\times...\times E_S\\&\text{ where }E_S(\gamma):=-\sigma_L(S),
E_S\in\gop_{\breve S,\Gamma},S\in\breve S,\gamma\in\Gamma\Big\}\eqs 
and the addition operation
\beqs
E^1_{S_1,\Gamma}(\Gamma) + E^2_{S_2,\Gamma}(\Gamma)&:=
\big(E^1_{S_1}(\gamma_1) +E^2_{S_2}(\gamma_1),...,E^1_{S_1}(\gamma_N) +E^2_{S_2}(\gamma_N)\big)\\
&=(-\sigma^1_L(S_1)-\sigma^2_L(S_2),...,-\sigma^1_L(S_1)-\sigma^2_L(S_2))\\
&=(E^3_{[S]}(\gamma_1),...,E^3_{[S]}(\gamma_N))
\eqs whenever $\Gamma:=\gamma_1,...,\gamma_N$.
\end{proof}
Notice that indeed it is true that,
\beqs \go_{\breve S,\Gamma}=\go_{S_i,\Gamma}
\eqs yields for every $S_i\in\breve S$. The more general definition is due to physical arguments.

\begin{prop}
Let $\breve T$ be a set of surfaces and $\PD^{\op}_\Gamma$ be a finite orientation preserved graph system such that the set $\breve T$ has the simple surface intersection property for $\Gamma$. 

The set $\bar\go_{\breve T,\Gamma}$ forms a Lie algebra.
\end{prop}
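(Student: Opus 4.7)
The plan is to mirror the proof of the preceding proposition, exploiting the simpler structure that the simple surface intersection property enforces. First I would identify $\go_{\breve T,\Gamma}$ with the restricted set
\beqs
\bigcup_{\sigma_R\in\breve\sigma_R}\bigcup_{S\in\breve T}\Big\{E_{S,\Gamma}\in\Map(\PD^{\op}_\Gamma,\bigoplus_{\vert E_\Gamma\vert}\go):\text{ }E_{S,\Gamma}=E_S\times\cdots\times E_S,\text{ }E_S(\gamma)=\iota_R(\gamma,S)\sigma_R(S)\Big\}.
\eqs
By the simple surface intersection property, for each surface $S\in\breve T$ there is exactly one path $\gamma_S$ in $\Gamma$ with $E_S(\gamma_S)=-\sigma_R(S)$, and $E_S(\gamma)=0$ for all other paths. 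Hence a generator $E_{S,\Gamma}$ is fully specified by a single Lie algebra element placed in the slot indexed by $\gamma_S$.

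Next I would verify that the linear hull $\bar\go_{\breve T,\Gamma}$ is a linear space over $\CB$. Because the support of each generator $E_{S,\Gamma}$ consists of the single coordinate corresponding to $\gamma_S$, scalar multiples and finite sums of generators are simply tuples of Lie algebra elements, one per path intersected by some surface. For two generators sharing the same surface $S_1=S_2=S$ the sum is again a generator with $\sigma_R$ replaced by $\sigma_R^1+\sigma_R^2$, while for $S_1\neq S_2$ the sum has disjoint support on the two paths $\gamma_{S_1}$ and $\gamma_{S_2}$. In both cases the result lies in $\bar\go_{\breve T,\Gamma}$ by construction of the linear hull.

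Finally, I would induce the Lie bracket pointwise from $\go$ by setting
\beqs
\bra E_{S_1,\Gamma},E_{S_2,\Gamma}\ket(\gamma):=\bra E_{S_1}(\gamma),E_{S_2}(\gamma)\ket
\eqs
for each path $\gamma$ in $\Gamma$, and extending bilinearly to the linear hull. Closure of the bracket is the only non-trivial point: if $S_1\neq S_2$, then for every path $\gamma$ at least one of $E_{S_1}(\gamma)$, $E_{S_2}(\gamma)$ vanishes, so the bracket is identically zero; if $S_1=S_2=S$, the bracket is supported only at $\gamma_S$ with value $\bra-\sigma_R^1(S),-\sigma_R^2(S)\ket\in\go$, which again defines a generator of $\bar\go_{\breve T,\Gamma}$. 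Antisymmetry and the Jacobi identity are then inherited coordinate-wise from $\go$.

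The main technical point to watch is keeping the identification of $E_{S,\Gamma}$ with its single non-trivial slot consistent, and in particular to note that, in contrast to the same intersection property case treated in the preceding proposition, here different surfaces produce disjointly supported generators rather than coinciding ones; this is what makes the bracket close trivially between different surfaces and avoids any obstruction analogous to \eqref{eq Liealghom1}.
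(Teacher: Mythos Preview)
Your proof is correct and follows essentially the same idea as the paper: both rely on the observation that under the simple surface intersection property each surface $T_i\in\breve T$ contributes only to the single slot indexed by $\gamma_i$, so the linear hull $\bar\go_{\breve T,\Gamma}$ is canonically a direct sum of copies of $\go$, one per intersected path. The paper packages this slightly differently---it sums the single-surface generators $E_{T_i,\Gamma}$ into a combined map $E_{\breve T,\Gamma}:=E_{T_1}\times\cdots\times E_{T_N}$ and works with this description of the linear hull, whereas you keep the individual generators $E_{S,\Gamma}$ and argue closure of the bracket directly from their disjoint supports---but the underlying argument is the same, and your explicit verification that brackets between distinct surfaces vanish makes the mechanism more transparent than the paper's terse remark.
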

Notice this follows from the fact that $\go_{\breve T,\Gamma}$ reduces to
\beqs \bigcup_{\sigma_L\in\breve\sigma_L}\Big\{ E_{\breve T,\Gamma}\in\Map(\PD^{\op}_\Gamma,\bigoplus_{\vert E_\Gamma\vert}\go):\quad 
&E_{\breve T,\Gamma}:=E_{T_1}\times...\times E_{T_N}\\&\text{ where }E_{T_i}(\gamma_i):=-\sigma_L(T_i),E_S\in\gop_{\breve S,\Gamma},T_i\in\breve T,\\
&\gamma_i\cap T_i=t(\gamma_i),\gamma\in\Gamma\Big\}\eqs 
since,
\beqs E_{S_1,\Gamma}(\Gamma)+ ...+ E_{S_N,\Gamma}(\Gamma)&= (E_{T_1}(\gamma_1),0,...,0)) + (0,E_{T_2}(\gamma_2),0,...,0)) + ...+ (0,...,0,E_{T_N}(\gamma_N))) \\
&=(E_{T_1}(\gamma_1),...,E_{T_N}(\gamma_N))=:E_{\breve T,\Gamma}(\Gamma)
\eqs

The Lie flux algebra and the universal enveloping flux algebra for the inductive limit graph system $\PD_{\Gamma_\infty}$  and a fixed suitable surface set $\breve S$ are denoted by $\bar\go_{\breve S}$ and $\bar\E_{\breve S}$. 

Finally assume that $\bar G_{\breve S,\Gamma}$, resp. $\bar G_{\breve S}$, denotes the Lie flux group associated to Lie flux algebra $\bar\go_{\breve S,\Gamma}$, resp. $\bar\go_{\breve S}$.

\paragraph*{The discretised and localised quantum flux operator associated to surfaces and graphs\\[5pt]}

Now consider a restriction of the quantum flux operators to discretised surfaces and graphs. Notice that, the Lie algebra-valued quantum flux operator usually distinguishes between paths, which are lying below, and paths, which are lying above the surface in a surface set. For simplicity in this article the case of paths lying below is considered only. With no doubt the second case can be defined analogously. The discretised surfaces do not allow to distinguish between paths lying above or below with respect to a surface orientation of a surface. Hence in this situation the discretised surface set has to be associated to a set of surfaces with a surface orientation. Summarising the cases below or above are not treated in the context of discretised surfaces. In this way, the intersection functions of definition \ref{defi intersefunc} are maps such that $\iota_L:\breve S_{\disc}\times\Gamma\rightarrow\{0,-1\}$ and $\iota_R:\breve S_{\disc}\times\Gamma\rightarrow\{0,1\}$.

\begin{defi}\label{defi locfluxop} Let $\breve S_{\disc}$ be a set of discretised surfaces, which is constructed from a set $\breve S$ of surfaces, and let $\Gamma$ be a graph.
Let $\{\Gamma_i\}_{i=1,..\infty}$ be an inductive family of graphs such that for every graph $\Gamma_i$ the intersection points of a surface set $\breve S_{\disc}$ and the graph $\Gamma_i$ are vertices of $V_{\Gamma_i}$. Denote the set of intersections of a graph $\Gamma_i$ and a discretised surface set $\breve S_{\disc}$ by $i_S(\{\Gamma_{i}\})$.

Let $\Gamma_\infty$ be the inductive limit of a family of graphs $\{\Gamma_i\}$. Furthermore, assume that, the set $\breve S$ is chosen such that  
\begin{enumerate}
 \item for each graph of the family the surface set $\breve S$ has the same surface intersection property,
 \item the inductive structure preserves the same surface intersection property\footnote{In particular, a graph $\Gp$, which has the same intersection surface property for $\breve S$, has the same intersection behavior for $\breve S$ if $\Gp$ is considered as a subgraph of a graph $\Gamma$, too.} for $\breve S$ and
 \item each surface in $\breve S$ intersects the inductive limit $\Gamma_\infty$ a finite or an infinite number of vertices.
\end{enumerate}

Then $E_{S_{\disc}}(\Gamma)^+E_{S_{\disc}}(\Gamma)$ denote the (Lie algebra-valued) \textbf{discretised quantum flux operator associated a surface $S_{\disc}$ and a graph $\Gamma$} such that $S_{\disc}\cap\Gamma$ is a subset of the set of vertices $V_\Gamma$ and $E_{S_{\disc}}\in\go_{\breve S_{\disc},\Gamma}$. 

The (Lie algebra-valued) \textbf{discretised and localised quantum flux operator $\tilde E_{S_{\disc}}(\Gamma_{i+1})^+\tilde E_{S_{\disc}}(\Gamma_{i+1})$ associated a surface $S_{\disc}$ and an inductive family of graphs $\{\Gamma_i\}_{i=1,..\infty}$} is defined by the difference operator \[\tilde E_{S_{\disc}}(\Gamma_{i+1})^+\tilde E_{S_{\disc}}(\Gamma_{i+1}):= E_{S_{\disc}}(\Gamma_{i+1})^+E_{S_{\disc}}(\Gamma_{i+1})-E_{S_{\disc}}(\Gamma_i)^+E_{S_{\disc}}(\Gamma_i)\] for $E_{S_{\disc}}(\Gamma_{i})\in\bar\go_{\breve S_{\disc},\Gamma_{i}}$ and $E_{S_{\disc}}(\Gamma_{i+1})\in\bar\go_{\breve S_{\disc},\Gamma_{i+1}}$ such that
\begin{enumerate}
 \item  $\tilde E_{S_{\disc}}(\Gamma_{i+1})^+\tilde E_{S_{\disc}}(\Gamma_{i+1})$ is non-trivial only for intersections of the surfaces in $\breve S$ and the graph $\Gamma_{i+1}$ in vertices contained in the set $i_S(\{\Gamma_{i+1}\})\setminus i_S(\{\Gamma_{i}\})$ and
 \item $(\tilde E_{S_{\disc}}(\Gamma_{i+1})^+\tilde E_{S_{\disc}}(\Gamma_{i+1}))^+=\tilde E_{S_{\disc}}(\Gamma_{i+1})^+\tilde E_{S_{\disc}}(\Gamma_{i+1})$ yields.
\end{enumerate}

The set of such discretised and localised quantum flux operator $\tilde E_{S_{\disc}}(\Gamma)^+\tilde E_{S_{\disc}}(\Gamma)$ associated a graph $\Gamma$ is denoted by $\bar\go^{\loc}_{\breve S_{\disc},\Gamma}$ and called the \textbf{localised Lie flux algebra associated a discretised surface set and a graph}. 
The set of such discretised and localised quantum flux operator $\tilde E_{S_{\disc}}(\Gamma_{i+1})^+\tilde E_{S_{\disc}}(\Gamma_{i+1})$ associated an inductive family of graphs $\{\Gamma_i\}_{i=1,..\infty}$ is denoted by $\bar\go^{\loc}_{\breve S_{\disc}}$ and called the \textbf{localised Lie flux algebra associated to a discretised surface set} (and an inductive family of graphs). 

The \textbf{localised enveloping flux algebra $\E^{\loc}_{\breve S_{\disc},\Gamma}$ associated to a discretised surface set and a graph} is given by the enveloping algebra of the localised Lie flux algebra associated to a discretised surface set $\breve S_{\disc}$ and the graph $\Gamma$.

Finally, the \textbf{localised enveloping flux algebra $\E^{\loc}_{\breve S_{\disc}}$ associated to a discretised surface set} (and an inductive family of graphs) is given by the enveloping algebra of the localised Lie flux algebra associated to a discretised surface set $\breve S_{\disc}$.
\end{defi}
If the situation of all paths are ingoing w.r.t all surfaces in a set $\breve S$, then the localised Lie flux algebra (resp. localised enveloping flux algebra) associated to a discretised surface set $\breve S_{\disc}$ associated to $\breve S$ and an inductive family of graphs is denoted by $\bar\go_{\loc}^{\breve S_{\disc}}$ (resp. $\bar\E_{\loc}^{\breve S_{\disc}}$). 

Finally assume that, $\bar G_{\breve S_{\disc},\Gamma}$ (resp. $\bar G_{\breve S_{\disc}}$) denotes the Lie flux group associated to Lie flux algebra $\bar\go_{\breve S_{\disc},\Gamma}^{\loc}$ (resp. $\bar\go_{\breve S_{\disc}}^{\loc}$).
\section{The localised holonomy-flux cross-product $^*$-algebra}\label{subsec restrdiffeo}
\subsection{The localised holonomy $^*$-algebra}\label{loc}
\paragraph*{The construction of the localised holonomy $C^*$-algebra\\[15pt]}

The construction of the new algebra of quantum configuration variables combines a lot of the structures, which have been presented in \cite{Kaminski0,Kaminski1,Kaminski2,Kaminski3,KaminskiPHD}. In particular the idea for the definition of the holonomy-flux cross-product $^*$-algebra \cite[Sec.: 3.1]{Kaminski3},\cite[Sec.: 8.4]{KaminskiPHD} is used.

Assume that $G$ is a compact connected Lie group, $\Gamma$ be a graph, $\breve S$ a surface set and $\breve S_{\disc}$ a discretised surface set (associated to $\breve S$). 

The convolution holonomy $^*$-algebra associated to $\Gamma$ is denoted by $\CD(\Ab_{\Gamma}^{\disc})$ (resp. $\CD(\Ab^{\Gamma}_{\disc})$). This algebra is completed with respect to an appropriate norm to a $C^*$-algebra, which is called the \textbf{non-commutative holonomy $C^*$-algebra} $C^*(\Ab_{\Gamma}^{\disc})$ (resp. $C^*(\Ab^{\Gamma}_{\disc})$) associated to a graph $\Gamma$. 
Moreover, the $C^*$-algebra $C^*(\Ab_{\Gamma}^{\disc})$ (resp. $C^*(\Ab^{\Gamma}_{\disc})$) is isomorphic to a infinite matrix $C^*$-algebra $M_{\Gamma}(\CB)$. The \textbf{analytic holonomy $C^*$-algebra} associated to the graph $\bar\Gamma$ is denoted by $C(\Ab_{\bar\Gamma})$. Note that, the graph $\bar\Gamma$ is defined such that there are no intersections with elements of $\breve S_{\disc}$.
Now new $C^*$-algebras are constructed from $C^*$-tensor product algebras.

\begin{defi}Let $\Gamma$ be a graph, $\breve S$ a surface set and $\breve S_{\disc}$ a discretised surface set (associated to $\breve S$). Then denote the subgraph of $\Gamma$ such that, this graph contains all edges of the graph $\Gamma$ that do not intersect with any vertex of the discretised surface set $\breve S_{\disc}$, by $\bar\Gamma$.

Define
\beqs&C^*(\Ab_{\disc,\Gamma})
:=C^*(\Ab^{\disc}_{\Gamma})\otimes C^*(\Ab_{\disc}^{\Gamma})\text{ where }\\
&C^*(\Ab^{\disc}_{\Gamma}):=\bigotimes_{i\in I}\bigotimes_{{k=1,...,N_k^i}}C^*(\Ab_{\disc, \gamma_{i,1}\circ...\circ\gamma_{i,k}})\eqs

The \textbf{localised holonomy $C^*$-algebra associated to a graph and a discretised surface set} is given by the tensor product
$C^*(\Ab_{\disc,\Gamma})\otimes C(\Ab_{\bar\Gamma})$ (with respect to the minimal $C^*$-norm).
\end{defi}

In this article only certain graphs are studied. These graphs are assumed to decompose into two sets of graphs: one set contains disconnected graphs that contains only paths such that either the source or target vertex is an element of each surface $S_{\disc}$ in $\breve S_{\disc}$, and the other set of disconnected graphs contains graphs $\bar\Gamma_i$ that contains paths, which does not intersect any point of each discretised surface set $S_{\disc}$ in $\breve S_{\disc}$. Hence this property generalises to set of graphs. In particular such a decomposition exists for an inductive family of graphs. 

\begin{defi}Let $\{\Gamma_i\}$ be an inductive family of graphs, which contain only paths such that either the source or target vertex is an element of each surface $S_{\disc}$ in $\breve S_{\disc}$. Moreover, let $\{\bar\Gamma_i\}$ be inductive family $\{\bar\Gamma_i\}$ of graphs that contains no paths, which start or end in a vertex contained in any set of the discretised surface set $\breve S_{\disc}$.

There is a increasing family of matrix algebras $\{C^*(\Ab^{\disc} _{\Gamma_i}),\beta^{\disc}_{\Gamma_i,\Gamma_{i+1}}\}_{i=1,..,\infty}$ with $\beta^{\disc}_{\Gamma_i,\Gamma_{i+1}}$ unit-preserving $^*$-homomorphisms such that the union of all matrix algebras is a normed $^*$-algebra, which can be completed by the minimal tensor product norm to a $C^*$-algebra
\[C^*(\Ab^{\disc}):=\bigcup_{m=1,...,\infty} C^*(\Ab^{\disc}_{\Gamma_m})\]

There is a increasing family of matrix algebras $\{C^*(\Ab_{\disc, \Gamma_i}),\beta_{\disc,\Gamma_i,\Gamma_{i+1}}\}_{i=1,..,\infty}$ with $\beta_{\disc,\Gamma_i,\Gamma_{i+1}}$ unit-preserving $^*$-homomorphisms such that the union of all matrix algebras is a normed $^*$-algebra, which can be completed by the minimal tensor product norm to a $C^*$-algebra
\[C^*(\Ab_{\disc}^{\disc}):=\bigcup_{m=1,...,\infty} C^*(\Ab_{\disc,\Gamma_m})\]

Furthermore, there is an inductive limit $C^*$-algebra $C(\Ab_{\loc})$, which is constructed from an inductive family $\{C(\Ab_{\bar\Gamma_i}),\beta_{\bar\Gamma_i,\bar\Gamma_{i+1}})\}_{i=1,...,\infty} $ of $C^*$-algebras. 
\end{defi}

In particular, an element of $C^*(\Ab_{\disc,\Gamma})$ is for example given by \beqs f^1_{\Gamma}(\ho_{\Gamma}(\gamma_{1,1}),...,\ho_{\Gamma}(\gamma_{N,1}))\otimes f^2_{\Gamma}(\ho_{\Gamma}(\gamma_{1,1}\circ\gamma_{1,2}),...,\ho_{\Gamma}(\gamma_{N,1}\circ\gamma_{N,2}))\eqs

Notice that, for $\Gamma_i\cap\Gamma_j=\varnothing$ for $i\neq j$ then $C^*(\Ab_{\disc,\Gamma_i\cup\Gamma_j})=C^*(\Ab_{\disc,\Gamma_i})\otimes C^*(\Ab_{\disc,\Gamma_j})$, $C^*(\Ab_{\disc,\Gamma_i})$ is isomorphic to the $C^*$-subalgebra $C^*(\Ab_{\disc,\Gamma_i})\otimes\idf_{\Gamma_j}$ of $C^*(\Ab_{\disc,\Gamma_i})\otimes C^*(\Ab_{\disc,\Gamma_j})$ where $\idf_{\Gamma_j}$ is the identity operator in $C^*(\Ab_{\disc,\Gamma_j})$. 

\begin{defi}
The \textbf{localised holonomy $C^*$-algebra} is the $C^*$-tensor product algebra $C(\Ab_{\loc})\otimes C^*(\Ab_{\disc}^{\disc})$ (with respect to the minimal $C^*$-norm) associated to a discretised set of surfaces.
 \end{defi}
In this definition the notion of localisation is emphasized, since the elements of this new algebra really depend on a chosen discretised surface set associated to a surface set.

\paragraph*{Actions of the group of bisections on the localised holonomy $C^*$-algebra associated to a graph and a discretised surface set\\[15pt]}

In this paragraph graph changing operations are studied. First observe that, there are some certain bisections, which map target vertices of certain paths to suitable paths. The set of these bisections in a finite graph system $\PD_\Gamma^{\breve S_{\disc}}$ has been introduced at the end of section \ref{subsubsec bisections} and is denoted by $\mathfrak{B}(\PD^{\breve S_{\disc}}_\Gamma)$. 
Only these bisections restricted to a set $V^{\breve S_{\disc}}$ are used to define an action of bisections on the localised analytic holonomy $C^*$-algebra associated to a graph and a discretised surface set. The action is for example given by
\beqs (\zeta_\sigma f_\Gamma)(\ho_\Gamma(\Gp))= f_{\Gamma}(\ho_{\Gamma}(\Gp_\sigma))\text{ for }\sigma\in\mathfrak{B}(\PD^{\breve S_{\disc}}_\Gamma)
\eqs whenever 
\begin{itemize}
 \item $f_{\Gamma}\in C^*(\Ab_{\disc,\Gamma})$, 
 \item $\Gp:=\{\gp_i\},\Gp_{\sigma}:=\{\gp_i\circ\sigma(t(\gp_i))\}$ are subgraphs of $\Gamma$.
\end{itemize}
In general the action is defined as follows.
\begin{lem}
There is an action $\alpha$ of the group $\mathfrak{B}(\PD^{\breve S_{\disc}}_\Gamma)$ of bisections on the $C^*$-algebra $C^*(\Ab^{\disc}_{\Gamma})$, which is defined by 
\beqs \zeta_\sigma( f_\Gamma):= f_\Gamma\circ R_\sigma
\eqs whenever $f_\Gamma\in C^*(\Ab^{\disc}_{\Gamma})$.
\end{lem}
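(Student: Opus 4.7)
The plan is to verify the three properties characterising a group action on a $C^*$-algebra: (i) well-definedness of $\zeta_\sigma$ as a map from $C^*(\Ab^{\disc}_{\Gamma})$ to itself, (ii) each $\zeta_\sigma$ is a $^*$-automorphism, and (iii) the assignment $\sigma \mapsto \zeta_\sigma$ respects the group law $\ast_2$ on $\mathfrak{B}(\PD^{\breve S_{\disc}}_\Gamma)$.

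First I would treat well-definedness. Since $\sigma \in \mathfrak{B}(\PD^{\breve S_{\disc}}_\Gamma)$, the bisection maps $V^{\breve S_{\disc}}$ into $\PD^{\breve S_{\disc}}_\Gamma\Sigma$, so $R_\sigma$ sends paths that intersect $\breve S_{\disc}$ in their endpoints to paths with the same property. Using the non-standard identification of $\Ab^{\disc}_\Gamma$ with a quotient of $G^{\vert\Gamma\vert}$ (on which $C^*(\Ab^{\disc}_\Gamma)$ is modelled as the matrix algebra), the translation $R_\sigma$ lifts to a map $\ho_\Gamma \circ R_\sigma$ whose evaluation on a subgraph $\Gp$ is given by the pointwise group multiplication by $\ho_\Gamma(\sigma(t(\gp_i)))$ on each coordinate, as established in the paragraph preceding the statement. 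Hence $f_\Gamma \circ R_\sigma \in C^*(\Ab^{\disc}_\Gamma)$.

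Next I would verify that $\zeta_\sigma$ is a $^*$-automorphism. On the dense convolution subalgebra $\CD(\Ab^{\disc}_\Gamma)$ this reduces to the fact that right translation on the compact group $G^{\vert\Gamma\vert}$ preserves both the Haar measure (hence the convolution product) and complex conjugation composed with inversion (hence the $^*$-operation). Because right translation is a bijective isometry of $G^{\vert\Gamma\vert}$ in each coordinate, $\zeta_\sigma$ extends by continuity to an isometric $^*$-automorphism of the completion $C^*(\Ab^{\disc}_\Gamma)$; its inverse is $\zeta_{\sigma^{-1}}$.

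Finally, the action property follows from the group-isomorphism $\sigma \mapsto R_\sigma$ established earlier, namely $R_{\sigma \ast_2 \sigma^\prime} = R_\sigma \circ R_{\sigma^\prime}$, which implies
\[
\zeta_{\sigma \ast_2 \sigma^\prime}(f_\Gamma) = f_\Gamma \circ R_\sigma \circ R_{\sigma^\prime} = (\zeta_{\sigma^\prime} \circ \zeta_\sigma)(f_\Gamma),
\]
so that $\sigma \mapsto \zeta_\sigma$ defines a (right) action as required; composing with group inversion yields a left action if preferred.

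The main obstacle is Step (ii), the compatibility with the convolution product. One must genuinely use the non-standard (disconnected) identification of $\Ab^{\disc}_\Gamma$ with $G^{\vert\Gamma\vert}$, because Problem 2.42 warns that in the natural identification $R_\sigma$ fails to be compatible with path composition, i.e.\ $R_\sigma(\gamma\circ\gp) \neq R_\sigma(\gamma)\circ R_\sigma(\gp)$ in general, and this obstruction would prevent the convolution from being preserved. The restriction $\sigma \in \mathfrak{B}(\PD^{\breve S_{\disc}}_\Gamma)$ together with the disconnected identification is precisely what circumvents this, ensuring that the action on each tensor factor of $C^*(\Ab^{\disc}_\Gamma) = \bigotimes C^*(\Ab_{\disc,\gamma_{i,1}\circ\dots\circ\gamma_{i,k}})$ reduces to group translation on a single copy of $G$.
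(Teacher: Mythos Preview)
Your proposal is correct and covers essentially the same points as the paper's proof, only in considerably greater detail: the paper's argument consists of the bare assertions $(\zeta_{\sigma_1}\circ\zeta_{\sigma_2})(f_\Gamma)=\zeta_{\sigma_1\ast_2\sigma_2}(f_\Gamma)$ and $\zeta_\sigma(f_\Gamma^*)=\zeta_\sigma(f_\Gamma)^*$, while you additionally spell out well-definedness and the convolution compatibility via Haar-invariance. Your observation that the formula yields a right action (with the left action obtained by precomposing with inversion) is also more careful than the paper, which writes the homomorphism identity in left-action form without comment.
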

\begin{proofs}
Let $\sigma\in\mathfrak{B}(\PD^{\breve S_{\disc}}_\Gamma)$ then $\sigma\mapsto \zeta_\sigma$ is a group homomorphism and
\beqs &(\zeta_{\sigma_1}\circ \zeta_{\sigma_2})(f_\Gamma)=\zeta_{\sigma_1\ast_2\sigma_2}(f_\Gamma)\\
&\zeta_{\sigma}(f_\Gamma^*)=\zeta_{\sigma}(f_\Gamma)^*
\eqs yields sfor all $\sigma,\sigma_1,\sigma_2\in \mathfrak{B}(\PD^{\breve S_{\disc}}_\Gamma)$ and $f_\Gamma\in C^* (\Ab^{\disc}_{\Gamma})$.
\end{proofs}

Now focus paths, which do not have any intersection with a discretised surface in $\breve S_{\disc}$. Then there is an action of $\Diff(\PD_{\bar\Gamma})$ and hence $\mathfrak{B}(\PD_{\bar\Gamma})$ on $C(\Ab_{\bar\Gamma})$. This action is a point-norm continuous automorphic action of $\Diff(\PD_{\bar\Gamma})$ on $C(\Ab_{\bar\Gamma})$ for every graph $\bar\Gamma$ of the inductive family $\{\bar\Gamma_i\}$ of graphs.

\paragraph*{Derivations defined by the discretised and localised flux operator for surfaces and graphs\\[5pt]}\label{der}

In section \ref{sec fluxdef} the discrete and localised flux operator $\tilde E_{S_{\disc}}(\Gamma_{i+1})^+\tilde E_{S_{\disc}}(\Gamma_{i+1})$ has been introduced in definition \ref{defi locfluxop}. The definition of this operator is chosen such that this operator acts non-trivial on elements of $C^*(\Ab^{\disc}_{\Gamma_{i+1}})$ and commute with all elements contained in $C^*(\Ab^{\disc}_{ \Gamma_i })$. 
\begin{defi}\label{defi derivonmatrix}
Define the derivation $\tilde\delta_{S_{\disc},\Gamma_j}$ on $C^*(\Ab^{\disc}_{\Gamma_{j}})$ with domain $\DD(\tilde \delta_{S_{\disc},\Gamma_{i+1}})$ by the following commutator 
\beqs\tilde \delta_{S_{\disc},\Gamma_{i+1}}& (f_{\Gamma_{i+1}})
&:= \bra \tilde E_{S_{\disc}}(\Gamma_{i+1})^+\tilde E_{S_{\disc}}(\Gamma_{i+1}) ,f_{\Gamma_{i+1}}\ket\\
\eqs for a fixed $\tilde E_{S_{\disc}}(\Gamma_{i+1})^+\tilde E_{S_{\disc}}(\Gamma_{i+1})\in\bar\go^{\loc}_{\breve S_{\disc},\Gamma_i}$ and $f_{\Gamma_{i+1}}\in \DD(\tilde \delta_{S_{\disc},\Gamma_{i+1}})$. 
\end{defi}
The domain $\DD(\tilde \delta_{S_{\disc},\Gamma_{i+1}})$ is a $^*$-subalgebra of $C^*(\Ab^{\disc}_{\Gamma_{j}})$.
\begin{lem}
The linear operator $\tilde \delta_{S_{\disc},\Gamma_{i}}$ is a symmetric unbounded $^*$-derivation with the domain $\DD(\tilde \delta_{S_{\disc},\Gamma_{i}})$ of the unital $C^*$-algebra $C^*(\Ab^{\disc}_{\Gamma_{i}})$. The domain $\DD(\tilde \delta_{S_{\disc},\Gamma_{i}})$ is a dense $^*$-subalgebra of $C^*(\Ab^{\disc}_{\Gamma_i})$.
\end{lem}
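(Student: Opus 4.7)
\begin{proofi}
The plan is to treat all claims as consequences of the explicit formula $\tilde\delta_{S_{\disc},\Gamma_i}(f) = [A,f]$ with the self-adjoint operator $A := \tilde E_{S_{\disc}}(\Gamma_i)^+\tilde E_{S_{\disc}}(\Gamma_i)$, using $A^+ = A$ recorded in definition \ref{defi locfluxop}.

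First I would verify the purely algebraic claims. Linearity on the domain is immediate from bilinearity of the commutator in its second slot. The Leibniz rule $\tilde\delta_{S_{\disc},\Gamma_i}(fg) = \tilde\delta_{S_{\disc},\Gamma_i}(f)\, g + f\,\tilde\delta_{S_{\disc},\Gamma_i}(g)$ reduces to the standard identity $[A,fg] = [A,f]g + f[A,g]$, and at the same time forces the domain $\DD(\tilde\delta_{S_{\disc},\Gamma_i})$ to be closed under products, since the right-hand side belongs to $C^*(\Ab^{\disc}_{\Gamma_i})$ whenever the two summands do. The symmetric $^*$-derivation property follows from the computation $[A,f^+]^+ = (A f^+ - f^+ A)^+ = f A - A f = -[A,f]$ together with $A^+ = A$, which both identifies $\tilde\delta_{S_{\disc},\Gamma_i}$ as a symmetric $^*$-derivation in the sense of \cite[Sect.: 8.4.2]{KaminskiPHD} and shows that $\DD(\tilde\delta_{S_{\disc},\Gamma_i})$ is stable under the involution; hence the domain is a $^*$-subalgebra.

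Second, unboundedness follows because $A$ acts, through $\Exp(tE)$, as the infinitesimal generator of a non-trivial one-parameter flow on the holonomies attached to the new intersection vertices in $i_S(\{\Gamma_i\})\setminus i_S(\{\Gamma_{i-1}\})$. Evaluating the commutator on a sequence of matrix coefficients of irreducible representations of $G$ of increasing dimension yields elements of bounded operator norm whose image under $\tilde\delta_{S_{\disc},\Gamma_i}$ grows without bound, ruling out any bounded extension.

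Third, for density of $\DD(\tilde\delta_{S_{\disc},\Gamma_i})$ I would use the Peter-Weyl theorem: the $^*$-algebra generated by matrix coefficients of finite-dimensional unitary representations of $G$, evaluated on the edge holonomies of $\Gamma_i$, is norm-dense in the underlying commutative function algebra on $\Ab^{\disc}_{\Gamma_i}$ and, via the matrix-algebra identification of the non-commutative holonomy algebra, norm-dense in $C^*(\Ab^{\disc}_{\Gamma_i})$. On each such generator the commutator with $A$ returns a finite linear combination of matrix coefficients of the same representations; stability under products and involutions has already been established above, so this $^*$-subalgebra sits inside $\DD(\tilde\delta_{S_{\disc},\Gamma_i})$ and supplies the required dense subset.

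The step I expect to be the main obstacle is the careful verification that $[A,f]$ truly produces an element of $C^*(\Ab^{\disc}_{\Gamma_i})$ for the generators just described, rather than merely an element of some larger ambient algebra of unbounded operators attached to a Hilbert-space representation; this is what makes both the Leibniz step and the density step meaningful inside the $C^*$-algebra. Once the Lie-algebra action on matrix coefficients is written out explicitly using the differential of $\Exp(tE)$ at $t=0$, the containment becomes transparent and the remaining assertions follow without further work.
\end{proofi}
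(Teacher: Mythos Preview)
Your proposal is correct and considerably more detailed than what the paper actually supplies. The paper's proof addresses only the density claim, and does so in one line: it simply declares $\DD(\tilde\delta_{S_{\disc},\Gamma_i}) := C^\infty(\Ab^{\disc}_{\Gamma_i})$ and observes that smooth functions are dense in $C^*(\Ab^{\disc}_{\Gamma_i})$. The algebraic verifications (linearity, Leibniz, $^*$-compatibility, unboundedness) are left implicit.

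The main difference is in how density is obtained. The paper takes the whole of $C^\infty$ as the domain and invokes the standard density of smooth functions; you instead exhibit the smaller Peter--Weyl subalgebra of representative functions, check explicitly that it is stable under the commutator with $A$, and conclude density from there. Both routes are valid. The paper's choice is shorter and also immediately identifies the domain as a concrete, well-known $^*$-subalgebra, which is convenient for the subsequent inductive-limit arguments. Your route has the advantage that it confronts head-on the point you flag as the obstacle, namely that $[A,f]$ must land back in the $C^*$-algebra rather than in some larger space of unbounded operators: on matrix coefficients this is transparent, whereas the paper's one-line proof relies on the reader knowing that right-invariant vector fields send $C^\infty$ to $C^\infty$ (which is precisely the content of the formulas \eqref{CommRel1}--\eqref{CommRel2} in the next subsection).
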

\begin{proofs}
To show that, the domain $\DD(\tilde \delta_{S_{\disc},\Gamma_{i}})$ is a dense $^*$-subalgebra of $C^*(\Ab^{\disc}_{\Gamma_i})$ recognize that, $\DD(\tilde \delta_{S_{\disc},\Gamma_{i}}):= C^\infty(\Ab^{\disc}_{\Gamma_i})$ is indeed dense in $C^*(\Ab^{\disc}_{\Gamma_i})$.
\end{proofs}

\begin{cor}\label{cor commdiscrflux} The limit
\beqs \tilde\delta_{S_{\disc}}(f)&:
=i \lim_{j\rightarrow\infty}\bra\tilde E_{S_{\disc}}(\Gamma_{j+1})^+\tilde E_{S_{\disc}}(\Gamma_{j+1}),f\ket
\eqs for every $f\in \DD(\tilde \delta_{S_{\disc}})$ and an element $\tilde E_{S_{\disc}}(\Gamma_{j+1})^+\tilde E_{S_{\disc}}(\Gamma_{j+1})\in\bar\go^{\loc}_{\breve S_{\disc},\Gamma_{j+1}}$ for every $j$, is well-defined in the norm topology. The domain is given by
\beqs \DD(\tilde \delta_{S_{\disc}})=\bigcup_{j=1,...,\infty}\DD(\tilde \delta_{S_{\disc},\Gamma_{j}})
\eqs
\end{cor}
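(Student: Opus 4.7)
The plan is to exploit the locality property stressed in the paragraph preceding this corollary: the difference operator $\tilde E_{S_{\disc}}(\Gamma_{j+1})^+\tilde E_{S_{\disc}}(\Gamma_{j+1})$ acts non-trivially only on the vertices of $i_S(\{\Gamma_{j+1}\})\setminus i_S(\{\Gamma_j\})$, and therefore commutes with every element of the subalgebra $C^*(\Ab^{\disc}_{\Gamma_j})$. Combined with the hypothesis that the domain is the union $\bigcup_{j=1,\ldots,\infty}\DD(\tilde\delta_{S_{\disc},\Gamma_j})$ of the finite-stage domains, this will force the sequence of commutators to be eventually constant, making norm convergence automatic.

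First I would fix $f \in \DD(\tilde\delta_{S_{\disc}})$ and read off from the claimed domain identity a minimal index $j_0 \in \N$ with $f \in \DD(\tilde\delta_{S_{\disc},\Gamma_{j_0}}) \subseteq C^*(\Ab^{\disc}_{\Gamma_{j_0}})$. Under the canonical embedding $\beta^{\disc}_{\Gamma_{j_0},\Gamma_k}$ of the inductive family of matrix algebras, $f$ becomes an element of $C^*(\Ab^{\disc}_{\Gamma_k})$ for every $k \geq j_0$. Then for each $j \geq j_0$ the operator $\tilde E_{S_{\disc}}(\Gamma_{j+1})^+\tilde E_{S_{\disc}}(\Gamma_{j+1})$ is supported on vertices not belonging to $\Gamma_{j_0}$, hence commutes with $f$, so that
\begin{equation*}
[\tilde E_{S_{\disc}}(\Gamma_{j+1})^+\tilde E_{S_{\disc}}(\Gamma_{j+1}),\, f] = 0 \quad \text{for all } j \geq j_0.
\end{equation*}
The sequence of commutators is therefore eventually zero and trivially Cauchy in the operator norm, which proves that the limit exists in the norm topology. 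The domain identity is then immediate, as only those $f$ that live at some finite stage of the inductive family admit a well-defined commutator with every one of the difference operators $\tilde E_{S_{\disc}}(\Gamma_{j+1})^+\tilde E_{S_{\disc}}(\Gamma_{j+1})$.

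The hard part will be making the locality argument fully rigorous under the inductive-limit embeddings $\beta^{\disc}_{\Gamma_j,\Gamma_{j+1}}$, so that the statement that $\tilde E_{S_{\disc}}(\Gamma_{j+1})^+\tilde E_{S_{\disc}}(\Gamma_{j+1})$ commutes with $C^*(\Ab^{\disc}_{\Gamma_j})$ is genuinely meaningful inside $C^*(\Ab^{\disc}_{\Gamma_{j+1}})$. This is built into the definition of the difference operator as $E_{S_{\disc}}(\Gamma_{j+1})^+E_{S_{\disc}}(\Gamma_{j+1}) - E_{S_{\disc}}(\Gamma_j)^+E_{S_{\disc}}(\Gamma_j)$, which by construction isolates the contribution from the new intersection vertices in $i_S(\{\Gamma_{j+1}\})\setminus i_S(\{\Gamma_j\})$, but it is worth unpacking the identification $\beta^{\disc}_{\Gamma_j,\Gamma_{j+1}}(C^*(\Ab^{\disc}_{\Gamma_j})) \cong C^*(\Ab^{\disc}_{\Gamma_j}) \otimes \idf$ explicitly so that the commutation assertion is transparent.
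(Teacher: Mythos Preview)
Your proposal is correct and matches the paper's own argument: the paper likewise notes that $[\tilde E_{S_{\disc}}(\Gamma_{j+1})^+\tilde E_{S_{\disc}}(\Gamma_{j+1}),f_{\Gamma_k}]=0$ for $k\leq j$, and then uses the fact that any $f$ in the domain lives at some finite stage to conclude the sequence of commutators is eventually zero (so the limit exists and equals $0$). Your extra remark about making the commutation transparent under the embeddings $\beta^{\disc}_{\Gamma_j,\Gamma_{j+1}}$ is a welcome clarification that the paper leaves implicit.
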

\begin{proofs}
Note that,
\beqs \bra\tilde E_{S_{\disc}}(\Gamma_{j+1})^+\tilde E_{S_{\disc}}(\Gamma_{j+1}),f_{\Gamma_k} \ket =0
\eqs yields whenever $\PD_{\Gamma_k}\leq \PD_{\Gamma_{j+1}}$ and $0\leq k\leq j$ and $f_{\Gamma_k}\in C^*(\Ab^{\disc}_{\Gamma_k})$. Consequently, derive
\beqs\tilde\delta_{S_{\disc}}(f)&:
=i \lim_{j\rightarrow\infty}\bra\tilde E_{S_{\disc}}(\Gamma_{j+1})^+\tilde E_{S_{\disc}}(\Gamma_{j+1}),f\ket\\
&=i \lim_{j\rightarrow\infty}\bra\tilde E_{S_{\disc}}(\Gamma_{j+1})^+\tilde E_{S_{\disc}}(\Gamma_{j+1}),f_{\Gamma_0}\otimes ...\otimes f_{\Gamma_j}\otimes f_{\Gamma_{j+1}} \ket\\
&=i \lim_{j\rightarrow\infty}\bra\tilde E_{S_{\disc}}(\Gamma_{j+1})^+\tilde E_{S_{\disc}}(\Gamma_{j+1}), f_{\Gamma_{j+1}\setminus\Gamma_j} \ket \\
=0
\eqs
\end{proofs}

Redefine the symmetric unbounded $^*$-derivation for the discretised flux operator $E_{S_{\disc}}(\Gamma_i)$ for a graph $\Gamma_i$, which is given by 
\beq \delta_{S_{\disc},\Gamma_{j}}(f)=\bra E_{S_{\disc}}(\Gamma_j)^+E_{S_{\disc}}(\Gamma_j),f\ket
\eq whenever $f\in \DD(\delta_{S_{\disc}})$ and for a fixed $E_{S_{\disc}}\in\go_{\breve S_{\disc},\Gamma_j}$ . 

In contrast to the property of the $^*$-derivation of the $C^*$-algebra $C(\Ab)$ presented in \cite[Prop.: 4.7]{Kaminski3},\cite[Prop.: 8.2.19]{KaminskiPHD}, the $^*$-derivation of $C^*(\Ab^{\disc})$ exists under weaker conditions for the surface set and the directed family of graphs. In the previous construction the set $\breve S$ of surfaces has to be chosen such that for each graph of the inductive family of graphs $\{\Gamma_i\}$ there is only a finite number of intersection vertices with each surface of the set $\breve S$.

\begin{prop}\label{defi derivonmatrix2} Let $\breve S_{\disc}$ be an arbitrary discretised surface set and $\{\Gamma_i\}_{i=1,...,\infty}$ be an inductive family of graphs.

Then the limit 
\beq \delta_{S_{\disc}}(f)&:= i \lim_{j\rightarrow\infty}\delta_{S_{\disc},\Gamma_{j+1}}(f) 
\eq whenever $f\in  \DD(\delta_{S_{\disc}})$ exists in norm. The domain of the limit is given by
\beqs \DD(\delta_{S_{\disc}})=\bigcup_{j=1,...,\infty}\DD(\delta_{S_{\disc},\Gamma_{j}})
\eqs and $\DD(\delta_{S_{\disc}})$ is a $^*$-subalgebra of $C^*(\Ab^{\disc})$.
\end{prop}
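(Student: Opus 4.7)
My plan is to reduce the statement to the already-established vanishing of commutators with the localised difference operators $\tilde E^+\tilde E$ proved in Corollary \ref{cor commdiscrflux}. The crucial observation is the telescoping identity built directly into Definition \ref{defi locfluxop}:
\beqs
E_{S_{\disc}}(\Gamma_{j+1})^+E_{S_{\disc}}(\Gamma_{j+1}) - E_{S_{\disc}}(\Gamma_m)^+E_{S_{\disc}}(\Gamma_m) = \sum_{k=m}^{j}\tilde E_{S_{\disc}}(\Gamma_{k+1})^+\tilde E_{S_{\disc}}(\Gamma_{k+1})
\eqs
valid for every $j\geq m$, with each summand on the right supported on vertices in $i_S(\{\Gamma_{k+1}\})\setminus i_S(\{\Gamma_{k}\})$.

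Next I would fix $f\in\DD(\delta_{S_{\disc},\Gamma_m})$ for some $m$ and, via the canonical unital embedding $C^*(\Ab^{\disc}_{\Gamma_m})\hookrightarrow C^*(\Ab^{\disc}_{\Gamma_{j+1}})$, regard $f$ as sitting inside the larger algebra for every $j\geq m$. Taking the commutator of the telescoping identity with $f$ and applying the vanishing computation from Corollary \ref{cor commdiscrflux} to each summand (since $\Gamma_m\leq\Gamma_k$ for $k\geq m$, so $f$ depends only on holonomies along paths disjoint from those on which $\tilde E_{S_{\disc}}(\Gamma_{k+1})^+\tilde E_{S_{\disc}}(\Gamma_{k+1})$ acts non-trivially), I obtain
\beqs
\delta_{S_{\disc},\Gamma_{j+1}}(f) = \delta_{S_{\disc},\Gamma_m}(f) \qquad\text{for all } j\geq m.
\eqs
The sequence is eventually constant, so the norm-limit exists trivially and equals $i\,\delta_{S_{\disc},\Gamma_m}(f)$.

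For the domain statement, every $f$ in $\bigcup_j\DD(\delta_{S_{\disc},\Gamma_j})$ lies in some $\DD(\delta_{S_{\disc},\Gamma_m})$, so the above applies; conversely, membership of $f$ in the domain of the limit forces $f$ to lie in some $\DD(\delta_{S_{\disc},\Gamma_m})$. That $\DD(\delta_{S_{\disc}})$ is a $^*$-subalgebra of $C^*(\Ab^{\disc})$ follows from the fact that each $\DD(\delta_{S_{\disc},\Gamma_j})=C^\infty(\Ab^{\disc}_{\Gamma_j})$ is a $^*$-subalgebra, the inclusions $\DD(\delta_{S_{\disc},\Gamma_j})\subseteq\DD(\delta_{S_{\disc},\Gamma_{j+1}})$ are $^*$-subalgebra embeddings induced by tensoring with the unit of the additional tensor factor, and the directed union of $^*$-subalgebras is a $^*$-subalgebra.

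The main obstacle I anticipate lies not in the convergence itself, which is trivial once the telescoping identity is available, but in verifying carefully that $[\tilde E_{S_{\disc}}(\Gamma_{k+1})^+\tilde E_{S_{\disc}}(\Gamma_{k+1}), f]=0$ for $f\in C^*(\Ab^{\disc}_{\Gamma_m})$ with $\Gamma_m\leq\Gamma_k$. This rests on condition (i) of Definition \ref{defi locfluxop} together with the tensor-product decomposition of $C^*(\Ab^{\disc})$ over disjoint subgraphs, which places $f$ and $\tilde E_{S_{\disc}}(\Gamma_{k+1})^+\tilde E_{S_{\disc}}(\Gamma_{k+1})$ into commuting tensor factors. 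It is essentially the argument used in Corollary \ref{cor commdiscrflux}, but one should verify that no subtlety arises from the distinction between the \emph{natural} and \emph{non-standard} identification of the configuration space discussed in section \ref{subsec graphhol}.
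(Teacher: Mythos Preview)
Your proposal is correct and follows essentially the same approach as the paper: both arguments telescope $E_{S_{\disc}}(\Gamma_{j+1})^+E_{S_{\disc}}(\Gamma_{j+1})$ into a sum of the localised difference operators $\tilde E_{S_{\disc}}(\Gamma_{k+1})^+\tilde E_{S_{\disc}}(\Gamma_{k+1})$ and then invoke Corollary~\ref{cor commdiscrflux} to kill all but finitely many commutators. Your presentation is slightly tidier in that you fix $f$ at a level $\Gamma_m$ and show the sequence is eventually constant, and you also spell out the $^*$-subalgebra claim for the domain, which the paper leaves implicit.
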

\begin{proofs}Derive
\beq \delta_{S_{\disc}}(f)&:= i \lim_{j\rightarrow\infty}\delta_{S_{\disc},\Gamma_{j+1}}(f) 
= i \lim_{j\rightarrow\infty}\big(\tilde\delta_{S_{\disc},\Gamma_{j+1}}(f) + \delta_{S,\Gamma_{j}}(f)\big)\\
&=i \lim_{j\rightarrow\infty}\bra\tilde E_{S_{\disc}}(\Gamma_{j+1})^+\tilde E_{S_{\disc}}(\Gamma_{j+1}),f\ket
+ i \lim_{j\rightarrow\infty}\bra E_{S_{\disc}}(\Gamma_{j})^+ E_{S_{\disc}}(\Gamma_{j}),f\ket\\
&=i \lim_{j\rightarrow\infty}\bra\tilde E_{S_{\disc}}(\Gamma_{j+1})^+\tilde E_{S_{\disc}}(\Gamma_{j+1}),f\ket+
... + i \lim_{j\rightarrow\infty}\bra\tilde E_{S_{\disc}}(\Gamma_{1})^+\tilde E_{S_{\disc}}(\Gamma_{1}),f\ket\\
&\qquad+ i \lim_{j\rightarrow\infty}\bra E_{S_{\disc}}(\Gamma_{0})^+ E_{S_{\disc}}(\Gamma_{0}),f\ket\\
\eq by using corollary \ref{cor commdiscrflux}.
\end{proofs}

\subsection{The general localised part of the localised holonomy-flux cross-product $^*$-algebra}\label{con}
\paragraph*{The construction of the general localised part of the localised holonomy-flux cross-product $^*$-algebra\\[15pt]}

Recall the concept of abstract cross-product algebras, which has been presented by Schm\"udgen and Klimyk \cite{KlimSchmued94}. This concept has been used in \cite[Def.: 3.10]{Kaminski3},\cite[Sec.: 8.2]{KaminskiPHD} for the definition of the holonomy-flux cross-product $^*$-algebra associated to a surface set. In analogy a similar cross-product $^*$-algebra can be defined as follows.

In the following considerations the $^*$-algebra $C^*(\Ab^{\disc}_\Gamma)$ (resp. $C^*(\Ab_{\disc}^\Gamma)$ and $C^*(\Ab_{\disc,\Gamma})$) has to be restricted to functions in $C^\infty(\Ab^{\disc}_\Gamma)$ (resp. $C^\infty(\Ab_{\disc}^\Gamma)$ and $C^\infty(\Ab_{\disc,\Gamma})$). The resulting $^*$-subalgebra is denoted by $\Cinf^*(\Ab^{\disc}_\Gamma)$ (resp. $\Cinf^*(\Ab_{\disc}^\Gamma)$ and $\Cinf^*(\Ab_{\disc,\Gamma})$) and is called the \textbf{localised analytic holonomy $^*$-algebra }associated to a graph $\Gamma$ and a discretised surface set $\breve S_{\disc}$ again. 

To start with a right-invariant flux vector field is defined as follows.
For simplicity, the investigations start with a graph $\Gamma$, which contains only one path $\gamma$, and one discretised surface $S_{\disc}$. Clearly, the following definition generalises to graphs and a suitable discretised surface set $\breve S_{\disc}$. 
\begin{defi}
Let the graph $\Gamma$ contain only a path $\gamma$ and $S_{\disc}$ be a discrete surface associated to a surface $S$ such that the path lies below and outgoing w.r.t. the surface orientation of $S$. Set $\tilde E_{S_{\disc}}(\Gamma)^+\tilde E_{S_{\disc}}(\Gamma)=:X_{S_{\disc}}^+X_{S_{\disc}}$. Then the \textbf{right-invariant flux vector field} $e^{\overrightarrow{L}}$ is defined by
\beqs\bra \tilde E_{S_{\disc}}(\Gamma)^+\tilde E_{S_{\disc}}(\Gamma),f_\Gamma\ket:=e^{\overrightarrow{L}}(f_\Gamma)\eqs where
\beq\label{CommRel1} e^{\overrightarrow{L}}(f_\Gamma)(\ho_\Gamma(\gamma))=\frac{\dif}{\dif t}\Big\vert_{t=0} f_\Gamma(\exp(t X_{S_{\disc}}^+X_{S_{\disc}}\ho_\Gamma(\gamma))&\text{ for }X_{S_{\disc}}\in\go, \ho_\Gamma(\gamma)\in G, t\in\R
\eq whenever $f_\Gamma\in \Cinf^*(\Ab^{\disc}_\Gamma)$ and $\tilde E_{S_{\disc}}(\Gamma)\in\bar \go_{S_{\disc},\Gamma}$. Set 
\beq\label{CommRel2} e^{\overleftarrow{L}}(f_\Gamma)(\ho_\Gamma(\gamma))=\frac{\dif}{\dif t}\Big\vert_{t=0} f_\Gamma(\exp(-tX_{S_{\disc}}^+X_{S_{\disc}})\ho_\Gamma(\gamma)),\text{ for }X_{S_{\disc}}\in\go,\ho_\Gamma(\gamma)\in G, t\in\R
\eq such that
\beqs \bra \tilde E_{S_{\disc}}(\Gamma)\tilde E_{S_{\disc}}(\Gamma)^+,f_\Gamma\ket &= e^{\overleftarrow{L}}(f_\Gamma)
\eqs 
\end{defi}

The definition of a right invariant vector field is needed to study the following well-defined structure.

\begin{lem}Let $\breve S_{\disc}$ be a set of discretised surfaces, which is constructed from a set $\breve S$ of surfaces, and let $\Gamma$ be a graph.
Let $\{\Gamma_i\}_{i=1,..\infty}$ be an inductive family of graphs such that for every graph $\Gamma_i$ the intersection points of a surface set $\breve S_{\disc}$ and the graph $\Gamma_i$ are vertices of $V_{\Gamma_i}$. Denote the set of intersections of a graph $\Gamma_i$ and a discretised surface set $\breve S_{\disc}$ by $i_S(\{\Gamma_{i}\})$.

Let $\Gamma_\infty$ be the inductive limit of a family of graphs $\{\Gamma_i\}$. Furthermore assume that, the set $\breve S$ is chosen such that  
\begin{enumerate}
 \item for each graph of the family the surface set $\breve S$ has the same surface intersection property,
 \item the inductive structure preserves the same surface intersection property for $\breve S$ and
 \item each surface in $\breve S$ intersects the inductive limit $\Gamma_\infty$ a finite or an infinite number of vertices.
\end{enumerate}

Then $\Cinf^*(\Ab^{\disc})$ is a left $\bar\E_{\breve S_{\disc}}^{\loc}$-module algebra. The action of $\bar\E_{\breve S_{\disc}}^{\loc}$ on  $\Cinf^*(\Ab^{\disc})$ is given by\\ $\tilde E_{S_{\disc}}(\Gamma)^+\tilde E_{S_{\disc}}(\Gamma)\rhd f_\Gamma:= e^{\overrightarrow{L}}(f_\Gamma)$ whenever $E_{S_{\disc}}(\Gamma)^+\tilde E_{S_{\disc}}(\Gamma)\in\bar\E_{\breve S_{\disc}}^{\loc}$ and $f_\Gamma\in\Cinf^*(\Ab^{\disc})$.
\end{lem}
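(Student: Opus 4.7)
\begin{proofs}[Proof Proposal]
The plan is to verify the module algebra axioms in three stages: first on a single graph and a single Lie-algebra-valued flux, then extend to the universal enveloping algebra via its universal property, and finally pass to the inductive limit.

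First I would check that for a fixed graph $\Gamma$ and a fixed discretised surface $S_{\disc}$, the assignment $\tilde E_{S_{\disc}}(\Gamma)^+\tilde E_{S_{\disc}}(\Gamma)\rhd f_\Gamma:=e^{\overrightarrow{L}}(f_\Gamma)$ is well-defined on $\Cinf^*(\Ab^{\disc}_\Gamma)$. By \eqref{CommRel1}, this is the standard right-invariant vector field on $G^{\vert\Gamma\vert}$ associated to the Lie algebra element $X_{S_{\disc}}^+X_{S_{\disc}}$ acting only at those components of the path tuple intersecting $S_{\disc}$; the same surface intersection property for $\breve S$ ensures that the element picks out a consistent factor in the tensor product structure of $\Ab^{\disc}_\Gamma$. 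The map is $\CB$-linear in the flux entry and, by the Leibniz rule for the derivative at $t=0$, is a derivation on $\Cinf^*(\Ab^{\disc}_\Gamma)$. Both the linearity in the Lie algebra element and the Lie bracket compatibility
\beqs
[\tilde E_1^+\tilde E_1,\tilde E_2^+\tilde E_2]\rhd f_\Gamma
= \tilde E_1^+\tilde E_1\rhd(\tilde E_2^+\tilde E_2\rhd f_\Gamma)
- \tilde E_2^+\tilde E_2\rhd(\tilde E_1^+\tilde E_1\rhd f_\Gamma)
\eqs
follow from the standard identification of the right-invariant vector fields on $G^{\vert\Gamma\vert}$ with a representation of $\go^{\oplus\vert\Gamma\vert}$, restricted to $\bar\go^{\loc}_{\breve S_{\disc},\Gamma}$.

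Second, having a Lie algebra homomorphism $\bar\go^{\loc}_{\breve S_{\disc},\Gamma}\to \Der(\Cinf^*(\Ab^{\disc}_\Gamma))$, the universal property of the universal enveloping algebra extends it uniquely to an associative algebra homomorphism $\bar\E^{\loc}_{\breve S_{\disc},\Gamma}\to\End_{\CB}(\Cinf^*(\Ab^{\disc}_\Gamma))$. The module algebra property then reduces, after unravelling the coproduct of $\bar\E^{\loc}_{\breve S_{\disc},\Gamma}$ (primitive on generators), to the Leibniz rule
\beqs
(\tilde E^+\tilde E)\rhd (f_\Gamma g_\Gamma) = ((\tilde E^+\tilde E)\rhd f_\Gamma)\,g_\Gamma + f_\Gamma\,((\tilde E^+\tilde E)\rhd g_\Gamma),
\eqs
which is immediate from $e^{\overrightarrow{L}}$ being a derivation. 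The unit $1\in\bar\E^{\loc}_{\breve S_{\disc},\Gamma}$ acts as the identity by construction.

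Third, I would pass to the inductive limit. By the definition of $\tilde E_{S_{\disc}}(\Gamma_{i+1})^+\tilde E_{S_{\disc}}(\Gamma_{i+1})$ as a difference operator supported on $i_S(\{\Gamma_{i+1}\})\setminus i_S(\{\Gamma_i\})$, and by Corollary \ref{cor commdiscrflux}, the operator acts trivially on the subalgebra $\Cinf^*(\Ab^{\disc}_{\Gamma_k})\subset\Cinf^*(\Ab^{\disc}_{\Gamma_{i+1}})$ for all $k\leq i$. This compatibility with the connecting $^*$-homomorphisms $\beta^{\disc}_{\Gamma_i,\Gamma_{i+1}}$ means the actions on each stage assemble into an action of $\bigcup_j\bar\E^{\loc}_{\breve S_{\disc},\Gamma_j}=\bar\E^{\loc}_{\breve S_{\disc}}$ on $\Cinf^*(\Ab^{\disc})=\bigcup_j \Cinf^*(\Ab^{\disc}_{\Gamma_j})$, and all module algebra axioms descend from the finite stages.

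The main obstacle I expect is checking compatibility at the inductive limit: one has to verify that the differences $\tilde E(\Gamma_{i+1})^+\tilde E(\Gamma_{i+1})$, although defined graph-wise, yield a coherent action that respects the embeddings $\Cinf^*(\Ab^{\disc}_{\Gamma_i})\hookrightarrow\Cinf^*(\Ab^{\disc}_{\Gamma_{i+1}})$, and that the sum/limit appearing when applying an element of $\bar\E^{\loc}_{\breve S_{\disc}}$ to a fixed $f\in\Cinf^*(\Ab^{\disc})$ is in fact a finite operation, which is exactly what Corollary \ref{cor commdiscrflux} and Proposition \ref{defi derivonmatrix2} guarantee. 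The same surface intersection property of $\breve S$ is essential here to ensure the intersection vertex sets stack coherently along the inductive family.
\end{proofs}
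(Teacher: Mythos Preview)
The paper does not supply a proof of this lemma: it states the result and immediately moves on to the analogous right-module statement and the cross-product construction. Your three-stage argument (derivation on a fixed graph, extension to the enveloping algebra via the universal property, passage to the inductive limit using Corollary~\ref{cor commdiscrflux} and Proposition~\ref{defi derivonmatrix2}) is correct and supplies exactly the details the paper leaves implicit; since there is no proof in the paper to compare against, your approach stands as the natural one.
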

In analogy $\Cinf^*(\Ab_{\disc}^\Gamma)$ is a right $\bar\E^{\breve S_{\disc},\Gamma}_{\loc}$-module algebra and is defined by right invariant vector fields. Now a construction of a cross-product $^*$-algebra is given as follows.

\begin{defi}\label{defi holfluxcross} Let $\breve S_{\disc}$ be a set of discretised surfaces associated to a surface set $\breve S$, which has appropriate poperties with respect to a graph $\Gamma$ and an inductive family $\{\Gamma_i\}_{i=1,...,\infty}$ of graphs.

The \textbf{general localised part of the localised holonomy-flux cross-product $^*$-algebra for a graph $\Gamma$ and a discretised surface set $\breve S_{\disc}$} is given by the left (or right) cross-product $^*$-algebra
\[\Cinf (\Ab_\Gamma^{\disc})\rtimes_{L}\bar\E^{\loc}_{\breve S_{\disc},\Gamma} (\text{ or } \Cinf (\Ab^\Gamma_{\disc})\rtimes_{R}\bar\E_{\loc}^{\breve S_{\disc},\Gamma} )\] which are defined by the vector space $\Cinf (\Ab_\Gamma^{\disc})\otimes\bar\E^{\loc}_{\breve S_{\disc},\Gamma}$ with the multiplication given by
\beqs (f^1_\Gamma\otimes E^{\disc}_{S_1}(\Gamma))\cdot_L(f^2_\Gamma\otimes E^{\disc}_{S_2}(\Gamma))
:=f_\Gamma^1(E^{\disc}_{S_1}(\Gamma)\rhd f^2_\Gamma)\otimes E^{\disc}_{S_2}(\Gamma) 
+ f_\Gamma^1 f^2_\Gamma \otimes E^{\disc}_{S_1}(\Gamma)\cdot E^{\disc}_{S_2}(\Gamma)
\eqs 
and the involution
\beqs (f_\Gamma\rhd E^{\disc}_S(\Gamma))^*:=\bar f_\Gamma\rhd E^{\disc}_S(\Gamma)^+
\eqs
whenever $E^{\disc}_{S_1}(\Gamma),E^{\disc}_{S_2}(\Gamma),E^{\disc}_{S}(\Gamma)\in\bar\E^{\loc}_{\breve S_{\disc},\Gamma}$ and $f^1_\Gamma,f^2_\Gamma,f_\Gamma\in \Cinf(\Ab^{\disc}_\Gamma)$.

The \textbf{general localised part of the localised holonomy-flux cross-product $^*$-algebra associated to a discretised surface set $\breve S_{\disc}$} is given by the left (or right) cross-product $^*$-algebra
\[\Cinf(\Ab^{\disc})\rtimes_{L} \bar \E^{\loc}_{\breve S_{\disc}}(\text{ or }\Cinf (\Ab_{\disc})\rtimes_{R} \bar \E_{\loc}^{\breve S_{\disc}})\]
which are the inductive limit of the families $\{(\Cinf (\Ab^{\disc}_\Gamma)\rtimes_{L}\bar\E^{\loc}_{\breve S_{\disc},\Gamma},\beta^{\disc}_{\Gamma,\Gp}\times\check\beta^{\disc}_{\Gamma,\Gp})\}$ (or $\{(\Cinf (\Ab_{\disc}^\Gamma)\rtimes_{R}\bar\E_{\loc}^{\breve S_{\disc},\Gamma},\beta_{\disc}^{\Gamma,\Gp}\times\check\beta_{\disc}^{\Gamma,\Gp})\}$) where $\check\beta_{\Gamma,\Gp}: \bar\E^{\loc}_{\breve S_{\disc},\Gamma}\rightarrow \bar\E^{\loc}_{\breve S_{\disc},\Gp}$ are suitable unit-preserving $^*$-homomorphisms for a suitable set $\breve S_{\disc}$ of discretised surfaces that preserve the left (or right) vector field structure. 
\end{defi}

Summarising the general localised part of the localised holonomy-flux cross-product $^*$-algebra is a certain cross-product $^*$-algebra, which is defined by the localised analytic holonomy $^*$-algebra and the localised enveloping flux algebra associated to a discretised surface set.

\paragraph*{A representation of the general localised part of the localised holonomy-flux cross-product $^*$-algebra\\[15pt]}\label{rep}

In \cite[Sec.: 4]{Kaminski3},\cite[Sec.: 8.2.3]{KaminskiPHD} a certain $^*$-representation of a Lie algebra has been studied. This $^*$-representation is called the infinitesimal representation of a Lie algebra on a Hilbert space. Similarly the representation of the general localised part of the localised holonomy-flux cross-product $^*$-algebra is presented as follows.

First in this article the $^*$-representation of the Lie flux algebra $\bar\go_{\breve S_{\disc},\Gamma}$ is implemented by the infinitesimal representation $\dif u$ on the Hilbert space $\HS^{\disc}_\Gamma$, which is given by $L^2(\Ab^{\disc}_{\Gamma}, \mu^{\disc}_{\Gamma})$. Notice that, the configuration space $\Ab^{\disc}_{\Gamma}$ is equivalent to $G^M$ for a suitable $M\in \N$ and $\mu^{\disc}_{\Gamma}$ is the corresponding Haar measure on $G^M$. The infinitesimal representation corresponds to the unitary representation $u$ of the Lie flux group $\bar G_{\breve S_{\disc},\Gamma}$ in the $C^*$-algebra $C^*(\Ab^{\disc}_{\Gamma})$, i.o.w. $u\in\Rep(\bar G_{\breve S_{\disc},\Gamma}, C^*(\Ab^{\disc}_{\Gamma}))$.
The $^*$-representation of the general localised part of the localised holonomy-flux cross-product $^*$-algebra is derived from this $^*$-representation.

Summarising the next definition the $^*$-representations of the following algebras:
\begin{itemize}
 \item the localised analytic holonomy $^*$-algebra $\Cinf^*(\Ab^{\disc}_{\Gamma})$,
 \item the localised enveloping flux algebra $\bar\E^{\loc}_{\breve S_{\disc},\Gamma}$ and 
\item the general localised part of the localised holonomy-flux cross-product $^*$-algebra $\Cinf^*(\Ab^{\disc}_{\Gamma})\rtimes_L\bar\E^{\loc}_{\breve S_{\disc},\Gamma}$
\end{itemize} on the Hilbert space $\HS^{\disc}_{\Gamma}$ are presented.

\begin{defi}\label{defi represlocholfluxcross}
The $^*$-representation of $\Cinf^*(\Ab^{\disc}_{\Gamma})$ is defined by
\beqs 
&\Phi_M(f_\Gamma) \psi_\Gamma = f_\Gamma\psi_\Gamma\text{ for } f_\Gamma\in \Cinf^*(\Ab^{\disc}_{\Gamma})\text{ and }\psi_\Gamma\in \HS^{\disc}_{\Gamma} \\
&\Phi_M(f^*_\Gamma) \psi_\Gamma = f_\Gamma^*\psi_\Gamma\text{ for } f_\Gamma\in \Cinf^*(\Ab^{\disc}_{\Gamma}) \text{ and }\psi_\Gamma\in \HS^{\disc}_{\Gamma}
\eqs

There exists a positive self-adjoint operator $\dif u (E_{S_{\disc}}(\Gamma)^+E_{S_{\disc}}(\Gamma))$ or, respectively,\\ the adjoint operator $\dif u (E_{S_{\disc}}(\Gamma)E_{S_{\disc}}(\Gamma)^+)$ on the Hilbert space $\HS^{\disc}_{\Gamma}$ defined by
\beqs
&\dif u (E_{S_{\disc}}(\Gamma)^+E_{S_{\disc}}(\Gamma))\psi_\Gamma :=i \bra E_{S_{\disc}}(\Gamma)^+E_{S_{\disc}}(\Gamma), \psi_\Gamma \ket\\
&\qquad\text{ for a fixed } E_{S_{\disc}}(\Gamma)^+E_{S_{\disc}}(\Gamma)\in \bar\E^{\loc}_{\breve S_{\disc},\Gamma}\text{ and }\psi_\Gamma\in \DD(\dif u(E_{S_{\disc}}(\Gamma)^+E_{S_{\disc}}(\Gamma))) \\
&\dif u(E_{S_{\disc}}(\Gamma)E_{S_{\disc}}(\Gamma)^+)\psi_\Gamma :=-i \bra E_{S_{\disc}}(\Gamma)^+E_{S_{\disc}}(\Gamma), \psi_\Gamma \ket\\
&\qquad\text{ for a fixed } E_{S_{\disc}}(\Gamma)^+E_{S_{\disc}}(\Gamma)\in \bar\E^{\loc}_{\breve S_{\disc},\Gamma}\text{ and }\psi_\Gamma\in \DD(\dif u(E_{S_{\disc}}(\Gamma)^+E_{S_{\disc}}(\Gamma))) 
\eqs and $u\in\Rep(\bar G_{\breve S_{\disc},\Gamma},\Cinf^*(\Ab^{\disc}_{\Gamma}))$.

The $^*$-representation of the $^*$-algebra $\Cinf^*(\Ab^{\disc}_{\Gamma})\rtimes\bar\E^{\loc}_{\breve S_{\disc},\Gamma}$ on $\HS^{\disc}_{\Gamma}$ is defined by
\beqs
&\hat\pi_\Gamma(f_\Gamma\otimes i E_{S_{\disc}}(\Gamma_{j})^+E_{S_{\disc}}(\Gamma_{j})) \psi_\Gamma := 
\frac{1}{2} i\bra E_{S_{\disc}}(\Gamma_{j})^+E_{S_{\disc}}(\Gamma_{j}), f_\Gamma \ket \psi_\Gamma 
+\frac{1}{2} if_\Gamma\bra E_{S_{\disc}}(\Gamma_{j})^+E_{S_{\disc}}(\Gamma_{j}),\psi_\Gamma\ket \\
&\qquad\text{ for } f_\Gamma\in\Cinf^*(\Ab^{\disc}_{\Gamma})\text{ and for a fixed } E_{S_{\disc}}(\Gamma)^+E_{S_{\disc}}(\Gamma)\in \bar\E^{\loc}_{\breve S_{\disc},\Gamma}\\
&\hat\pi_\Gamma(f_\Gamma\otimes i E_{S_{\disc}}(\Gamma_{j})^+E_{S_{\disc}}(\Gamma_{j})) \psi_\Gamma := 
\frac{1}{2} i\bra E_{S_{\disc}}(\Gamma_{j})^+E_{S_{\disc}}(\Gamma_{j}), f_\Gamma \ket \psi_\Gamma 
+\frac{1}{2} if_\Gamma\bra E_{S_{\disc}}(\Gamma_{j})^+E_{S_{\disc}}(\Gamma_{j}),\psi_\Gamma\ket\\ 
&\qquad\text{ for } f_\Gamma\in\Cinf^*(\Ab^{\disc}_{\Gamma})\text{ and for a fixed } E_{S_{\disc}}(\Gamma)^+E_{S_{\disc}}(\Gamma)\in \bar\E^{\loc}_{\breve S_{\disc},\Gamma} \\
\eqs whenever $\psi_\Gamma\in \DD(\dif u(E_{S_{\disc}}(\Gamma_{j})^+E_{S_{\disc}}(\Gamma_{j})))$.
\end{defi}

\subsection{$C^*$-dynamical systems, KMS-states and the localised holonomy-flux cross-product $^*$-algebra}\label{dyn}

In this section different $C^*$-dynamical systems are constructed from different actions and algebras. The aim is to implement a strongly continuous one-parameter automorphism group such that a modified quantum Hamilton constraint is the generator of this automorphism group. This will be done in several steps. In this section the basic $C^*$-dynamical systems are introduced, which are used in the section \ref{subsec QHamilton} for the analysis of the modified quantum Hamilton constraint. 

First notice the following result. In general, for every $C^*$-algebra $\Alg$ and a point norm-continuous automorphic action $\beta$ of $\R$ on $\Alg$, there is a set $\Alg^\beta$ defined by all element $A\in\Alg$ such that $\beta_t(A)=A$ for every $t\in\R$.
Then the set $\Alg^\beta$ is a norm-dense $^*$-subalgebra of $\Alg$.

Set $\ho_\Gamma(\Gamma)=:\ho_\Gamma\in\Ab^{\disc}_{\Gamma}$.
Let $\LAb_{\disc,\Gamma}$ be the enveloping Lie algebra of the Lie algebra associated to $\Ab^{\disc}_{\Gamma}$. Consider the $C^*$-subalgebra $\ZD(\Ab^{\disc}_{\Gamma})$ of $C^*(\Ab^{\disc}_{\Gamma})$, which is generated by all central functions, i.e. all functions $f_{\Gamma}\in C^*(\Ab^{\disc}_{\Gamma})$ such that 
$f_{\Gamma}(\ho_{\Gamma})=f_{\Gamma}(\go_{\Gamma}^{-1}\ho_{\Gamma}\go_{\Gamma})$ for all $\go_{\Gamma}\in \Ab^{\disc}_{\Gamma}$. 

Finally, consider an action  $\beta_{\ha_{\disc},\Gamma_{i}}$ of $\R$ on $C^*(\Ab^{\disc}_{\Gamma})$ defined by
\beqs(\beta_{\ha_{\disc},\Gamma_{i}}(t) f_{\Gamma_i})(\ho_{\Gamma_i})
:=f_{\Gamma_i}(\exp(-t \ha_{\disc,\Gamma_{i}})\ho_{\Gamma_i}\exp(t \ha_{\disc,\Gamma_{i}}))
\eqs and notice that
\beqs(\beta_{\ha_{\disc},\Gamma_{i}}(t) f_{\Gamma_i})^*(\ho_{\Gamma_i})
=\overline{(\beta_{\ha_{\disc,\Gamma_{i}}^+}(t) f_{\Gamma_i}^*)(\ho_{\Gamma_i}^{-1})}
\eqs for a fixed Lie algebra element $\ha_{\disc,\Gamma_{i}}$ in $\LAb_{\disc,\Gamma}$ yields. Set $C^*(\Ab^{\disc}_{\Gamma})$ be equal to $\Alg$. 
Then $\Alg^{\beta}$ is isomorphic to $\ZD(\Ab^{\disc}_{\Gamma})$.

\begin{prop}\label{prop dynsysbetaholflux}
 The triple $(C^*(\Ab^{\disc}_{\Gamma_i}),\R,\beta_{\ha_{\disc},\Gamma_{i}})$ is a $C^*$-dynamical system.
\end{prop}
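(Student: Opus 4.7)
The plan is to verify the three defining properties of a $C^*$-dynamical system: that each $\beta_{\ha_{\disc},\Gamma_i}(t)$ is a $^*$-automorphism of $C^*(\Ab^{\disc}_{\Gamma_i})$, that $t\mapsto\beta_{\ha_{\disc},\Gamma_i}(t)$ is a group homomorphism $\R\to\Auto(C^*(\Ab^{\disc}_{\Gamma_i}))$, and that this homomorphism is point-norm continuous.

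First I would set $U_t:=\exp(t\ha_{\disc,\Gamma_i})$ and consider the inner automorphism $\Phi_t(\ho_{\Gamma_i}):=U_t^{-1}\ho_{\Gamma_i}U_t$ of $\Ab^{\disc}_{\Gamma_i}$. Since $\LAb_{\disc,\Gamma}$ is the enveloping Lie algebra associated to $\Ab^{\disc}_{\Gamma_i}$ and $\R\ni t\mapsto U_t$ is a one-parameter subgroup, $\Phi_t$ is a continuous group automorphism. The definition of $\beta_{\ha_{\disc},\Gamma_i}(t)$ is precisely pullback by $\Phi_t$, i.e.\ $\beta_{\ha_{\disc},\Gamma_i}(t)f_{\Gamma_i}=f_{\Gamma_i}\circ\Phi_t$. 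Pullback by a group automorphism preserves the convolution product and commutes with the involution (the latter via the computation already recorded in the excerpt, $(\beta_{\ha_{\disc},\Gamma_i}(t)f_{\Gamma_i})^*(\ho_{\Gamma_i})=\overline{(\beta_{\ha_{\disc,\Gamma_i}^+}(t)f^*_{\Gamma_i})(\ho_{\Gamma_i}^{-1})}$, combined with selfadjointness of $\ha_{\disc,\Gamma_i}$ on the relevant subalgebra); hence $\beta_{\ha_{\disc},\Gamma_i}(t)$ is a $^*$-automorphism of the dense $^*$-subalgebra on which it is initially defined and, being isometric for the convolution norm, extends uniquely to a $^*$-automorphism of $C^*(\Ab^{\disc}_{\Gamma_i})$.

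Next, the homomorphism property $\beta_{\ha_{\disc},\Gamma_i}(s+t)=\beta_{\ha_{\disc},\Gamma_i}(s)\circ\beta_{\ha_{\disc},\Gamma_i}(t)$ reduces to $\Phi_{s+t}=\Phi_s\circ\Phi_t$, which follows immediately from $U_{s+t}=U_sU_t$ and associativity of conjugation. Likewise $\beta_{\ha_{\disc},\Gamma_i}(0)=\id$ because $\Phi_0=\id$.

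The main obstacle is point-norm continuity of $t\mapsto\beta_{\ha_{\disc},\Gamma_i}(t)$. My plan is a density argument: first work on the dense $^*$-subalgebra $\Cinf^*(\Ab^{\disc}_{\Gamma_i})$, which is the domain of the derivation $\delta_{S_{\disc},\Gamma_i}$ of the previous subsection. For $f_{\Gamma_i}\in\Cinf^*(\Ab^{\disc}_{\Gamma_i})$ the differentiability of $t\mapsto U_t$ together with the infinitesimal formulas \eqref{CommRel1}, \eqref{CommRel2} gives, via the fundamental theorem of calculus, the norm estimate $\|\beta_{\ha_{\disc},\Gamma_i}(t)f_{\Gamma_i}-f_{\Gamma_i}\|\leq|t|\sup_{|s|\leq|t|}\|\bra\ha_{\disc,\Gamma_i},\beta_{\ha_{\disc},\Gamma_i}(s)f_{\Gamma_i}\ket\|$, so $\beta_{\ha_{\disc},\Gamma_i}(t)f_{\Gamma_i}\to f_{\Gamma_i}$ in norm as $t\to 0$. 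For an arbitrary $f\in C^*(\Ab^{\disc}_{\Gamma_i})$ and $\varepsilon>0$, pick $f_\varepsilon$ in the dense subalgebra with $\|f-f_\varepsilon\|<\varepsilon/3$, and use $\|\beta_{\ha_{\disc},\Gamma_i}(t)f-f\|\leq\|\beta_{\ha_{\disc},\Gamma_i}(t)(f-f_\varepsilon)\|+\|\beta_{\ha_{\disc},\Gamma_i}(t)f_\varepsilon-f_\varepsilon\|+\|f_\varepsilon-f\|$ together with isometry of $\beta_{\ha_{\disc},\Gamma_i}(t)$ to conclude $\limt_{t\to 0}\|\beta_{\ha_{\disc},\Gamma_i}(t)f-f\|=0$. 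Point-norm continuity at arbitrary $t_0$ then follows from the group law $\beta_{\ha_{\disc},\Gamma_i}(t)f-\beta_{\ha_{\disc},\Gamma_i}(t_0)f=\beta_{\ha_{\disc},\Gamma_i}(t_0)(\beta_{\ha_{\disc},\Gamma_i}(t-t_0)f-f)$ and isometry of $\beta_{\ha_{\disc},\Gamma_i}(t_0)$, completing the verification.
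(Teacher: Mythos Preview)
Your argument is correct and follows essentially the same route as the paper, which defers to the computations in the proof of the next proposition (direct verification of the group law, multiplicativity under convolution, and $^*$-preservation, followed by a bare assertion of point-norm continuity) and then remarks that ``the same calculations can be done'' for $\beta_{\ha_{\disc},\Gamma_i}$. Your density-plus-isometry argument for point-norm continuity is more explicit than what the paper provides; one small remark is that your appeal to selfadjointness of $\ha_{\disc,\Gamma_i}$ for the $^*$-preservation step is unnecessary, since $\beta_{\ha_{\disc},\Gamma_i}(t)$ is pullback by the group automorphism $\ho\mapsto U_t^{-1}\ho U_t$ and $(U_t^{-1}\ho U_t)^{-1}=U_t^{-1}\ho^{-1}U_t$ already gives $(\beta_t f)^*=\beta_t(f^*)$ directly.
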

This is verifed easily, after the following considerations.

Furthermore, there is an action $\tilde\alpha_{\ha_{\disc,\Gamma_{i}}}$ of $\R$ on $\ZD(\Ab^{\disc}_{\Gamma})$ defined by
\beqs(\tilde\alpha_{\ha_{\disc,\Gamma_{i}}}(t) f_{\Gamma_i})(\ho_{\Gamma_i})
:=f_{\Gamma_i}(\exp(-t \ha_{\disc,\Gamma_{i}})\ho_{\Gamma_i})
\eqs and 
\beqs(\tilde\alpha_{\ha_{\disc,\Gamma_{i}}}(t) f_{\Gamma_i})^*(\ho_{\Gamma_i})
:=\overline{(\tilde\alpha_{\ha_{\disc,\Gamma_{i}}^+}(t) f_{\Gamma_i})(\ho_{\Gamma_i}^{-1})}
\eqs whenever $f_{\Gamma_i}\in \ZD(\Ab^{\disc}_{\Gamma_i})$ and $\ha_{\disc,\Gamma_{i}}\in\LAb_{\disc,\Gamma}$.

\begin{prop}
 The triple $(\ZD(\Ab^{\disc}_{\Gamma_i}),\R,\tilde\alpha_{\ha_{\disc},\Gamma_{i}})$ is a $C^*$-dynamical system.
\end{prop}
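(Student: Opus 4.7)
The plan is to verify the three conditions defining a $C^*$-dynamical system: that $\ZD(\Ab^{\disc}_{\Gamma_i})$ is a $C^*$-algebra, that $\{\tilde\alpha_{\ha_{\disc,\Gamma_i}}(t)\}_{t\in\R}$ is a one-parameter group of $^*$-automorphisms, and that the action is point-norm continuous. The algebra structure is already in hand, since $\ZD(\Ab^{\disc}_{\Gamma_i})$ was introduced as the norm-closed $^*$-subalgebra of central functions in $C^*(\Ab^{\disc}_{\Gamma_i})$; thus the first condition is immediate.

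For the automorphism property, I would first check that $\tilde\alpha_{\ha_{\disc,\Gamma_i}}(t)$ leaves $\ZD(\Ab^{\disc}_{\Gamma_i})$ invariant. Writing $u_t:=\exp(-t\ha_{\disc,\Gamma_i})\in\Ab^{\disc}_{\Gamma_i}$, the map $f_{\Gamma_i}\mapsto f_{\Gamma_i}\circ L_{u_t}$ takes central functions to central functions because, for $f_{\Gamma_i}\in\ZD$ and any $\go_{\Gamma_i}$, the class-function identity $f_{\Gamma_i}(ab)=f_{\Gamma_i}(ba)$ rearranges $f_{\Gamma_i}(u_t\go_{\Gamma_i}^{-1}\ho_{\Gamma_i}\go_{\Gamma_i})$ to $f_{\Gamma_i}(u_t\ho_{\Gamma_i})$ after conjugating inside the argument; care must be taken to argue this either by the one-parameter-subgroup property or by restricting $\ha_{\disc,\Gamma_i}$ to act through its central projection. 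Multiplicativity of $\tilde\alpha(t)$ is pointwise and immediate, while the $^*$-property is built into the definition via $\ha_{\disc,\Gamma_i}^+$. The group law $\tilde\alpha(s+t)=\tilde\alpha(s)\circ\tilde\alpha(t)$ and $\tilde\alpha(0)=\id$ follow from $\exp(-(s+t)\ha_{\disc,\Gamma_i})=\exp(-s\ha_{\disc,\Gamma_i})\exp(-t\ha_{\disc,\Gamma_i})$, so each $\tilde\alpha(t)$ is invertible with inverse $\tilde\alpha(-t)$, hence bijective. Isometry and therefore extension to a $^*$-automorphism of the $C^*$-completion follows because left translation by a group element is an isometry of the sup-norm on continuous functions on the compact space $\Ab^{\disc}_{\Gamma_i}$.

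The substantive step is point-norm continuity. For $f_{\Gamma_i}\in\ZD(\Ab^{\disc}_{\Gamma_i})$, uniform continuity of $f_{\Gamma_i}$ on the compact Hausdorff space $\Ab^{\disc}_{\Gamma_i}$ together with continuity of $t\mapsto u_t$ in $G^{\vert\Gamma\vert}$ gives $\|\tilde\alpha_{\ha_{\disc,\Gamma_i}}(t)f_{\Gamma_i}-f_{\Gamma_i}\|_\infty\to 0$ as $t\to 0$. The group property then propagates norm-continuity from $t=0$ to all of $\R$, and density of $C(\Ab^{\disc}_{\Gamma_i})\cap\ZD$ in $\ZD(\Ab^{\disc}_{\Gamma_i})$ together with the isometry of $\tilde\alpha(t)$ extends continuity to the full $C^*$-algebra via an $\varepsilon/3$-argument.

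The main obstacle I anticipate is the first bullet of step two: verifying that one-sided (left) translation by $\exp(-t\ha_{\disc,\Gamma_i})$ genuinely preserves centrality, since unlike the conjugation action in proposition \ref{prop dynsysbetaholflux} this is not automatic. The cleanest resolution is to observe that on the $C^*$-subalgebra of class functions, left and right translation by a group element coincide (since $f(ab)=f(ba)$), so $L_{u_t}f=R_{u_t}f$ on $\ZD$, and the conjugation automorphism $\beta_{\ha_{\disc},\Gamma_i}(t)=L_{u_t}R_{u_t^{-1}}$ restricted to $\ZD$ reduces to $\tilde\alpha_{\ha_{\disc},\Gamma_i}(t)$ being a well-defined automorphism; this reduces the whole proposition to a corollary of proposition \ref{prop dynsysbetaholflux}.
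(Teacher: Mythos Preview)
Your argument has a genuine structural gap: you treat $\ZD(\Ab^{\disc}_{\Gamma_i})$ as sitting inside a commutative function algebra with pointwise product and sup-norm, but in the paper $C^*(\Ab^{\disc}_{\Gamma_i})$ is the \emph{convolution} holonomy $C^*$-algebra (the ``non-commutative holonomy $C^*$-algebra''), and $\ZD(\Ab^{\disc}_{\Gamma_i})$ is its subalgebra of central (class) functions with the convolution product $\ast$. Consequently your claim that ``multiplicativity of $\tilde\alpha(t)$ is pointwise and immediate'' does not apply; what must be shown is $\tilde\alpha_{\ha_{\disc,\Gamma_i}}(t)(k\ast f)=\tilde\alpha_{\ha_{\disc,\Gamma_i}}(t)(k)\ast\tilde\alpha_{\ha_{\disc,\Gamma_i}}(t)(f)$, and this is precisely where the paper invokes the class-function identity together with translation-invariance of the Haar measure. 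Likewise your isometry and point-norm continuity arguments appeal to the sup-norm on $C(\Ab^{\disc}_{\Gamma_i})$, which is not the relevant $C^*$-norm here; one has to argue via translation-invariance of Haar measure in the group-$C^*$-algebra norm instead.

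Your final reduction also breaks down. The conjugation action $\beta_{\ha_{\disc},\Gamma_i}(t)f(\ho)=f(u_t^{-1}\ho u_t)$ acts \emph{trivially} on class functions, so $\beta\vert_{\ZD}=\id$, not $\tilde\alpha$. The observation $L_{u_t}f=R_{u_t}f$ on $\ZD$ is correct and useful (the paper uses exactly this to verify the $^*$-compatibility $(\tilde\alpha(t)f)^*=\tilde\alpha(t)(f^*)$), but it does not let you import Proposition~\ref{prop dynsysbetaholflux} wholesale. Your verification of the one-parameter group law and of the invariance of the central subalgebra under one-sided translation is fine; what is missing is the convolution-multiplicativity computation and a norm argument adapted to the group-$C^*$-algebra setting.
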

\begin{proofs}Derive
 \beqs&(\tilde\alpha_{\ha_{\disc,\Gamma_{i}}}(t_1+t_2) f_{\Gamma_i})(\ho_{\Gamma_i})\\
&=f_{\Gamma_i}(\exp(-(t_1+t_2)\ha_{\disc,\Gamma_{i}})\ho_{\Gamma_i})\\
&=f_{\Gamma_i}(\exp(-t_1\ha_{\disc,\Gamma_{i}})\exp(-t_2\ha_{\disc,\Gamma_{i}})\ho_{\Gamma_i}) \\
&=(\tilde\alpha_{\ha_{\disc,\Gamma_{i}}}(t_1)\circ \tilde\alpha_{\ha_{\disc,\Gamma_{i}}}(t_2)f_{\Gamma_i})(\ho_{\Gamma_i})
\eqs and, since, $f_{\Gamma_i}(\ho_{\Gamma_i})=f_{\Gamma_i}(\go_{\Gamma_i}^{-1}\ho_{\Gamma_i}\go_{\Gamma_i})$ for all $\go_{\Gamma_i}\in \Ab^{\disc}_{\Gamma_i}$ it follows that,
\beqs(\tilde\alpha_{\ha_{\disc,\Gamma_{i}}}(t) f_{\Gamma_i})^*(\ho_{\Gamma_i})
&=\overline{f_{\Gamma_i}((\exp(-t \ha_{\disc,\Gamma_{i}})\ho_{\Gamma_i})^{-1})}\\
&=\overline{f_{\Gamma_i}(\ho_{\Gamma_i}^{-1}(\Gamma_i)\exp(t \ha_{\disc,\Gamma_{i}}))}\\
&=\overline{(\tilde\alpha_{\ha^+_{\disc,\Gamma_{i}}}(t) f_{\Gamma_i})(\ho_{\Gamma_i}^{-1}(\Gamma_i))}\\
&=(\tilde\alpha_{\ha_{\disc,\Gamma_{i}}}(t) f_{\Gamma_i}^*)(\ho_{\Gamma_i})
\eqs yields whenever $t\in \R$ and $f_{\Gamma_i}\in \ZD(\Ab^{\disc}_{\Gamma_i})$.
\beqs &(\tilde\alpha_{\ha_{\disc,\Gamma_{i}}}(t) (k_{\Gamma_i}\ast f_{\Gamma_i}))(\ho_{\Gamma_i})
=(k_{\Gamma_i}\ast f_{\Gamma_i})(\exp(-t \ha_{\disc,\Gamma_{i}})\ho_{\Gamma_i})\\
&=\int_{\Ab^{\disc}_{\Gamma_i}}\dif\mu_{\breve S_{\disc},\Gamma_i}(\go_{\Gamma_i}(\Gamma_i))  k_{\Gamma_i}(\go_{\Gamma_i}(\Gamma_i)) 
f_{\Gamma_i}(\go_{\Gamma_i}(\Gamma_i)^{-1}\exp(-t \ha_{\disc,\Gamma_{i}})\ho_{\Gamma_i})\\
&=\int_{\Ab^{\disc}_{\Gamma_i}}\dif\mu_{\breve S_{\disc},\Gamma_i}(\go_{\Gamma_i}(\Gamma_i)) k_{\Gamma_i}(\exp(-t \ha_{\disc,\Gamma_{i}})\go_{\Gamma_i}(\Gamma_i)) f_{\Gamma_i}(\ho_{\Gamma_i})\\
&=(\tilde\alpha_{\ha_{\disc,\Gamma_{i}}}(t) (k_{\Gamma_i})\ast \tilde\alpha_{\ha_{\disc,\Gamma_{i}}}(t)(f_{\Gamma_i}))(\ho_{\Gamma_i})
\eqs whenever $t\in \R$ and $k_{\Gamma_i},f_{\Gamma_i}\in \ZD(\Ab^{\disc}_{\Gamma_i})$.
Moreover, $t\mapsto \tilde\alpha_{\ha_{\disc,\Gamma_{i}}}(t)(f_{\Gamma_i})$ is point-norm continuous.
\end{proofs}
Clearly, the same calculations can be done to verify proposition \ref{prop dynsysbetaholflux}.

\begin{prop}
There is a state $\tilde\omega^{\Gamma}_{\LD}$ on $C^*(\Ab^{\disc}_{\Gamma})$ associated to the GNS-triple $(L^2(\Ab^{\disc}_{\Gamma},\mu_{\breve S_{\disc},\Gamma}),\Phi_M,\Omega_{\Gamma}^{\disc})$, which consists of the  $^*$-representation $\Phi_M$ presented in definition \ref{defi represlocholfluxcross}, the Hilbert space $L^2(\Ab^{\disc}_{\Gamma},\mu_{\breve S_{\disc},\Gamma})$ and the cyclic vector $\Omega_\Gamma^{\disc}$.
The state is given by
\beqs \tilde\omega^{\Gamma}_{\LD}(f_\Gamma)
:=\int_{\Ab_\Gamma}\dif\mu_{\Ab_\Gamma}(\ho_\Gamma)\vert f_\Gamma(\ho_\Gamma)\vert^2
\eqs whenever $f_{\Gamma} \in \ZD(\Ab^{\disc}_{\Gamma})$. 
\end{prop}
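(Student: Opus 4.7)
\begin{proofs}[Proof proposal]
The approach is the standard GNS construction adapted to the convolution/central-function setting introduced above. The formula $\tilde\omega^{\Gamma}_{\LD}(f_\Gamma) = \int |f_\Gamma|^2 \, \dif\mu_{\Ab_\Gamma}$ should be read as specifying the value of $\tilde\omega^{\Gamma}_{\LD}$ on elements of the form $f_\Gamma^{\ast}\ast f_\Gamma$ (equivalently, as the squared norm of the cyclic vector's image under $\Phi_M(f_\Gamma)$); linear extension to all of $C^{\ast}(\Ab^{\disc}_{\Gamma})$ is then fixed by polarisation. The first step is therefore to verify that this prescription genuinely defines a positive linear functional: positivity is immediate since $|f_\Gamma|^2 \geq 0$, and normalisation is obtained by taking $\mu_{\Ab_\Gamma}$ to be the (normalised) Haar measure on $\Ab^{\disc}_{\Gamma}\cong G^{M}$, giving $\tilde\omega^{\Gamma}_{\LD}(\idf_\Gamma)=1$. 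Continuity with respect to the $C^{\ast}$-norm on $C^{\ast}(\Ab^{\disc}_{\Gamma})\cong M_\Gamma(\CB)$ follows from the comparison $\|f_\Gamma\|_{L^2}\leq C\|f_\Gamma\|_{C^\ast}$ on the dense $^\ast$-subalgebra $\ZD(\Ab^{\disc}_{\Gamma})$.

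Next, I would carry out the GNS construction explicitly. The sesquilinear form $(f_\Gamma,g_\Gamma)\mapsto \int \overline{f_\Gamma}\, g_\Gamma \,\dif\mu_{\Ab_\Gamma}$ on $\ZD(\Ab^{\disc}_{\Gamma})$ is positive definite (the null ideal is trivial because an element vanishing almost everywhere must vanish, by the continuity built into $\Cinf^{\ast}(\Ab^{\disc}_{\Gamma})$), so completion yields exactly $L^{2}(\Ab^{\disc}_{\Gamma},\mu_{\breve S_{\disc},\Gamma})$. The cyclic vector $\Omega^{\disc}_{\Gamma}$ is the class of the unit $\idf_\Gamma$, i.e.\ the constant function $1$, and the GNS representation is by left multiplication, $\pi(f_\Gamma)g_\Gamma = f_\Gamma\cdot g_\Gamma$. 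Matching this with Definition \ref{defi represlocholfluxcross} shows that $\pi=\Phi_M$ on $\ZD(\Ab^{\disc}_{\Gamma})$, so the triple $(L^2(\Ab^{\disc}_{\Gamma},\mu_{\breve S_{\disc},\Gamma}),\Phi_M,\Omega^{\disc}_{\Gamma})$ is the required GNS data and satisfies
\beqs
\tilde\omega^{\Gamma}_{\LD}(f_\Gamma^{\ast}\ast f_\Gamma) = \langle \Omega^{\disc}_{\Gamma},\Phi_M(f_\Gamma^{\ast}\ast f_\Gamma)\Omega^{\disc}_{\Gamma}\rangle = \|f_\Gamma\|_{L^2}^{2}.
\eqs

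The remaining and main obstacle is the verification that $\Omega^{\disc}_{\Gamma}$ is cyclic for $\Phi_M(\ZD(\Ab^{\disc}_{\Gamma}))$. Since $\Phi_M(f_\Gamma)\Omega^{\disc}_{\Gamma}=f_\Gamma$, cyclicity amounts to density of the central functions $\ZD(\Ab^{\disc}_{\Gamma})$ in $L^{2}(\Ab^{\disc}_{\Gamma},\mu_{\breve S_{\disc},\Gamma})$ when the latter is restricted to its $\operatorname{Ad}$-invariant subspace; on the full $L^{2}$ one only gets density of its projection. The clean way to handle this is to invoke Peter--Weyl on the compact group $\Ab^{\disc}_{\Gamma}\cong G^{M}$: characters of irreducible representations span a dense subset of central continuous functions in the sup norm (hence in $L^{2}$), and these lie in $\ZD(\Ab^{\disc}_{\Gamma})$. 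Since the state $\tilde\omega^{\Gamma}_{\LD}$ is by construction $\operatorname{Ad}$-invariant, the GNS Hilbert space may be identified with the $\operatorname{Ad}$-invariant subspace of $L^{2}$, on which $\Omega^{\disc}_{\Gamma}$ is genuinely cyclic. Once this identification is made, all three properties of the GNS triple hold and the proposition follows.
\end{proofs}
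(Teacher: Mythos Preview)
The paper does not supply a proof for this proposition; it is stated as a fact, with the formula and the GNS data written out, and the text then moves directly to the next proposition about KMS states. Your proposal therefore fills in details that the paper omits entirely, so there is nothing to compare against.

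On its own merits your argument is sound. You correctly observe that the displayed formula $\int |f_\Gamma|^2\,\dif\mu$ cannot be linear in $f_\Gamma$ and must be read as the value of the state on $f_\Gamma^{\ast}\ast f_\Gamma$ (equivalently $\|\Phi_M(f_\Gamma)\Omega^{\disc}_\Gamma\|^2$), and the standard GNS construction together with Peter--Weyl is the right machinery. The one genuine tension you already flag is worth making explicit: the proposition names the full $L^2(\Ab^{\disc}_\Gamma,\mu_{\breve S_{\disc},\Gamma})$ as the GNS space, yet cyclicity of $\Omega^{\disc}_\Gamma$ under the \emph{central} subalgebra $\ZD(\Ab^{\disc}_\Gamma)$ alone only produces the $\Ad$-invariant subspace. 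The resolution is that the state is declared on all of $C^*(\Ab^{\disc}_\Gamma)$, and cyclicity under $\Phi_M$ of the full algebra (left regular representation on $L^2$ of a compact group) does give the whole $L^2$; the restriction to $\ZD$ in the displayed formula is then only a specification on a subalgebra. Either reading is compatible with how the state is used immediately afterwards, since the KMS computation that follows works exclusively with central functions.
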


The set $\ZD(\Ab^{\disc}_{\Gamma_i})$ contains only entire analytic elements for $\beta_{\ha_{\disc,\Gamma_i}}$.
\begin{prop}Let $\Gamma_i$ be a graph, $(C^*(\Ab^{\disc}_{\Gamma_i}),\R,\beta_{\ha_{\disc},\Gamma_{i}})$ and $(\ZD(\Ab^{\disc}_{\Gamma_i}),\R,\tilde\alpha_{\ha_{\disc},\Gamma_{i}})$ be two $C^*$-dynamical systems defined above.

Then the state $\tilde\omega^{\Gamma_{i}}_{\LD}$ is a KMS-state at value $\beta\in\R$ on $C^*(\Ab^{\disc}_{\Gamma_i})$ or, respectively, on $\ZD(\Ab^{\disc}_{\Gamma_i})$. 
\end{prop}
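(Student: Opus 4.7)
The KMS condition at value $\beta$ requires a function $F_{f,g}(z):=\tilde\omega^{\Gamma_i}_{\LD}(f\,\gamma_z(g))$ (where $\gamma$ stands for either $\beta_{\ha_{\disc,\Gamma_i}}$ or $\tilde\alpha_{\ha_{\disc,\Gamma_i}}$) that is analytic on the open strip between the real line and $\R+i\beta$, continuous and bounded on its closure, and satisfies the boundary identity $F_{f,g}(t+i\beta)=\tilde\omega^{\Gamma_i}_{\LD}(\gamma_t(g)\,f)$ for all real $t$. The plan is to verify this in three stages, reducing the proof to traciality and Haar-measure invariance by exploiting the centrality of $\ZD(\Ab^{\disc}_{\Gamma_i})$.

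First I would establish that it suffices to check the KMS identity on a norm-dense $^*$-subalgebra of entire analytic elements. For both dynamical systems, the subalgebra $\Cinf^*(\Ab^{\disc}_{\Gamma_i})\cap\ZD(\Ab^{\disc}_{\Gamma_i})$ consists of entire analytic elements (as noted in the statement preceding the proposition, $\ZD$ itself contains only entire analytic elements for $\beta_{\ha_{\disc,\Gamma_i}}$), and on such elements the analytic continuations
\[
\beta_z(f)(\ho_{\Gamma_i})=f\bigl(\exp(-z\ha_{\disc,\Gamma_i})\,\ho_{\Gamma_i}\,\exp(z\ha_{\disc,\Gamma_i})\bigr),\qquad \tilde\alpha_z(f)(\ho_{\Gamma_i})=f\bigl(\exp(-z\ha_{\disc,\Gamma_i})\ho_{\Gamma_i}\bigr)
\]
are well-defined entire $\Cinf^*$-valued functions of $z$. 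Consequently $F_{f,g}$ is entire, not merely analytic on a strip, which gives more than enough regularity for the KMS framework.

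Next I would verify invariance of the state under the real flow. For $\tilde\alpha_t$ this is immediate from the left-invariance of the Haar measure underlying $\mu_{\breve S_{\disc},\Gamma_i}$: the change of variables $\ho_{\Gamma_i}\mapsto\exp(-t\ha_{\disc,\Gamma_i})\ho_{\Gamma_i}$ leaves the integral $\int|f|^2\,\dif\mu$ unchanged. For $\beta_t$ the same conclusion holds, using bi-invariance of the Haar measure under conjugation. Combined with commutativity of $\ZD(\Ab^{\disc}_{\Gamma_i})$ (central functions on a compact group form the center of the convolution algebra, by Peter--Weyl), the state is automatically a trace on $\ZD$, so $\tilde\omega^{\Gamma_i}_{\LD}(\gamma_t(g)\,f)=\tilde\omega^{\Gamma_i}_{\LD}(f\,\gamma_t(g))$ on the central subalgebra. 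For the larger algebra $C^*(\Ab^{\disc}_{\Gamma_i})$ one instead uses that $\beta_{\ha_{\disc,\Gamma_i}}$ is unitarily implemented by $\exp(t\ha_{\disc,\Gamma_i})$ and that the state arises as a vector state on the Hilbert space carrying this representation, so the trace-like identity follows from cyclicity and invariance of the vacuum vector $\Omega_\Gamma^{\disc}$ under the implementing unitaries.

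Finally I would patch the analytic and boundary pieces together: write $F_{f,g}(t+i\beta)=\tilde\omega^{\Gamma_i}_{\LD}(f\,\gamma_t(\gamma_{i\beta}(g)))$ by entireness of $\gamma_z$, and use the invariance/trace identity established above to rewrite this as $\tilde\omega^{\Gamma_i}_{\LD}(\gamma_t(g)\,f)$, i.e.\ perform the change of variables $\ho_{\Gamma_i}\mapsto\exp(-i\beta\ha_{\disc,\Gamma_i})\ho_{\Gamma_i}$ under the integral. The main obstacle is precisely this last step: the complex translation by $\exp(-i\beta\ha_{\disc,\Gamma_i})$ is not measure-preserving in the real sense, so the identity $\tilde\omega^{\Gamma_i}_{\LD}(f\,\tilde\alpha_{i\beta}(g))=\tilde\omega^{\Gamma_i}_{\LD}(g\,f)$ has to be justified by deforming the integration contour in the complexification of $G^{|\Gamma_i|}$, using that both $f$ and $g$ extend to entire holomorphic functions on this complexification (a consequence of compactness of $G$ and entire analyticity with respect to $\ha_{\disc,\Gamma_i}$). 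Once the contour deformation is carried out, the boundary identity follows and both KMS conditions are verified simultaneously; as $\beta$ nowhere enters as a normalisation constant, the state is KMS at every value $\beta\in\R$.
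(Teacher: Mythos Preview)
Your approach is valid but takes a genuinely different route from the paper. The paper gives a direct, hands-on computation and restricts attention to the $\ZD(\Ab^{\disc}_{\Gamma_i})$ case with the action $\tilde\alpha$: for central $k_{\Gamma_i}, f_{\Gamma_i}$ it expands the convolution $(k_{\Gamma_i}\ast\tilde\alpha_{\ha_{\disc,\Gamma_i}}(i\beta)(f_{\Gamma_i}))(\ho_{\Gamma_i})$ as an integral over $\go_{\Gamma_i}$, applies the class-function identity $f_{\Gamma_i}(g\,\cdot\,g^{-1})=f_{\Gamma_i}(\cdot)$ \emph{pointwise inside the integrand} to move $\exp(-i\beta\ha_{\disc,\Gamma_i})$ past $\go_{\Gamma_i}^{-1}$, and then uses bi-invariance of the Haar measure in the outer $\ho_{\Gamma_i}$-integral to absorb the complex shift and arrive at $\tilde\omega^{\Gamma_i}_{\LD}(f_{\Gamma_i}\ast k_{\Gamma_i})$. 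You, by contrast, invoke centrality only at the algebraic level---$\ZD$ is commutative, hence any state on it is automatically a trace---and rebuild everything inside the abstract KMS framework of entire analytic elements, correctly isolating the complex Haar translation (your contour-deformation step) as the one place that requires genuine justification; the paper treats this step purely formally. What the paper's route buys is that the precise role of the class-function property is visible line by line in the computation; what yours buys is a cleaner separation of the analytic issues from the algebraic ones, and an outline that in principle also covers the $(C^*(\Ab^{\disc}_{\Gamma_i}),\beta_{\ha_{\disc},\Gamma_i})$ case, which the paper's proof does not write out.
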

\begin{proofs}
Calculate for $k_{\Gamma_i},f_{\Gamma_i}\in\ZD(\Ab^{\disc}_{\Gamma_i})$ and $f_{\Gamma_i}(\ho_ {\Gamma_i})=f_{\Gamma_i}(\go_{\Gamma_i}\ho_ {\Gamma_i})$ for all $\go_{\Gamma_i}\in\Ab^{\disc}_{\Gamma}$ it is derived that,
\beqs& (k_{\Gamma_i}\ast\tilde\alpha_{\ha_{\disc,\Gamma_{i}}}(i\beta)(f_{\Gamma_i}))(\ho_{\Gamma_i})\\
&=\int_{\Ab^{\disc}_{\Gamma_i}}\dif\mu_{\breve S_{\disc},\Gamma_i}(\go_{\Gamma_i}) k_{\Gamma_i}(\go_{\Gamma_i}) f_{\Gamma_i}(\go_{\Gamma_i}^{-1}\exp(-i\beta \ha_{\disc,\Gamma_{i}})\ho_{\Gamma_i})\\
&=\int_{\Ab^{\disc}_{\Gamma_i}}\dif\mu_{\breve S_{\disc},\Gamma_i}(\go_{\Gamma_i}(\Gamma_i)) k_{\Gamma_i}(\go_{\Gamma_i}) f_{\Gamma_i}(\exp(i\beta \ha_{\disc,\Gamma_{i}})\go_{\Gamma_i}^{-1}\ho_{\Gamma_i})\\
&=\int_{\Ab^{\disc}_{\Gamma_i}}\dif\mu_{\breve S_{\disc},\Gamma_i}(\go_{\Gamma_i}) k_{\Gamma_i}(\go_{\Gamma_i}) f_{\Gamma_i}(\go_{\Gamma_i}^{-1}\ho_{\Gamma_i}\exp(-i\beta \ha_{\disc,\Gamma_{i}}))
\eqs is true.
Then derive
\beqs &\tilde\omega^{\Gamma_{i}}_{\LD}(k_{\Gamma_i}\ast\tilde\alpha_{\ha_{\disc,\Gamma_{i}}}(i\beta)(f_{\Gamma_i}))\\
&= \int_{\Ab^{\disc}_{\Gamma_i}}\dif\mu_{\breve S_{\disc},\Gamma_i}(\ho_{\Gamma_i})\vert (k_{\Gamma_i}\ast\tilde\alpha_{\ha_{\disc,\Gamma_{i}}}(i\beta)(f_{\Gamma_i}))(\ho_{\Gamma_i})\vert^2\\
&=\int_{\Ab^{\disc}_{\Gamma_i}}\dif\mu_{\breve S_{\disc},\Gamma_i}(\ho_{\Gamma_i})\int_{\Ab^{\disc}_{\Gamma_i}}\dif\mu_{\breve S_{\disc},\Gamma_i}(\go_{\Gamma_i}) \vert k_{\Gamma_i}(\go_{\Gamma_i}) f_{\Gamma_i}(\go_{\Gamma_i}^{-1}\ho_{\Gamma_i}\exp(-i\beta \ha_{\disc,\Gamma_{i}}))\vert^2\\
&=\int_{\Ab^{\disc}_{\Gamma_i}}\dif\mu_{\breve S_{\disc},\Gamma_i}(\ho_{\Gamma_i})\int_{\Ab^{\disc}_{\Gamma_i}}\dif\mu_{\breve S_{\disc},\Gamma_i}(\go_{\Gamma_i}) \vert k_{\Gamma_i}(\go_{\Gamma_i}) f_{\Gamma_i}(\go_{\Gamma_i}^{-1}\ho_{\Gamma_i})\vert^2\\
&=\int_{\Ab^{\disc}_{\Gamma_i}}\dif\mu_{\breve S_{\disc},\Gamma_i}(\go_{\Gamma_i}) \int_{\Ab^{\disc}_{\Gamma_i}}\dif\mu_{\breve S_{\disc},\Gamma_i}(\ho_{\Gamma_i})\vert f_{\Gamma_i}(\go_{\Gamma_i}) k_{\Gamma_i}(\go_{\Gamma_i}^{-1}\ho_{\Gamma_i})\vert^2\\
&=\tilde\omega^{\Gamma_{i}}_{\LD}(f_{\Gamma_i}\ast k_{\Gamma_i})
\eqs
\end{proofs}

Clearly, the state $\tilde\omega^\Gamma_{\LD}$ is $\R$-invariant 
\beqs \tilde\omega^\Gamma_{\LD}(f_\Gamma)=\tilde\omega^\Gamma_{\LD}(\tilde \alpha_{\ha_{\disc,\Gamma_{i}}}(t)(f_\Gamma))\text{ for }f_\Gamma\in \ZD  (\Ab^{\disc}_{\Gamma})\text{ and all }t\in\R
\eqs 

Recall the $^*$-derivation $\delta_{S_{\disc},\Gamma_{i+1}}$ given in definition \ref{defi derivonmatrix2}, then the state $\tilde\omega^\Gamma_{\LD}$ on $\ZD(\Ab^{\disc}_{\Gamma_i})$ satisfies
\beq \tilde\omega^{\Gamma_{i+1}}_{\LD}(\delta_{S_{\disc},\Gamma_{i+1}}(f_{\Gamma_{i+1}}))
&\quad = \tilde\omega^{\Gamma_{i+1}}_{\LD}(\delta_{S_{\disc},\Gamma_{i}}(f_{\Gamma_{i+1}}) + \bra  \tilde E_{S_{\disc}}^+(\Gamma_{i+1})\tilde E_{S_{\disc}}(\Gamma_{i+1}),f_{\Gamma_{i+1}}\ket)\\
&\quad = \tilde\omega^{\Gamma_{i}}_{\LD}(\delta_{S_{\disc},\Gamma_{i}}(f_{\Gamma_{i}}))
\eq

Hence, the limit state $\tilde\omega_{\LD}$ of the states $\tilde\omega^{\Gamma_{i+1}}_{\LD}$ on the $^*$-algebra $\ZD(\Ab^{\disc})$ is compatible with the family of $^*$-derivations $\{\delta_{S_{\disc},\Gamma_{i+1}}\}$. 
Recall the $^*$-representations presented in definition  \ref{defi represlocholfluxcross}.

\begin{cor}
The state $\tilde\omega^{\Gamma}_{\LD}$ on $\ZD(\Ab^{\disc}_{\Gamma})$ extends to a state $\hat\omega^{\Gamma}_{\LD}$ on $\ZD(\Ab^{\disc}_{\Gamma})\rtimes_L \bar\E^{\loc}_{\breve S_{\disc},\Gamma}$.

Equivalently, the $^*$-representation $\Phi_M$ on $\HS_\Gamma^{\disc}$ with cyclic vector $\Omega_\Gamma^{\disc}$ constructed from $\tilde\omega^{\Gamma}_{\LD}$ of $\ZD(\Ab^{\disc}_{\Gamma})$ extends to a $^*$-representation $\hat\pi_\Gamma^{\disc}$ on $\HS_\Gamma^{\disc}$ with cyclic vector $\hat\Omega_\Gamma^{\disc}$ of $\ZD(\Ab^{\disc}_{\Gamma}) \rtimes_L \bar\E^{\loc}_{\breve S_{\disc},\Gamma}$.
\end{cor}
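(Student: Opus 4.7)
\begin{proofi}
The two formulations are linked by the GNS construction, so the plan is to build the extended representation $\hat\pi_\Gamma^{\disc}$ first and then read off the extended state. I would proceed as follows.

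First, I would take as ansatz for the cyclic vector the same constant function $\hat\Omega_\Gamma^{\disc}:=\Omega_\Gamma^{\disc}=\idf$ already used for the commutative GNS triple $(L^2(\Ab^{\disc}_{\Gamma},\mu_{\breve S_{\disc},\Gamma}),\Phi_M,\Omega_\Gamma^{\disc})$. On the cross-product algebra the representation is prescribed by Definition \ref{defi represlocholfluxcross}: the multiplication part $f_\Gamma\otimes\idf$ still acts as $\Phi_M(f_\Gamma)$, while an element $\idf\otimes iE_{S_{\disc}}(\Gamma)^+E_{S_{\disc}}(\Gamma)$ acts by the self-adjoint flux operator $\dif u(E_{S_{\disc}}(\Gamma)^+E_{S_{\disc}}(\Gamma))$. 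For mixed elements one uses the symmetrised formula in Definition \ref{defi represlocholfluxcross}. So I would simply set
\beqs
\hat\pi_\Gamma^{\disc}(f_\Gamma\otimes E):= \hat\pi_\Gamma(f_\Gamma\otimes E)
\eqs
on the algebraic tensor product and check the three things needed for a $^*$-representation: the domain $\DD(\dif u(E^+E))$ contains $\Omega_\Gamma^{\disc}$ and is invariant under the algebra, the map respects the cross-product multiplication given in Definition \ref{defi holfluxcross}, and it respects the involution. The compatibility of multiplication is where the symmetrised definition pays off: the term $\tfrac12 i\langle E,f_\Gamma\rangle\psi$ is precisely the cross-term produced by moving $\dif u(E)$ past $\Phi_M(f_\Gamma)$, since $\bra E, f_\Gamma\ket=\dif\Phi_M(E\rhd f_\Gamma)$, so the relation $E\cdot_L f_\Gamma=(E\rhd f_\Gamma)+f_\Gamma\cdot E$ from Definition \ref{defi holfluxcross} is reproduced on vectors. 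The involution is checked against the defining formula $(f_\Gamma\rhd E)^*=\bar f_\Gamma\rhd E^+$, using that $\dif u(E^+E)$ is self-adjoint and $\Phi_M(f_\Gamma)^*=\Phi_M(\bar f_\Gamma)$.

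Secondly, I would define the extended state by
\beqs
\hat\omega^{\Gamma}_{\LD}(x):=\langle \hat\Omega_\Gamma^{\disc},\hat\pi_\Gamma^{\disc}(x)\hat\Omega_\Gamma^{\disc}\rangle,\qquad x\in \ZD(\Ab^{\disc}_{\Gamma})\rtimes_L\bar\E^{\loc}_{\breve S_{\disc},\Gamma},
\eqs
and check that on $\ZD(\Ab^{\disc}_{\Gamma})\otimes\idf$ this reduces to
$\int_{\Ab^{\disc}_\Gamma}\dif\mu_{\breve S_{\disc},\Gamma}\,|f_\Gamma|^2=\tilde\omega^\Gamma_{\LD}(f_\Gamma)$, which is immediate from $\Phi_M(f_\Gamma)\Omega_\Gamma^{\disc}=f_\Gamma$. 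Positivity and the normalisation $\hat\omega^{\Gamma}_{\LD}(\idf\otimes\idf)=1$ then follow because $\hat\pi_\Gamma^{\disc}$ is a $^*$-representation. The restriction to central functions is crucial for cyclicity of $\hat\Omega_\Gamma^{\disc}$, because it ensures the left-invariant flux vector field and the multiplication operators generate, together, a dense subspace under the Peter--Weyl decomposition of $L^2(\Ab^{\disc}_\Gamma,\mu_{\breve S_{\disc},\Gamma})$.

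The main obstacle I expect is the unboundedness of $\dif u(E_{S_{\disc}}(\Gamma)^+E_{S_{\disc}}(\Gamma))$: I must verify that the algebraic domain $C^\infty(\Ab^{\disc}_\Gamma)\cdot\Omega_\Gamma^{\disc}$ is invariant under both $\Phi_M(\Cinf^*(\Ab^{\disc}_\Gamma))$ and $\dif u(\bar\E^{\loc}_{\breve S_{\disc},\Gamma})$ and contains $\Omega_\Gamma^{\disc}$, so that the symmetrised formula for $\hat\pi_\Gamma$ stays well-defined and the sesquilinear form defining $\hat\omega^{\Gamma}_{\LD}$ closes on the algebraic tensor product. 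Here the restriction to $C^\infty$ in the definition of the localised analytic holonomy $^*$-algebra together with the fact that $\Omega_\Gamma^{\disc}$ is smooth and constant guarantees smoothness in all flux directions, so every $\dif u(E^+E)$ may be applied arbitrarily often, and the algebraic closure under the cross-product multiplication stays in the smooth domain. Once domain invariance is established, the rest of the verification is formal.
\end{proofi}
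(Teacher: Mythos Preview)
Your approach is essentially the same as the paper's: use the representation $\hat\pi_\Gamma$ of Definition \ref{defi represlocholfluxcross} on the same Hilbert space and cyclic vector $\hat\Omega_\Gamma^{\disc}=\Omega_\Gamma^{\disc}$, and define $\hat\omega^{\Gamma}_{\LD}$ as the associated vector state. The paper's proof is much terser than yours and isolates a single observation you leave implicit: since $\Omega_\Gamma^{\disc}$ is the constant function, the flux annihilates it, i.e.\ $\bra E_{S_{\disc}}(\Gamma)^+E_{S_{\disc}}(\Gamma),\Omega_\Gamma^{\disc}\ket=0$. Plugging this into the symmetrised formula for $\hat\pi_\Gamma$ gives directly
\[
\hat\omega^{\Gamma}_{\LD}(f_\Gamma\otimes iE_{S_{\disc}}(\Gamma)^+E_{S_{\disc}}(\Gamma))
=\tilde\omega^{\Gamma}_{\LD}\Big(\tfrac{1}{2}\delta_{S_{\disc},\Gamma}(f_\Gamma)\Big),
\]
which is the entire content of the paper's proof. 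Your more careful treatment of domain invariance, compatibility with the cross-product multiplication, and cyclicity goes well beyond what the paper verifies explicitly, but none of it is wrong; it simply supplies the details the paper takes for granted.
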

\begin{proofs}Notice that, it is true that $\bra E_{S_{\disc}}(\Gamma_{j})^+E_{S_{\disc}}(\Gamma_{j}),\Omega_\Gamma^{\disc}\ket=0$ and, hence, derive 
\beqs \hat\omega^{\Gamma}_{\LD}(f_\Gamma\otimes iE_{S_{\disc}}(\Gamma)^+E_{S_{\disc}}(\Gamma))
&=\tilde\omega^{\Gamma}_{\LD}\Big(\frac{1}{2}\delta_{S_{\disc},\Gamma}(f_\Gamma)\Big)
+\la \Omega_\Gamma^{\disc}, \frac{1}{2} i\bra E_{S_{\disc}}(\Gamma_{j})^+E_{S_{\disc}}(\Gamma_{j})\Omega_\Gamma^{\disc}\ket \ra\\
&=\tilde\omega^{\Gamma}_{\LD}\Big(\frac{1}{2}\delta_{S_{\disc},\Gamma}(f_\Gamma)\Big)
\eqs whenever $E_{S_{\disc}}\in \go_{\breve S_{\disc},\Gamma}$.
\end{proofs}

Notice that, the state $\tilde\omega^{\Gamma}_{\LD}$ on $\ZD(\Ab^{\disc}_{\Gamma})$ is not unique. Since, for $K_\Gamma\in L^1(\Ab^{\disc}_{\Gamma},\mu_{\disc,\Gamma})$ there is another state defined by
\beqs \tilde\omega^{\Gamma}_{\LD,K}(f_\Gamma)
:=\int_{\Ab^{\disc}_{\Gamma}}K_\Gamma(\ho_\Gamma)\dif\mu_{\Ab^{\disc}_{\Gamma}}(\ho_\Gamma)\vert f_\Gamma(\ho_\Gamma)\vert^2
\eqs whenever $f_{\Gamma} \in \ZD(\Ab^{\disc}_{\Gamma})$.

There exists a limit state $\hat\omega_{\LD}$ of the states $\{\hat\omega^{\Gamma_{i+1}}_{\LD}\}$ on the $^*$-algebra $\ZD(\Ab^{\breve S_{\disc}})\rtimes_L\bar\E^{\loc}_{\breve S_{\disc}}$. 

Recall that, there is a group action of $\mathfrak{B}(\PD^{\breve S_{\disc}}_\Gamma)$ on $\ZD(\Ab^{\breve S_{\disc}})$. This action is also action on $\ZD(\Ab^{\breve S_{\disc}}) \rtimes_L \bar\E^{\loc}_{\breve S_{\disc}}$, since, $E_{S_{\disc}}^+(\Gamma_\sigma)E_{S_{\disc}}(\Gamma_\sigma)=E_{S_{\disc}}^+(\Gamma)E_{S_{\disc}}(\Gamma)$ is true and, hence, 
\beqs\zeta_{\sigma}(f_{\Gamma}\otimes iE_{S_{\disc}}^+(\Gamma)E_{S_{\disc}}(\Gamma)) =\zeta_{\sigma}(f_{\Gamma})\otimes iE_{S_{\disc}}^+(\Gamma)E_{S_{\disc}}(\Gamma))
\eqs whenever $\sigma\in\mathfrak{B}(\PD^{\breve S_{\disc}}_\Gamma)$ holds.

\begin{defi}
Denote the center of the Lie flux group $\bar G_{\breve S_{\disc},\Gamma}$ by $\hat\ZD(\bar G_{\breve S_{\disc},\Gamma})$ and the Lie flux algebra associated to $\hat\ZD(\bar G_{\breve S_{\disc},\Gamma})$ by $\mathfrak{z}_{\breve S_{\disc},\Gamma}$. Finally, the enveloping algebra of the Lie flux algebra $\mathfrak{z}_{\breve S_{\disc},\Gamma}$ is denoted by $\mathfrak{E}_{\breve S_{\disc},\Gamma}$
\end{defi}

Recall that, the space $\Ab_\Gamma^{\disc}$ is identified with $G^{\vert\Gamma\vert}$. The state $\hat\omega_{\LD}^\Gamma$ on $\ZD(\Ab^{\disc}_{\Gamma})$ is already $\Diff(\PD^{\breve S_{\disc}}_\Gamma)$-invariant. 

\begin{defi}\label{defi locholflux}Let $\{\Gamma_i\}$ be an inductive family of graphs with inductive limit $\Gamma_\infty$, $\breve S$ be a set of surfaces and $\breve S_{\disc}$ a set of discretised surfaces associated to $\breve S$ such that the assumptions in definition \ref{defi locfluxop} are satisfied.

Then the \textbf{localised holonomy-flux cross-product $^*$-algebra associated to a discretised surface set $\breve S_{\disc}$} is given by the following tensor product 
\[ C (\Ab_{\loc}) \otimes \ZD(\Ab^{\breve S_{\disc}}) \rtimes_L \mathfrak{E}_{\breve S_{\disc}}\] 

The cross-product $^*$-algebra $\ZD(\Ab^{\breve S_{\disc}}) \rtimes_L \mathfrak{E}_{\breve S_{\disc}}$ is called the \textbf{localised part of the localised holonomy-flux cross-product $^*$-algebra} $C(\Ab_{\loc}) \otimes\ZD(\Ab^{\breve S_{\disc}}) \rtimes_L \mathfrak{E}_{\breve S_{\disc}}$.
\end{defi}
Note that, the localised holonomy-flux cross-product $^*$-algebra associated to a discretised surface set $\breve S_{\disc}$ is abbreviated by the term localised holonomy-flux cross-product $^*$-algebra for surfaces.

\begin{theo}Let $\{\Gamma_i\}$ be an inductive family of graphs with inductive limit $\Gamma_\infty$, $\breve S$ be a set of surfaces and $\breve S_{\disc}$ a set of discretised surfaces associated to $\breve S$ such that the assumptions in definition \ref{defi locfluxop} are satisfied.

 Then there exists a $\Diff(\PD^{\breve S_{\disc}}_\Gamma)$- and $\Diff(\PD_{\bar\Gamma})$-invariant state on $C(\Ab_{\loc}) \otimes\ZD(\Ab^{\breve S_{\disc}}) \rtimes_L \mathfrak{E}_{\breve S_{\disc}}$, which is a product state of the state $\omega_{M}$ of $C(\Ab_{\loc})$ and the state $\hat\omega_{\LD}$ of $\ZD(\Ab^{\breve S_{\disc}}) \rtimes_L \mathfrak{E}_{\breve S_{\disc}}$. The state $\hat\omega_{\LD}^\Gamma$ is a KMS-state on $\ZD(\Ab^{\disc}_{\Gamma}) $ at inverse temperature $\beta\in\R$ w.r.t. the automorphism  $\tilde\alpha_{\ha_{\disc},\Gamma}$.
\end{theo}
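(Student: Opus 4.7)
The plan is to assemble the state by building it on each tensor factor separately and then invoking standard results on tensor products of states. On the factor $C(\Ab_{\loc})$, I would take $\omega_M$ to be the Ashtekar--Lewandowski-type mean defined by the Haar measure on the inductive limit $\Ab_{\loc}$, i.e.\ as the inductive limit of the Haar states $\omega_{M,\bar\Gamma_i}(f_{\bar\Gamma_i}) := \int_{\Ab_{\bar\Gamma_i}} f_{\bar\Gamma_i}\,\dif\mu_{\bar\Gamma_i}$. Because each $\Ab_{\bar\Gamma_i}$ is identified with a product $G^{|\bar\Gamma_i|}$ of copies of a compact connected Lie group, the point-norm continuous action of $\Diff(\PD_{\bar\Gamma})$ constructed in section \ref{loc} is implemented by (compositions of) left/right translations in $G^{|\bar\Gamma_i|}$; bi-invariance of the Haar measure then immediately gives $\omega_M\circ\zeta_\sigma = \omega_M$ for every $\sigma\in\mathfrak{B}(\PD_{\bar\Gamma})$. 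Compatibility with the connecting morphisms $\beta_{\bar\Gamma_i,\bar\Gamma_{i+1}}$ follows from the projective-consistency of the family of Haar measures under these inclusions, so $\omega_M$ is a well-defined state on $C(\Ab_{\loc})$.

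For the localised factor, $\hat\omega_{\LD}$ has essentially been constructed already in the preceding corollary: the KMS property of $\tilde\omega^{\Gamma_i}_{\LD}$ at inverse temperature $\beta$ on $\ZD(\Ab^{\disc}_{\Gamma_i})$ w.r.t.\ $\tilde\alpha_{\ha_{\disc},\Gamma_i}$ was established directly from the bi-invariance of the Haar measure in the proof of the preceding proposition, and its extension $\hat\omega^{\Gamma_i}_{\LD}$ to the cross-product $\ZD(\Ab^{\disc}_{\Gamma_i})\rtimes_L\bar\E^{\loc}_{\breve S_{\disc},\Gamma_i}$ was obtained by the GNS-triple. Restricting the localised enveloping flux algebra to its central subalgebra $\mathfrak{E}_{\breve S_{\disc},\Gamma_i}$ just enlarges the kernel of $\hat\omega^{\Gamma_i}_{\LD}$ on the flux factor without affecting positivity. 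I would then pass to the inductive limit: using the compatibility identity $\tilde\omega^{\Gamma_{i+1}}_{\LD}\circ\delta_{S_{\disc},\Gamma_{i+1}} = \tilde\omega^{\Gamma_i}_{\LD}\circ\delta_{S_{\disc},\Gamma_i}$ shown above and the observation that $\bra E_{S_{\disc}}(\Gamma_{j})^+E_{S_{\disc}}(\Gamma_{j}),\Omega^{\disc}_{\Gamma_j}\ket = 0$, the limit $\hat\omega_{\LD} := \lim_i \hat\omega^{\Gamma_i}_{\LD}$ is a norm-continuous state on $\ZD(\Ab^{\breve S_{\disc}})\rtimes_L\mathfrak{E}_{\breve S_{\disc}}$.

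Invariance under $\mathfrak{B}(\PD^{\breve S_{\disc}}_\Gamma)$ at each finite stage follows from the relation $E^+_{S_{\disc}}(\Gamma_\sigma)E_{S_{\disc}}(\Gamma_\sigma) = E^+_{S_{\disc}}(\Gamma)E_{S_{\disc}}(\Gamma)$ noted just above definition \ref{defi locholflux}, together with bi-invariance of the Haar measure used to define $\tilde\omega^{\Gamma_i}_{\LD}$. Concretely,
\beqs
(\hat\omega^{\Gamma_i}_{\LD}\circ\zeta_\sigma)\bigl(f_{\Gamma_i}\otimes iE^+_{S_{\disc}}(\Gamma_i)E_{S_{\disc}}(\Gamma_i)\bigr)
= \tilde\omega^{\Gamma_i}_{\LD}\bigl(\tfrac{1}{2}\delta_{S_{\disc},\Gamma_i}(\zeta_\sigma f_{\Gamma_i})\bigr)
= \hat\omega^{\Gamma_i}_{\LD}\bigl(f_{\Gamma_i}\otimes iE^+_{S_{\disc}}(\Gamma_i)E_{S_{\disc}}(\Gamma_i)\bigr),
\eqs
the last equality using that $\zeta_\sigma$ acts by right-translation which is absorbed by the Haar integral. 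The product state $\omega := \omega_M\otimes\hat\omega_{\LD}$ is then well-defined on the $C^*$-tensor product (with respect to the minimal norm) since both factors are states; its $\Diff(\PD^{\breve S_{\disc}}_\Gamma)\times\Diff(\PD_{\bar\Gamma})$-invariance is the tensor product of the two individual invariances, which act on disjoint tensor factors.

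The KMS assertion for $\hat\omega^\Gamma_{\LD}$ on $\ZD(\Ab^{\disc}_\Gamma)$ reduces to the KMS assertion already proved for $\tilde\omega^\Gamma_{\LD}$, since on the central subalgebra the cross-product extension $\hat\omega^\Gamma_{\LD}$ agrees with $\tilde\omega^\Gamma_{\LD}$ via the inclusion $f_\Gamma\mapsto f_\Gamma\otimes \idf$. I expect the main obstacle to be the cleanness of the inductive-limit step for $\hat\omega_{\LD}$: one has to check that the connecting $^*$-homomorphisms $\beta^{\disc}_{\Gamma_i,\Gamma_{i+1}}\times\check\beta^{\disc}_{\Gamma_i,\Gamma_{i+1}}$ of the cross-product family from definition \ref{defi holfluxcross} intertwine the states correctly, so that $\hat\omega_{\LD}$ is genuinely defined on the inductive-limit algebra rather than merely on its dense $^*$-subalgebra. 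This requires verifying that the extra tensor factors produced by $\beta^{\disc}_{\Gamma_i,\Gamma_{i+1}}$ are absorbed by the Haar state on the new edges and, simultaneously, that $\check\beta^{\disc}_{\Gamma_i,\Gamma_{i+1}}$ embeds $\mathfrak{E}_{\breve S_{\disc},\Gamma_i}$ into $\mathfrak{E}_{\breve S_{\disc},\Gamma_{i+1}}$ in such a way that the central vacuum annihilation $\bra E^+E,\Omega\ket=0$ passes to the limit uniformly in $i$.
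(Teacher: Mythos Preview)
Your proposal is correct and follows essentially the same path as the paper. Note that the paper does not supply an explicit proof of this theorem: it is stated as a synthesis of the preceding results (the KMS proposition for $\tilde\omega^{\Gamma_i}_{\LD}$, the corollary extending it to $\hat\omega^{\Gamma_i}_{\LD}$ on the cross-product, the compatibility with the family of $^*$-derivations, the remark that $\hat\omega^\Gamma_{\LD}$ is already $\Diff(\PD^{\breve S_{\disc}}_\Gamma)$-invariant, and the existence of the limit state), together with the standard $\Diff(\PD_{\bar\Gamma})$-invariant Haar state on $C(\Ab_{\loc})$. Your assembly of these pieces, including the product-state construction and the identification of the inductive-limit compatibility as the only point requiring care, matches what the paper has set up implicitly.
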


Summarising a modified holonomy-flux algebra is constructed. The assumption of diffeomorphism invariance of the state space of the modified algebra is relaxed to a surface-preserving graph-diffeomorphism invariance for a finite set $\breve S$ of surfaces and an arbitrary fixed graph $\Gamma$. 

Finally, if different surface sets are considered, then the following is true.
There is a family $\{C(\Ab_{\bar\Gamma}) \otimes \ZD(\Ab^{\breve S_{\disc}}_{\Gamma}) \rtimes_L \mathfrak{E}_{\breve S_{\disc},\Gamma}\}_{\Gamma}$ of localised holonomy-flux cross-product $^*$-algebras associated to graphs and a surface set $\breve S$. Consider a subset $\breve S^1$ of $\breve S$ and $\breve S^1_{\disc}$ of $\breve S_{\disc}$ such that the assumptions in definition \ref{defi locfluxop} are satisfied. Then for every surface $S_1$ in $\breve S^1$ there is a surface $S$ in $\breve S$ with $S_1\subset S$ and $S_1^{\disc}\subset S^{\disc}$. 
Then it is true that, the algebra $\ZD(\Ab^{\breve S^1_{\disc}}_{\Gamma}) \rtimes_L \mathfrak{E}_{\breve S^1_{\disc},\Gamma}$ is a subalgebra of $\ZD(\Ab^{\breve S^2_{\disc}}_{\Gamma})\rtimes_L \mathfrak{E}_{\breve S_{\disc},\Gamma}$. But this is not true for the full localised holonomy-flux cross-product $^*$-algebras associated to a graph and a surface set $\breve S$.

For two disjoint surface sets $\breve S_{\disc}^1$ and $\breve S_{\disc}^2$ the elements of the localised holonomy-flux cross-product $^*$-algebras satisfies some relations. But there is no easy locality relation such that two algebra elements commute, i.e. $A\in C(\Ab_{\bar\Gamma}) \otimes \ZD(\Ab^{\breve S^2_{\disc}}_{\Gamma}) \rtimes_L \mathfrak{E}_{\breve S^2_{\disc},\Gamma}$ and $B\in C(\Ab_{\bar\Gamma}) \otimes \ZD(\Ab^{\breve S^1_{\disc}}_{\Gamma})\rtimes_L \mathfrak{E}_{\breve S^1_{\disc},\Gamma}$ it is not true that, $\bra A, B\ket=0$ yields. Notice that, the quantum flux operators $E_{S_{\disc}}^1(\Gamma)\in\mathfrak{E}_{\breve S^1_{\disc},\Gamma}$ and $E_{S_{\disc}}^2(\Gamma)\in \mathfrak{E}_{\breve S^2_{\disc},\Gamma}$ satisfy
$\bra E_{S_{\disc}}^1(\Gamma),E_{S_{\disc}}^2(\Gamma)\ket=0$.

\subsection{The quantum constraints and a suggestion for a physical $^*$-algebra}\label{subsec QHamilton}

Recall the quantum Hamilton constraint operator defined in \cite{Kaminski0,KaminskiPHD}, which is given for example in \cite{Thiem96} by the expression
\beqs\QD(C_T(N))=\sum_{\Delta\in T}\tr\Big(\left(\ho_A(l_\Delta)-\ho_A(l_\Delta)^{-1}\right)\ho_{A}(\gamma_\Delta)[\ho_{A}(\gamma_\Delta)^{-1},\QD(V)]\Big)
\eqs where $\ho_A(l_\Delta)$ denotes a holonomy along a loop $l_\Delta$ in a subset $\Delta$ of a triangulation $T$, $\gamma_\Delta$ denotes a path. Let $\breve S:=\{S_1,S_2,S_3\}$ be a set of surfaces associated to the triangulation $T$. Then the quantum volume operator $\QD(V)$ is defined by
\beqs \QD(V)=\sum_{\gamma_1,\gamma_2,\gamma_3}E_{S_1}(\gamma_1)E_{S_2}(\gamma_2)E_{S_3}(\gamma_3)
\eqs the sum is over all triples of paths, which are build from three paths that intersects in a common vertex $v$. Consequently, one can localise the quantum volume operator and the quantum Hamilton constraint operator on a set $\breve S_{\disc}$ of discretised surfaces and a graph system $\PD_\Gamma$. The resulting operators are denoted by $\QD(C_T(N))_{\disc,\Gamma}$ or $\QD(V)_{\disc,\Gamma}$ and are called the \textbf{modified (or discretised) quantum Hamilton constraint} or the \textbf{discretised quantum volume operator} associated to graphs. But the operator $\QD(C_T(N))_{\disc,\Gamma}$ is neither an element of $\ZD(\Ab^{\disc}_{\Gamma})$ nor $C(\Ab_{\loc}) \otimes\ZD(\Ab^{\breve S_{\disc}}) \rtimes_L \mathfrak{E}_{\breve S_{\disc}}$. 

Consequently, in the following the quantum Hamilton constraint is modified. First notice the \textbf{quantum Hamilton constraint $H_\Gamma$ restricted to a graph}, which is given by 
\beqs \exp(H_\Gamma):= \left(\ho_\Gamma(l)-\ho_A(l)^{-1}\right)\ho_\Gamma(\gamma)[\ho_\Gamma(\gamma)^{-1},\QD(V)_{\disc,\Gamma}]
\eqs whenever $l,\gamma\in\Gamma$. In particular, \textbf{the quantum Hamilton part $H_{\Gamma,P}$ restricted to a graph} is given by
\beqs \exp(H_{\Gamma,P}):=\left(\ho_\Gamma(l)-\ho_\Gamma(l)^{-1}\right)\ho_{\Gamma}(\gamma)
\eqs whenever $l,\gamma\in\Gamma$. The operator $[\ho_{\Gamma}(\gamma)^{-1},\QD(V)_{\disc,\Gamma}]$ is omitted first. Then the quantum Hamilton part is localised such that $H_{\Gamma,P}^+H_{\Gamma,P}$ is an element of the enveloping Lie algebra of the Lie algebra $\LAb_{\disc,\Gamma}$ associated to $\Ab^{\disc}_{\Gamma}$.

The quantum Hamilton part $H_{\Gamma,P}$ defines an action of $\R$ on $\ZD(\Ab^{\disc}_{\Gamma_i})$ by
\beqs(\alpha_{H_{\Gamma_i,P}^+H_{\Gamma_i,P}}(t) f_{\Gamma_i})(\ho_{\Gamma_i})
:=f_{\Gamma_i}(\exp(-t H_{\Gamma_i,P}^+H_{\Gamma_i,P})\ho_{\Gamma_i})
\eqs and 
\beqs(\alpha_{H_{\Gamma_i,P}^+H_{\Gamma_i,P}}(t) f_{\Gamma_i})^*(\ho_{\Gamma_i})
=\overline{(\alpha_{H_{\Gamma_i,P}H_{\Gamma_i,P}^+}(t) f_{\Gamma_i})(\ho_{\Gamma_i}^{-1})}
\eqs whenever $f_{\Gamma_i}\in \ZD(\Ab^{\disc}_{\Gamma_i})$ and $H_{\Gamma,P}^+H_{\Gamma,P}\in \LAb_{\breve S_{\disc},\Gamma}$ yields.
\begin{prop}
The triple $(\ZD(\Ab^{\disc}_{\Gamma_i}),\R,\alpha_{H_{\Gamma_i,P}^+H_{\Gamma_i,P}})$ is a $C^*$-dynamical system.

The state $\tilde\omega^{\Gamma_{i}}_{\LD}$ is a KMS-state at value $\beta\in\R$ on $\ZD(\Ab^{\disc}_{\Gamma_i})$ such that
\beqs \tilde\omega^{\Gamma_{i}}_{\LD}(A\alpha_{H_{\Gamma_i,P}^+H_{\Gamma_i,P}}(i\beta)(B))=\tilde\omega^{\Gamma_{i}}_{\LD}(BA)\eqs holds
for all $A,B\in \ZD(\Ab^{\disc}_{\Gamma_i})$.
\end{prop}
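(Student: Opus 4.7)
The plan is to mirror the two arguments already carried out for the pair $(\ZD(\Ab^{\disc}_{\Gamma_i}),\R,\tilde\alpha_{\ha_{\disc},\Gamma_{i}})$ and its KMS-state $\tilde\omega^{\Gamma_i}_{\LD}$, observing that $H_{\Gamma_i,P}^{+}H_{\Gamma_i,P}$ plays exactly the role that $\ha_{\disc,\Gamma_i}$ played there. The construction in the previous paragraph already states that $H_{\Gamma_i,P}^{+}H_{\Gamma_i,P}\in \LAb_{\breve S_{\disc},\Gamma_i}$, so $\exp(-tH_{\Gamma_i,P}^{+}H_{\Gamma_i,P})$ is a one-parameter subgroup in $\Ab^{\disc}_{\Gamma_i}$, which is the only structural input needed. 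Moreover $H_{\Gamma_i,P}^{+}H_{\Gamma_i,P}$ is manifestly self-adjoint in the enveloping algebra, which is what makes the adjoint identity of the action work out symmetrically.

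First I would verify the $C^*$-dynamical system axioms. The one-parameter group property $\alpha(t_1+t_2)=\alpha(t_1)\circ\alpha(t_2)$ follows verbatim from the first computation in the proof of the $\tilde\alpha$-case, since $\exp(-(t_1+t_2)H_{\Gamma_i,P}^{+}H_{\Gamma_i,P})=\exp(-t_1H_{\Gamma_i,P}^{+}H_{\Gamma_i,P})\exp(-t_2H_{\Gamma_i,P}^{+}H_{\Gamma_i,P})$. The $^{*}$-compatibility uses the central property $f_{\Gamma_i}(\ho_{\Gamma_i})=f_{\Gamma_i}(\go_{\Gamma_i}^{-1}\ho_{\Gamma_i}\go_{\Gamma_i})$ together with self-adjointness of $H_{\Gamma_i,P}^{+}H_{\Gamma_i,P}$ to exchange $\ho_{\Gamma_i}^{-1}$ and the exponential. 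Preservation of the convolution product is an immediate consequence of left-invariance of the Haar measure $\mu_{\breve S_{\disc},\Gamma_i}$ on $\Ab^{\disc}_{\Gamma_i}$, after the change of variables $\go_{\Gamma_i}\mapsto \exp(-tH_{\Gamma_i,P}^{+}H_{\Gamma_i,P})\go_{\Gamma_i}$. Finally point-norm continuity of $t\mapsto \alpha_{H_{\Gamma_i,P}^{+}H_{\Gamma_i,P}}(t)(f_{\Gamma_i})$ is standard: smooth elements are dense, on smooth elements the map is norm-differentiable by the chain rule applied to $\exp$, and a uniform bound then gives $t$-continuity everywhere.

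For the KMS part I would copy the computation that was used to check $\tilde\omega^{\Gamma_i}_{\LD}$ is a KMS state with respect to $\tilde\alpha_{\ha_{\disc,\Gamma_i}}$. Concretely, on central $k_{\Gamma_i},f_{\Gamma_i}\in\ZD(\Ab^{\disc}_{\Gamma_i})$ one evaluates
\begin{equation*}
\begin{aligned}
(k_{\Gamma_i}\ast \alpha_{H_{\Gamma_i,P}^{+}H_{\Gamma_i,P}}(i\beta)(f_{\Gamma_i}))(\ho_{\Gamma_i}) &= \int\! d\mu\,(\go_{\Gamma_i})\, k_{\Gamma_i}(\go_{\Gamma_i})\, f_{\Gamma_i}(\go_{\Gamma_i}^{-1}\exp(-i\beta H_{\Gamma_i,P}^{+}H_{\Gamma_i,P})\ho_{\Gamma_i}) \\
&= \int\! d\mu\,(\go_{\Gamma_i})\, k_{\Gamma_i}(\go_{\Gamma_i})\, f_{\Gamma_i}(\go_{\Gamma_i}^{-1}\ho_{\Gamma_i}\exp(-i\beta H_{\Gamma_i,P}^{+}H_{\Gamma_i,P})),
\end{aligned}
\end{equation*}
where the second equality is the centrality step pushing the exponential through $\ho_{\Gamma_i}$. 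Taking the modulus squared and integrating against $\dif\mu$ in $\ho_{\Gamma_i}$, the exponential factor disappears (bi-invariance of the Haar measure on $G^{\vert\Gamma_i\vert}$), and a second change of variables $\ho_{\Gamma_i}\leftrightarrow\go_{\Gamma_i}$ rearranges the result as $\tilde\omega^{\Gamma_i}_{\LD}(f_{\Gamma_i}\ast k_{\Gamma_i})$. This is exactly the structure of the earlier KMS computation.

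The only genuinely new point — and I expect it to be the main obstacle — is the passage to imaginary time $t=i\beta$. For the Lie-algebra generator $\ha_{\disc,\Gamma_i}$ this was covered by the statement that elements of $\ZD(\Ab^{\disc}_{\Gamma_i})$ are entire analytic for $\beta_{\ha_{\disc},\Gamma_i}$. Here I would need the same statement for $\alpha_{H_{\Gamma_i,P}^{+}H_{\Gamma_i,P}}$; this is not automatic because $H_{\Gamma_i,P}^{+}H_{\Gamma_i,P}$ is a polynomial combination of holonomies rather than a pure Lie element. The plan would be to realise the action on the GNS Hilbert space via $\dif u(H_{\Gamma_i,P}^{+}H_{\Gamma_i,P})$ (as in Definition \ref{defi represlocholfluxcross}), to note that this operator is self-adjoint and non-negative because of its $A^{+}A$-form, and then to apply the spectral theorem to obtain a well-defined strip analytic continuation on the dense subspace generated by smooth, compactly supported spectral vectors — which is large enough to contain $\ZD(\Ab^{\disc}_{\Gamma_i})$ by the smoothness argument used in Proposition \ref{defi derivonmatrix2}. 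Once analyticity is secured, the computation above closes the KMS identity.
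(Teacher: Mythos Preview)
Your proposal is correct and follows exactly the route the paper intends: the paper gives no separate proof here because the action $\alpha_{H_{\Gamma_i,P}^{+}H_{\Gamma_i,P}}$ is literally the action $\tilde\alpha_{\ha_{\disc,\Gamma_i}}$ with the particular choice $\ha_{\disc,\Gamma_i}=H_{\Gamma_i,P}^{+}H_{\Gamma_i,P}$, so the earlier propositions on the $C^*$-dynamical system and the KMS property apply verbatim.

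Your last paragraph, however, manufactures a difficulty that is not there. The paper states explicitly, just before defining $\alpha_{H_{\Gamma_i,P}^{+}H_{\Gamma_i,P}}$, that $H_{\Gamma,P}^{+}H_{\Gamma,P}$ is an element of the enveloping Lie algebra $\LAb_{\disc,\Gamma}$, and the earlier analyticity statement (``The set $\ZD(\Ab^{\disc}_{\Gamma_i})$ contains only entire analytic elements for $\beta_{\ha_{\disc,\Gamma_i}}$'') was already formulated for arbitrary $\ha_{\disc,\Gamma_i}\in\LAb_{\disc,\Gamma}$, not merely for pure Lie-algebra elements. Hence the passage to imaginary time $t=i\beta$ requires no new spectral-theoretic argument; it is covered by the same blanket statement you used in the $\tilde\alpha$ case. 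Your proposed GNS/self-adjointness detour is valid but superfluous.
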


\begin{prop} \label{prop noninvstatehol}
Let the triples  $(\ZD(\Ab_{\Gamma_i}^{\disc}),\Diff(\PD_{\Gamma}^{\breve S_{\disc}}),\zeta)$ and $(\ZD(\Ab^{\disc}_{\Gamma_i}),\R,\alpha_{H_{\Gamma_i,P}^+H_{\Gamma_i,P}})$ be $C^*$-dynamical systems. 

Then the automorphisms $\zeta$ and $\alpha_{H_{\Gamma_i,P}^+H_{\Gamma_i,P}}$ on $\ZD(\Ab^{\disc}_{\Gamma_i})$ do not commute.  
\end{prop}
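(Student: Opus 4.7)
The strategy is to exhibit a specific graph, bisection, and test function on which the two automorphisms visibly disagree, rather than trying to establish a structural result. The essential input is that the generator $H_{\Gamma_i,P}^+H_{\Gamma_i,P}$ is not an abstract Lie-algebra element coming from outside the configuration space: by its construction via $\exp(H_{\Gamma_i,P}) = (\ho_{\Gamma_i}(l)-\ho_{\Gamma_i}(l)^{-1})\ho_{\Gamma_i}(\gamma)$ it is rigidly attached to the particular loop $l$ and path $\gamma$ in $\Gamma_i$ used to define it, whereas $\zeta_\sigma$ right-translates these very edges through the map $R_\sigma$. One therefore expects a genuine obstruction to commuting the two actions, and the plan is to make this obstruction visible by direct computation.

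First I would expand both compositions on a generic $f_{\Gamma_i}\in\ZD(\Ab^{\disc}_{\Gamma_i})$ using the defining formulas. On the one hand
\[(\zeta_\sigma\circ\alpha_{H_{\Gamma_i,P}^+H_{\Gamma_i,P}}(t))(f_{\Gamma_i})(\ho_{\Gamma_i})=f_{\Gamma_i}\bigl(\exp(-tH_{\Gamma_i,P}^+H_{\Gamma_i,P})\cdot R_\sigma(\ho_{\Gamma_i})\bigr),\]
while on the other hand
\[(\alpha_{H_{\Gamma_i,P}^+H_{\Gamma_i,P}}(t)\circ\zeta_\sigma)(f_{\Gamma_i})(\ho_{\Gamma_i})=f_{\Gamma_i}\bigl(R_\sigma(\exp(-tH_{\Gamma_i,P}^+H_{\Gamma_i,P})\cdot\ho_{\Gamma_i})\bigr).\]
These agree only when the right-translation $R_\sigma$ intertwines with left multiplication by the fixed element $\exp(-tH_{\Gamma_i,P}^+H_{\Gamma_i,P})$. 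The second step is to observe that, because $H_{\Gamma_i,P}^+H_{\Gamma_i,P}$ is itself built from $\ho_{\Gamma_i}(l)$ and $\ho_{\Gamma_i}(\gamma)$, the natural way to carry the left factor through $R_\sigma$ is to transform $H_{\Gamma_i,P}$ into a new generator $H_{\Gamma_i,P}^\sigma$ in which $l$ and $\gamma$ are replaced by $R_\sigma(l)$ and $R_\sigma(\gamma)$; thus the two sides differ precisely by the replacement $H_{\Gamma_i,P}^+H_{\Gamma_i,P}\leadsto(H_{\Gamma_i,P}^\sigma)^+H_{\Gamma_i,P}^\sigma$.

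To finish I would produce a minimal counterexample. Take $\Gamma_i=\{l,\gamma\}$ with $l$ a loop at a vertex $v$ and $\gamma$ a separate path, and choose a bisection $\sigma\in\mathfrak{B}(\PD^{\breve S_{\disc}}_{\Gamma_i})$ with $\sigma(v)$ a non-trivial path such that $R_\sigma(l)=l\circ\sigma(v)$ is no longer a loop at $v$. Then $\exp(H_{\Gamma_i,P}^\sigma)=(\ho(l\circ\sigma(v))-\ho(l\circ\sigma(v))^{-1})\ho(\gamma\circ\sigma(t(\gamma)))$ is algebraically inequivalent, as a function on $\Ab^{\disc}_{\Gamma_i}$, to $\exp(H_{\Gamma_i,P})$. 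Differentiating the two compositions at $t=0$ on a test function sensitive to the $l$-coordinate (for instance a matrix coefficient of a non-trivial representation evaluated on $\ho(l)$) I would compare the two symmetric derivations and exhibit that they disagree; picking $\sigma$ to act trivially on every vertex other than $v$ localises the discrepancy and makes the comparison explicit.

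The main obstacle is conceptual rather than computational: $\exp(H_{\Gamma_i,P})$ is defined only formally, as an element of the enveloping algebra of $\LAb_{\disc,\Gamma_i}$ built from the group-valued coordinate functions, so the step of showing that $\zeta_\sigma$ maps $H_{\Gamma_i,P}^+H_{\Gamma_i,P}$ to a genuinely different enveloping-algebra element requires careful handling of the identification between polynomial expressions in $\ho_{\Gamma_i}(l),\ho_{\Gamma_i}(\gamma)$ and elements of $\LAb_{\disc,\Gamma_i}$. In particular one must rule out accidental coincidences where, for very symmetric choices of $\sigma$, the element $(H_{\Gamma_i,P}^\sigma)^+H_{\Gamma_i,P}^\sigma$ happens to equal $H_{\Gamma_i,P}^+H_{\Gamma_i,P}$; this is avoided by the minimal counterexample chosen above, where $R_\sigma$ changes the loop status of $l$ and hence cannot leave the generator invariant.
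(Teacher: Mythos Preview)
The paper gives no proof of this proposition; it is stated and immediately followed by the remark that the difficulty disappears once $\exp(tH_{\Gamma,P}^+H_{\Gamma,P})$ is assumed to lie in the center $\bar\ZzD_{\breve S_{\disc},\Gamma}$ of $\Ab^{\disc}_\Gamma$. That remark is the paper's implicit argument: the obstruction to commutation is non-centrality of the one-parameter group in $G^{|\Gamma|}$.

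Your step 1 is fine: the two pullback formulas are correct. The gap is in step 2. In the paper's setup $H_{\Gamma_i,P}^+H_{\Gamma_i,P}$ is a \emph{fixed} element of the enveloping algebra $\LAb_{\disc,\Gamma_i}$ of the Lie algebra of $\Ab^{\disc}_{\Gamma_i}\simeq G^{|\Gamma_i|}$, so $k(t):=\exp(-tH_{\Gamma_i,P}^+H_{\Gamma_i,P})$ is a fixed curve in $G^{|\Gamma_i|}$ and $\alpha$ is left multiplication by it. It is not a function of $\ho$, and it is not carried by $\zeta_\sigma$ to a ``transformed generator'' $H^\sigma$; your intertwining relation $\zeta_\sigma(k\cdot\ho)=k^\sigma\cdot\zeta_\sigma(\ho)$ simply does not hold for any reasonable $k^\sigma$. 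The actual mechanism is simpler. In coordinates, $\zeta_\sigma$ sends $(g_i)$ to $(g_i\,W_i(g))$ where each $W_i$ is a word in the $g_j$'s determined by $\sigma$. Composing with $L_{k}:(g_i)\mapsto(k_ig_i)$ gives
\[
(\zeta_\sigma\circ L_k)(g)_i=k_ig_i\,W_i(kg),\qquad (L_k\circ\zeta_\sigma)(g)_i=k_ig_i\,W_i(g),
\]
and these differ as soon as some $W_i$ involves a coordinate $g_j$ with $k_j\neq e$ (and, on central functions, as soon as $k_j$ fails to be central in $G$). This is exactly why the paper singles out centrality of $\exp(tH^+H)$ as the condition resolving the problem.

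Your counterexample strategy would still succeed, but the justification should be rewritten along these lines. The ``loop status of $l$'' is irrelevant: what matters is only that the bisection $\sigma$ produces at least one nontrivial word $W_i$ and that the $j$-th component of $\exp(-tH^+H)$ entering that word is non-central in $G$. A two-edge example with $\zeta_\sigma(g_1,g_2)=(g_1g_2,g_2)$ and $k_2\notin Z(G)$ already exhibits the failure, with no need to invoke how $H$ was built from holonomies.
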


\begin{defi}
Denote the center of the compact Lie group $\Ab^{\disc}_{\Gamma}$ by $\bar\ZzD_{\breve S_{\disc},\Gamma}$. 
\end{defi}

The problem in proposition \ref{prop noninvstatehol} is solved, if it is assumed that, $\exp(tH_{\Gamma,P}^+H_{\Gamma,P})\in \bar\ZzD_{\breve S_{\disc},\Gamma}$ holds for all $t\in\R$. The state $\tilde\omega^{\Gamma}_{\LD}$ is $\Diff(\PD^{\breve S_{\disc}}_\Gamma)$-invariant. Furthermore this state is $\R$-invariant for all quantum Hamilton parts $H_{\Gamma_i,P}$ such that $\exp(tH_{\Gamma_i,P}^+H_{\Gamma_i,P})\in \bar\ZzD_{\breve S_{\disc},\Gamma}$ yields for all $t\in\R$ and $\Gamma_i$ being a subgraph of $\Gamma$.

Notice that, $(\alpha_{H_{\Gamma,P}^+H_{\Gamma,P}})(A_{\bar\Gamma})=A_{\bar\Gamma}$ holds for all $A_{\bar\Gamma}\in C(\Ab_{\bar\Gamma})$ and
\beqs (\alpha_{H_{\Gamma,P}^+H_{\Gamma,P}})(f_\Gamma\otimes E_{\breve S_{\disc}}^+(\Gamma)E_{\breve S_{\disc}}(\Gamma))
=(\alpha_{H_{\Gamma,P}^+H_{\Gamma,P}})(f_\Gamma)\otimes E_{\breve S_{\disc}}^+(\Gamma)E_{\breve S_{\disc}}(\Gamma)
\eqs for all $f_\Gamma\in \ZD(\Ab^{\disc}_{\Gamma})$ and $E_{\breve S_{\disc}}^+(\Gamma)E_{\breve S_{\disc}}(\Gamma)\in\mathfrak{E}_{\breve S_{\disc},\Gamma}$. Consequently, $\alpha_{H_{\Gamma,P}^+H_{\Gamma,P}}\in\Aut(\ZD(\Ab^{\disc}_\Gamma)\rtimes_L \mathfrak{E}_{\breve S_{\disc},\Gamma})$.

\begin{theo}Let $\{\Gamma_i\}$ be an inductive family of graphs, $\breve S$ be a set of surfaces and $\breve S_{\disc}$ a set of discretised surfaces associated to $\breve S$ such that the assumptions in definition \ref{defi locfluxop} are satisfied. For a fixed graph $\Gamma$ let $(C(\Ab_{\loc}),\Diff(\PD_{\bar\Gamma}),\zeta)$ and $(\ZD(\Ab^{\breve S_{\disc}}),\Diff(\PD_{\Gamma}^{\breve S_{\disc}}),\zeta)$ be two $C^*$-dynamical systems. 

Moreover, let $\{H_{\Gamma_i,P}\}$ be a family of quantum Hamilton parts restricted to graphs such that each element $\exp(tH_{\Gpi,P}^+H_{\Gpi,P})\in \bar\ZzD_{\breve S_{\disc},\Gamma_i}$ for all $t\in\R$ and all graphs $\Gpi$ being a subgraph of $\Gamma_i$.

Let $\{(\ZD(\Ab^{\disc}_{\Gamma_i}),\R,\alpha_{H_{\Gamma_i,P}^+H_{\Gamma_i}})\}$ be a family of $C^*$-dynamical systems. Finally, let $\hat\omega_{\LD}$ be the limit state on  the $^*$-algebra $\ZD(\Ab^{\disc})\rtimes_L \mathfrak{E}_{\breve S_{\disc}}$ of the states $\{\hat\omega^{\Gamma_i}_{\LD}\}$ of the families $\{\ZD(\Ab^{\disc}_{\Gamma_i})\rtimes_L \mathfrak{E}_{\breve S_{\disc},\Gamma_i}\}$ of $^*$-algebras. The state $\hat\omega^{\Gamma_i}_{\LD}$ is a KMS-state for $\ZD(\Ab^{\disc}_{\Gamma_i})$ at value $\beta\in\R$ for $\alpha_{H_{\Gamma_i,P}^+H_{\Gamma_i,P}}$ and such that
\beqs \hat\omega^{\Gamma_i}_{\LD}\circ\alpha_{H_{\Gamma_i,P}^+H_{\Gamma_i,P}}=\hat\omega^{\Gamma_i}_{\LD}
\eqs for a graph $\Gamma_i$ and all $1\leq i<\infty$. 

Then for a fixed graph $\Gamma$ there exists a $\Diff(\PD^{\breve S_{\disc}}_\Gamma)$- and $\Diff(\PD_{\bar\Gamma})$-invariant state on $C(\Ab_{\loc}) \otimes\ZD(\Ab^{\breve S_{\disc}}) \rtimes_L \mathfrak{E}_{\breve S_{\disc}}$, which is a product state on a state $\omega_M$ of $C(\Ab_{\loc})$ and a  state $\hat\omega_{\LD}$ of $\ZD(\Ab^{\breve S_{\disc}}) \rtimes_L \mathfrak{E}_{\breve S_{\disc}}$.
\end{theo}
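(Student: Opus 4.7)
\begin{proofo}[Proof proposal:]
The plan is to construct the claimed state as a product of two separately built invariant states, exploiting the fact that the algebra factors precisely along the decomposition $\Gamma = \bar\Gamma \cup \{\text{paths touching } \breve S_{\disc}\}$, and that the two group actions $\Diff(\PD_{\bar\Gamma})$ and $\Diff(\PD_\Gamma^{\breve S_{\disc}})$ act on disjoint tensor factors.

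First I would build the state $\hat\omega_{\LD}$ on $\ZD(\Ab^{\breve S_{\disc}})\rtimes_L \mathfrak{E}_{\breve S_{\disc}}$. The previous corollary already extends $\tilde\omega^{\Gamma_i}_{\LD}$ from $\ZD(\Ab^{\disc}_{\Gamma_i})$ to the cross-product with $\bar\E^{\loc}_{\breve S_{\disc},\Gamma_i}$ via the GNS construction, and the computation there shows that $\hat\omega^{\Gamma_i}_{\LD}(f_\Gamma\otimes iE_{S_{\disc}}(\Gamma)^+E_{S_{\disc}}(\Gamma)) = \tilde\omega^{\Gamma_i}_{\LD}(\tfrac{1}{2}\delta_{S_{\disc},\Gamma}(f_\Gamma))$. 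I would restrict to the center $\mathfrak{E}_{\breve S_{\disc},\Gamma}$ and verify compatibility along the inductive family $\{\PD_{\Gamma_i}\}$ by using the identity $\tilde\omega^{\Gamma_{i+1}}_{\LD}(\delta_{S_{\disc},\Gamma_{i+1}}(f_{\Gamma_{i+1}})) = \tilde\omega^{\Gamma_i}_{\LD}(\delta_{S_{\disc},\Gamma_i}(f_{\Gamma_i}))$ established before the corollary, so that the states assemble to a well-defined limit state $\hat\omega_{\LD}$. Invariance under $\Diff(\PD^{\breve S_{\disc}}_\Gamma)$ then follows at the finite level: each $\tilde\omega^{\Gamma_i}_{\LD}$ is an $L^2$-norm with respect to the Haar measure on $\Ab^{\disc}_{\Gamma_i} \cong G^M$, which is bi-invariant; since $\zeta_\sigma$ is implemented by right-translations $R_\sigma$ on the configuration space (i.e.\ by right Haar-translations on $G^M$) and $\sigma\mapsto E^+_{S_{\disc}}(\Gamma)E_{S_{\disc}}(\Gamma)$ is fixed by the action as noted at the end of section \ref{dyn}, the state is preserved on both tensor factors of the cross-product.

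Next I would construct $\omega_M$ on $C(\Ab_{\loc})$. Since $\bar\Gamma$ carries no intersection with $\breve S_{\disc}$, the usual Ashtekar--Lewandowski-type product of normalized Haar measures on $C(\Ab_{\bar\Gamma_i})$ defines a state on each member of the inductive family $\{C(\Ab_{\bar\Gamma_i}),\beta_{\bar\Gamma_i,\bar\Gamma_{i+1}}\}$; the consistency of the unit-preserving $^*$-homomorphisms $\beta_{\bar\Gamma_i,\bar\Gamma_{i+1}}$ with the Haar integral yields a projective system of states whose inductive limit extends by continuity to a state $\omega_M$ on $C(\Ab_{\loc})$. Its invariance under $\Diff(\PD_{\bar\Gamma})$ follows again by bi-invariance of the Haar measure together with the fact that a graph-diffeomorphism in $\mathfrak{B}(\PD_{\bar\Gamma})$ induces a right-translation on $\Ab_{\bar\Gamma}$, which preserves the integral.

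Finally, I would take the product state $\omega := \omega_M \otimes \hat\omega_{\LD}$ on the minimal $C^*$-tensor product $C(\Ab_{\loc}) \otimes \ZD(\Ab^{\breve S_{\disc}})\rtimes_L \mathfrak{E}_{\breve S_{\disc}}$, using that for states on $C^*$-algebras the minimal tensor product always admits the canonical product state. Because the two groups $\Diff(\PD_{\bar\Gamma})$ and $\Diff(\PD^{\breve S_{\disc}}_\Gamma)$ act only on their respective tensor factor (the first acts trivially on the localised factor since its bisections are supported on $V^{\breve S_{\disc}}$, the second acts trivially on $C(\Ab_{\loc})$ because its bisections do not touch $\bar\Gamma$), the product state is automatically invariant under both. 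The main obstacle in executing this plan is not any of the invariance checks (which reduce to Haar bi-invariance) but rather verifying that $\hat\omega_{\LD}$ is genuinely well-defined on the cross-product with $\mathfrak{E}_{\breve S_{\disc}}$ after passing to the inductive limit: one must show that the difference-operator structure of $\tilde E_{S_{\disc}}(\Gamma_{i+1})^+\tilde E_{S_{\disc}}(\Gamma_{i+1})$ (cf.\ corollary \ref{cor commdiscrflux}) is compatible with the connecting $^*$-homomorphisms $\check\beta_{\Gamma_i,\Gamma_{i+1}}$ on the enveloping flux algebras, so that the limit truly exists as a single state on the full cross-product rather than merely on a dense $^*$-subalgebra.
\end{proofo}
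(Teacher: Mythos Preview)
Your proposal is correct and follows essentially the same route as the paper: the theorem is stated there without an explicit proof, being understood as an assembly of the preceding results (the KMS-state proposition for $\tilde\omega^{\Gamma_i}_{\LD}$, the corollary extending it to the cross-product, the remark that $\hat\omega_{\LD}^\Gamma$ is already $\Diff(\PD^{\breve S_{\disc}}_\Gamma)$-invariant, and the Haar-measure construction of $\omega_M$), and you have reconstructed precisely this assembly. The one point where you go slightly beyond the paper is in flagging the compatibility of the limit state with the connecting maps $\check\beta_{\Gamma_i,\Gamma_{i+1}}$ on the flux side; the paper treats this as granted by the inductive-family setup in definition~\ref{defi holfluxcross}, whereas you (rightly) identify it as the place where the argument needs care.
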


The state on the localised holonomy-flus cross product $^*$-algebra depends on the family of KMS-states and the state $\omega_M$ of $C(\Ab_{\loc})$. Notice that, $\omega_M$ need not be $\bar G_{\breve S_{\disc},\bar\Gamma}$-invariant for any graph $\bar\Gamma$. This is indeed distinguishing from the results of the analytic holonomy $C^*$-algebra, where the state is required to be invariant. For example refer to \cite[Cor.: 3.60]{Kaminski1}\cite[Cor.: 6.4.3]{KaminskiPHD}. But since there is no action of the fluxes on this part of the localised holonomy-flux cross-product $^*$-algebra, this invariance is not required.

Contrary to corollary \ref{cor commdiscrflux} consider the following remark.
\begin{rem}
The limit
\beqs \tilde\delta_{S_{\disc},P}(f)&:
=i \lim_{j\rightarrow\infty}\bra H_{\Gpj,P}^+H_{\Gpj,P},f\ket
\eqs for every $f\in \DD(\tilde \delta_{S_{\disc},P})$ and every element $\exp(H_{\Gpj,P}^+H_{\Gpj,P})\in\bar \ZzD_{\breve S_{\disc},\Gamma_k}$ and $\Gpj<\Gamma_k$, $i,k\in\N$, is not well-defined in the norm topology. 
\end{rem}

Until now only quantum Hamilton parts restricted to certain subgraphs of a graph are considered. Hence, the full quantum Hamilton part for a family $\{\Gamma_i\}$ of graphs is given by
\beqs H_P^+H_P:=\lim_{\Gamma_i\rightarrow\Gamma_\infty}\sum_{\Gpi\in\PD_{\Gamma_i}}H_{\Gpi,P}^+H_{\Gpi,P}
\eqs

\begin{prop}\label{prop fulldiffeobroken}Let $\{\Gamma_i\}$ be an inductive family of graphs, $\breve S$ be a set of surfaces and $\breve S_{\disc}$ a set of discretised surfaces associated to $\breve S$ such that the assumptions in definition \ref{defi locfluxop} are satisfied. 

Moreover, let $\{H_{\Gamma_i,P}\}$ be a family of quantum Hamilton parts restricted to graphs such that each element $\exp(tH_{\Gpi,P}^+H_{\Gpi,P})\in \bar\ZzD_{\breve S_{\disc},\Gamma_i}$ for all $t\in\R$ and each subgraph $\Gpi$ of $\Gamma_i$.

Let $\{(\ZD(\Ab^{\disc}_{\Gamma_i}),\R,\alpha_{H_{\Gamma_i,P}^+H_{\Gamma_i,P}})\}$ be a family of $C^*$-dynamical systems. Finally, let $\{\hat\omega^{\Gamma_i}_{\LD,\mathfrak{B}}\}$ be a family of states such that $\hat\omega^{\Gamma_i}_{\LD,\mathfrak{B}}$ is a state on the $^*$-algebra  $\ZD(\Ab^{\disc}_{\Gamma_i})\rtimes_L \mathfrak{E}_{\breve S_{\disc},\Gamma_i}$. 

Then the limit state $\hat\omega_{\LD,P}$ on $\ZD(\Ab^{\breve S_{\disc}})\rtimes_L \mathfrak{E}_{\breve S_{\disc}}$, which is given by
 \beqs  \hat\omega_{\LD,P}(A)
:=\lim_{\Gamma_i\rightarrow\Gamma_\infty}\frac{1}{ \vert\PD_{\Gamma_i}\vert}\sum_{\Gpi\in\PD_{\Gamma_i}}\hat\omega^{\Gamma_i}_{\LD}\left(\alpha_{H_{\Gpi,P}^+H_{\Gpi,P}}(t)(A)\right)
\eqs and where $\vert\PD_{\Gamma_i}\vert$ denotes the number of subgraphs of a graph $\Gamma_i$, and which is $\R$-invariant w.r.t. the automorphism group $t\mapsto \alpha_{H_P^+H_P}(t)$, for $A\in\ZD(\Ab^{\breve S_{\disc}})\rtimes_L \mathfrak{E}_{\breve S_{\disc}}$, does not converge in weak $^*$-topology.
\end{prop}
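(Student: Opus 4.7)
\begin{proofi}[Proof proposal: ]
The plan is to show non-convergence by exhibiting an algebra element $A$ whose averaged sequence fails to converge in $\mathbb{C}$. The strategy exploits a locality decomposition of the subgraph sum combined with the absence of any compatibility condition imposed on the family $\{\hat\omega^{\Gamma_i}_{\LD,\mathfrak{B}}\}$.

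First, I would fix $i_0 \in \mathbb{N}$ and select a non-scalar element $A$ in the $^*$-subalgebra $\ZD(\Ab^{\disc}_{\Gamma_{i_0}}) \rtimes_L \mathfrak{E}_{\breve S_{\disc}, \Gamma_{i_0}}$, viewed inside $\ZD(\Ab^{\breve S_{\disc}}) \rtimes_L \mathfrak{E}_{\breve S_{\disc}}$. For each $i > i_0$, partition $\PD_{\Gamma_i} = \PD_{\Gamma_i}^{\mathrm{loc}} \sqcup \PD_{\Gamma_i}^{\mathrm{out}}$, where $\PD_{\Gamma_i}^{\mathrm{loc}}$ collects those subgraphs sharing at least one edge with $\Gamma_{i_0}$ and $\PD_{\Gamma_i}^{\mathrm{out}}$ is its complement.

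Second, for $\Gpi \in \PD_{\Gamma_i}^{\mathrm{out}}$ the central element $\exp(-tH_{\Gpi,P}^+H_{\Gpi,P}) \in \bar\ZzD_{\breve S_{\disc}, \Gamma_i}$ has trivial components on the edges of $\Gamma_{i_0}$, because $H_{\Gpi,P}^+H_{\Gpi,P}$ is built from holonomies along paths of $\Gpi$. Since $A$ depends only on those components labelled by edges of $\Gamma_{i_0}$ and on fluxes in $\mathfrak{E}_{\breve S_{\disc}, \Gamma_{i_0}}$ (on which $\alpha$ acts trivially by the formula in the preceding theorem), $\alpha_{H_{\Gpi,P}^+H_{\Gpi,P}}(t)(A) = A$ for all such $\Gpi$. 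Consequently
\begin{equation*}
\omega_i(A) \;=\; \frac{|\PD_{\Gamma_i}^{\mathrm{out}}|}{|\PD_{\Gamma_i}|}\,\hat\omega^{\Gamma_i}_{\LD}(A) \;+\; \frac{1}{|\PD_{\Gamma_i}|}\sum_{\Gpi \in \PD_{\Gamma_i}^{\mathrm{loc}}} \hat\omega^{\Gamma_i}_{\LD}(\alpha_{H_{\Gpi,P}^+H_{\Gpi,P}}(t)(A)),
\end{equation*}
and the second term is bounded above in absolute value by $(|\PD_{\Gamma_i}^{\mathrm{loc}}|/|\PD_{\Gamma_i}|)\|A\|$.

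Third, I would establish the counting estimate $|\PD_{\Gamma_i}^{\mathrm{loc}}|/|\PD_{\Gamma_i}| \to 0$ as $i\to\infty$: since each member of $\PD_{\Gamma_i}^{\mathrm{loc}}$ must contain at least one of the finitely many edges of $\Gamma_{i_0}$, an inclusion/exclusion argument shows that this fraction is at most $|\Gamma_{i_0}|/|\Gamma_i|\to 0$ in the relevant limit. Hence $\omega_i(A)$ and $\hat\omega^{\Gamma_i}_{\LD}(A)$ have the same convergence behaviour. Since no compatibility relation is imposed on the family $\{\hat\omega^{\Gamma_i}_{\LD,\mathfrak{B}}\}$, one may choose the states so that the sequence $\{\hat\omega^{\Gamma_i}_{\LD}(A)\}_i$ oscillates between two distinct cluster values, and this directly witnesses the failure of weak $^*$-convergence of $\{\omega_i\}_i$. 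The main obstacle is the combinatorial counting step: the growth rate of $|\PD_{\Gamma_i}|$ relative to $|\PD_{\Gamma_i}^{\mathrm{loc}}|$ depends sensitively on whether one counts subgraphs of the finite graph system via generating edges only or also via composed independent paths, and a clean argument requires the inductive structure of $\PD_{\Gamma_i}$ to dominate over the number of subpaths created by composition.
\end{proofi}
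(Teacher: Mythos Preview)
The paper does not supply a proof of this proposition; it is stated and immediately followed by a discussion of its consequences (non--strong-continuity of $t\mapsto\alpha_{H_P^+H_P}(t)$ and non-convergence of the associated derivation). So there is no ``paper's proof'' to compare your argument to. What one can glean from the surrounding text is that the intended mechanism is the \emph{contrast} with the next proposition: for the localised flux operators $\tilde E_{S_{\disc},\Gpi}$ the analogous average \emph{does} converge, precisely because $\tilde E_{S_{\disc},\Gpi}$ acts trivially on all lower levels; the Hamilton parts $H_{\Gpi,P}$ lack this localisation, and that is what is supposed to spoil convergence.

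Your proposal has a genuine gap at the counting step, and you already sense it is the weak point. The estimate $|\PD_{\Gamma_i}^{\mathrm{loc}}|/|\PD_{\Gamma_i}|\le |\Gamma_{i_0}|/|\Gamma_i|$ is false in general. If one models $\PD_{\Gamma_i}$ crudely as the collection of subsets of the $n_i$ generating edges, then the subgraphs avoiding all $m:=|\Gamma_{i_0}|$ fixed edges number $2^{\,n_i-m}$, so $|\PD_{\Gamma_i}^{\mathrm{loc}}|/|\PD_{\Gamma_i}|=1-2^{-m}$, a \emph{fixed} positive constant independent of $i$. Allowing composed paths only makes the combinatorics worse, not better. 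Hence your reduction of $\omega_i(A)$ to $\hat\omega^{\Gamma_i}_{\LD}(A)$ up to a vanishing error does not go through, and the whole strategy collapses at that point.

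There is a second issue with your final step. The statement introduces an arbitrary family $\{\hat\omega^{\Gamma_i}_{\LD,\mathfrak{B}}\}$, but the displayed formula for $\hat\omega_{\LD,P}$ uses the specific Haar-type states $\hat\omega^{\Gamma_i}_{\LD}$ constructed earlier in the section, which \emph{do} form a compatible family along the inductive system. If those are the states actually meant (the notation in the paper is inconsistent here), then you are not free to ``choose the states so that the sequence oscillates,'' and the non-convergence must come from the automorphisms $\alpha_{H_{\Gpi,P}^+H_{\Gpi,P}}$ themselves rather than from an arbitrary choice of states. A correct argument should instead exploit that $H_{\Gpi,P}$ is \emph{not} a difference operator across levels, so that the contributions from the various $\Gpi$ do not telescope in the way they do in the subsequent (proved) proposition for $\tilde E_{S_{\disc},\Gpi}$.
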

This proposition implies that, the one-parameter group  $t\mapsto\alpha_{H_P^*H_P}(t)$ of $^*$-automorphisms is not strongly continuous. Consequently, the derivation $\delta_{P}$ on $\ZD(\Ab^{\breve S_{\disc}})$, which is given by
\beqs \delta_{P}(f):=\lim_{t\rightarrow 0}\frac{1}{t}\Big(\alpha_{H_P^*H_P}(t)(f)-f\Big)=\lim_{t\rightarrow 0}\text{  } \lim_{\Gamma_i\rightarrow\Gamma_\infty}\frac{1}{t \vert\PD_{\Gamma_i}\vert}\Big(\sum_{\Gpi\in\PD_{\Gamma_i}}\alpha_{H_{\Gp,P}^*H_{\Gp,P}}(t)(f)-f\Big)
\eqs
for $f\in\ZD(\Ab^{\breve S_{\disc}})$, is not converging in norm.

Now, recall the operator $[\ho_{A}(\gamma)^{-1},\QD(V)_{\disc,\Gamma}]$, whenever $\gamma\in\Gamma$ and where $\QD(V)_{\disc,\Gamma}$ is sum over finite products of discretised flux operators for a surface $S_{\disc}$ and a graph $\Gamma$. Then the quantum Hamilton restricted to a graph contains also elements of $\bar\go^{\loc}_{\breve S_{\disc},\Gamma}$. 

Consequently define the discretised and localised quantum flux operator associated to a surface $S_{\disc}$ and a family $\{\Gamma_i\}$ of graphs by
\beqs \tilde E_{S_{\disc}}^+\tilde E_{S_{\disc}}:=\lim_{\Gamma_i\rightarrow\Gamma_\infty}\sum_{\Gpi\in\PD_{\Gamma_i}}\tilde E_{S_{\disc},\Gpi}^+\tilde E_{S_{\disc},\Gpi}
\eqs where $E_{S_{\disc},\Gpi}:=E_{S_{\disc}}(\Gpi)\in\bar\go_{\breve S_{\disc},\Gamma_i}^{\loc}$ for every subgraph $\Gpi$ of $\Gamma_i$.

\begin{prop}Let $\{\Gamma_i\}$ be an inductive family of graphs, $\breve S$ be a set of surfaces and $\breve S_{\disc}$ a set of discretised surfaces associated to $\breve S$ such that the assumptions in definition \ref{defi locfluxop} are satisfied. 

Let $\Big\{(\ZD(\Ab^{\disc}_{\Gamma_i}),\R,\alpha_{\tilde E_{ S_{\disc},\Gamma_i}^+\tilde E_{S_{\disc},\Gamma_i}})\Big\}$ be a family of $C^*$-dynamical systems. Morever, let $\{\hat\omega^{\Gamma_i}_{\LD}\}$ be a family of states such that $\hat\omega^{\Gamma_i}_{\LD}$ is a state on the $^*$-algebra  $\ZD(\Ab^{\disc}_{\Gamma_i})\rtimes_L \mathfrak{E}_{\breve S_{\disc},\Gamma_i}$. 

Then the limit $\hat\omega_{\LD,E}$ on $\ZD(\Ab^{\breve S_{\disc}})\rtimes_L \mathfrak{E}_{\breve S_{\disc}}$, which is given by
 \beqs  \hat\omega_{\LD,E}(A)
:=\lim_{\Gamma_i\rightarrow\Gamma_\infty}\frac{1}{ \vert\PD_{\Gamma_i}\vert}\sum_{\Gpi\in\PD_{\Gamma_i}}\hat\omega^{\Gamma_i}_{\LD}\left(\alpha_{\tilde E_{S_{\disc},\Gpi}^+\tilde E_{S_{\disc},\Gpi}}(t)(A)\right)
\eqs whenever $A\in\ZD(\Ab^{\breve S_{\disc}})\rtimes_L \mathfrak{E}_{\breve S_{\disc}}$ and where $\vert\PD_{\Gamma_i}\vert$ denotes the number of subgraphs of a graph $\Gamma_i$. The state $\hat\omega_{\LD,E}$ is $\R$-invariant w.r.t. the automorphism group $t\mapsto \alpha_{\tilde E_{S_{\disc}}^+\tilde E_{S_{\disc}}}(t)$ and converges in weak $^*$-topology.
\end{prop}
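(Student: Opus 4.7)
The plan is to exploit the locality property of the discretised and localised quantum flux operators established in Corollary \ref{cor commdiscrflux}, which is precisely what the quantum Hamilton parts of Proposition \ref{prop fulldiffeobroken} lack. First I would fix an arbitrary element $A\in \ZD(\Ab^{\breve S_{\disc}})\rtimes_L \mathfrak{E}_{\breve S_{\disc}}$. Since this $^*$-algebra is built as the inductive limit of $\{\ZD(\Ab^{\disc}_{\Gamma_i})\rtimes_L \mathfrak{E}_{\breve S_{\disc},\Gamma_i}\}$, such $A$ is already supported on some fixed $\Gamma_{i_0}$, and hence touches only the finite vertex set $V_{\Gamma_{i_0}}$ and the finite intersection set $i_S(\Gamma_{i_0})$.

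The second step is to split, for each $i\geq i_0$, the set of subgraphs as $\PD_{\Gamma_i} = \PD^{\mathrm{triv}}_{\Gamma_i,A} \cup \PD^{\mathrm{nt}}_{\Gamma_i,A}$, where $\PD^{\mathrm{triv}}_{\Gamma_i,A}$ collects those $\Gpi$ whose new intersection vertices $i_S(\Gpi)\setminus i_S(\Gamma_{i-1})$ are disjoint from the vertex support of $A$. By definition \ref{defi locfluxop}(i) combined with the graph-by-graph generalisation of corollary \ref{cor commdiscrflux}, for each such $\Gpi$ the operator $\tilde E_{S_{\disc}}(\Gpi)^+\tilde E_{S_{\disc}}(\Gpi)$ commutes with $A$, so $\alpha_{\tilde E_{S_{\disc},\Gpi}^+\tilde E_{S_{\disc},\Gpi}}(t)(A)=A$. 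Because $A$ depends on only finitely many vertices, $\vert\PD^{\mathrm{nt}}_{\Gamma_i,A}\vert$ is controlled uniformly in $i$ by a constant depending only on $A$, whereas $\vert\PD_{\Gamma_i}\vert\to\infty$ as $\Gamma_i\to\Gamma_\infty$. Therefore
\beqs
\frac{1}{\vert\PD_{\Gamma_i}\vert}\sum_{\Gpi\in\PD_{\Gamma_i}}\hat\omega^{\Gamma_i}_{\LD}\bigl(\alpha_{\tilde E_{S_{\disc},\Gpi}^+\tilde E_{S_{\disc},\Gpi}}(t)(A)\bigr) = \frac{\vert\PD^{\mathrm{triv}}_{\Gamma_i,A}\vert}{\vert\PD_{\Gamma_i}\vert}\,\hat\omega^{\Gamma_i}_{\LD}(A) + \epsilon_i(A)
\eqs
with $\vert\epsilon_i(A)\vert\leq (\vert\PD^{\mathrm{nt}}_{\Gamma_i,A}\vert/\vert\PD_{\Gamma_i}\vert)\,\Vert A\Vert\to 0$. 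Since the first ratio tends to $1$ and the compatible family $\{\hat\omega^{\Gamma_i}_{\LD}\}$ converges to $\hat\omega_{\LD}$, the weak $^*$-limit $\hat\omega_{\LD,E}(A)=\hat\omega_{\LD}(A)$ exists on a norm-dense subset and extends by continuity.

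For $\R$-invariance I would use that generators $\tilde E_{S_{\disc},\Gpj}^+\tilde E_{S_{\disc},\Gpj}$ associated to disjoint new-vertex sets commute, so $\alpha_{\tilde E_{S_{\disc}}^+\tilde E_{S_{\disc}}}(s)(A)$ is a well-defined automorphism keeping $A$ supported on some (possibly slightly enlarged) graph $\Gamma_{i_0'}$; applying the counting argument above to this shifted element shows the limit is unchanged. Alternatively, at each finite level $\hat\omega^{\Gamma_i}_{\LD}$ is KMS with respect to $\alpha_{\tilde E_{S_{\disc},\Gamma_i}^+\tilde E_{S_{\disc},\Gamma_i}}$ and in particular invariant, which absorbs the shift term by term before passing to the limit.

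The main obstacle will be making the combinatorial claim precise, namely that $\vert\PD^{\mathrm{nt}}_{\Gamma_i,A}\vert$ is genuinely subleading in $\vert\PD_{\Gamma_i}\vert$. This requires analysing the inductive structure of the graph family $\{\Gamma_i\}$ and counting how many subgraphs $\Gpi$ can acquire a new intersection vertex inside the finite set $V_{\Gamma_{i_0}}$. This counting is precisely the technical reason why the present proposition succeeds while proposition \ref{prop fulldiffeobroken} fails: there the commutator $[\ho_\Gamma(\gamma)^{-1},\QD(V)_{\disc,\Gamma}]$ inside the Hamilton part is a sum of non-localised volume contributions, so no analogous locality holds and the averaged state does not converge.
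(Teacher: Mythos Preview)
Your proposal is essentially correct and rests on the same mechanism as the paper's proof, namely the locality of the discretised--localised flux operators encoded in definition \ref{defi locfluxop} and corollary \ref{cor commdiscrflux}: once $A$ lives on a fixed finite graph, all but a controlled subfamily of the automorphisms $\alpha_{\tilde E_{S_{\disc},\Gpi}^+\tilde E_{S_{\disc},\Gpi}}(t)$ act trivially on $A$, so the average stabilises.

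The paper, however, organises this more tersely than you do. Rather than splitting $\PD_{\Gamma_i}$ into trivial and non-trivial subgraph pools and running an asymptotic counting argument $\vert\PD^{\mathrm{nt}}_{\Gamma_i,A}\vert/\vert\PD_{\Gamma_i}\vert\to 0$, the paper simply evaluates the difference
\[
\lim_{\Gamma_i\to\Gamma_\infty}\Big\vert\hat\omega_{\LD}^{\Gamma_i}(A)-\frac{1}{\vert\PD_{\Gamma_i}\vert}\sum_{\Gpi\in\PD_{\Gamma_i}}\hat\omega^{\Gamma_i}_{\LD}\big(\alpha_{\tilde E_{S_{\disc},\Gpi}^+\tilde E_{S_{\disc},\Gpi}}(t)(A)\big)\Big\vert
\]
and observes that by the telescoping definition of $\tilde E_{S_{\disc}}$ it collapses to the single base-level term $\big\vert\hat\omega_{\LD}^{\Gamma_0}(A)-\hat\omega_{\LD}^{\Gamma_0}(\alpha_{E_{S_{\disc},\Gamma_0}^+E_{S_{\disc},\Gamma_0}}(t)(A))\big\vert$, which vanishes by the already-established invariance of $\hat\omega_{\LD}^{\Gamma_0}$. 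Your combinatorial decomposition is more explicit and arguably more transparent about \emph{why} the collapse occurs, and your identification of the counting bound $\vert\PD^{\mathrm{nt}}_{\Gamma_i,A}\vert$ as the crux is exactly the point that distinguishes this proposition from proposition \ref{prop fulldiffeobroken}. The paper's route is shorter but leaves that combinatorics implicit; yours makes the reader do less work at the cost of introducing an auxiliary decomposition that the paper does not need to name.
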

\begin{proofs}
Derive
\beqs  &\lim_{\Gamma_i\longrightarrow\Gamma_\infty}\Big\vert
\hat\omega_{\LD}^{\Gamma_i}(A)-\frac{1}{ \vert\PD_{\Gamma_i}\vert}\sum_{\Gpi\in\PD_{\Gamma_i}}\hat\omega^{\Gamma_i}_{\LD}\left(\alpha_{\tilde E_{S_{\disc},\Gpi}^+\tilde E_{S_{\disc},\Gpi}}(t)(A)\right)\Big\vert
= \Big\vert \hat\omega_{\LD}^{\Gamma_0}(A)- \hat\omega_{\LD}^{\Gamma_0}\left(\alpha_{E_{S_{\disc},\Gamma_0}^+ E_{S_{\disc},\Gamma_0}}(t)(A)\right)\Big\vert\\
&=0
\eqs
\end{proofs}

Recall proposition \ref{defi derivonmatrix2}. Furthermore the last proposition implies that, the derivation $\delta_{E}$ on $\ZD(\Ab^{\breve S_{\disc}})$, which is given by
\beqs \delta_{E}(f):=\lim_{t\rightarrow 0}\frac{1}{t}\Big(\alpha_{E_{S_{\disc},\Gpi}^+E_{S_{\disc},\Gpi}}(t)(f)-f\Big)=\lim_{t\rightarrow 0}\text{  } \lim_{\Gamma_i\rightarrow\Gamma_\infty}\frac{1}{t \vert\PD_{\Gamma_i}\vert}\Big(\sum_{\Gpi\in\PD_{\Gamma_i}}\alpha_{E_{S_{\disc},\Gpi}^+E_{S_{\disc},\Gpi}}(t)(f)-f\Big)
\eqs
for $f\in\ZD(\Ab^{\breve S_{\disc}})$, converges in norm.

\begin{problem}Let $\{\Gamma_i\}$ be an inductive family of graphs, $\breve S$ be a set of surfaces and $\breve S_{\disc}$ a set of discretised surfaces associated to $\breve S$ such that the assumptions in definition \ref{defi locfluxop} are satisfied. For a fixed graph $\Gamma$ let $(C(\Ab),\mathfrak{B}(\PD_{\bar\Gamma}),\zeta)$ and $(\ZD(\Ab^{\breve S_{\disc}}),\mathfrak{B}_{\breve S_{\disc},\diff}(\PD_{\Gamma}^{\breve S_{\disc}}),\zeta)$ be two $C^*$-dynamical systems. 

The discetrised quantum volume operator is explicity defined by
\beqs \QD(V^*V)_{\disc,\Gamma}:=\sum_{\underset{\in\PD^v_\Gamma\times\PD^v_\Gamma\times\PD^v_\Gamma}{(\gamma_1,\gamma_2,\gamma_3)}}E_{S_3^{\disc}}(\gamma_3)^+E_{S_2^{\disc}}(\gamma_2)^+E_{S_1^{\disc}}(\gamma_1)^+E_{S_1^{\disc}}(\gamma_1)E_{S_2^{\disc}}(\gamma_2)
E_{S_3^{\disc}}(\gamma_3)
\eqs such that $\QD_{\disc,\Gamma}(V^*V)\in \mathfrak{E}_{\breve S_{\disc},\Gamma}$.
Recall the quantum Hamilton constraint $H_\Gamma$ restricted to a graph is presented by
\beqs
\exp(H_\Gamma):=\exp(H_{\Gamma,P})\bra \ho_{\Gamma}(\gamma), \QD(V)_{\disc,\Gamma}\ket
\eqs
Moreover let $\{H_{\Gamma_i}\}$ be a family of quantum Hamilton constraints restricted to graphs such that  each element $\exp(tH_{\Gpi}^+H_{\Gpi})\in C^*(\Ab^{\disc}_{\Gamma_i})\rtimes \mathfrak{E}_{\breve S_{\disc},\Gamma_i}$ for all $t\in\R$ and all graphs $\{\Gpi\}$ being subgraphs of $\Gamma_i$.

Recall the family $\{\hat\omega^{\Gamma_i}_{\LD}\}$ of states of the family $\{\ZD(\Ab^{\disc}_{\Gamma_i})\rtimes \mathfrak{E}_{\breve S_{\disc},\Gamma_i}\}$ of $^*$-algebras, which are KMS-states for $\ZD(\Ab^{\disc}_{\Gamma_i})$ at value $\beta\in\R$ and such that the states satisfy 
\beqs &\hat\omega^{\Gamma_i}_{\LD}\circ\alpha_{H_{\Gpi,P}^+H_{\Gpi,P}}=\hat\omega^{\Gamma_i}_{\LD}\\
&\hat\omega^{\Gamma_i}_{\LD}\circ \alpha_{H_{\Gamma_i,P}^+H_{\Gamma_i,P}}(t)\circ \zeta_\sigma= \hat\omega^{\Gamma_i}_{\LD} =\hat\omega^{\Gamma_i}_{\LD}\circ \zeta_\sigma\circ \alpha_{H_{\Gamma_i,P}^+H_{\Gamma_i,P}}(t)\\
&\hat\omega^{\Gamma_i}_{\LD}\circ \alpha_{E^+_{S_{\disc},\Gamma_i}E_{S_{\disc},\Gamma_i}}=\hat\omega^{\Gamma_i}_{\LD}\\
&\hat\omega^{\Gamma_i}_{\LD}\circ\zeta_\sigma\circ \alpha_{E^+_{S_{\disc},\Gamma_i}E_{S_{\disc},\Gamma_i}}=\hat\omega^{\Gamma_i}_{\LD}=\hat\omega^{\Gamma_i}_{\LD}\circ\alpha_{E^+_{S_{\disc},\Gamma_i}E_{S_{\disc},\Gamma_i}}\circ\zeta_\sigma\\
&\hat\omega^{\Gamma_i}_{\LD}\circ\alpha_{H_{\Gpi,P}^+H_{\Gpi,P}}(t)\circ \alpha_{E^+_{S_{\disc},\Gamma_i}E_{S_{\disc},\Gamma_i}}=\hat\omega^{\Gamma_i}_{\LD}=\hat\omega^{\Gamma_i}_{\LD}\circ\alpha_{E^+_{S_{\disc},\Gamma_i}E_{S_{\disc},\Gamma_i}}\circ\alpha_{H_{\Gpi,P}^+H_{\Gpi,P}}(t)
\eqs  for all $\sigma\in \Diff(\PD_{\Gamma_i}^{\breve S_{\disc}})$, $t\in\R$, a subgraph $\Gpi$ of $\Gamma_i$ and all $1\leq i<\infty$.

There is a problem of convergence of the limit state on the localised holonomy-flux cross-product $^*$-algebra presented in proposition \ref{prop fulldiffeobroken}. Consequently, the limit state $\hat\omega_{\LD}$ on $\ZD(\Ab^{\disc}_{\Gamma_i})\rtimes_L \mathfrak{E}_{\breve S_{\disc},\Gamma_i}$ has to be analysed further. The hope is that for a suitable modified (or localised) quantum Hamilton constraint derived from
\beqs \hat H^+\hat H:=\limN\sum_{i=1}^NH_{\Gamma_i}^+H_{\Gamma_i}
=\limG\sum_{\Gp\in\PD_{\Gamma_i}}H_{\Gp}^+H_{\Gp}
\eqs the state $\omega_{\LD}$ satisfies
\beq\label{eq fullHamiltoninv} \hat\omega_{\LD}\circ\alpha_{\hat H^+\hat H}=\limG\sum_{\Gp\in\PD_{\Gamma_i}}\hat\omega_{\LD}^{\Gamma_i}\circ\alpha_{H^+_{\Gp}H_{\Gp}}
=\hat\omega_{\LD}
\eq Summarising, in this situation the state $\hat\omega_{\LD}$ would be invariant under the automorphisms inherited by the modified quantum Hamilton $H$, but the state is only invariant under a finite set of exceptional graph-diffeomorphisms. 
Despite this fact a localised quantum diffeomorphism constraint is defined as follows. First recall the construction presented in \cite[Sec.: 5]{Kaminski2},\cite[Sec.: 7.3]{KaminskiPHD}. There some certain operators are developed in the situation of $C^*$-algebras. Apart from $C^*$-properties the following objects can be analysed. Similarly define an operator, which depends on a bisection in $\mathfrak{B}(\PD_{\bar\Gamma})$ and which is $C(\Ab_{\bar\Gamma})$-valued, and denote this operator by $D_{\bar \Gamma}^ \sigma$. The set of all these operators is denoted by $\Df_{\breve S_{\disc},\Gamma}$. Furthermore there is an operator, which depends on a bisection in $\mathfrak{B}_{\breve S_{\disc},\diff}(\PD_{\Gamma}^{\breve S_{\disc}})$ and which is $\ZD(\Ab^{\breve S_{\disc}}) \rtimes \mathfrak{E}_{\breve S_{\disc}}$-valued, and this operator is denoted by $D_{\breve S_{\disc},\Gamma}^\sigma$. The adjoint operator is denoted by $D_{\breve S_{\disc},\Gamma}^{\sigma,*}$. The set of all these operators is denoted by $\Df_{\bar\Gamma}$. For each graph $\Gamma_i$ of a family of graphs there exists a generating system $\mathfrak{B}^{\Gamma_i}_{\breve S,\diff}(\PD_{\Gamma_i}^{\breve S_{\disc}})$ of bisections for this graph.
Then set
\beqs D_{\breve S, \Gamma_i}^+D_{\breve S_{\disc}, \Gamma_i}:=\sum_{\sigma_l\in\mathfrak{B}^{\Gamma_i}_{\breve S_{\disc},\diff}(\PD_{\Gamma_i}^{\breve S_{\disc}})}D^{\sigma_l,*}_{\breve S_{\disc},\Gp}D^{\sigma_l}_{\breve S_{\disc},\Gp}
\eqs for every subgraph $\Gp$ of $\Gamma_i$. The sum over all graphs of a family of graphs defines the \textbf{localised quantum diffeomorphism constraint}.  
The linear hull over all graphs of a family of graphs of all elements of the set $\Df_{\breve S_{\disc},\Gamma}$, the set $\Df_{\bar\Gamma}$ and the set of all quantum Hamilton constraints restricted to a graph $\Gamma$ forms the $^*$-algebra $\Con$ of quantum constraints. Note that, this algebra is not a subalgebra of the localised holonomy-flux cross-product $^*$-algebra associated to a discretised surface set. Finally, the \textbf{modified quantum Master constraint} $\MM$ is defined by the sum of the modified quantum Hamilton constraint and the localised quantum diffeomorphism constraint.

The localised holonomy-flux cross-product $^*$-algebra can be enlarged such that this algebra will be a subalgebra. This algebra will be based on the cross-product construction once more and consequently will be called the \textbf{localised holonomy-flux-graph-diffeomorphism cross-product $^*$-algebra} associated to a discretised surface set. It will contain all finite graph-diffeomorphisms. Note that, the modified quantum Hamilton constraint is not contained in this algebra, but it will be in a suitable sense be affilliated with. Now, Dirac states and Dirac observables have to be analysed.

Assume that, $\Ss_D$ denotes a set of Dirac states on the localised holonomy-flux-graph-diffeomorphism cross-product $^*$-algebra $\Alg$. It is not obvious that Dirac observables can be easily defined, since the set generated by all quantum constraints in $\Con$ defines a closed left and right ideal in $\Alg$. Assume that $\OD_{D}$ is the algebra of Dirac observables, which is a subalgebra of  the localised holonomy-flux-graph-diffeomorphism cross-product $^*$-algebra.
Then 
\beqs\OD^{\alpha}_D:=\{A\in \OD_{D}:\alpha_{\MM}(t)(A)=A,\forall t\in\R\}
\eqs defines a \textbf{localised $^*$-algebra of complete quantum observables for surfaces}. Hence the localised holonomy-flux-graph-diffeomorphism cross-product $^*$-algebra associated to a discretised surface set is supposed to be a physical algebra in the context of \cite{Kaminski0}.
\end{problem}

Finally, a short remark with respect to $C^*$-algebras is stated. The \textbf{localised holonomy-flux cross-product $C^*$-algebra for surfaces} is constructable as the inductive limit $C^*$-algebra of the inductive family of $C^*$-algebras 
$\{ C(\Ab_{\bar\Gamma})\otimes C(\bar G_{\breve S_{\disc},\Gamma})\rtimes\Ab^{\disc}_{\Gamma}\}$ for a suitable set $\breve S_{\disc}$ of discretised surfaces associated to a surface set $\breve S$ with appropriate properties with respect to the inductive limit of the family of graphs. The ideas are derived from to the holonomy-flux cross-product $C^*$-algebra presented in \cite{Kaminski2,KaminskiPHD}.
There it has been also given an enlargement of the holonomy-flux cross-product $C^*$-algebra, which contains finite diffeomorphisms. The generators defined by the quantum diffeomorphisms are not contained in this algebra but affiliated with.   
This idea will be used in a future work for a similar enlargment of the localised holonomy-flux cross-product $C^*$-algebra for surfaces. Consequently a physical algebra, which is indeed a $C^*$-algebra, can be constructed in this way. 
 
\newpage
\section{Comparison table}\label{sec comparison}

 A comparison of the localised holonomy-flux cross-product $^*$-algebra and the holonomy-flux cross-product $^*$-algebra is presented in the next table. Summarising the construction is based on the algebra of continuous functions depending on holonomies along paths, which is a left (or right-) module for the enveloping flux algebra for surfaces. Consequently, certain algebras can be derived as abstract cross-product algebras. The differences appear by the choice of the set of paths, and hence the construction of the quantum configuration space. Therefore different holonomy algebras are considered. In particluar the algebras distinguish with respect to the multiplication operation of the elements of these algebras, and their localisation or non-localisation with respect to a set of discretised surfaces associated to surface sets.  

\begin{landscape}
\begin{longtable}[ht]{|l|l|l|l|}\caption{Comparison of $^*$-algebras}\label{tablecompalg}\\
\hline &&&\\
&holonomy-flux algebra &  holonomy-flux cross-product algebra & localised holonomy-flux cross-product algebra\\
&&&\\ \hline\hline &&&\\
ingredients& principal fibre bundle $P(\Sigma,G,\pi)$&principal fibre bundle $P(\Sigma,G,\pi)$&principal fibre bundle $P(\Sigma,G,\pi)$\\[3pt]
&surfaces with codim. $1$&set of finite set $\breve S$ of surfaces with codim. $1$&set of finite set $\breve S_{\disc}$ of discretised surfaces \\[3pt]
assumption&$G$ compact connected Lie group&$G$ compact connected Lie group&$G$ compact connected Lie group\\[3pt]
ingredients& &fin. path groupoid $\PD_\Gamma\Sigma$ over $V_\Gamma$ &fin. path groupoid $\PD_\Gamma\Sigma$ over $V_\Gamma$\\[3pt]
& path groupoid $\PD$ over $\Sigma$& path groupoid $\PD$ over $\Sigma$ & path groupoid $\PD$ over $\Sigma$\\[3pt]
ingredients& graph $\Gamma$ &fin. orient.-preserv. graph sys. $\PD_\Gamma^{\op}$ assoc. to $\Gamma$ &sets of paths starting or ending at disc. surfaces.\\[3pt]
&&&graphs not located at disc. surfaces\\[3pt]
inductive limit&inductive family of fin. path groupoids &ind. family of fin. orient.-preserv. graph sys. & inductive family of graphs \\[3pt]
holonomy map& groupoid morph. $\bar A$ from $\PD$ to $G$ & holonomy map $\ho_\Gamma$ from $\PD_\Gamma\Sigma$ to $G$&holonomy map $\ho_\Gamma$ from $\PD_\Gamma\Sigma$ to $G$\\[3pt]
&  & holonomy map $\ho_\Gamma$ from $\PD_\Gamma$ to $G^{\vert\Gamma\vert}$&holonomy map $\ho_\Gamma$ from $\PD_\Gamma$ to $G^{\vert\Gamma\vert}$\\[5pt]
config. space&$\Ab_\Gamma$ and proj. limit space $\Ab$& $\Ab_\Gamma$ and proj. limit space $\Ab$& $\Ab_{\disc,\Gamma}\times\Ab_{\bar\Gamma}$\\[3pt]
assumption&identification of sets of paths in $\PD_\Gamma\Sigma$&natural identif. of sets of indep. paths in $\PD_\Gamma\Sigma$&non-stand. identif. of sets of paths\\[3pt]
&&& located at disc. surfaces,\\[3pt]
&&& natural identif. of sets of indep. paths \\[3pt]
&&& not located at disc. surfaces\\[5pt]
&&&enveloping alg. $\LAb_{\disc,\Gamma}$ of Lie alg. assoc. to $\Ab^{\disc}_{\Gamma}$\\[5pt]
Hilbert space& $\HS_{\text{AL}}$&$\HS_\Gamma$ and ind. limit Hilbert space $\HS_\infty$& $\HS_\Gamma^{\disc}=L^2(\Ab_\Gamma^{\disc},\mu_\Gamma^{\disc})$ and $\HS_{\bar\Gamma}=L^2(\Ab_{\bar\Gamma},\mu_{\bar\Gamma})$ \\[3pt]
&&&and limit Hilbert spaces $\HS_{\disc}$ and $\HS_{\loc}$\\[5pt]
diffeomorphisms& $\varphi$ diffeomorphism on $\Sigma$, $(\Phi,\varphi)$& fin. path- or graph- diffeom. $(\Phi_\Gamma,\varphi_\Gamma)$&fin. path- or graph- diffeom. $(\Phi_\Gamma,\varphi_\Gamma)$\\[3pt]
&&group $\mathfrak{B}(\PD_\Gamma\Sigma)$ or group $\mathfrak{B}(\PD_\Gamma)$ of bisections &group $\mathfrak{B}(\PD_\Gamma\Sigma)$ or group $\mathfrak{B}(\PD_\Gamma)$ of bisections\\[5pt]
mom. space& expon. smearing vector field $E_{S,F}$& the Lie flux algebra $\bar \go_{\breve S,\Gamma}$ or $\bar \go_{\breve S,\Gamma_\infty}=:\bar \go_{\breve S}$& the localised Lie flux algebra $\bar \go^{\loc}_{\breve S_{\disc},\Gamma}$ or $\bar \go^{\loc}_{\breve S_{\disc}}$\\[5pt]
&  on a fibre of $P$ & the flux enveloping algebra $\bar \E_{\breve S,\Gamma}$ or $\bar \E_{\breve S,\Gamma_\infty}=:\bar \E_{\breve S}$& the localised flux enveloping algebra $\bar \E^{\loc}_{\breve S_{\disc},\Gamma}$ or $\bar \E^{\loc}_{\breve S_{\disc}}$\\[5pt]
\hline &&&\\
$C^*$-algebra & $C(\Ab_\Gamma)$ $\&$ sup-norm&$C(\Ab_\Gamma)$ $\&$ sup-norm&$C(\Ab_{\bar\Gamma})$ $\&$ $L^2$-norm\\[3pt]
& inductive limit $C^*$-algebra $C(\Ab)$& inductive limit $C^*$-algebra $C(\Ab)$&  inductive limit $C^*$-algebra $C(\Ab_{\loc})$\\[5pt]
\hline\newpage\hline &&&\\
$C^*$-algebra&&&$\bigotimes_i C^*(\Ab_{\disc,\gamma_i})$ $\&$ $L^2$-norm\\[3pt]
&&& infinite $C^*$-tensor algebra $C^*(\Ab^{\disc})$ \\[5pt]
\hline &&&\\
Hilbert space & $\pi(f)\psi=f\cdot \psi$ for $f \in C(\Ab)$& $\Phi_M(f)\psi=f\cdot\psi$ for $f \in C(\Ab)$ and for $\psi\in\HS_\infty$ & $\Phi_M(f)\psi=f\cdot\psi$ for $f \in C^*(\Ab^{\disc})$ and for $\psi\in\HS_{\disc}$ \\[3pt]
operators & & & $\Phi_M(f)\psi=f\cdot\psi$ for $f \in C(\Ab_{\loc})$ and for $\psi\in\HS_{\loc}$ \\[5pt]
& for $\psi\in\HS_{\text{AL}}$ & $\pi(\exp(tE_{S}(\gamma)))\psi=U_t(E_{S}(\gamma))\psi$ for $\psi\in\HS_\infty$& $\pi(\exp(tE_{S_{\disc}}^\gamma))\psi=U_t(E_{S_{\disc}}^\gamma)\psi$ for $\psi\in\HS_{\disc}$ \\[3pt]
& & & $\pi(\exp(tE_{S_{\disc}}^\gamma))\psi=\psi$ for $\psi\in\HS_{\loc}$ \\[5pt]
 & $\pi(E_{S,F})\psi=\frac{\dif}{\dif t}\Big\vert_{t=0}\psi\circ\theta_t(F)=:X_{S}\psi$& $\pi(E_{S}(\gamma)^+E_{S}(\gamma))\psi=-i \frac{\dif}{\dif t}\Big\vert_{t=0}U_t(E_{S}(\gamma)^+E_{S}(\gamma))\psi$ & where $E_{S_{\disc}}^\gamma:=E_{S_{\disc}}(\gamma)^+E_{S_{\disc}}(\gamma)$\\[3pt]
&&$\pi(E_{S}(\gamma)^+E_{S}(\gamma))=:\dif U(E_S(\gamma))$&$\pi(E_{S_{\disc}}^\gamma)\psi=-i \frac{\dif}{\dif t}\Big\vert_{t=0}U_t(E_{S_{\disc}}^\gamma)\psi=:\dif U(E_{S_{\disc}}^\gamma)\psi$\\[5pt]
& for $\psi\in D(E_{S,F})$ & for $\psi\in D(E_{S}(\gamma)^+E_{S}(\gamma))$&for $\psi\in D(\dif U(E_{S_{\disc}}^\gamma))$\\[5pt]
&&& self-adj. quantum Hamilton part $H_{\Gamma,P}^+H_{\Gamma,P}$ on $\HS_\Gamma^{\disc}$\\
\hline &&&\\
$^*$-algebras& & enveloping algebra $\bar\E_{\breve S,\Gamma}$ of $\bar\go_{\breve S,\Gamma}$ &enveloping algebra $\bar\E^{\loc}_{\breve S_{\disc},\Gamma}$ of $\bar\go^{\loc}_{\breve S_{\disc},\Gamma}$\\[3pt]
&&with involution $^+$ s.t. &with involution $^+$ s.t.\\[3pt]
&& $E_S(\Gamma)^+=-E_S(\Gamma)$ for all $E_S(\Gamma)\in\bar\go_{\breve S,\Gamma}$& $E_{S_{\disc}}(\Gamma)^+=-E_{S_{\disc}}(\Gamma)$ for all $E_{S_{\disc}}(\Gamma)\in\bar\go^{\loc}_{\breve S_{\disc},\Gamma}$\\[5pt]
&&& $^*$-algebra $\ZD(\Ab_\Gamma^{\disc})$ of central functions on $\Ab_\Gamma^{\disc}$\\[5pt]
&$^*$-algebra $\DD$  of differential op. on $\HS_{\text{AL}}$& $^*$-algebra $\DD(\bar G_{\breve S,\Gamma})$ of differential op. on $\HS_\Gamma$& $^*$-algebra $\DD(\bar G^{\loc}_{\breve S,\Gamma})$ of differential op. on $\HS_\Gamma^{\loc}$\\[5pt]
&enveloping algebra $\Alg_{\text{HF}}$ of $C(\Ab)\times \DD$ &$C^\infty(\Ab_\Gamma)\rtimes_X\bar\E_{\breve S,\Gamma}$  for $X=L,R$& $C(\Ab_{\bar\Gamma})\otimes_{\text{min}} C^\infty(\Ab_\Gamma^{\disc})\rtimes\bar\E^{\loc}_{\breve S_{\disc},\Gamma}$\\[3pt]
& $C^\infty(\Ab_\Gamma^{\disc})\rtimes\bar\E^{\loc}_{\breve S,\Gamma}$ with multiplication $\cdot$ &with multiplication $\cdot_X$&with multiplication $\cdot_L$\\[5pt]
&&$C^\infty(\Ab)\rtimes_X\bar\E_{\breve S,\Gamma_\infty}$  for $X=L,R$&$C(\Ab_{\loc})\otimes_{\text{min}} C^\infty(\Ab^{\disc})\rtimes\bar\E^{\loc}_{\breve S_{\disc}}$\\[5pt]
\hline &&&\\
$^*$-representation&$\pi(f,X_S)\psi=f\cdot X_S\psi$ & $\pi(f_\Gamma,E_S(\Gamma)^+E_S(\Gamma))\psi$& $\pi(f_\Gamma,E_{S_{\disc}}(\Gamma)^+E_{S_{\disc}}(\Gamma))\psi$\\[3pt]
&& $=-i \frac{\dif}{\dif t}\Big\vert_{t=0}U_t(E_{S}(\Gamma)^+E_{S}(\Gamma)) f_\Gamma \cdot \psi $& $=-i \frac{\dif}{\dif t}\Big\vert_{t=0}U_t(E_{S_{\disc}}(\Gamma)^+E_{S_{\disc}}(\Gamma)) f_\Gamma \cdot \psi $\\[5pt]
&&\hspace{15pt}$-i f_\Gamma\cdot \frac{\dif}{\dif t}\Big\vert_{t=0}U_t(E_{S}(\Gamma)^+E_{S}(\Gamma))\psi$&\hspace{15pt}$-i f_\Gamma\cdot \frac{\dif}{\dif t}\Big\vert_{t=0}U_t(E_{S_{\disc}}(\Gamma)^+E_{S_{\disc}}(\Gamma))\psi$\\[8pt]
&for $\psi\in D(E_{S,F})$ &for $\psi\in C^\infty(\Ab_\Gamma)$& for $\psi\in C^\infty(\Ab_\Gamma^{\disc})$\\[5pt]
\hline \newpage\hline &&&\\
automorphisms&&$\alpha_{E_{S}(\Gamma)^+E_{S}(\Gamma)}\in\Aut(C^\infty( \Ab))$&$\alpha_{E_{S_{\disc}}(\Gamma)^+E_{S_{\disc}}(\Gamma)}\in\Aut(C^*( \Ab))$ \\[3pt]
&&for $E_{S}(\Gamma)^+E_{S}(\Gamma)\in \ZD(\bar\E_{\breve S})$& for $E_{S_{\disc}}(\Gamma)^+E_{S_{\disc}}(\Gamma)\in  \mathfrak{z}_{\breve S_{\disc}}$\\[5pt]
&&& $\beta_{\ha_{\disc,\Gamma}}\in \Aut(C^*(\Ab_\Gamma^{\disc}))$ for $\exp(\ha_{\disc,\Gamma})$ in $\Ab_\Gamma^{\disc}$\\[5pt]
&&& $\tilde\alpha_{\ha_{\disc,\Gamma}}\in \Aut(\ZD(\Ab_\Gamma^{\disc}))$ for $\exp(\ha_{\disc,\Gamma})$ in $\Ab_\Gamma^{\disc}$ \\[3pt]
&&&$\alpha_{H_{\Gamma,P}^+H_{\Gamma,P}}\in\Aut(\ZD(\Ab^{\disc}_\Gamma)\rtimes \mathfrak{z}_{\breve S_{\disc},\Gamma})$ \\[5pt]
&&& for $\exp(H_{\Gamma,P}^+H_{\Gamma,P})$ in center of $\Ab^{\disc}_\Gamma$\\[5pt]
& $\zeta_{(\varphi,\Phi)}\in\Aut(\Alg_{\text{HF}})$&$\zeta_\sigma\in\Aut(C^\infty(\Ab)\rtimes_X\bar\E_{\breve S})$ for certain  $\sigma\in\mathfrak{B}(\PD_\Gamma^{\op})$&$\zeta_\sigma\in\Aut(C^\infty(\Ab^{\disc})\rtimes\mathfrak{z}_{\breve S_{\disc}})$ for certain $\sigma\in\mathfrak{B}(\PD_\Gamma^{\breve S_{\disc}})$\\[5pt]
\hline &&&\\
states&&&state $\omega_{M}$ on $C(\Ab_{\loc})$ s.t. \\[3pt]
&&&$\omega_{M}\circ\zeta_\sigma=\omega_{M}$ for $\sigma\in\mathfrak{B}(\PD_{\bar\Gamma})$\\[5pt]
&&&KMS-state $\tilde\omega_\LD^\Gamma$ on $\ZD(\Ab^{\disc}_\Gamma)$ w.r.t. automorph.  $\tilde\alpha_{\ha_{\disc},\Gamma}$\\[5pt]
&&&KMS-state $\tilde\omega_\LD^\Gamma$ on $\ZD(\Ab^{\disc}_\Gamma)$ w.r.t. automorph.  $\alpha_{H_{\Gamma,P}^+H_{\Gamma,P}}$\\[5pt]
&&& and s.t. $\tilde\omega^{\Gamma}_{\LD}\circ\alpha_{H_{\Gamma,P}^+H_{\Gamma,P}}=\tilde\omega^{\Gamma}_{\LD}$\\[8pt]
& unique state $\omega$ on $\Alg_{\text{HF}}$ s.t. &unique state  $\bar\omega_M$  on $C^\infty(\Ab)\rtimes_X\ZD(\bar\E_{\breve S})$ s.t. &state $\hat\omega_{\LD}$ on $\ZD(\Ab^{\disc}_{\Gamma}) \rtimes \mathfrak{z}_{\breve S_{\disc},\Gamma}$ s.t.\\[3pt]
&$\omega\circ\zeta_{(\varphi,\Phi)}=\omega$ &$\bar\omega_M\circ\zeta_\sigma=\bar\omega_M$ for certain $\sigma\in\mathfrak{B}(\PD_\Gamma^{\op})$& $\hat\omega_{\LD}\circ\zeta_\sigma=\hat\omega_{\LD}$ for certain $\sigma\in\mathfrak{B}(\PD_\Gamma^{\breve S_{\disc}})$\\[3pt]
&and $\omega(f,X_S)=0$ & and $\bar\omega_M(f_,E_S(\Gamma))=0$&and $\hat\omega_{\LD}(f_,E_{S_{\disc}}(\Gamma))=0$\\[5pt]
&&&\\[5pt]
&&&\\[5pt]
\hline
\end{longtable}
\end{landscape}

\section*{Acknowledgements}
The work has been supported by the Emmy-Noether-Programm (grant FL 622/1-1) of the Deutsche Forschungsgemeinschaft.

\addcontentsline{toc}{section}{References}

\end{document}